\author{
    Vedat Levi Alev\thanks{
        University of Haifa ---
    \href{mailto:vedatalev@math.haifa.ac.il}{vedatalev@math.haifa.ac.il}}
    \and Uriya A.~First\thanks{
        University of Haifa ---
    \href{mailto:ufirst@univ.haifa.ac.il}{ufirst@univ.haifa.ac.il}}
}
\title{
    Hypergraph Samplers:\\
    Typical and Worst Case Behavior
}
\date{\today}
\begin{document}
\begin{titlepage}
    \def\thepage{}
    \thispagestyle{empty}
    \maketitle

    \begin{abstract}
        In this paper we study the utility and limitations of using
        $k$-uniform hypergraphs $H = ([n], E)$
        (where $n \ge \poly(k)$)
        in the context of error reduction for randomized algorithms for
        decision problems with one- or two-sided error.
        Concretely, our error reduction idea is based on sampling a uniformly
        random hyperedge of $H$, and repeating the algorithm $k$
        times with the role
        of the random seeds taken by the vertices in the hyperedge. We note that
        this is a very general paradigm, which captures every
        pseudorandom method for
        choosing $k$ seeds without repetition. We show  two results which
        together imply a \emph{gap} between
        the \emph{typical} behavior and
        the \emph{worst-case} behavior when using $H$ for error-reduction.

        Observing that in the context of one-sided error reduction,
        using $k$ IID
        seeds would drop the error probability from $p$ to $p^k$ at the cost of
        using $k \lg n$ random bits, we mention our first result: In the
        context of one-sided error reduction,
        if using a random hyperedge
        from $H$ decreases the
        error probability from $p$ to $p^k + \ee$, then $H$ cannot have too
        few edges, i.e.,~$|E| = \Omega(n k^{-1} \ee^{-1})$.
        Consequently, the number of random bits we need for dropping the error
        from $p$ to $p^k + \ee$
        cannot be reduced below $\lg n+\lg(\ee^{-1})-\lg k+O(1)$.
        Here, the  suppressed constants depend only on $p$.
        Equivalently, a pseudorandom method for choosing $k$ seeds
        using $\lg n+\lg r $ random bits cannot reduce the error probability
        below $p^k+\Omega(r^{-1}k^{-1})$ in general.
        This also holds for hypergraphs with \emph{average} uniformity $k$.
        Our results also imply new lowerbounds for dispersers and
        vertex-expanders in certain parameter regimes.

        Second, provided that the vertex degrees in $H$ are not too
        outlandishly distributed,
        we show that in \enquote{99.99\% of cases} (a $(1-o(1))$-fraction),
        choosing $k$ pseudorandom seeds
        using $H$ will reduce the error probability to at most $o(1)$
        above the error probability we would get by using $k$ IID seeds,
        for both algorithms with one- or two-sided error. Thus, in the
        context of error reduction, despite our lowerbound,
        for  \emph{all  but a  vanishing fraction} of  randomized
        algorithms (and inputs) for decision problems,
        the advantage of using  true  IID samples over samples obtained
        from a uniformly random  edge of \emph{any}  reasonable hypergraph
        is essentially \emph{negligible}. In particular, \emph{any}
        $D$-regular $k$-uniform hypergraph $H$  on  $n\geq \poly(D,k)$
        vertices gives rise to
        a pseudorandom seed-choosing method
        using
        $\lg |E| = \lg n + \lg D - \lg k$ random bits and having the property
        that it
        drops one-sided error-probability $p$
        to beneath $p^k + o(1)$ for \emph{all but a vanishing
        fraction} of randomized algorithms (and inputs). A similar
        statement holds true
        for randomized algorithms with two-sided error.
    \end{abstract}
\end{titlepage}
\tableofcontents
\section{Introduction}
\subsection{Motivation and General Overview}\label{ss:gen_overview}
\label{subsec:overview}

Repetition is a natural strategy of decreasing the error
probability of randomized algorithms for solving decision problems, see for
example \cite[Ch.~7.4]{AroraB09}. Suppose we are given an algorithm which
solves a decision problem with one sided error $p$,
i.e.,~one which never reports false negatives but reports
false positives with some probability $p \in (0, 1)$. It is easy to see
that running the algorithm $k$ times in parallel with IID random
seeds decreases the error
probability from $p$ to $p^k$ at a cost of increasing the random bits we
require from (say) $\lg n$ to $k \lg n$.\footnote{
    $\lg n$ means $\log_2 n$. We caution that $n$ is completely
    independent of the randomized alogrithm's time complexity, and
    quantifies the \emph{size}
    of the random seed, i.e.,~$\lg n$ is the number of bits required to encode
    the seed.
} Suppose we have an algorithm with
two sided error $p < 1/2$, i.e.,~one
which can err with probability $p < 1/2$ (by reporting both false
positives and negatives). The Chernoff bound implies that repeating the
algorithm $k$ times in parallel with IID random seeds and reporting
the majority answer
also leads to an exponential decrease in
the error probability, while again increasing the number of random bits we
require from $\lg n$ to $k \lg n$.

Whereas these methods  work well when one does not mind this blow-up in the
number of random bits required, when one also tries to ensure that not too many
random bits are used, other ideas are needed. A typical idea which comes to the
rescue is replacing the IID seeds with a \emph{pseudorandom} sequence of
seeds. For example, the expander hitting set lemma \cite{AlonFW95, Kahale95}
and the expander Chernoff bound \cite{Kahale97, Gillman98}, imply
that by picking the random seeds
to be situated on a uniformly random path of length $k$ in a constant
degree expander graph
on $n$ vertices,
we can decrease the random bits we require from $k \lg n$ to $\lg n +
O(k)$ in both cases, while
achieving similar guarantees to IID samples.

There is a wide body of work studying the power and limitation of
these techniques.
Random walks in expander graphs are expected to behave like sequences
of independent samples.
This point of view  was made precise in many works \cite{AlonFW95,
    Kahale97, Gillman98, Healy08,
CohenPT21, CohenMPPT22, GolowichV22, DiksteinH24, Ta-ShmaZ24}. These
works can be considered as far reaching generalizations of the
celebrated \emph{Expander Mixing Lemma} which is the simplest result
of this kind: In a  regular expander graph $G = (V, E)$, the fraction
of edges between two sets  $A, B\subseteq V$ is close to the fraction
of edges between $A$ and $B$ in the complete graph $(V, {{V}\choose 2})$.
%


In this work, we take
a different route and study general pseudorandom choices of seeds
(without repetition).
Formally, instead of using the vertices situated
on a random path of length $k$ in an expander graph, we take an
arbitrary $k$-uniform
hypergraph $H = (V, E)$ on $n$ vertices, i.e.,~where every hyperedge contains
$k$ vertices, and ask how well taking the vertices situated on a random
hyperedge as seeds performs in error reduction
settings. Concretely, given a set $A \subseteq V$ of density $p$
(i.e.,~$|A| = \lfloor p \cdot n \rfloor$),
we will be interested in the following quantities:
\begin{itemize}
    \item the \emph{number of hits}  ${\rv N}_H(A)=|\rv e\cap A|$,
        i.e.,~the number of times
        a uniformly random edge $\rv e$ in $H$ hits $A$.
    \item the \emph{confinement probability} $\Pp[\rv e\subseteq A]$,
        i.e.,~the probability that a uniformly random edge $\rv e$ in $H$
        is confined to $A$.
\end{itemize}
We note that \emph{the number of hits} is related to the \emph{two-sided error
reduction} problem described above. Indeed, let $A \subseteq V$ be
the set of seeds which
make our algorithm err. If we can show that under suitable
assumptions, $\rv N_H(A)$ enjoys a Chernoff type tail-bound, i.e.,~the
probability of it being $\ee$-far from the average  decays exponentially in $k$,
we can use it for error
reduction purposes.
Similarly, the \emph{confinement probability} is
the error probability (after $k$ repetitions) in the \emph{one-sided
error} setting if $A$ is the set of random seeds
which make an algorithm with one-sided error err.

Observe that in both the two-sided and the one-sided error setting,
the
natural parameters of the hypergraph $H = (V, E)$ control the
\emph{efficiency} of the error reduction. Specifically, the
uniformity of $H$, i.e.,~the
number of vertices in a hyperedge of $H$, is the
number of times the algorithm is repeated, and the
hyperedges-to-vertices ratio $|E|/n$ controls the randomness that one
needs to expend on top of a single uniform sample from $V$.
From this point of view, it makes sense to study the
\emph{confinement probability} and the \emph{number of hits} of a
\emph{not-necessarily-uniform} hypergraphs also, which we will do
for the former.

We measure the
the error-reduction performance of sampling $k$ seeds according to a
hyperedge in a $k$-uniform hypergraph $H=(V,E)$ by comparing it to
running $k$ parallel independent instances of an algorithm (without
repeating the same seed twice), which corresponds
to
choosing uniformly a random edge
in  $K=(V, \binom{V}{k})$, the complete $k$-uniform hypergraph on the
ground set $V$.
That is, we would like $\rv N_H(A)$ and $\rv N_K(A)$,
resp.\ $\Pp_{\rv e\sim E(H)}[\rv e\subseteq A]$
and  $\Pp_{\rv e\sim E(K)}[\rv e\subseteq A]$, to be similar for
every density-$p$ subset $A\subseteq V$. The  distribution of $\rv
N_K(A)$  is always the hypergeometric distribution $\Hyp(k,\lfloor pn
\rfloor, n)$,\footnote{i.e.,~the probability of producing $r$ hits is
    the same as pulling $r$ red balls from an urn containing $\lfloor
    pn \rfloor$
    red balls and $\lceil (1-p)n \rceil$ blue balls after $k$ pulls, not
    returning the balls to the urn after every pull. In other words,
    $\Pp[\rv N_H(A) = r] = \frac{\binom{k}{r} \binom{n-k}{\lfloor
pn\rfloor - r}}{\binom{n}{\lfloor pn \rfloor}}$.}
and   when   $k \ll n$, it tends to the binomial distribution $\Bin(k, p)$.
Similarly, $\Pp_{\rv e\sim E(K)}[\rv e\subseteq A]$ tends to $p^k$ for $k\ll n$.
We would therefore be interested in hypergraphs $H$ where $\rv
N_H(A)$ is as close as possible
to $\Bin(k, p)$, or where $\Pp_{\rv e \sim E(H)}[\rv e \subseteq A]$
is not much   larger
than $p^k$. To measure the distance between $\Pp_{\rv e\sim E(K)}[\rv
e\subseteq A]$ and
$\rv N_H(A)$, we use the total variation distance $\dist_{\TV}$;
recall that it is defined as follows
\[ \dist_{\TV}(\rv N_H(A), \Bin(k, p) ) := \frac12 \sum_{j = 0}^k
\Abs*{ \Pp[ \rv N_H(A) = j] - \binom{k}{j} p^j (1-p)^{k-j}}.\]

Our main results show that in \emph{any}  $k$-uniform hypergraph
$H=(V,E)$ on $n \ge \poly(k)$ vertices in which the vertex-degrees
are not significantly non-uniform, there
is a gap between the \emph{typical} behavior of $\rv N_H(A)$,
resp.\ $\Pp_{\rv e\sim E}[\rv e\subseteq A]$,
and its \emph{worst-case} behavior. More precisely, for every fixed
$0<p<1$, a $1-o(1)$ (\enquote{99.99\%}) fraction of the density-$p$ subsets
$A\subseteq E$
satisfies $\dist_{\TV}(\rv N_H(A), \Bin(k, p) )  \le o(1)$ and also a
Chernoff bound $\Pp\sqbr*{\Abs*{\rv N_H(A)-pk}\geq k\eta}
\leq 2\exp\parens*{-2k\eta^2}+o(1)$ (here, $0<\eta<1$). Consequently,
$\Pp_{\rv e\sim E}[\rv e\subseteq A]\in [p^k-o(1),p^k+o(1)]$
for all but an $o(1)$-fraction of $A$-s of density $p$. Otherwise said,
for \emph{almost all} density-$p$ subsets $A\subseteq V$, the behavior
of $\rv N_H(A)$ and $\Pp_{\rv e\sim E}[\rv e\subseteq A]$ is $o(1)$-close
to the optimum, i.e., to the behavior in a complete $k$-uniform hypergraph.
However, by contrast, assuming $r:=|E|/n$ is not too large relative
to $n$, there are density-$p$ subsets $A\subseteq V$
for which $\Pp_{\rv e\sim E}[\rv e\subseteq A]\geq p^k+\Omega(r^{-1}k^{-1})$,
and hence $\dist_{\TV}(\rv N_H(A), \Bin(k, p) )  \ge \Omega(r^{-1}k^{-1})$.
Equivalently, if    $\Pp_{\rv e\sim E}[\rv e\subseteq A]\leq p^k+\ee$
for all density-$p$ subsets $A\subseteq V$, then $H$ has at least
$\Omega(k^{-1}\ee^{-1} n)$ hyperedges, or rather,
$|E|/n\geq \Omega(k^{-1}\ee^{-1})$.
Here, and until the end of this section, the suppressed constants
in the the   asymptotic notations depend only on $p$,
which is assumed to be fixed.

Interpreted in the context of error-reduction, our results mean the
following: First,
\emph{any}  pseudorandom method for generating (exactly) $k$ seeds for a random
algorithm that allows
each individual seed to be picked with roughly the same probability
performs nearly
the same as $k$ independent repetitions \enquote{99.99\% of the time}, and is
therefore very good
in practice. On the other hand, in the context of $1$-sided
error reduction, any pseudorandom method for generating $k$ seeds
with random complexity $\lg n+\lg r$
cannot reduce
the error below $p^k+\Omega(r^{-1}k^{-1})$ in general.
Otherwise stated, it impossible to reduce the error below
$(p^k-o(1))$ --- as one would expect ---,
and even getting close to $p^k$, i.e.,~guaranteeing an error of $p^k +\ee$,
is expensive in the sense that it requires at least
$\lg n+\lg (\ee^{-1})-\lg k-O(1)$ random bits.
This lower bound, which appears to be the first one stated
explicitly, means in particular that (1) for a fixed $k$, approaching
an error probability of $p^k$
as $n$ grows
using only $\lg n+O(1)$ random bits is impossible
and (2)  making the error probability exponentially small in $k$
requires $\lg n+\Omega(k)$ random bits.
As the expander hitting set lemma shows that  $\lg n+O(k)$ random bits
are enough for an exponential reduction of the error, our lower bound is
optimal in this context. That said, from other perspectives,
e.g.~those of optimal sampling or hypergraph expansion, there is
still a gap between our lower bound
and what is provided by the expander hitting set lemma
and similar results.


In addition to the above, our results about the typical behavior of
hypergraph sampling also have implications to polling.
To illustrate   their utility,
suppose one is interested in estimating how common is a certain
feature among a large population of size $n$ by polling just $k$ of them.
The best way to do that would be to sample $k$ distinct members
chosen uniformly at
random and decide accordingly, but  this is not always easy, and
sometimes impossible.
Our result says that  \emph{if there is good reason to think that the
    set of members carrying
the feature behaves like a random subset} of density $p$ with $0\ll p \ll 1$,
then, with probability $1-o(1)$, any  reasonable  pseudorandom
sampling strategy (i.e.,~one which satisfies our modest assumptions)
would work as well as a truly random poll.

Finally, our results about the worst-case sampling behavior also
imply new bounds on dispersers
and vertex-expanders in certain parameter regimes.



The remainder of this introduction
gives a formal account of our results, their implications,
and comparison to other works.

\subsection{Samplers and Confiners}

Let $\ee > 0$ be some small value. In order to streamline our discussion,
we call a $k$-uniform hypergraph $H  = (V, E)$ on $n$ vertices an
\ref{eq:eps_sampler} for $p$-dense subsets, or an
\emph{$(\ee,p)$-sampler} for short, if the random variable $\rv
N_H(A)$ is \emph{$\ee$-close} in distribution to a binomial random
variable for every $A\subseteq V$ with $\lfloor pn\rfloor$ elements.
Formally, we require that  the distance in total variation between
$\rv N_H(A)$ and $\Bin(k,p)$
is at most $\ee$ for every $A\subseteq V$ of density $p$, that is,
\begin{equation}\label{eq:eps_sampler}\tag{$\ee$-sampler}
    \dist_{\TV}(\rv N_H(A), \Bin(k, p) )  \le \ee\qquad\forall~A \in
    \binom{V}{\lfloor p \cdot n \rfloor}.
\end{equation}
By the usual Chernoff bound for binomial distribution, every
\ref{eq:eps_sampler} for $p$-dense sets satisfies an approximate
Chernoff bound
\begin{equation}\label{eq:chernoff_show}
    \Pp\sqbr*{\Abs*{\rv N_H(A)-pk}\geq k\eta}
    \leq 2\exp\parens*{-2 k\eta^2}+\ee
    \qquad\forall 0<\eta\leq 1,~A \in \binom{V}{\lfloor p \cdot n \rfloor}.
\end{equation}
We call such hypergraphs \hyperref[eq:chernoff_show]{$\ee$-Chernoff
samplers} for $p$-dense
sets, or just $(\ee,p)$-Chernoff samplers.
(We caution that what is usually called
    a \enquote{sampler} in the literature refers, up to reparametrization,
    to the weaker notion
of a Chernoff sampler, cf.~\cite[Dfn.~3.29]{Vadhan12}.)


When $\Pp\sqbr*{\rv e \subseteq A}$
does not exceed 
$\epsilon > 0$ for every $A$ of density $p$, we  will call the
hypergraph $H = (V, E)$ an \ref{eq:eps_confiner} for $p$-dense
subsets, or an $(\epsilon,p)$-confiner for short. Formally,
\begin{equation}
    \label{eq:eps_confiner}\tag{{$\epsilon$}-confiner}
    \Pp\sqbr*{\rv e \subseteq A}  \le \epsilon\qquad\forall~A \in
    \binom{V}{\lfloor p \cdot n\rfloor}.
\end{equation}
Note that this definition also makes sense for  \emph{non-uniform}
hypergraphs of average uniformity $k$. As a consequence of one of our
results, any $k$-uniform
\hyperref[eq:eps_confiner]{$(\epsilon,p)$-confiner} with $n$ vertices
should necessarily satisfy $\epsilon \geq {n\choose k}^{-1}  { n  -
k\choose \lfloor pn \rfloor - k }   \ge p^k -\frac{k-1}{n-1}$.
The number $ {n\choose k}^{-1}  {n-k\choose \lfloor pn
\rfloor -  k}$ on the left hand side is the confinement
probability of any $p$-dense set of
vertices in the  complete $k$-uniform hyper-graph on $V$.

It is an easy observation that any
\hyperref[eq:eps_sampler]{$(\ee,p)$-sampler} is a
\hyperref[eq:eps_confiner]{$(p^k+\ee,p)$-confiner}.

We will sometimes talk about hypergraphs which
are \hyperref[eq:eps_sampler]{$\ee$-samplers}
(resp.~\hyperref[eq:eps_confiner]{$\epsilon$-confiners},
\hyperref[eq:chernoff_show]{$\ee$-Chernoff samplers}), for more
general families of subsets $A\subseteq V$,
e.g., a subcollection of the  density-$p$ subsets.

\begin{remark}
    An \hyperref[eq:eps_confiner]{$(\epsilon,p)$-confiner} on $n$
    vertices and $m$ hyperedges is
    essentially the
    same thing
    as a \emph{$(\lg((1-p)n),\epsilon)$-disperser},
    cf.\ \cite[Prop.~6.20]{Vadhan12}, \cite{Shaltiel04},
    and also essentially the same as a
    $({=(1-p)n},(1-\epsilon)m)$-vertex expander in the
    sense of \cite[Prop.~4.7]{Vadhan12}.
    Since we are interested in  confinement probability, the notion of
    an  \hyperref[eq:eps_confiner]{$(\epsilon,p)$-confiner} is more
    natural  for our purposes.
    We will explain the implications of our results to dispersers
    and vertex expanders after they are presented.
\end{remark}

In what follows, a \emph{family of hypergraphs}
is an infinite sequence of hypergraphs $\{H_i = (V_i,E_i)\}_{i\geq 1}$
with $|V_i|\to \infty$.
A hypergraph $H=(V,E)$ is \emph{$r$-sparse} if $|E|\leq r|V|$;
the \emph{sparsity} of $H$ is its (hyper)edge-to-vertex ratio $|E|/|V|$.
A   family of hypergraphs is said to be sparse
if there is $r>0$ such that all hypergraphs in the family are $r$-sparse.

Unless indicated otherwise, the asymptotic notation $O_{a,b,\dots}(\bullet)$
means that the suppressed constant in the $O$-notation depends (only)
on the parameters $a,b,\dots$.
The same convention applies when we shall write
$\Omega_{a,b,\dots}(\bullet)$, $\Theta_{a,b,\dots}(\bullet)$
or $\poly_{a,b,\dots}(\bullet)$.


\subsection{Our Contributions}\label{ss:our_contributions}

\subsubsection{Sampling in the Typical Case}

Our first result shows that \emph{any} uniform hypergraph $H = (V,
E)$ whose vertex degrees   are not too outlandishly distributed is
a very good sampler (resp.\ Chernoff sampler, confiner)
for \emph{almost} all $p$-dense subsets of $V$.

\begin{theorem}[Simplification of \cref{cor:fraction} and
    \cref{cor:thm0}]\label{cor:fraction-show}
    Let $0<p<1$
    and let $H = (V, E)$ be a $k$-uniform hypergraph on
    $n\geq \frac{4k^2}{p(1-p)}$ vertices.
    Let $\beta\in(0,1]$ and suppose that no vertex of $H$
    is contained in more than an $\parens*{n^{-\beta}}$-fraction of the edges
    (e.g., this holds for $\beta=1-\frac{\lg k}{\lg n}\geq \frac{1}{2}$
    when $H$ is regular).
    Then the following hold.
    \begin{enumerate}[(i)]
        \item For every $\alpha\in (0,1)$, we  have
            \begin{equation}\label{eq:dist-showcase}
                \dist_{\TV}( \rv N_H(A), \Bin(k, p)) \le
                n^{-\frac{\beta(1-\alpha)}{2}} + \frac{k-1}{n-1}
            \end{equation}
            for all but an $O(\frac{k^4}{p(1-p)})\cdot   n^{-\alpha\beta}$
            fraction of the subsets $A \subseteq V$ with $|A| = p \cdot n$.
        \item For all $\alpha\in (0,1]$, $\alpha'\in [0,1]$ and $\eta
            \geq 0$, we have
            \begin{equation}
                \label{eq:chernoff-showcase}
                \Pp\sqbr*{ \Abs*{ \rv N_H(A) - k p } > k \eta} \le
                2\exp(- k \eta^2) +
                \exp(-\textstyle{\frac{1-\alpha'}{2}}k\eta^2)
                n^{-\frac{\beta(1-\alpha)}{2}}
            \end{equation}
            for all but an  $ O(\frac{k^4}{p(1-p)})\exp(-\alpha'k\eta^2) \cdot
            n^{-\alpha\beta}$ fraction of the subsets $A \subseteq V$ with
            $|A| = p \cdot n$.
        \item For all $\alpha\in (0,1)$, we have
            \begin{equation}
                \label{eq:confinement-showcase}
                \Abs*{\Pp_{\rv e\sim E}\sqbr*{\rv e\subseteq A} - p^k}\leq
                n^{-\frac{1-\alpha}{2}}
            \end{equation}
            for all but an $O(\frac{k^2}{ 1-p}) p^k \cdot
            n^{-\alpha\beta}$-fraction of the subsets  $A \subseteq V$ with
            $|A| = p \cdot n$.
    \end{enumerate}
\end{theorem}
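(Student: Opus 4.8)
The plan is to apply the second-moment (Chebyshev) method to the \emph{random} choice of the density-$p$ set $A$. First I would fix $0<p<1$ and set up notation: write $d(v)$ for the degree of $v\in V$ in $H$ and, more generally, $d(T):=\#\{e\in E:T\subseteq e\}$ for $T\subseteq V$. The hypothesis gives $d(T)\le d(v)\le n^{-\beta}|E|$ for any $v\in T$, while $\sum_{|T|=s}d(T)=\binom{k}{s}|E|$; multiplying these yields the crucial bound $\sum_{|T|=s}d(T)^2\le \binom{k}{s}n^{-\beta}|E|^2$ for every $s\ge 1$. For a density-$p$ set $A$ and $0\le j\le k$, let $f_j(A):=\Pp_{\rv e\sim E}[\,|\rv e\cap A|=j\,]=|E|^{-1}\sum_{e\in E}\mathbf{1}[|e\cap A|=j]$ denote the probability mass function of $\rv N_H(A)$.

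Next I would record two facts that do not involve $H$. (a) If $A$ is a uniformly random density-$p$ subset of $V$, then for each fixed edge $e$ the count $|e\cap A|$ has law $\Hyp(k,\lfloor pn\rfloor,n)$; averaging over $e$, $\mathbb{E}_A f_j(A)=\Pp[\Hyp(k,\lfloor pn\rfloor,n)=j]$ for every $j$. (b) A standard coupling of a sample without replacement with one with replacement, controlling the chance of a repeated draw, gives $\dist_{\TV}\bigl(\Hyp(k,\lfloor pn\rfloor,n),\Bin(k,p)\bigr)\le \frac{k-1}{n-1}$; this is the origin of the additive $\frac{k-1}{n-1}$ in \eqref{eq:dist-showcase}. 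By the triangle inequality it then suffices to control $\frac12\sum_{j}\bigl|f_j(A)-\mathbb{E}_A f_j(A)\bigr|$ for all but a small fraction of $A$.

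The heart of the matter is the variance estimate $\mathrm{Var}_A[f_j(A)]\le O\bigl(\frac{1}{p(1-p)}\bigr)\,k\,n^{-\beta}$, uniformly in $j$. Writing $f_j(A)=\mathbb{E}_{\rv e}\mathbf{1}[|\rv e\cap A|=j]$ for a uniform random edge $\rv e$, one has $\mathrm{Var}_A[f_j(A)]\le \mathbb{E}_{\rv e,\rv e'}\bigl[\mathrm{Cov}_A(\mathbf{1}[|\rv e\cap A|=j],\mathbf{1}[|\rv e'\cap A|=j])\bigr]$ for independent uniform edges $\rv e,\rv e'$, and I would split according to $s:=|\rv e\cap\rv e'|$. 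For $s\ge 1$ the covariance is at most $1$ while $\Pp_{\rv e,\rv e'}[\,|\rv e\cap\rv e'|\ge 1\,]\le |E|^{-2}\sum_{|T|=1}d(T)^2\le k\,n^{-\beta}$, which produces the stated term. For $s=0$ the two counts are only weakly (indeed negatively) correlated because $A$ has a \emph{fixed} size; a short computation with the joint law of $(|e\cap A|,|e'\cap A|)$ bounds this covariance by $O\bigl(\frac{k^2}{p(1-p)\,n}\bigr)$, which, since $n\ge 4k^2/(p(1-p))$, is of lower order than the $s\ge1$ term. Granting this, \eqref{eq:dist-showcase} follows: Chebyshev for each $j$ together with a union bound over $j\in\{0,\dots,k\}$ makes the event $\max_j|f_j(A)-\mathbb{E}_A f_j(A)|>\lambda$ have $A$-probability $O\bigl(\frac{k^4}{p(1-p)}\bigr)n^{-\beta}/\lambda^2$; taking $\lambda=\frac{2}{k+1}n^{-\beta(1-\alpha)/2}$ makes this $O\bigl(\frac{k^4}{p(1-p)}\bigr)n^{-\alpha\beta}$, and off this event $\dist_{\TV}(\rv N_H(A),\Bin(k,p))\le \frac{k+1}{2}\lambda+\frac{k-1}{n-1}=n^{-\beta(1-\alpha)/2}+\frac{k-1}{n-1}$.

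Finally, parts (ii) and (iii) run the same machine on different functionals. For the Chernoff bound \eqref{eq:chernoff-showcase} one replaces $f_j$ by the tail functionals $g^{\pm}_t(A):=\Pp_{\rv e}[\pm(|\rv e\cap A|-pk)\ge t]$ with $t=k\eta$: their $A$-averages are at most the hypergeometric (hence binomial) Chernoff bound $\exp(-2k\eta^2)$, and, decisively, for overlapping edges the covariance is itself at most $\Pp_A[\,|e\cap A|-pk\ge t\,]\le\exp(-2k\eta^2)$, so the variance inherits a factor $\exp(-2k\eta^2)$; Chebyshev then produces the claimed exceptional fraction with its $\exp(-\alpha' k\eta^2)$ saving. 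For the confiner bound \eqref{eq:confinement-showcase} one uses the single functional $f_k(A)=\Pp_{\rv e}[\rv e\subseteq A]$, with $\mathbb{E}_A f_k(A)=\binom{\lfloor pn\rfloor}{k}/\binom{n}{k}\in[p^k-\frac{k-1}{n-1},\,p^k]$; here no union bound over $j$ is needed (this is why the power of $k$ drops from $4$ to $2$), and for overlapping edges $\mathrm{Cov}_A$ carries a factor $p^k$ because $\Pp_A[e\subseteq A]\le p^k$ (and $s\ge 2$ overlaps contribute even less, $\le p^{2k-s}$), which is the source of the $p^k$ in the exceptional fraction; a careful accounting of these gains — possibly via a moment of order larger than $2$ — delivers the sharper threshold $n^{-(1-\alpha)/2}$. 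The main obstacle throughout is precisely this variance (or higher-moment) estimate: extracting the factor $n^{-\beta}$ from the degree hypothesis via $\sum_{|T|=s}d(T)^2\le\binom{k}{s}n^{-\beta}|E|^2$ is immediate, but bounding the covariance of two edge-counts under a fixed-size random set $A$ with the correct dependence on $k$ and on $p(1-p)$ — so that the assumption $n\ge 4k^2/(p(1-p))$ is exactly what absorbs the correction — is the delicate, computational core; everything after it is Chebyshev, a union bound, and optimization over the parameters $\alpha,\alpha'$.
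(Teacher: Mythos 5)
Your proposal follows the same route the paper takes: randomize over density-$p$ sets $A$, compute the expectation and variance of the functionals $f_j(A)=\Pp_{\rv e}[|\rv e\cap A|=j]$, and combine Chebyshev with a union bound over $j$. Your identity $\Exp_A f_j(A)=\Pp[\Hyp(k,\lfloor pn\rfloor,n)=j]$ is exactly the paper's \cref{eq:expectation}, and your split of the pair expectation $\Exp_{\rv e,\rv e'}[\Cov_A(\cdot,\cdot)]$ into intersecting and disjoint edge pairs is precisely the paper's Claim~\ref{cl:tent}. The only presentational differences are minor: (a) you bound $\Pp[\rv e\cap\rv e'\neq\emptyset]$ via $|E|^{-2}\sum_v d(v)^2\le kn^{-\beta}$ where the paper counts at most $k(D-1)$ neighbors of each edge — these are the same estimate under the hypothesis $D/|E|\le n^{-\beta}$; and (b) for the Chernoff part you propose working with the tail functionals $g^{\pm}_{k\eta}$ directly, whereas the paper keeps the per-$j$ indicators and sums the union bound only over tail indices $j$ with $|j-kp|>k\eta$ — both give the same exponential saving, since in either case the crucial point is that the expectation of the event being covered by the variance decays like $\exp(-\Theta(k\eta^2))$. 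Your accounting of why the power of $k$ in the exceptional fraction drops from $4$ in (i) to $2$ in (iii) (no union bound over $j$) also matches the paper.

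One caveat worth recording: for part (iii) you hedge that the threshold $n^{-(1-\alpha)/2}$ might require a higher moment, and this caution is well-placed. A Chebyshev argument with the paper's variance bound naturally yields a threshold of $\widetilde n^{-(1-\alpha)/2}$ with $\widetilde n:=|E|/D\le n$, not $n^{-(1-\alpha)/2}$; indeed, the step in the paper's proof of \cref{cor:thm0} that passes from $\frac{n^{1-\alpha}}{\widetilde n}$ to $\widetilde n^{-\alpha}$ requires $n\le\widetilde n$, whereas in fact $\widetilde n\le n$. So the correct threshold there is presumably $\widetilde n^{-(1-\alpha)/2}$ (as in \cref{cor:fraction}), which for $\beta$-degree-bounded hypergraphs becomes $n^{-\beta(1-\alpha)/2}$ rather than $n^{-(1-\alpha)/2}$. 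Your sketch is otherwise sound, and no higher moments are needed: the second moment together with the degree hypothesis is what the paper actually uses.
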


In other words,
provided the density $p$ is bounded away from $0$ and $1$ and $n\geq
\Omega_p(k^4)$, \emph{any} reasonable $k$-uniform hypergraph $H$ on $n$
vertices is an $o(1)$-sampler, an $o(1)$-Chernoff sampler and a
$(p^k+o(1))$-confiner for \emph{almost any} density-$p$ subset $A\subseteq V$.
This says that apart from a negligible amount of exceptions, the
sampling qualities of
$H$ are almost identical to those of the
complete graph $(V,{V\choose k})$. In the context of error-reduction,
this further means that for all but a vanishing fraction of
randomized algorithms (and inputs) for decision problems, the advantage of
repeating the algorithm using true IID seeds over samples obtained
from a uniformly random  edge of an approximately-degree-regular
uniform hypergraph is essentially \emph{negligible}.

\begin{remark}\label{rem:average_stuff}
    (i) The additive factor of $(k-1)/(n-1)$ occuring in
    \cref{eq:dist-showcase} accounts for the difference between the
    hypergeometric distribution $\Hyp(k,pn,n)$ and the binomal
    distribution $\Bin(k,p)$. Indeed, we show that this factor can be
    removed once  $\Bin(k,p)$ is replaced with $\Hyp(k,pn,n)$,
    q.v.~\cref{cor:fraction}.


    (ii) If $D$ is the maximum vertex degree of $H=(V,E)$,
    then the condition that no vertex is contained in no more than
    $n^{-\beta}$-faction
    of the edges appearing in \cref{cor:fraction-show} is equivalent to
    having $\frac{D}{|E|}\leq n^{-\beta}$.
    Presented this way, it is a very mild assumption.

    (iii) Part (iii) of \cref{cor:fraction-show}  implies that any
    $k$-uniform
    \hyperref[eq:eps_confiner]{$(\epsilon,p)$-confiner} with
    vertex-degrees not-too-large satisfies
    $\epsilon\geq p^k-o(1)$. By our next theorem,
    this actually holds without any assumption on the vertex degrees
    and under the milder assumption that $H$ has \emph{average} uniformity $k$.
    In fact, $k$ does not have to be an integer.
\end{remark}

\begin{theorem}[Simplified Version of
    \cref{cor:conf_lowerbound_main}]\label{TH:worst_error_confine}
    Fix some real numbers $0<p<1$ and $k>0$.
    Let $H=(V,E)$ be a hypergraph on $n\geq \frac{1}{p(1-p)}$
    vertices with average uniformity $k$.
    If $H$ is an \hyperref[eq:eps_confiner]{$(\epsilon,p)$-confiner},
    then $\epsilon\geq  p^k- 2k/n$.

    Moreover, when $H$ is  $k$-uniform, we have the improved lower bound
    $\epsilon \geq {n\choose k}^{-1}  {n-k \choose \lfloor pn\rfloor
    -k} \ge p^k -
    \frac{k-1}{n-1}$.
\end{theorem}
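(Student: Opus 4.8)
The plan is to deduce both estimates from a single first-moment argument over the density-$p$ subsets, combined with a convexity property of a falling-factorial function. Write $m=\lfloor pn\rfloor$, so that $1\le m\le n-1$ by the hypothesis $n\ge\frac1{p(1-p)}$, and for an integer $s\ge 0$ put $g(s)=\binom{n-s}{m-s}/\binom nm=\prod_{i=0}^{s-1}\frac{m-i}{n-i}$ (with $g(s)=0$ for $s>m$). The key point is that if $H=(V,E)$ is an \hyperref[eq:eps_confiner]{$(\epsilon,p)$-confiner}, then $\epsilon$ is at least the average of $\Pp_{\rv e\sim E}[\rv e\subseteq A]$ over a uniformly random $A\in\binom Vm$; exchanging the two averages and using that $\Pp_A[e\subseteq A]=\binom{n-|e|}{m-|e|}/\binom nm=g(|e|)$ for every fixed edge $e$, this quantity equals $\frac1{|E|}\sum_{e\in E}g(|e|)$. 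Hence $\epsilon\ge\frac1{|E|}\sum_{e\in E}g(|e|)$.

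When $H$ is $k$-uniform every edge has size $k$, so this reads $\epsilon\ge g(k)=\binom{n-k}{m-k}/\binom nm$, which is precisely the confinement probability of a density-$p$ set in the complete $k$-uniform hypergraph on $V$. It then remains to compare $g(k)$ with $p^k$. Since $\frac{m-i}{n-i}\le\frac mn\le p$ for every $i$ (because $m<n$), we have $g(k)\le p^k$, and I would bound $p^k-g(k)$ from above by telescoping it as $\sum_{j=0}^{k-1}\bigl(p-\tfrac{m-j}{n-j}\bigr)p^{j}\prod_{i=j+1}^{k-1}\tfrac{m-i}{n-i}$, then using $p-\tfrac{m-j}{n-j}=\tfrac{(1-p)j+(pn-m)}{n-j}$ with $pn-m\in[0,1)$, bounding each trailing product by $(m/n)^{k-1-j}\le 1$, and summing the resulting geometric-type series; tracking the geometric decay is what yields an $O(k/n)$-size error of the claimed form instead of a crude $O(k^2/n)$.

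For general $H$ of average uniformity $k$ (which need not be an integer) the extra ingredient is convexity of $g$ on the nonnegative integers. Indeed, for $1\le s\le m$ the inequality $g(s-1)-2g(s)+g(s+1)\ge 0$ becomes, after dividing by $g(s)>0$ and writing $a=m-s\ge 0$ and $b=n-s>a$, the inequality $\tfrac ab+\tfrac{b+1}{a+1}\ge 2$, which rearranges to $(b-a)(b-a-1)\ge 0$ and holds since $b>a$ are integers; the cases $s\ge m$ are immediate. Let $\tilde g$ be the piecewise-linear, hence convex, interpolation of $g$ on $[0,\infty)$. By convexity and Jensen's inequality, $\frac1{|E|}\sum_{e\in E}g(|e|)=\frac1{|E|}\sum_{e\in E}\tilde g(|e|)\ge\tilde g\bigl(\tfrac{1}{|E|}\sum_{e\in E}|e|\bigr)=\tilde g(k)$, so $\epsilon\ge\tilde g(k)$. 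One finishes by the same elementary estimate to obtain $\tilde g(k)\ge p^k-2k/n$, the slightly larger constant absorbing the interpolation between $g(\lfloor k\rfloor)$ and $g(\lceil k\rceil)$.

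The step I expect to be the main obstacle is the final elementary estimate bounding $p^k-g(k)$ (resp.\ $p^k-\tilde g(k)$): one must not simply bound the trailing products $\prod_{i>j}\tfrac{m-i}{n-i}$ by $1$, as that loses a factor of $k$, and one must carry the rounding term $pn-m\in[0,1)$ through the computation, since it is essentially this term that the additive loss in the bound is designed to capture. The convexity lemma, although its proof is two lines, is the other genuinely necessary idea; beyond these two points the argument is routine, using $n\ge\frac1{p(1-p)}$ only to guarantee $1\le m\le n-1$.
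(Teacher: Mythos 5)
Your averaging step $\epsilon\ge\Exp_{\rv A}\Pp_{\rv e}[\rv e\subseteq\rv A]=\frac1{|E|}\sum_{e\in E}g(|e|)$ and the deduction $\epsilon\ge g(k)$ in the $k$-uniform case coincide with the paper's argument (the paper derives them from its \cref{lem:hypg}). Your convexity step, however, goes through a genuinely different route than the paper's: you prove that $g(s)=\binom{n-s}{m-s}/\binom nm$ is convex in the integer variable $s$ (a clean two-line identity, $(b-a)(b-a-1)\ge 0$), build a piecewise-linear extension $\tilde g$, and apply Jensen to get $\epsilon\ge\tilde g(k)$. The paper instead partitions $E$ by uniformity, bounds each uniform block by the already-established $p^j-\frac{j-1}{n-1}$, and then uses convexity of the much simpler function $x\mapsto p^x$. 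Both are correct, and yours has the mild virtue of deferring the comparison with $p^k$ to a single place; the paper's has the virtue of reducing the whole non-uniform case to the uniform case in one line.

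The genuine gap is exactly the step you flag as the expected obstacle. The paper does not prove $g(k)\ge p^k-\tfrac{k-1}{n-1}$ elementarily; it cites K\"unsch's theorem (its \cref{thm:binhyp}), which gives $\dist_{\TV}(\Hyp(k,pn,n),\Bin(k,p))\le\tfrac{k-1}{n-1}$, and reads off the bound for the single event $\{{=}k\}$. Your proposed telescoping, as sketched, does not close this gap. Writing $q_j=\tfrac{m-j}{n-j}$ and $\theta=pn-m\in[0,1)$, bounding the trailing products $\prod_{i>j}q_i$ by $(m/n)^{k-1-j}\le p^{k-1-j}$ pulls out a common prefactor $p^{k-1}$, leaving
\[
p^k-g(k)\le p^{k-1}\sum_{j=0}^{k-1}\frac{\theta+j(1-p)}{n-j},
\]
and the surviving sum is of size $\Theta(k^2/n)$, not $\Theta(k/n)$: there is no geometric decay in $j$ in the dominant term $j(1-p)/(n-j)$, so ``summing a geometric-type series'' is not what happens. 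The factor of $k$ can only be recovered by observing separately that $k\,p^{k-1}(1-p)\le\bigl(1-\tfrac1k\bigr)^{k-1}\le 1$ for all $k\ge1$ and $p\in(0,1)$, which converts the $p^{k-1}\cdot k^2$ into something of order $k$; this inequality is the real content and it does not appear in your sketch. Moreover, even after that fix one needs roughly $n\ge 2k$ to bound $\tfrac{n-1}{n-k+1}$ by a constant, a condition not implied by the stated hypothesis $n\ge\tfrac1{p(1-p)}$ (the paper sidesteps this because K\"unsch's theorem carries its own hypothesis $k\le m$, which it invokes implicitly). So the final estimate, both for $g(k)$ and for the interpolated $\tilde g(k)$, needs to be worked out in full before your proof is complete; as written it is a sketch with a correct skeleton but a missing key inequality.
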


\subsubsection{Sampling in the Worst Case}
\label{subsec:worst-case}

\cref{cor:fraction-show} implies that
designing \hyperref[eq:eps_sampler]{$\ee$-samplers},
\hyperref[eq:chernoff_show]{$\ee$-Chernoff samplers} and
\hyperref[eq:eps_confiner]{$\epsilon$-confiners}
which perform almost optimally (like a complete hypergraph)
for all but a \emph{vanishing} fraction
of density-$p$ sets $A \subseteq V$ is very simple, as one only needs to control
the uniformity and the degree of the hypergraph $H = (V, E)$, and without
further considerations the prerequisites of \cref{cor:fraction-show}
are readily satisfied.
With that in mind,
it is natural to look for examples of $k$-uniform samplers
(resp.~Chernoff samplers, confiners)
$H=(V,E)$
for $p$-dense sets
whose behavior for \emph{all} $p$-dense $A\subseteq V$ is close to
the optimal, typical behavior.
For applications such as error reduction, these
examples should be as sparse as possible.
We shall survey the best known such constructions in
\cref{subsec:relation} below, but mention
right away that there are no known sparse samplers (resp.\ Chernoff
samplers, confiners)
in which the worst-case behavior is close  to the typical behavior
from \cref{cor:fraction-show}.

Our next result shows that finding sparse families of $k$-uniform
samplers and confiners (but not
Chernoff samplers)
whose performance on all density-$p$ sets approaches the optimal
performance on \enquote{99.99\%} of those sets is in fact impossible.
In the case of confiners, this moreover holds for hypergraphs with
\emph{average} uniformity $k$, which need not be integral.

\begin{theorem}[Simplified Version of
    \cref{TH:lower-bound}]\label{TH:hitting_bound_show}
    Fix $p\in (0,1)$. Let $H = (V, E)$ be a hypergraph of
    average uniformity $k\geq 2$ on $n $ vertices and $ r \cdot n$
    edges ($r>0$)
    such that $n\geq  \poly(r,k, p^{-1}, (1-p)^{-1} )$.
    If $H$ is an \hyperref[eq:eps_confiner]{$(\epsilon,p)$-confiner},  then
    \[ \epsilon \ge \min\left\{p^k + \Omega\left(\frac{p(1-p)^2}{rk}
        \right),1\right\}
    .\]
    Consequently, if $H$ is a $k$-uniform $(\ee,p)$-sampler and
    $\ee<1-p^k$, then
    $\ee\geq \Omega\left(\frac{p(1-p)^2}{rk}
    \right)$.
\end{theorem}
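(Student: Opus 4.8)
\emph{Proof proposal.}
The task is to produce, for an arbitrary hypergraph $H=(V,E)$ of average uniformity $k\geq2$ with $|V|=n$ and $|E|=rn$, a density-$p$ set $A\subseteq V$ whose confinement probability overshoots $p^{k}$ by $\Omega\!\left(\tfrac{p(1-p)^{2}}{rk}\right)$; equivalently, writing $Z_{A}=\#\{e\in E:e\subseteq A\}$ and $m=|E|$, to show
\[
  \max_{|A|=\lfloor pn\rfloor}Z_{A}\;\geq\;p^{k}m+\Omega\!\left(\tfrac{p(1-p)^{2}}{k}\,n\right).
\]
The \emph{baseline} $\tfrac1m\sum_{e}\Pr_{A}[e\subseteq A]\geq p^{k}-\tfrac{2k}{n}$ is already available: it follows from the averaging computation behind \cref{TH:worst_error_confine} together with the convexity of $j\mapsto p^{j}$ (so that a random density-$p$ set confines roughly $p^{k}m$ edges, with strict surplus as soon as the edge-sizes are non-constant). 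The entire content of the theorem is the \emph{additive surplus}, and the plan is to extract it from a mechanism that does not decay with $n$.

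The key quantitative lever is that a hypergraph of average uniformity $k$ has average degree $\bar d=rk$, hence by Cauchy--Schwarz $\sum_{v}d_{v}^{2}\geq\tfrac{(\sum_{v}d_{v})^{2}}{n}=rk\cdot\sum_v d_v=rk\cdot km$; since $\sum_v d_v^2=\sum_{e,f}|e\cap f|$, the number of \emph{ordered, intersecting} edge-pairs is $\gtrsim\,rk\cdot m$ (up to the diagonal and the bound $|e\cap f|\le k$). These intersecting pairs are exactly what makes the confinement count $Z_{A}$ positively correlated across edges, and they should be large enough to force dispersion of $Z_{A}$ beyond the binomial baseline. Concretely I would:
(i) split on how ``balanced'' $H$ is. If some vertex $v$ has degree $D\gg\bar d$ (equivalently, if $H$ has small matching number, so by LP duality a small vertex cover and hence a heavy vertex), put $v\in A$ and fill the rest of $A$ uniformly at random: conditioning on $v\in A$ multiplies the confinement probability of each edge through $v$ by $\approx p^{-1}$ while costing each edge avoiding $v$ only an $O\!\big(\tfrac{(1-p)k}{pn}\big)$-factor, which nets a surplus $\propto(1-p)p^{k-1}\tfrac{D-\bar d}{m}$; this already beats the target when $D$ is appreciably above average. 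Similarly, if the edge-sizes have variance $\Omega(k)$, the strict convexity gap $\tfrac1m\sum_e p^{|e|}-p^{k}$ alone exceeds the target.
(ii) In the complementary ``near-regular, near-uniform'' case, build $A$ around a random sub-structure — e.g.\ $A=$ the vertex set of a uniformly random maximal packing of pairwise-disjoint edges, truncated to fit in $\lfloor pn\rfloor$ vertices and topped up at random — and analyze $\mathbb E[Z_{A}]$ via the intersecting-pairs count above; alternatively run a second-moment argument on $Z_{A}$ for a carefully tilted base distribution over density-$p$ sets.

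\textbf{The main obstacle} is step (ii), and specifically making the surplus \emph{independent of $n$}. A single application of the second moment only yields $\max_{A}Z_{A}\geq\mathbb E[Z_{A}]+\tfrac{\operatorname{Var}(Z_{A})}{\mathbb E[Z_{A}]}$, and because $\mathbb E[Z_{A}]=\Theta(m)$ is large this gives a surplus that vanishes like $1/n$; the honest $\Omega(n/k)$-surplus must instead come from an amplified/iterated argument (tilting the measure, or compounding many local improvements, or invoking a disperser-style lower bound — recall the remark identifying confiners with dispersers), and one must separately cap the contribution of edges of size $\gg k$ by a truncation argument (they are confined with probability $\le p^{\,\gg k}$, hence are individually negligible, but one must avoid discarding a constant fraction of $E$). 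Once the surplus for general average uniformity $k$ is in hand, the sharper $k$-uniform statement follows by replacing the binomial baseline with the exact hypergeometric value $\binom{n}{k}^{-1}\binom{n-k}{\lfloor pn\rfloor-k}\ge p^{k}-\tfrac{k-1}{n-1}$, and the corollary for samplers is immediate because any $(\varepsilon,p)$-sampler is a $(p^{k}+\varepsilon,p)$-confiner, so $p^{k}+\varepsilon\geq p^{k}+\Omega\!\left(\tfrac{p(1-p)^{2}}{rk}\right)$ whenever $\varepsilon<1-p^{k}$.
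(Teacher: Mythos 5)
Your proposal honestly flags its own fatal gap, and that gap is real: the second-moment bound $\max_A Z_A \ge \mathbb{E}[Z_A] + \operatorname{Var}(Z_A)/\mathbb{E}[Z_A]$ does yield only an $O(1/n)$ surplus, and neither the ``maximal packing'' heuristic nor a tilted second moment as sketched will promote it to the required $n$-independent $\Omega\!\left(\tfrac{p(1-p)^2}{rk}\right)$. Your step (i) cases (heavy vertex, or edge-size variance $\Omega(k)$) are plausible but are exactly the cases that can fail simultaneously (e.g.\ regular uniform hypergraphs), so they cannot carry the burden; the regular uniform case is the hard core of the problem, not a side case.

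The paper's mechanism is entirely different from anything in your sketch, and it is worth internalizing. Rather than picking a random vertex set and computing second moments, one picks a \emph{random edge set} $\rv B\subseteq E$ (each edge independently with probability $\gamma$) and then takes $A=V(\rv B)$. The point is that $E(A)$ automatically contains $\rv B$, so the confinement count starts at $\gamma m$ by construction, and the whole game is to lower-bound the extra ``accidentally covered'' edges. That extra term is controlled by a correlation inequality (\cref{PR:meeting-probability}: overlaps among the incidence sets only increase the probability that a random $\rv B$ meets all of them) combined with Jensen's inequality, yielding the explicit lower bound of \cref{PR:edges-fraction-avg}. The size $|V(\rv B)|$ is then pinned near $\delta n$ by applying the paper's typical-case results to the \emph{dual} hypergraph $H^*$ (\cref{PR:vertex-upper-bound-flip}), and the resulting $\gamma$-parametrized bound is shown to dominate $f_{k,r}(\delta)$ via the multivariable optimization \cref{TH:hard-optimization}; finally \cref{TH:distance-to-delta-to-k} converts $f_{k,r}(\delta)$ into the additive surplus $\delta^k+\Omega(\tfrac{\delta(1-\delta)^2}{rk})$. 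Notice the paper's decomposition runs in the opposite direction of yours: it proves the bound first for (near-)regular hypergraphs with bounded degree/uniformity, and then reduces the general case to that by rewiring (\cref{subsec:removing-vrt-edges}) — whereas you tried to dispose of the irregular case first and are stuck on the regular one.
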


Otherwise stated, \hyperref[eq:eps_confiner]{$(p^k+\ee,p)$-confiners}
(and {\it a fortiori} $(\ee,p)$-samplers) cannot be too sparse:

\begin{corollary}[Simplified Version of
    \cref{TH:lower-bound-epsilon-view}]\label{CR:eps_view_show}
    Let $p,k,n,H$ be as in \cref{TH:hitting_bound_show}, and let $\ee\in (0,
    1-p^k)$.
    If $H$ is a  \hyperref[eq:eps_confiner]{$(p^k+\ee,p)$-confiner} (resp.~an
    $(\ee,p)$-sampler when $H$ is $k$-uniform),
    then
    \[|E|  \ge
        n \cdot  \min\left\{
            \Omega\parens*{\frac{p(1-p)^2}{k \ee}},
            \frac{n^c}{\poly(k,\frac{1}{p},\frac{1}{1-p})}
    \right\},\]
    where $c>0$ is an absolue constant (one can take $c=\frac{1}{13}$).
    In particular, if $n\geq \poly(k,\ee^{-1},p,\frac{1}{1-p})$,
    then
    \[|E|\geq n\cdot \Omega\parens*{\frac{p(1-p)^2}{k \ee}}.\]
\end{corollary}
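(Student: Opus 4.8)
The plan is to produce, for \emph{every} hypergraph $H=(V,E)$ of average uniformity $k$ with $|E|=rn$ edges and $n\ge\poly(r,k,p^{-1},(1-p)^{-1})$, a density-$p$ subset $A\subseteq V$ with $\Pp[\rv e\subseteq A]\ge p^k+\Omega\bigl(\tfrac{p(1-p)^2}{rk}\bigr)$, unless the right-hand side already exceeds $1$; the $(\ee,p)$-sampler consequence then follows from the already-noted fact that an $(\ee,p)$-sampler is a $(p^k+\ee,p)$-confiner. It is cleanest to keep in mind the complementary picture: writing $B=V\setminus A$ and $X_A=\#\{e\in E:e\subseteq A\}$ (so $\Pp[\rv e\subseteq A]=X_A/|E|$), we are asking for a set $B$ of density $1-p$ that is \emph{missed} by an $\Omega(\tfrac{p(1-p)^2}{rk})$-excess fraction of the edges, i.e.\ a weak disperser lower bound in the spirit of the remark following \cref{eq:eps_confiner}. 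First dispose of a degenerate case: if $\sum_{e\in E}|e|\le pn$, then $\bigcup E$ fits inside a density-$p$ set and that set confines every edge, so we may assume $\sum_e|e|>pn$ (and, likewise, that $p^k+\Omega(\tfrac{p(1-p)^2}{rk})<1$).

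The workhorse is \emph{planting}. Order the edges by size and greedily take a sub-collection $E'$ of the smallest edges subject to $\bigl|\bigcup E'\bigr|\le pn$; a Markov bound on how many edges can exceed size $2k$ shows we may take $|E'|\ge\min\bigl(\tfrac12|E|,\ \tfrac{pn}{2k}\bigr)$. Any density-$p$ set $A\supseteq\bigcup E'$ then satisfies
\[
  \Pp[\rv e\subseteq A]\ \ge\ \frac{|E'|}{|E|}\ \ge\ \min\!\Bigl(\tfrac12,\ \frac{p}{2kr}\Bigr),
\]
and an elementary comparison shows $\tfrac{p}{2kr}\ge p^k+\Omega\bigl(\tfrac{p(1-p)^2}{rk}\bigr)$ whenever $rk\,p^k$ is below a suitable $p$-dependent constant. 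Thus in the \enquote{edge-sparse} regime $rk\,p^k=O_p(1)$ — which in particular contains every $k$ that is large relative to $\lg r$, since then $p^k$ is negligible — the plain planting bound already proves the theorem (this is exactly the regime where the expander hitting-set lemma provides a matching upper bound). To simultaneously retain the \emph{whole} $p^k$ term while still gaining $\Omega(\tfrac{p(1-p)^2}{rk})$, one augments a \emph{small} greedy cluster $U=\bigcup E''$ by a uniformly random subset $A'$ of $V\setminus U$ of the appropriate size, so that the expectation of $\Pp[\rv e\subseteq A]$ over the choice of $A'$ is at least $(p')^{k}\,\#\{e:e\cap U=\emptyset\}/|E|+|E''|/|E|$ with $p'=(pn-|U|)/(n-|U|)$, and one optimizes $|U|$ to balance baseline loss against planted gain.

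The genuinely hard case is the complementary \enquote{edge-dense} regime $rk\,p^k=\Omega_p(1)$, equivalently $p^k|E|\ge\Omega_p(n/k)$; I expect this to be the main obstacle. Here every cluster $U$ large enough to plant $\Omega(n/k)$ edges already meets almost all edges and wipes out the baseline, and the second moment of $X_A$ over a uniformly random density-$p$ set $A$ is too crude — it only locates an $A$ with $X_A\ge p^k|E|+\Omega(\sqrt{rn})$, which is short of $p^k|E|+\Omega(n/k)$ once $r<n$. The proposed remedy is to locate $A$ in the \emph{upper tail} of $X_A$: start from a uniformly random density-$p$ set (for which $X_A$ has expectation at least $p^k|E|-O(k|E|/n)$, the correction being the hypergeometric-versus-binomial gap already visible in \cref{cor:fraction-show}), then perform a bounded sequence of local improvements — repeatedly swapping into $A$ a vertex that is the unique outside vertex of many \enquote{near-confined} edges (edges with exactly one vertex outside $A$) in exchange for a vertex of $A$ lying in few confined edges — and argue, from the resulting local-optimality relations together with a counting/entropy accounting over all $\binom{n}{\lfloor pn\rfloor}$ candidate sets, that the final set obeys $X_A\ge p^k|E|+\Omega\bigl(\tfrac{p(1-p)^2}{k}n\bigr)$. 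I anticipate that a single round of this will not suffice when the degree/codegree profile is skewed, and that one must \emph{iterate} — pass to an induced sub-hypergraph on a vertex subset and recurse, forfeiting a little of the exponent at each level — which is presumably why $n$ is required to exceed a fixed polynomial in the parameters and why an auxiliary fractional exponent ($c=\tfrac1{13}$) shows up in \cref{CR:eps_view_show}. Throughout, $k$-uniformity enters only in bookkeeping edge sizes, so the whole argument works for average uniformity $k$ (non-integral $k$ included); for $k$-uniform $H$ the exact hypergeometric identities pin the baseline down to $\binom nk^{-1}\binom{n-k}{\lfloor pn\rfloor-k}$, consistent with \cref{TH:worst_error_confine}.
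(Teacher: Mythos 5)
Your proposal does not actually address the logical content of \cref{CR:eps_view_show} as stated; that corollary is, in the paper, a one-paragraph rearrangement of \cref{TH:hitting_bound_show} (if $n$ is small one falls into the $n^c/\poly$ branch, otherwise one invokes \cref{TH:hitting_bound_show} to get a density-$p$ set $A$ with $\Pp[\rv e\subseteq A]\ge p^k+\Omega(p(1-p)^2/rk)$, and since the confiner property caps that probability at $p^k+\ee$, one reads off $r\ge\Omega(p(1-p)^2/(k\ee))$). What you have instead sketched is an independent proof of \cref{TH:hitting_bound_show} itself, so I will evaluate it as such.

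Your treatment of the ``edge-sparse'' regime $rkp^{k}=O_p(1)$ by greedy planting of the smallest edges is correct and elementary (modulo minor bookkeeping in the Markov step), and it does not match anything in the paper; the paper handles all regimes by a single randomized construction. But the ``edge-dense'' regime is where the theorem actually has content, and there your argument has a genuine gap: you offer no proof, only a program (local swaps, an ``entropy accounting,'' and an anticipated recursion whose termination and gain per round you do not establish), and you concede as much yourself. In particular, local-improvement arguments give local optimality conditions, not a quantitative excess of $\Omega(p(1-p)^2 n/k)$ edges, and you give no mechanism that converts the former into the latter. The idea the paper actually uses is quite different: one does not search among density-$p$ vertex sets at all. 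One fixes $\gamma\in(0,1)$, includes each \emph{edge} of $E$ independently with probability $\gamma$ to form $\rv B$, and takes $A=V(\rv B)$, so that $E(A)\supseteq\rv B$ by construction. Two facts carry the proof. First, $|V(\rv B)|$ concentrates near $\bigl(\sum_i u_i(1-(1-\gamma)^{d_i})\bigr)n$, and this is proved by applying the paper's typical-case concentration (\cref{cor:thm0}) to the \emph{dual} hypergraph, where $V-\rv A$ is exactly the set of dual ``edges'' confined to the dual ``vertex set'' $E-\rv B$; this is how the second moment enters, not via the variance of $X_A$ over random $A$ (which, as you correctly note, is too weak). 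Second, and this is the ingredient entirely absent from your sketch, \cref{PR:meeting-probability} lower-bounds the probability that $\rv B$ meets the neighborhoods $\{E_v-\{e\}\}_{v<e}$ by the value it would take if those neighborhoods were \emph{pairwise disjoint}; that monotonicity under ``unoverlapping'' is what makes $\Exp[|E(V(\rv B))|]/|E|\ge\gamma+(1-\gamma)\prod_i(1-(1-\gamma)^{d_i-1})^{u_id_i/r}$ provable, and it is what produces the extra $\Omega(p(1-p)^2/(rk))$ beyond the baseline $p^k$. Choosing $\gamma$ so that the expected density of $A$ is $\approx\delta$ and feeding the resulting expression through the analytic inequality \cref{TH:hard-optimization} yields $f_{k,r}(\delta)$, hence the theorem. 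Without an analogue of \cref{PR:meeting-probability} (or some other quantitative lower bound on the ``accidentally confined'' edges), your plan for the edge-dense regime does not close.

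A smaller but real issue: the exponent $c=\tfrac{1}{13}$ and the $\poly$ prefactors in the $n^c$ branch of \cref{CR:eps_view_show} are not artifacts of an iteration depth, as you hypothesize; they come from the trade-off in \cref{TH:lower-bound-strong-detailed} between the exponents $\alpha,\beta,\eta,\mu$ used to discard high-degree vertices and oversized edges. That, too, is a piece of machinery your sketch does not anticipate.
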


%

\begin{remark}
    %
    \cref{TH:hitting_bound_show} holds for any $k>1$ provided one
    replaces $\Omega(\frac{p(1-p)^2}{rk}   )$ with a more complicated
    expression approaching $0$ as $k\to 1$; q.v.~\cref{TH:lower-bound}.
    The same is true for \cref{CR:eps_view_show};
    q.v.\ \cref{TH:lower-bound-epsilon-view}.
    However, both results are \emph{false} for $k=1$; see
    \cref{RM:k_less_than_1}.
\end{remark}

As explained in \cref{subsec:overview}, in the context of one-sided error
reduction, \cref{TH:worst_error_confine} and \cref{CR:eps_view_show}
imply the following novel
conclusions for one-sided error reduction:

\begin{corollary}\label{CR:random_algs}
    Let $\cal A$ be a randomized algorithm  outputing a subset
    of $\{1,\dots,n\}$   having $k$ elements on average.
    Given another  randomized   algorithm $\cal A'$ for a decision
    problem which takes a seed of $\lg n$ bits and has one-sided
    error probability at most $p\in (0,1)$,
    let $\cal A' := \cal A'[\cal A]$ denote the decision algorithm
    for that problem
    which  uses $\cal A$ to generate    $k$ seeds (on average),
    runs $\cal A'$ with those seeds, and decides accordingly.
    Then:
    \begin{enumerate}[(i)]
        \item
            Provided $n\geq \frac{1}{p(1-p)}$, there is   $\cal A'$
            as above such   that the error probability of $A'[A]$ is
            at least $p^k-O(\frac{k}{n})$.
        \item If for every $\cal A'$ as above, the error probability
            of $\cal A'[\cal A]$
            does not exceed $p^k+\ee$ ($0<\ee<1-p^k$), then
            the random complexity of $\cal A$ is at least
            \[
                \lg n+ \min\left\{\lg(\ee^{-1})-\lg k-O_p(1),~ c\lg
                n-O_p(\lg k)\right\}
            \]
            for some absolue constant $c>0$. In fact, one can take
            $c=\frac{1}{13}$.
    \end{enumerate}
\end{corollary}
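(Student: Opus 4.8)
The plan is to set up a dictionary between randomized algorithms and hypergraphs and then quote \cref{TH:worst_error_confine} for part~(i) and \cref{CR:eps_view_show} for part~(ii). Write $s$ for the random complexity of $\cal A$ and enumerate its $2^s$ equiprobable seeds $u_1,\dots,u_{2^s}$. Let $E$ be the multiset $\{\cal A(u_1),\dots,\cal A(u_{2^s})\}$ and put $V=\{1,\dots,n\}$, $H=(V,E)$. Then a uniformly random edge $\rv e\sim E$ is distributed exactly like the output of $\cal A$, so $|E|=2^s$ (hence the random complexity of $\cal A$ is $\lg|E|$) and the average uniformity of $H$ equals the expected size of $\cal A$'s output, namely $k$.

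Next I would record how $\cal A'[\cal A]$ errs. Fix a ``no''-instance $x$ of the decision problem solved by $\cal A'$, and let $A=A_x\subseteq V$ be the set of seeds on which $\cal A'$ errs on $x$; one-sided error at most $p$ gives $|A_x|\le pn$. Running $\cal A'$ once with each seed produced by $\cal A$ and combining the answers by the standard one-sided rule (report ``yes'' iff every run does so, vacuously for the empty output), the composite errs on $x$ precisely when every produced seed lies in $A$, i.e.\ when $\rv e\subseteq A$; hence the error probability of $\cal A'[\cal A]$ on $x$ equals the confinement probability $\Pp_{\rv e\sim E}[\rv e\subseteq A]$. Conversely, given \emph{any} $A\subseteq V$ with $|A|=\lfloor pn\rfloor$, I can realize it: take the empty language and let $\cal A'$ answer ``yes'' on seed $u$ iff $u\in A$; this is a legitimate algorithm with one-sided error at most $p$, and the error probability of $\cal A'[\cal A]$ equals $\Pp_{\rv e\sim E}[\rv e\subseteq A]$.

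Part~(i) is then immediate: since $n\ge\frac{1}{p(1-p)}$ and $H$ has average uniformity $k$, \cref{TH:worst_error_confine} produces a density-$p$ set $A$ with $\Pp_{\rv e\sim E}[\rv e\subseteq A]\ge p^k-2k/n$; realizing this $A$ by the construction above yields an $\cal A'$ for which $\cal A'[\cal A]$ has error at least $p^k-2k/n=p^k-O(k/n)$. For part~(ii), the hypothesis says $\Pp_{\rv e\sim E}[\rv e\subseteq A]\le p^k+\ee$ for every density-$p$ set $A$ (realize each such $A$ and invoke the assumed bound), i.e.\ $H$ is a $(p^k+\ee,p)$-confiner. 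Apply \cref{CR:eps_view_show} to bound $|E|$ from below, then take $\lg$ and use that the random complexity of $\cal A$ equals $\lg|E|$: the first branch of the minimum becomes $\lg(\ee^{-1})-\lg k-O_p(1)$ after absorbing $\lg(p(1-p)^2)$ and the hidden constant, and the second becomes $c\lg n-O_p(\lg k)$ with $c=\tfrac{1}{13}$ after taking $\lg$ of $n^{c}/\poly(k,p^{-1},(1-p)^{-1})$.

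I do not expect a genuine obstacle here; the only points requiring care are (a) that the outputs of $\cal A$ need not all have size exactly $k$ --- which is precisely why one works with the multiset $E$ of \emph{average} uniformity $k$ and interprets ``generate $k$ seeds, run $\cal A'$, decide accordingly'' as one run of $\cal A'$ per produced seed combined by the one-sided rule --- and (b) checking that the hypothesis $n\ge\poly(r,k,p^{-1},(1-p)^{-1})$ of \cref{CR:eps_view_show} with $r=|E|/n$ is either met or is exactly the regime in which its second branch dominates, which is why that corollary is already stated with a minimum.
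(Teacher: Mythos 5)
Your proposal is correct and follows essentially the same route as the paper's own proof: encode $\cal A$'s outputs as a hypergraph $H=(V,E)$, realize an arbitrary density-$p$ set $A$ as the error set of some $\cal A'$, then invoke \cref{TH:worst_error_confine} for part~(i) and \cref{CR:eps_view_show} for part~(ii) and take logarithms. The only difference is one of exposition: you spell out the algorithm--hypergraph dictionary and the realization argument more explicitly than the paper, which merely says ``Think of all the possible $rn$ outputs of $\cal A$ as the hyperedges of a hypergraph'' and gives the empty-language gadget in a single line; your explicit handling of the non-uniform output sizes (passing to average uniformity and a multiset of edges) and of the $n$-size precondition in \cref{CR:eps_view_show} makes the argument more self-contained but does not change its substance.
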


In simpler words, no pseudorandom method for choosing $k$ random seeds
for $k$ repetitions of a randomized algorithm with one-sided error
probability $p$
can reduce the error probability below $p^k-O(\frac{k}{n})$ in general, and
if such a pseudorandom method is guaranteed to reduce the error probability
to $p^k+\ee$, then its
random complexity is at least
the minimum
of $\lg n+\lg(\ee^{-1})-\lg k-O_p(1)$
and $(1+c)\lg n-O_p(\lg k)$.
By contrast, \cref{cor:fraction-show}(iii)
which says that any pseudorandom method  for choosing $k$ seeds   that
picks each individual seed with roughly
the same probability will reduce the error probability from $p$ to $p^k+o(1)$
for all but a vanishing
fraction of randomized algorithms.

\begin{proof}
    There is $r>0$ such that the random complexity of $\cal A$
    is $\lg n+\lg r$. Think of all the possible $rn$ outputs of $\cal A$
    as the hyperedges of a hypergraph $H=(V,E)$ with vertex set
    $V=\{1,\dots,n\}$.

    By \cref{TH:worst_error_confine}, provided $n\geq\frac{1}{p(1-p)}$,
    there is a subset $A\subseteq V$
    of density at most $p$ such that $\Pp_{\rv e\in E}[\rv e\subseteq
    A]\geq p^k-\frac{2k}{n}$. Now, to prove (i), consider
    a trivial decision problem whose answer is \enquote{Yes} on every
    instance and take $\cal A'$ to be the randomized algorithm that returns
    \enquote{No} if and only if its random seed is in $A$.

    Next, the assumption in (ii) implies
    that $H$ must be a  \hyperref[eq:eps_confiner]{$(p^k+\ee,p)$-confiner},
    so by \cref{CR:eps_view_show}, we have
    $rn=|E|\geq \min\{\Omega_p(k^{-1}\ee^{-1}),\frac{n^c}{\poly_p(k)}\}$.
    Since the random complexity of $\cal A$ is $\lg n+\lg r$,
    this proves (ii).
\end{proof}

%

\begin{example}
    Assuming $n$ is large enough,
    \cref{CR:random_algs}(ii)
    says that in order to have the one-sided error
    probability of a randomized algorithm be reduced from $p$ to $(1+a) p^k$ via
    $k$ repetitions with pseudorandom seeds,
    one needs to expend at least $k \lg (p^{-1}) + \lg (a^{-1}) - \lg k -O_p(1)$
    random additional bits  on
    top of the random bits needed for the randomized algorithm.
    %
    More generally, making the error probability exponentially small in $k$
    requires $\Omega (k)-O_p(1)$ additional random bits.
\end{example}
%
%
Before we proceed further, a few remarks are in order.

First, the known constructions of sparse
\hyperref[eq:eps_confiner]{$(\epsilon,p)$-confiners}
with average uniformity $k$
still do not match our lower-bound of $
\Omega_p((\epsilon-p^k)^{-1}k^{-1})\cdot n$ on the number
of hyperedges. We discuss this in detail below.

Second, in many applications of one-sided error reduction, one wants
the original error probability
$p$ to be reduced to   $\epsilon:=p^k+\ee$ that
is exponentially small in $k$. However, in this case, once $k\geq
\Theta(\log n)$, the lower bound
on the random complexity   in \cref{CR:random_algs}(ii)
would be $(1+c)\lg n-O_p(\lg k)$, and further decreasing of $\ee$
would not improve that.
This raises the question of whether the term $c\lg n-O_p(\lg k)$ in
\cref{CR:random_algs}(ii)
could be increased in general. Since the random complexity of generating
$k$ independent
seeds in $\{1,\dots,n\}$ is $k\lg n$, the best we could hope for is a
lower bound
on the random complexity of   the form $\min\{\lg n+\lg(\ee^{-1})-\lg k-O_p(1),
\Theta_p(k\lg n)\}$. This is equivalent to relaxing the assumption
$n\geq \poly_p(r,k)$ in \cref{TH:hitting_bound_show} to $n\geq
\poly_p(r^\frac{1}{k},k)$.
We pose this as a problem.

\begin{problem}
    Does \cref{TH:hitting_bound_show} continue to hold when
    $n\geq \poly_p(r^\frac{1}{k},k)$?
\end{problem}



We finally remark that we do not know if an analogue of
\cref{TH:hitting_bound_show} holds
for Chernoff samplers (cf.\ \cref{eq:chernoff_show}). We pose this as a problem.


\begin{problem}
    Fix some $p\in (0,1)$, $r>0$, $k>1$ and small $\eta>0$.
    Let $H = (V, E)$ be a $k$-uniform hypergraph with $n$ vertices
    and at most $rn$ edges. Provided $k\geq C(\eta)$ for suitable
    $C(\eta)\geq 1$, is there a subset of vertices $A \subseteq V$ of
    density $p$ where the tails of $~\rv N_H(A)$
    are not light, i.e.,~$\Pp\sqbr*{ \Abs*{\rv N_H(A) - kp} > k \eta}$ is
    bounded from below by  $2\exp(-\eta^2k)+c(r,k,p,\eta)$ for some
    $c(r,k,p,\eta)>0$ (not depending on $n$)?
\end{problem}

A positive solution would allow us to draw lower bounds on the random
complexity of
two-sided error reduction analogous to the ones we showed for
one-sided error reduction in \cref{CR:random_algs}.


\paragraph{Comparison to Bounds on Dispersers.}

As we noted earlier, confiners and dispersers are the same objects up to
reparametrization --- although the parameter regimes in
which they are studied are typically different.
In \cite[Thm.~1.5(a), Thm.~1.10(a)]{RadhakrishnanTS00}, the authors
give  bounds on the performance
of dispersers and moreover show, using a random construction,
that they are essentially optimal in the range of parameters relevant for
dispersers.
We restate these results in the language of confiners in
\cref{apx:dispersers}; they say the following:
\begin{enumerate}[(a)]
    \item Let $H$ be an
        \hyperref[eq:eps_confiner]{$(\epsilon,p)$-confiner}
        ($0<\epsilon\leq \frac{1}{2}$) of average uniformity $k$
        with $n$ vertices and $nr$ edges.
        Then $r\geq \Omega(\epsilon^{-1}k^{-1}\lg(\frac{1}{1-p}))$
        (\cref{TH:disp_to_conf_lower_bound}).
    \item Fix some $0<c<1$. Provided  $\epsilon\geq  e^{-c (1-p)k+1 }$,
        for every $n\in\NN$,
        there exists a   regular $(\epsilon,p)$-confiner
        on $n$ vertices
        having  \emph{average} uniformity $\leq k$
        and sparsity $r\leq  \Omega_p((1-c)^{-1}\epsilon^{-1}k^{-1})$
        (simplification of \cref{TH:disp_to_conf_random_const}).
\end{enumerate}
Since $\epsilon\geq p^k-\frac{2k}{n}$ by our \cref{TH:worst_error_confine},
the bound (a) is weaker than the bound $r\geq
\Omega(\frac{p(1-p)^2}{(\epsilon-p^k) k})$
of \cref{CR:eps_view_show}. Moreover, the conclusion (a) is not strong enough
to imply there are no families of sparse confiners (and  {\it a
fortiori}  samplers)
for $p$-dense sets whose worst-case performance approaches the optimum.

However, when $\epsilon\geq e^{-c (1-p)k+1 }\geq e^{-(1-p)k}$,
the term $p^k$ is negligible relative   to $\epsilon$,
and as a result, \emph{in the regime $\epsilon\gg e^{-(1-p)k}$},
(a) recovers the lower bound of \cref{CR:eps_view_show},
and (b) shows that it is optimal up to constants depending on $p$.
In the context of one-sided error reduction,
\emph{under the assumption
    that
$\epsilon\gg  e^{-(1-p)k}$},
this recovers
our earlier conclusion that a pseudorandom method for choosing
$k$ seeds which is guaranteed to reduce one-sided error probability
from $p$ to $\epsilon$ must   need at least $\lg (\epsilon^{-1})-\lg k-O_p(1)$
additional random bits (on top of the random bits required for the algorithm),
and moreover means that adding just $\lg(\epsilon^{-1})-\lg k+O_p(1)$
random bits is in fact sufficient, provided one allows the pseudorandom method
to choose $k$ seeds \emph{on average}, rather than exactly $k$ seeds.
These observations, while derived from known results, seems new.

To conclude, our \cref{TH:hitting_bound_show} and
\cref{CR:eps_view_show} improves
upon
known bounds on  $(\epsilon,p)$-confiners in the regime $\epsilon\ll
e^{-(1-p)k}$ and in  particular when $\epsilon$ approaches $p^k$ from above.

\paragraph{Comparison to The Hitting Set Lemma.}

One of the most studied ways of getting $k$ pseudorandom samples
from a set $V$ is to perform a random walk
$\rv{\vec v} = (v_1,\ldots, v_k)$ of length $k-1$ on a
$d$-regular
expander  graph $G$ whose vertex set is $V$.
This is equivalent to picking a random edge in the hypergraph $H=(V,E)$
whose hyperedges are the length-$(k-1)$ paths in $G$.\footnote{
    \label{footnote:repetitions}
    Expander walks may have repetitions, so $H$ should be considered
    as a hypergraph where every hyperedge is equipped with weights on
    its vertices which indicate
    how many times each vertex is repeated. We ignore this technicality
    for simplicity. Moreover, it can be avoided by
    performing each step of the random walk according to a different expander,
    thus causing repetitions to very rare, hence negligible.
    Also,  if a random walk
    has repetitions, then we can artificially add more vertices
    until it sees exactly $k$ vertices, and this can only decrease
    the confinement probability of any set of vertices.
}
The celebrated \emph{Expander Hitting Set Lemma}  (\cite{Kahale95},
for instance) states that if $\lambda=\lambda(G)$
is the \emph{spectral expansion} of $G$,\footnote{
    The spectral expansion is
    largest magnitude of any non-trivial
    eigenvalue of the normalized adjacency matrix of $G$, i.e.,~$\lambda(G) =
    \max\set*{\lambda_2(G), |\lambda_{\min}(G)|}$
}
then
\[
    \Pp\sqbr*{ \rv{\vec v} = (v_1,\ldots, v_k) \subseteq A} \le p \cdot
    \parens*{(1-\lambda) p + \lambda}^{k-1}\qquad\forall~A \subseteq
    V~\textrm{of density $p$},
\]
Otherwise stated, it means that
the hypergraph $H$ is a \hyperref[eq:eps_confiner]{$(p \cdot
\parens*{(1-\lambda) p + \lambda}^{k-1},p)$-confiner}.

By picking $G$ to be an optimal expander graph with $\lambda
=\frac{2\sqrt{d-1}}{d}\approx 2 d^{-1/2}$, say
from the families constructed in
\cite{LubotzkyPS86, Morgenstern94},
and setting
$\ee := p ( (1-\lambda)p + \lambda)^{k-1} - p^k$,
we see that there exists a family
of $k$-uniform \hyperref[eq:eps_confiner]{$(p^k+\ee,p)$-confiners} with
sparsity
$r =d^{k-1}$.
By thinking of $r$, $d$ and $\lambda$
as functions of $\ee$,
elementary calculus implies
that as $\ee\to 0$,
we have $\lambda= p^{1-k}(1-p)^{-1}(k-1)^{-1}(\ee+O_{p,k}(\ee^2)) $,
and hence $r\approx (\frac{\lambda}{2})^{-2(k-1)}=(
2p^{k-1}(1-p)(k-1) )^{2(k-1)}\epsilon^{-2(k-1)}(1+O_{p,k}(\epsilon))$.
We conclude that once $k$ and $p$ are fixed,
for every
$\ee\geq 0$,
there exists a  family
of $(p^k+\ee,p)$-confiners with sparsity
\[
    r\leq \Omega_{k,p}(\ee^{-2(k-1)}).
\]
This, unfortunately, does not match our lower bound of $r\geq
\Omega_p(k^{-1}\ee^{-1})$.
It is an interesting problem to close the gap between these bounds.

We finally remark that constructing $k$-uniform sparse $(\epsilon,p)$-confiners
using the hitting set lemma is relevant only for $\epsilon\leq
e^{-\Theta_p(k)}$.
(This is opposite to the situation for confiners arising from random
    constructions
of dispersers, see our earlier discussion.)
Indeed, the promised confinement probability $\epsilon=p (
(1-\lambda)p + \lambda)^{k-1}$
is \emph{always} exponential in $k$.

\subsubsection{A Gap Phenomenon}

By putting \cref{cor:fraction-show}
and \cref{TH:hitting_bound_show} together,
we see that for sparse uniform hypergraphs with \emph{not-too-outlandish} vertex
degrees, there is a gap between the typical (\enquote{99.99\%})
confinement probability of density-$p$ sets and
the worst-case (\enquote{100\%}) confinement probability of such sets.
Formally:


\begin{corollary}
    \label{cor:gap_phenomenon_show}
    Fix some $p\in (0,1)$. Let $k>1$ be an integer, and let $r>0$ and
    $0<\beta<1$ be real numbers.
    Suppose $H = (V, E)$ is
    a $k$-uniform hypergraph  having $n\geq
    \poly(k,r,\frac{1}{p},\frac{1}{1-p})$ vertices and $r\cdot n$  edges
    such that no vertex is contained in more than an
    $n^{-\beta}$-fraction of the edges.
    Then the following statements hold
    simultaneously:
    \begin{itemize}
        \item (\emph{\enquote{99.99\%}-regime}) for every $\alpha\in (0,1)$,
            we have $\Pp_{\rv e}\sqbr*{\rv e \subseteq A} \le p^k +
            O(n^{-\frac{\beta(\alpha - 1)}2})$
            for all but an $O(\frac{k^2p^{k}}{1-p}) \cdot
            n^{-\alpha\beta}$-fraction
            of the density-$p$ subsets
            $A \subseteq V$;
        \item (\emph{\enquote{100.0\%}-regime}) there exists a
            density-$p$ subset
            $A \subseteq V$
            such that $\Pp_{\rv e}\sqbr*{\rv e \subseteq A} \ge p^k +
            \Omega (\frac{p(1-p)^2}{(r+1)k})$.
    \end{itemize}
\end{corollary}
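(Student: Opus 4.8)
The plan is to read off the two bullets as near-immediate consequences of \cref{cor:fraction-show} and \cref{TH:hitting_bound_show}; the only real work is checking that the hypotheses of both results are met under the hypotheses of the corollary and handling a boundary case. Fix $p\in(0,1)$, an integer $k>1$, reals $r>0$ and $\beta\in(0,1)$, and a $k$-uniform hypergraph $H=(V,E)$ on $n$ vertices with $|E|=rn$ in which no vertex lies in more than an $n^{-\beta}$-fraction of the edges. First I would choose the implicit polynomial in the hypothesis $n\geq\poly(k,r,p^{-1},(1-p)^{-1})$ large enough that it simultaneously dominates the threshold $\frac{4k^2}{p(1-p)}$ demanded by \cref{cor:fraction-show} and the threshold demanded by \cref{TH:hitting_bound_show}. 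I would also record that $k>1$ integral forces $k\geq 2$, as \cref{TH:hitting_bound_show} requires, and that a $k$-uniform hypergraph trivially has average uniformity $k$.

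For the first bullet I would apply \cref{cor:fraction-show} with this value of $\beta$, whose hypotheses now hold. Part (iii) of that theorem states that for every $\alpha\in(0,1)$ one has $\Abs*{\Pp_{\rv e\sim E}[\rv e\subseteq A]-p^k}\leq n^{-(1-\alpha)/2}$ for all but an $O(\frac{k^2}{1-p})p^k\cdot n^{-\alpha\beta}$-fraction of the density-$p$ subsets $A\subseteq V$. Keeping only the upper bound and using $\beta\leq 1$ to weaken $n^{-(1-\alpha)/2}$ into the error term of the statement yields the first bullet.

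For the second bullet I would set $\epsilon^\ast:=\max\{\Pp_{\rv e\sim E}[\rv e\subseteq A]:A\subseteq V,\ |A|=\lfloor pn\rfloor\}$, a maximum over a finite family, so that $H$ is by construction an $(\epsilon^\ast,p)$-confiner. \cref{TH:hitting_bound_show} then gives $\epsilon^\ast\geq\min\{p^k+c\cdot\frac{p(1-p)^2}{rk},\,1\}$ for some constant $c>0$. If the minimum is the first term, then $\frac1r\geq\frac1{r+1}$ already yields $\epsilon^\ast\geq p^k+\Omega(\frac{p(1-p)^2}{(r+1)k})$. If the minimum is $1$, I would instead note $\frac{p(1-p)^2}{(r+1)k}\leq\frac{p(1-p)^2}{2}$ (since $r+1>1$ and $k\geq 2$) and $p^k+\frac{p(1-p)^2}{2}\leq p^2+(1-p^2)=1$ for $p\in(0,1)$, so that $\epsilon^\ast\geq 1\geq p^k+\frac{p(1-p)^2}{(r+1)k}$. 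In either case a maximizing $A$ witnesses $\Pp_{\rv e}[\rv e\subseteq A]\geq p^k+\Omega(\frac{p(1-p)^2}{(r+1)k})$, which is the second bullet.

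I do not anticipate a genuine obstacle: the whole argument is a routine assembly of two earlier results. The only parts demanding a moment of attention are the purely clerical one of making the hypothesis on $n$ strong enough to feed both cited theorems at once, and the mild case distinction in \cref{TH:hitting_bound_show} where the lower bound is clamped at $1$ — which is precisely the situation that the harmless replacement of $r$ by $r+1$ in the statement is designed to absorb.
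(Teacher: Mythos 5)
Your proposal is correct and takes exactly the route the paper intends: the corollary is asserted as an immediate juxtaposition of \cref{cor:fraction-show}(iii) (for the typical regime, noting $\beta\leq 1$ to weaken the error exponent) and \cref{TH:hitting_bound_show} applied to the tight confiner parameter $\epsilon^\ast$ (for the worst-case regime). Your extra care with the clamp at $1$ and the explanation of why $r$ is replaced by $r+1$ is exactly right, and aligns with the paper's implicit reasoning.
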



%
We notice that dichotomies of this kind are not uncommon in average case
complexity. We refer to the discussion in \cite[Ch.~17.2]{AroraB09}:
\emph{Impagliazzo's Hard-Core Lemma} (\cite{Impagliazzo95}) implies
that a function which is hard to
compute in the \emph{average} case, should be \emph{extremely} hard to compute
on a small subset of inputs, but on the rest of the inputs it can indeed be
very easily computable.

\subsubsection{Fine-Scale Bounds on Confiners}
\label{subsec:fine-scale}

Changing our perspective a little, consider the situation where
we are given
\emph{fixed}
$r>0$, $k>1$ and $p\in (0,1)$ and are interested in finding an infinite  family
of \hyperref[eq:eps_confiner]{$(\epsilon,p)$-confiners}   each having
average uniformity at most $k$ and sparsity
(hyperedges-to-vertices
ratio) at most $r$. How small can $\epsilon$ be?
Our \cref{TH:worst_error_confine} implies that there is a  nontrivial
lower bound   $\epsilon\geq p^k+\Omega_p(r^{-1}k^{-1})$ when $k\geq 2$
(and a slightly more complicated one for $1<k<2$, q.v.\ \cref{TH:lower-bound}).
In fact, we can give a  better and more precise lower bound.

\begin{theorem}[Simplification of
    \cref{TH:lower-bound-strong}]\label{TH:lower_bound_strong_show}
    Let $r>0$, $k>1$ and $p\in (0,1)$. Suppose that there
    exists a   family of
    \hyperref[eq:eps_confiner]{$(\epsilon,p)$-confiners}  (with number
    of vertices tending to $\infty$) such that every member of the
    family has average uniformity $k$,  sparsity $\leq r$, and is
    either regular or has minimum vertex degree $\geq 3$.
    Then
    \[
        \epsilon\geq f_{k,r}(p):=1-(1-p)^{\frac{1}{rk}}
        +(1-p)^{\frac{1}{rk}}
        (1-(1-p)^{\frac{rk-1}{rk}})^k.
    \]
\end{theorem}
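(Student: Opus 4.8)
The plan is to prove the contrapositive-flavoured assertion directly: I will show that \emph{every} hypergraph $H=(V,E)$ meeting the hypotheses, with $n:=|V|$, contains a subset $A\subseteq V$ of density $p$ for which $\Pp_{\rv e\sim E}\sqbr*{\rv e\subseteq A}\geq f_{k,r}(p)-c(n)$, where $c(n)\to 0$. Since an \hyperref[eq:eps_confiner]{$(\epsilon,p)$-confiner} satisfies $\epsilon\geq \Pp_{\rv e}\sqbr*{\rv e\subseteq A}$ for every density-$p$ set $A$, and the hypothesised family is infinite with $|V_i|\to\infty$, this forces $\epsilon\geq f_{k,r}(p)$. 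Thus the entire task reduces to \emph{constructing}, inside an arbitrary such $H$, a density-$p$ set containing a $(f_{k,r}(p)-o(1))$-fraction of the edges. Throughout write $m:=|E|\leq rn$ and $d_v$ for the degree of $v$, and recall that $\sum_{e\in E}|e|=\sum_{v\in V}d_v=km$ (average uniformity $k$), so the average degree is $km/n\leq rk$.

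The construction is a single randomised one. Pick a uniformly random subcollection $\rv F\subseteq E$ of a size $\tau\cdot m$ to be fixed below, and put $A:=\bigcup_{e\in\rv F}e$, afterwards trimming or padding $A$ to have exactly $\lfloor pn\rfloor$ vertices. Every edge of $\rv F$ lies inside $A$. An edge $e\notin\rv F$ lies inside $A$ precisely when each of its vertices is covered by some \emph{other} edge of $\rv F$; conditioned on $e\notin\rv F$, a vertex $v$ fails to be covered exactly when none of its $d_v-1$ remaining edges lands in $\rv F$, an event of probability $(1-\tau)^{\,d_v-1}+o(1)$, and distinct vertices of $e$ have essentially disjoint families of ``other'' edges, so these failure events are asymptotically independent. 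Hence $\Pp\sqbr*{\rv e\subseteq A\mid \rv e=e\notin\rv F}=\prod_{v\in e}\parens*{1-(1-\tau)^{\,d_v-1}}+o(1)$, and therefore
\[
  \Pp_{\rv F,\,\rv e\sim E}\sqbr*{\rv e\subseteq A}
  \;=\;\tau\;+\;(1-\tau)\cdot\frac1m\sum_{e\in E}\ \prod_{v\in e}\parens*{1-(1-\tau)^{\,d_v-1}}\;+\;o(1).
\]
We now choose $\tau$ so that the expected size of $\bigcup_{e\in\rv F}e$, namely $\sum_{v}\parens*{1-(1-\tau)^{\,d_v}}+o(n)$, equals $\lfloor pn\rfloor$; a second-moment estimate shows $|A|$ concentrates, and the $o(n)$ trimming (of low-degree vertices) or padding changes the confinement probability by only $o(1)$ (enlarging $A$ can only help). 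Since the displayed quantity is an average over $\rv F$, some realisation of $\rv F$ attains at least it, yielding a density-$p$ set $A$ with $\Pp_{\rv e}\sqbr*{\rv e\subseteq A}$ at least the displayed value up to $o(1)$.

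It remains to lower-bound the displayed value by $f_{k,r}(p)$. In the $D$-regular, $k$-uniform case we have $m=Dn/k$, all $d_v=D$, and the defining equation for $\tau$ reads $n\parens*{1-(1-\tau)^D}=pn$, i.e.\ $1-\tau=(1-p)^{1/D}$; plugging in and simplifying turns the display into
\[
  \tau+(1-\tau)\parens*{1-(1-\tau)^{\,D-1}}^k
  \;=\;1-(1-p)^{1/D}+(1-p)^{1/D}\parens*{1-(1-p)^{\frac{D-1}{D}}}^{k}
  \;=\;f_{k,\,D/k}(p),
\]
exactly the asserted function with sparsity parameter $D/k$. Since $H$ has sparsity $D/k\leq r$ and since $\rho\mapsto f_{k,\rho}(p)$ is non-increasing (a short calculus check: it decreases from $p$ at $\rho=1/k$ down to $p^k$ as $\rho\to\infty$, consistently with \cref{TH:worst_error_confine}), we get $f_{k,D/k}(p)\geq f_{k,r}(p)$, completing the regular case. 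For a hypergraph which is merely of minimum degree $\geq 3$ (and of average uniformity $k$, sparsity $\leq r$, with possibly non-integral average uniformity), one must instead bound $\frac1m\sum_{e}\prod_{v\in e}\parens*{1-(1-\tau)^{\,d_v-1}}$ from below over all admissible degree and edge-size sequences: by convexity this is extremised at the regular profile, which returns us to $f_{k,r}(p)$; this is the only place where the hypothesis on the degrees enters, and it is genuinely needed, since a degree-$1$ vertex kills its edge's product term (recovering, at $k=1$, the failure noted in \cref{RM:k_less_than_1}).

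The step I expect to be the crux is exactly this last one --- converting the clean regular identity into an inequality valid for every degree sequence with minimum degree $\geq 3$ and every edge-size multiset of average $k$, subject only to $|E|\leq rn$, including the case of an unboundedly large maximum degree; pinning down the extremal profile (and checking that it is the regular one) is the technical heart, and it must be carried out alongside a careful accounting of the several $o(1)$ error terms: the estimate $\Pp\sqbr*{v\in\bigcup\rv F}=1-(1-\tau)^{\,d_v}+o(1)$, the asymptotic independence of the coverage events over the vertices of an edge (which relies on controlling pairwise codegrees, or more robustly on the $O(1/m)$ pairwise correlations of increasing events in a fixed-size random subset), and the concentration of $|A|$ about its mean. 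By contrast, the regular case is a self-contained computation whose $r\to\infty$ limit already reproduces \cref{TH:worst_error_confine}.
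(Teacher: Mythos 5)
Your high-level strategy --- pick a random subcollection of edges, take the union of their vertices as $A$, show that $|A|$ concentrates near $pn$, lower-bound the fraction of edges confined to $A$, and argue that the regular profile is extremal --- is exactly the paper's approach. But the two steps you defer are sketched in a way that does not and cannot go through, so there are genuine gaps.

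First, the claim that the vertex-coverage events over $v\in e$ are ``asymptotically independent,'' giving $\Pp\sqbr*{\rv e\subseteq A\mid \rv e=e\notin\rv F}=\prod_{v\in e}(1-(1-\tau)^{d_v-1})+o(1)$, is false in general: nothing in the hypotheses controls codegrees, so two vertices of $e$ may share a constant fraction of the other edges and their coverage events can be far from independent. What is true, and what the paper actually proves, is the one-sided bound $\Pp\sqbr*{e\in E(V(\rv B))\mid e\notin\rv B}\geq\prod_{v<e}(1-(1-\gamma)^{\deg(v)-1})$: overlaps between the sets of candidate edges can only \emph{increase} the chance of meeting all of them. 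The paper establishes this via the combinatorial \cref{PR:meeting-probability} and then applies Jensen's inequality to $\ln$ of the product to reduce to a degree-sequence-only expression. Your sketch needs to be replaced by this correlation inequality; aiming for asymptotic equality is both unjustified and unnecessary.

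Second, and more seriously, ``by convexity this is extremised at the regular profile'' is not a gap you can fill with a routine Jensen argument, because the relevant function is not convex. The paper's \cref{RM:edges-frac-avg-regular} explicitly observes that $x\mapsto x\ln(1-(1-\gamma)^{x-1})$ need not be concave up on an interval containing all the $d_i$. What actually works (\cref{TH:hard-optimization}) is a coupled optimization in which $\gamma$ is determined implicitly by the degree profile through the density constraint $\delta=\sum_i u_i(1-(1-\gamma)^{x_i})$, after which the substitution $c_i=1-(1-\gamma)^{x_i-1}$ turns the objective into an average of a function $\parens*{1+\frac{\ln(1-c)}{\ln(1-\gamma)}}\ln c$ that is strictly convex only on $[1-(1-\gamma)^2,1]$ (\cref{LM:convexity-i}), and not on the larger interval $[\gamma,1]$. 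The hypothesis $\min_v\deg(v)\geq 3$ is used precisely to confine the $c_i$ to the good interval; your heuristic about degree-$1$ vertices killing the product explains only why degree $\geq 2$ is needed, not degree $\geq 3$. A further monotonicity argument (\cref{LM:increasing-i}) is still required to pass from the Jensen lower bound to $f_{k,r}(\delta)$. Until the convexity hand-wave is replaced by this structure, the proof has a hole at exactly the point you flag as the technical heart. As a minor additional caveat, the concentration of $|A|$ is not a bare second-moment estimate either --- the paper obtains it by applying the Section~\ref{sec:almost-every-set} typical-case bounds to the dual hypergraph, and this requires degree and uniformity caps that are only removed afterwards by a rewiring trick (\cref{subsec:removing-vrt-edges}); unbounded codegrees again threaten the naive variance bound.
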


\begin{remark}
    If only assumes that every vertex of $H$ is included in some
    hyperedge, then the weaker
    bound $\epsilon\geq f_{k,3r}(p)$ holds. We conjecture
    that the stronger bound $\epsilon\geq f_{k,r}(p)$ should hold in general
    and therefore stick with it here for simplicity.
    Also, by a standard dualization argument
    (\cref{PR:conf_duality}), the theorem
    implies  that  $\epsilon\geq 1-f_{kr,1/r}^{-1}(1-p)$. This gives
    a negligible improvement
    for small $p>0$. See \cref{CR:all_lower_bounds}
    and \cref{CR:all_lower_bounds_dual} for
    a summary of all our lower bounds.
\end{remark}

It can be shown that for $k\geq 2$, we have $f_{k,r}(p)\geq
p^k+\Omega(\frac{p(1-p)^2}{rk})$ (\cref{TH:distance-to-delta-to-k}), and this
is actually how we prove \cref{TH:worst_error_confine}.
The gap between graphs of $f_{k,r}(p)$ and $p^k$  is   visible for
for small values of $k$ and $r$, as illustrated in \cref{FG:f-k-r}.

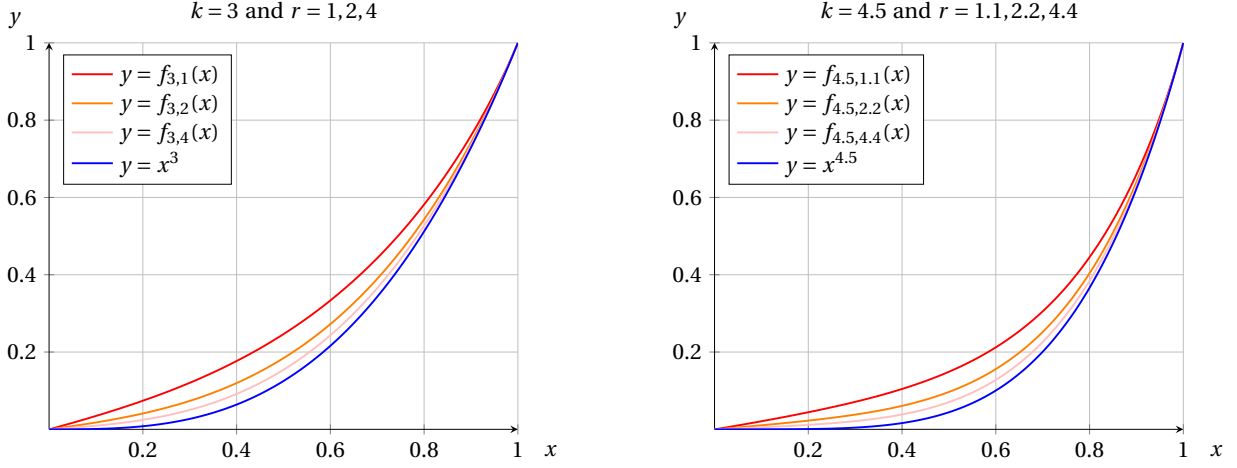
\begin{figure}[h]
    \centering
    \begin{minipage}{0.45\textwidth}
        \centering
        \begin{tikzpicture}[scale=0.9]
            \begin{axis}[
                    title={$k=3$ and $r=1,2,4$},
                    xlabel={$x$},
                    ylabel={$y$},
                    xmin=0, xmax=1,
                    ymin=0, ymax=1,
                    axis lines=middle,
                    grid=both,
                    grid style={line width=.1pt, draw=gray!10},
                    major grid style={line width=.2pt,draw=gray!50},
                    xtick={0, 0.2, 0.4, 0.6, 0.8, 1.0},
                    ytick={0, 0.2, 0.4, 0.6, 0.8, 1.0},
                    legend pos=north west,
                    label style ={at={(ticklabel cs:1.1)}},
                    legend cell align={left},
                ]
                \addplot[domain=0:1, samples=100, smooth, thick, red]
                {1-(1-x)^(1/3)+(1-x)^(1/3)*(1-(1-x)^(2/3))^3};
                \addlegendentry{$y=f_{3,1}(x)$};
                \addplot[domain=0:1, samples=100, smooth, thick, orange]
                {1-(1-x)^(1/6)+(1-x)^(1/6)*(1-(1-x)^(5/6))^3};
                \addlegendentry{$y=f_{3,2}(x)$};
                \addplot[domain=0:1, samples=100, smooth, thick, pink]
                {1-(1-x)^(1/12)+(1-x)^(1/12)*(1-(1-x)^(11/12))^3};
                \addlegendentry{$y=f_{3,4}(x)$};
                \addplot[domain=0:1, samples=100, smooth, thick, blue] {x^3};
                \addlegendentry{$y=x^3$};
            \end{axis}
        \end{tikzpicture}
    \end{minipage}
    \hfill
    \begin{minipage}{0.45\textwidth}
        \centering
        \begin{tikzpicture}[scale=0.9]
            \begin{axis}[
                    title={$k=4.5$ and $r=1.1,2.2,4.4$},
                    xlabel={$x$},
                    ylabel={$y$},
                    xmin=0, xmax=1,
                    ymin=0, ymax=1,
                    axis lines=middle,
                    grid=both,
                    grid style={line width=.1pt, draw=gray!10},
                    major grid style={line width=.2pt,draw=gray!50},
                    xtick={0, 0.2, 0.4, 0.6, 0.8, 1.0},
                    ytick={0, 0.2, 0.4, 0.6, 0.8, 1.0},
                    legend pos=north west,
                    label style ={at={(ticklabel cs:1.1)}},
                    legend cell align={left},
                ]
                \addplot[domain=0:1, samples=100, smooth, thick, red]
                {1-(1-x)^(1/4.95)+(1-x)^(1/4.95)*(1-(1-x)^(3.95/4.95))^4.5};
                \addlegendentry{$y=f_{4.5,1.1}(x)$};
                \addplot[domain=0:1, samples=100, smooth, thick, orange]
                {1-(1-x)^(1/9.9)+(1-x)^(1/9.9)*(1-(1-x)^(8.9/9.9))^4.5};
                \addlegendentry{$y=f_{4.5,2.2}(x)$};
                \addplot[domain=0:1, samples=100, smooth, thick, pink]
                {1-(1-x)^(1/19.8)+(1-x)^(1/19.8)*(1-(1-x)^(18.8/19.8))^4.5};
                \addlegendentry{$y=f_{4.5,4.4}(x)$};
                \addplot[domain=0:1, samples=100, smooth, thick, blue] {x^4.5};
                \addlegendentry{$y=x^{4.5}$};
            \end{axis}
        \end{tikzpicture}
    \end{minipage}
    \caption{The gap between the graphs $y=x^k$ and $y=f_{k,r}(x)$ for
    some values of $k$ and $r$.}
    \label{FG:f-k-r}
\end{figure}

For comparison, our earlier discussion of the hitting set lemma
shows that when $k$ is an integer and $r=d^{k-1}$ for an integer
$d\geq 3$, there are infinite families of
$r$-sparse $k$-uniform $(\epsilon,p)$-confiners
with
\[\epsilon= p\parens*{p+(1-p)\cdot\frac{2\sqrt{d-1}}{d}}^{k-1}=
    p\parens*{p+(1-p)\frac{2\sqrt{\sqrt[k-1]{r}-1}}{\sqrt[k-1]{r}}}^{k-1}
=:g_{k,r}(p).\]
Furthermore, by restating the results of
\cite{RadhakrishnanTS00} about random disperseres in the language of
confiners, which we do in \cref{apx:dispersers},
it follows that for every   $k>1$ and $\frac{1}{k} \lesssim
r\lesssim\frac{\ln((1-p)^{-1})}{k}e^{(1-p)k-1}$ ,
there exists an infinite family
of $r$-sparse $(\epsilon,p)$-confiners of average uniformity $\leq k$,
where
\[\epsilon\approx \frac{\ln((1-p)^{-1})}{kr}.
\]
(see \cref{TH:disp_to_conf_random_const}(ii) for a precise formulation of
    the constraints on $r$ and $\epsilon$ and note that
we always have $\epsilon\geq e^{-(1-p)k+1}$). Denoting this $\epsilon$
by $h_{k,r}(p)$, the graphs of
of the functions $f_{k,r}(p)$, $g_{k,r}(p)$ and $h_{k,r}(p)$ along
with that of $p^k$
are shown in \cref{FG:hsl_and_lower_bound} for some $r,k$
where all are defined.
From these graphs it can be seen that there is still
a considerable gap between our lower bound on
$(\epsilon,p)$-confiners and the confiners
arising from the  \ref{eq:hsl} or from random constructions.
This raises
the following problem, which seems very difficult.

\begin{problem}
    For every $r>0$, $k>1$ and $p\in (0,1)$,
    find the supremum $\epsilon(k,r,p)$ of the set of $\epsilon>0$ for which
    there exists an infinite family of $r$-sparse
    \hyperref[eq:eps_confiner]{$(\epsilon,p)$-confiners}
    with average uniformity $k$.
\end{problem}

As of now, we can only conclude that $\epsilon(k,r,p)\in
[f_{k,r}(p),\min\{g_{r,k}(p),h_{r,k}(p)\}]$,
but it seems likely that neither the lower bound nor the upper bound are tight.

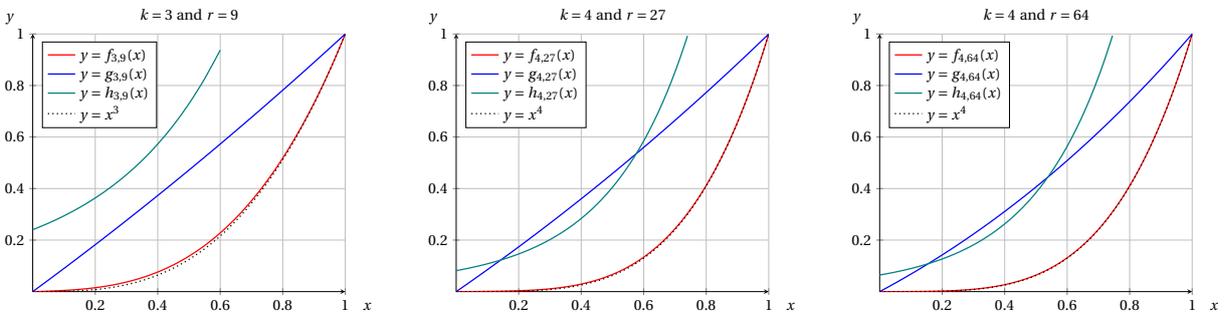
\begin{figure}[h]
    \centering
    \begin{minipage}{0.3\textwidth}
        \centering
        \begin{tikzpicture}[scale=0.6]
            \begin{axis}[
                    title={$k=3$ and $r=9$},
                    xlabel={$x$},
                    ylabel={$y$},
                    xmin=0, xmax=1,
                    ymin=0, ymax=1,
                    axis lines=middle,
                    grid=both,
                    grid style={line width=.1pt, draw=gray!10},
                    major grid style={line width=.2pt,draw=gray!50},
                    xtick={0, 0.2, 0.4, 0.6, 0.8, 1.0},
                    ytick={0, 0.2, 0.4, 0.6, 0.8, 1.0},
                    legend pos=north west,
                    label style ={at={(ticklabel cs:1.1)}},
                    legend cell align={left},
                ]

                \addplot[domain=0:1, samples=100, smooth, thick,  red]
                {1-(1-x)^(1/27)+(1-x)^(1/27)*(1-(1-x)^(26/27))^3};
                \addlegendentry{$y=f_{3,9}(x)$};

                \addplot[domain=0:1, samples=100, smooth, thick, blue]
                {x * (x + (1-x)*0.942809)^2};
                \addlegendentry{$y=g_{3,9}(x)$};

                \addplot[
                    smooth, 
                    thick,
                    teal
                ]
                coordinates {
                    (0.000000,0.240195)
                    (0.050000,0.265437)
                    (0.100000,0.294032)
                    (0.150000,0.326511)
                    (0.200000,0.363491)
                    (0.250000,0.405691)
                    (0.300000,0.453947)
                    (0.350000,0.509230)
                    (0.400000,0.572667)
                    (0.450000,0.645563)
                    (0.500000,0.729432)
                    (0.550000,0.826022)
                    (0.600000,0.937353)
                };
                \addlegendentry{$y=h_{3,9}(x)$};

                \addplot[domain=0:1, samples=100, smooth, thick,
                black, dotted] {x^3};
                \addlegendentry{$y=x^3$};

            \end{axis}
        \end{tikzpicture}
    \end{minipage}
    \hfill
    \begin{minipage}{0.3\textwidth}
        \centering
        \begin{tikzpicture}[scale=0.6]
            \begin{axis}[
                    title={$k=4$ and $r=27$},
                    xlabel={$x$},
                    ylabel={$y$},
                    xmin=0, xmax=1,
                    ymin=0, ymax=1,
                    axis lines=middle,
                    grid=both,
                    grid style={line width=.1pt, draw=gray!10},
                    major grid style={line width=.2pt,draw=gray!50},
                    xtick={0, 0.2, 0.4, 0.6, 0.8, 1.0},
                    ytick={0, 0.2, 0.4, 0.6, 0.8, 1.0},
                    legend pos=north west,
                    label style ={at={(ticklabel cs:1.1)}},
                    legend cell align={left},
                ]

                \addplot[domain=0:1, samples=100, smooth, thick, red]
                {1-(1-x)^(1/108)+(1-x)^(1/108)*(1-(1-x)^(107/108))^4};
                \addlegendentry{$y=f_{4,27}(x)$};

                \addplot[domain=0:1, samples=100, smooth, thick, blue]
                {x * (x + (1-x)*0.9428090)^3};
                \addlegendentry{$y=g_{4,27}(x)$};

                \addplot[
                    smooth, 
                    thick,
                    teal
                ]
                coordinates {
                    (0.000000,0.081423)
                    (0.030000,0.088437)
                    (0.060000,0.096230)
                    (0.090000,0.104904)
                    (0.120000,0.114570)
                    (0.150000,0.125359)
                    (0.180000,0.137414)
                    (0.210000,0.150900)
                    (0.240000,0.166002)
                    (0.270000,0.182930)
                    (0.300000,0.201920)
                    (0.330000,0.223239)
                    (0.360000,0.247188)
                    (0.390000,0.274105)
                    (0.420000,0.304374)
                    (0.450000,0.338425)
                    (0.480000,0.376743)
                    (0.510000,0.419876)
                    (0.540000,0.468438)
                    (0.570000,0.523124)
                    (0.600000,0.584714)
                    (0.630000,0.654087)
                    (0.660000,0.732233)
                    (0.690000,0.820265)
                    (0.720000,0.919438)
                    (0.74, 0.99243220)
                };
                \addlegendentry{$y=h_{4,27}(x)$};

                \addplot[domain=0:1, samples=100, smooth, thick,
                black, dotted] {x^4};
                \addlegendentry{$y=x^4$};

            \end{axis}
        \end{tikzpicture}
    \end{minipage}
    \hfill
    \begin{minipage}{0.3\textwidth}
        \centering
        \begin{tikzpicture}[scale=0.6]
            \begin{axis}[
                    title={$k=4$ and $r=64$},
                    xlabel={$x$},
                    ylabel={$y$},
                    xmin=0, xmax=1,
                    ymin=0, ymax=1,
                    axis lines=middle,
                    grid=both,
                    grid style={line width=.1pt, draw=gray!10},
                    major grid style={line width=.2pt,draw=gray!50},
                    xtick={0, 0.2, 0.4, 0.6, 0.8, 1.0},
                    ytick={0, 0.2, 0.4, 0.6, 0.8, 1.0},
                    legend pos=north west,
                    label style ={at={(ticklabel cs:1.1)}},
                    legend cell align={left},
                ]

                \addplot[domain=0:1, samples=100, smooth, thick, red]
                {1-(1-x)^(1/256)+(1-x)^(1/256)*(1-(1-x)^(255/256))^4};
                \addlegendentry{$y=f_{4,64}(x)$};

                \addplot[domain=0:1, samples=100, smooth, thick, blue]
                {x * (x + (1-x)*0.8660254)^3};
                \addlegendentry{$y=g_{4,64}(x)$};

                \addplot[
                    smooth, 
                    thick,
                    teal
                ]
                coordinates {
                    (0.000000,0.064446)
                    (0.030000,0.071012)
                    (0.060000,0.078371)
                    (0.090000,0.086626)
                    (0.120000,0.095892)
                    (0.150000,0.106302)
                    (0.180000,0.118001)
                    (0.210000,0.131157)
                    (0.240000,0.145958)
                    (0.270000,0.162614)
                    (0.300000,0.181365)
                    (0.330000,0.202478)
                    (0.360000,0.226258)
                    (0.390000,0.253046)
                    (0.420000,0.283226)
                    (0.450000,0.317232)
                    (0.480000,0.355553)
                    (0.510000,0.398740)
                    (0.540000,0.447413)
                    (0.570000,0.502271)
                    (0.600000,0.564103)
                    (0.630000,0.633796)
                    (0.660000,0.712351)
                    (0.690000,0.800895)
                    (0.720000,0.900699)
                    (0.745,0.9934963)
                };
                \addlegendentry{$y=h_{4,64}(x)$};

                \addplot[domain=0:1, samples=100, smooth, thick,
                black, dotted] {x^4};
                \addlegendentry{$y=x^4$};

            \end{axis}
        \end{tikzpicture}
    \end{minipage}

    \caption{Comparison between our lower bounds on existence of families
        of $r$-sparse $k$-uniform
        \hyperref[eq:eps_confiner]{$(\epsilon,p)$-confiners}
        ($\epsilon\geq  f_{k,r}(p)$ always) to the upper bounds implied
        by the \ref{eq:hsl} ($\epsilon\leq g_{k,r}(p)$ is possible) and
        known  random constructions
        ($\epsilon\leq h_{k,r}(p)$ is possible)
    for some  values of $k$ and $r$.}
    \label{FG:hsl_and_lower_bound}
\end{figure}

\subsubsection{The Ideas Behind Our Proofs}

Let $H=(V,E)$ be a $k$-uniform hypergraph on $n$ vertices and $rn$ edges.
For a set of vertices $A\subseteq V$, we denote by $E_r(A)$
the set of hyperedges $e\in H$ for which $|e\cap A|=r$, and by $E(A)$
set of hyperedges contained in $A$ (so that $E(A)=E_k(A)$). Similarly, we will
write $V(B) = \bigcup_{e \in B} e$ for the subset of vertices touched by $B
\subseteq E$. Fix $p\in (0,1)$ and let $\rv A$ be chosen uniformly at
random among all
density-$p$ subsets of $V$.

Our \cref{cor:fraction-show} about the typical behavior of sampling in $H$
are obtained by determining the expectation of $\frac{|E_r(\rv
A)|}{|E|}=\Pp\sqbr*{|\rv e\cap \rv A|=r}$ for $0\leq r\leq k$,
which turns out to be close to ${k\choose r}p^r(1-p)^{k-r}$,\footnote{More
    accurately, it turns out to be   probability
    that to draw the number $r$ according to the hypergeometric
    distribution $\Hyp(k,pn,n)$,
    which is   $\frac{\binom{k}{r}
\binom{n-k}{pn-r}}{\binom{n}{pn}}$.}
and showing that its variance is very small; see \cref{lem:hypg} for
a precise statement.
With this at hand, Chebyshev's Inequality implies that
$\frac{|E_r(\rv A)|}{|E|}$ must be very close to ${k\choose r}p^r(1-p)^{1-r}$
almost surely.
\cref{cor:fraction-show}
is then proven following some algebraic arguments together with this
fact. \cref{TH:worst_error_confine} follows just from our computation of
the expectation
of $\frac{|E_r(\rv A)|}{|E|}$.

%

Our results about the worst-case confinement probability
are derived from \cref{TH:lower_bound_strong_show},
whose proof is much more involved.
%
In order to prove it, we need
to find a set $A\subseteq V$ of density $p$ or less such that
$E(A)$ is as large as possible.
One way to guarantee that $E(A)$ will contain many edges is to start
with a set of hyperedges $B\subseteq E$ and take $A=V(B)$ as it will
guarantee that $E(A)\supseteq B$.
A key result that we establish shows that if $ B\subseteq E$ is
chosen at random with some fixed density, then \emph{on average},
the density of $E(A)=E(V(B))$ in $E$ will be strictly larger than that of $B$.
This is stated formally in the following proposition, in which we restricted
to regular hypergraphs for simplicity.

\begin{proposition}[Simplified Version of
    \cref{PR:edges-fraction-avg}]\label{PR:edge_fraction_show}
    Let $H = (V, E)$ be a $d$-regular hypergraph ($d \ge 2$) with
    hyperedges-to-vertices ratio $r := |E|/|V|$ such that $|e| (d-1) <
    |E|$ for all $e \in E$.
    Suppose $\gamma \in (0, 1)$ and let $\rv B \subseteq E$ be
    sampled at random by independently including each edge  $e \in E$
    in $\rv B$ with probability $\gamma$. Then,
    \begin{equation}\label{EQ:extra_density}
        \Exp\sqbr*{ \frac{|E(V(\rv B))| }{|E|} } \ge \gamma + (1 -
            (1-\gamma)^{d -
        1})^{\frac{d}{r}}.
    \end{equation}
\end{proposition}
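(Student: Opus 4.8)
The plan is to evaluate the expectation one edge at a time. Writing $|E(V(\rv B))| = \sum_{f \in E}\mathbf{1}[f \subseteq V(\rv B)]$ and using linearity of expectation, it suffices to lower bound $\Pp[f \subseteq V(\rv B)]$ for each fixed $f \in E$ and then average over $f$. Since every edge placed into $\rv B$ is automatically contained in $V(\rv B)$, one always has $\rv B \subseteq E(V(\rv B))$, which already accounts for the leading term $\gamma = \Exp[|\rv B|/|E|]$; the substance of the proposition is the extra mass contributed by edges that land in $E(V(\rv B))$ without being chosen into $\rv B$.

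For a fixed edge $f$ I would condition on whether $f \in \rv B$. If $f \notin \rv B$ --- an event of probability $1-\gamma$ that is independent of the status of all edges in $E \setminus \{f\}$ --- then $f \subseteq V(\rv B)$ holds precisely when every vertex $v \in f$ lies on some edge of $\rv B$ other than $f$. By $d$-regularity there are exactly $d-1$ edges through $v$ besides $f$, each placed into $\rv B$ independently with probability $\gamma$, so the event $C_v := \{v \text{ lies on an edge of } \rv B \setminus \{f\}\}$ has probability $1 - (1-\gamma)^{d-1}$. The one delicate point is that $\{C_v : v \in f\}$ need not be independent, since two vertices of $f$ may lie together on further edges of $H$; but each $C_v$ is an \emph{increasing} event on the product probability space $\{0,1\}^{E \setminus \{f\}}$, so Harris's inequality (the FKG inequality for product measures) shows that these events are positively correlated, whence
\[ \Pp\Bigl[\bigcap_{v \in f} C_v \;\Big|\; f \notin \rv B\Bigr] \ \ge\ \prod_{v \in f}\bigl(1 - (1-\gamma)^{d-1}\bigr) \ =\ \bigl(1 - (1-\gamma)^{d-1}\bigr)^{|f|}. \]
Combining the two cases yields $\Pp[f \subseteq V(\rv B)] \ge \gamma + (1-\gamma)\bigl(1-(1-\gamma)^{d-1}\bigr)^{|f|}$.

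To conclude, I would sum this estimate over $f \in E$, divide by $|E|$, and apply Jensen's inequality to the convex function $t \mapsto \bigl(1-(1-\gamma)^{d-1}\bigr)^{t}$. Since $\sum_{v \in V}\deg(v) = \sum_{f \in E}|f|$ and $H$ is $d$-regular, the mean edge size is $\tfrac{1}{|E|}\sum_{f \in E}|f| = \tfrac{dn}{|E|} = \tfrac{d}{r}$, so $\tfrac{1}{|E|}\sum_{f \in E}\bigl(1-(1-\gamma)^{d-1}\bigr)^{|f|} \ge \bigl(1-(1-\gamma)^{d-1}\bigr)^{d/r}$, giving $\Exp\bigl[|E(V(\rv B))|/|E|\bigr] \ge \gamma + (1-\gamma)\bigl(1-(1-\gamma)^{d-1}\bigr)^{d/r}$, which is \eqref{EQ:extra_density} (the factor $1-\gamma$ being the natural weight on the second term); when $H$ is moreover $k$-uniform every $|f|$ equals $d/r = k$ and the appeal to Jensen is unnecessary. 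The hypothesis $|e|(d-1) < |E|$ is used only as a mild non-degeneracy condition, guaranteeing that for each edge the auxiliary edge-sets above are proper and the relevant conditional probabilities live on a non-trivial product space. For later use, note that under the substitution $1-\gamma = (1-p)^{1/(rk)}$ the right-hand side is precisely the quantity $f_{k,r}(p)$ appearing in \cref{TH:lower_bound_strong_show}, which is the route by which this proposition feeds into the worst-case lower bounds.

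The only genuinely non-routine step is the positive-correlation argument: the naive hope that the single-vertex coverage events $C_v$ are independent is false in general, and one must observe that they are monotone in the edge indicators so that Harris's inequality applies with a clean product lower bound. The remaining ingredients --- the edgewise decomposition, the conditioning on $\{f \in \rv B\}$ arranged so that the factor $1-\gamma$ multiplies exactly the correlated-coverage probability, and the passage to non-uniform edge sizes via convexity --- are routine.
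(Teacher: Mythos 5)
Your proof is correct and establishes the same bound as the paper's full Proposition~\ref{PR:edges-fraction-avg} specialized to $d$-regular hypergraphs (including the factor $1-\gamma$ on the second term, which is indeed present in the full statement but was dropped in the simplified re-statement you were given). The high-level skeleton coincides with the paper's: linearity of expectation, conditioning on $f\in\rv B$, and reducing to a lower bound on the probability that every $v\in f$ is covered by $\rv B\setminus\{f\}$.

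The genuine divergence is in how you justify the product lower bound on the coverage probability. The paper proves a bespoke combinatorial comparison result, Proposition~\ref{PR:meeting-probability}, via an explicit injection, which shows that replacing overlapping sets $E_v\setminus\{f\}$ by disjoint sets of the same sizes can only \emph{decrease} the probability of meeting all of them; the disjoint case then factors exactly. You instead observe that each $C_v$ is an increasing event on the product space $\{0,1\}^{E\setminus\{f\}}$ and invoke Harris's inequality to obtain the same product lower bound. Both are valid here, since the sampling of $\rv B$ is a genuine product measure. The trade-offs: Harris is shorter and does not require the hypothesis $|e|(d-1)<|E|$ at all (the paper needs it only to guarantee that the disjoint copies $A'_v$ fit inside $E\setminus\{e\}$ in Proposition~\ref{PR:meeting-probability}); on the other hand, the paper's Proposition~\ref{PR:meeting-probability} also covers uniform sampling of a \emph{fixed-size} subset, a regime in which the coordinates are not independent and Harris does not directly apply, so it is strictly more general as a standalone tool even though that extra generality is not used in proving this particular proposition. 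Your Jensen step on the convex map $t\mapsto(1-(1-\gamma)^{d-1})^t$ is an equivalent reformulation of the paper's Jensen argument applied to $\ln$ before the product bound; in the $d$-regular case both give exponent $d/r$.
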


In the right hand side of \cref{EQ:extra_density}, the term $\gamma$
comes from the fact that $\rv B\subseteq E(V(\rv B))$,
and the extra term $(1 - (1-\gamma)^{d - 1})^{\frac{d}{r}}$ is a
lower bound on the
average density of set of  hyperedges
that are unintentionally covered by the edges $\rv B$.
It is precisely this extra term that ultimately allows us to show
that the worst-case confinement probability for density-$p$ set of
vertices is $p^k+\Omega_p(r^{-1}k^{-1})$.

To utilize \cref{PR:edge_fraction_show} for our purposes,
we show that there is a particular value of $\gamma$
for which the density of $\rv A=V(\rv B)$ is almost surely
$p+o(1)$. Remarkably, we do this by applying our results about the
\emph{typical} behavior of sampling   to   the \emph{dual} hypergraph
of $H$  ---
indeed,  in the dual hypergraph of $H$, the complement $V-\rv A$ is the set
of \enquote{edges} which are confined to the set of
\enquote{vertices} $E-\rv B$, so
this is a question
about typical confinement probability (but in a possibly non-uniform
hypergraph).
Putting \cref{PR:edge_fraction_show} together with the
fact $\rv A$ almost surely has density $p+o(1)$ is enough
to show that for \emph{some} $B\subseteq E$ of density $\gamma$,
the set $A=E(B)$ will have density $p+o(1)$ in $V$
while at the same time the set $E(A)=E(V(B))$ will have density
at least $\gamma + (1 - (1-\gamma)^{d -
1})^{\frac{d}{r}})$, or a more complicated expression in the
non-regular case. The proof of \cref{TH:lower_bound_strong_show} then
concludes by showing that the lower bound
on $\frac{|E(V(B))|}{|E|}$ is at least $f_{r,k}(p)$.
This last step is immediate when $H$ is $d$-regular,
because $\gamma + (1 - (1-\gamma)^{d -
1})^{\frac{d}{r}})$ turns out to be exactly $f_{r,k}(p)$,
but in the general case, this is the content of
the very technical \cref{TH:hard-optimization}.

We hope that our methods, and in particular
\cref{PR:edge_fraction_show},  will find further applications.

%


It is also worth nothing that our proof of
the general version of \cref{PR:edge_fraction_show}
indicates that the density of $E(V(\rv B))$ is the smallest
when every two  hyperedges share at most one vertex. In particular,
this suggests
that \emph{simplicial} or \emph{cell complexes} should be far from being
optimal confiners, q.v.~\cref{RM:not-simplicial}.

\subsubsection{Implications to Dispersers}

A disperser is a basic object used in error-reduction for one-sided
randomized algorithms,
cf.~\cite{NisanT99, Shaltiel04, Goldreich11, Vadhan12}.
Following
\cite[Dfn.~6.19]{Vadhan12}
(for instance),
a $(k, \epsilon)$-disperser is a function
$F:\{0,1\}^n\times\{0,1\}^d\to \{0,1\}^m$
where for every $ 2^k$-element subset $A\subseteq \{0,1\}^n$,
the image of $A\times\{0,1\}^d$ under $F$ has at least
$2^m(1-\epsilon)$ elements.
The rationale behind this is that given a random string $w\in \{0,1\}^n$
having a possibly-non-uniform distribution, but which, loosely speaking,
has $k$ bits of entropy, we could compute $F(w,v)$ with a uniformly
random $v\in\{0,1\}^d$
to get an \enquote{almost} random string of $m$ bits.
One is interested in keeping $d$ as small as possible while having $m$
as large as possible.
The latter is equivalent to asking that the \emph{entropy loss}
$\ell:=k+d-m$ will be as small as possible.
Put $N=2^n$, $M=2^m$, $D=2^d$ and $K=2^k$.
Then we can  think of $F$ as representing the incidences in
a $D$-regular hypergraph on $N$ vertices and $M$ edges,
in which case the requirement of being a $(K,\epsilon)$-disperser
is equivalent to being an
\hyperref[eq:eps_confiner]{$(\epsilon,(1-\frac{K}{N}))$-confiner}.
Combining this observation with our \cref{TH:worst_error_confine} and
\cref{TH:hitting_bound_show},
we get the following new bounds on dispersers,
which best-apply in the regime where $k=n-O(1)$ and the entropy loss
$\ell=k+d-m$ is $O(1)$.

In the language of dispersers our results say that:

\begin{theorem}\label{TH:disp-lower-bound-show}
Fix $u\in\NN$.
Let $F:\{0,1\}^n\times\{0,1\}^d\to \{0,1\}^m$
be an $( {n-u},\epsilon)$-disperser
with entropy loss $\ell=(n-u)+d-m$, and put $L=2^\ell$ and $U=2^u$.
Then,
\begin{enumerate}[(i)]
\item $\epsilon\geq (1-\frac{1}{U})^{LU}- \frac{2LU}{N} \approx
    e^{-L}- \frac{2LU}{N} $,
\item
    putting $\ee:=\epsilon-(1-\frac{1}{U})^{LU}$
    and assuming $n\geq \Theta_u(\ee^{-1}+\ell)$,
    we have
    $d \geq -\lg [\epsilon-(1-\frac{1}{U})^{LU}]-O_u(1)$.
\end{enumerate}
Here, the hidden constants depend only on $u$.
\end{theorem}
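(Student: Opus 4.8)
The plan is to make the informal disperser–confiner dictionary precise and then read off (i) from \cref{TH:worst_error_confine} and (ii) from \cref{CR:eps_view_show}. Given the $(n-u,\epsilon)$-disperser $F\colon\{0,1\}^n\times\{0,1\}^d\to\{0,1\}^m$, I would form the hypergraph $H=(V,E)$ with $V=\{0,1\}^n$ and with hyperedges indexed by $\{0,1\}^m$, the hyperedge attached to $y$ being $e_y:=\{x\in\{0,1\}^n:\exists\,v\in\{0,1\}^d,\ F(x,v)=y\}$. Counting incidences with multiplicity, every vertex $x$ lies in the $D:=2^d$ hyperedges $F(x,v)$, so $H$ is $D$-regular; it has $N:=2^n$ vertices, $M:=2^m$ edges, sparsity $r=M/N=2^{m-n}$, and \emph{average} uniformity $ND/M=2^{n+d-m}$. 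Since $\ell=(n-u)+d-m$ and $u=\lg U$, this average uniformity equals $\kappa:=2^{\ell+u}=LU$ and $r=2^{d-\ell-u}=D/(LU)$. Now the dictionary is exact: for $A\subseteq V$ with $|A|=2^{n-u}$ one has $F(A\times\{0,1\}^d)=\{y:e_y\cap A\neq\emptyset\}$, so $F$ being an $(n-u,\epsilon)$-disperser says $|\{y:e_y\subseteq V\setminus A\}|\le\epsilon M$ for every such $A$; replacing $A$ by its complement $A'$, which has density $p:=1-2^{-u}=1-1/U$ (so $1-p=1/U$), this reads $\Pp_{\rv e\sim E}[\rv e\subseteq A']\le\epsilon$ for every density-$p$ set $A'$, i.e.\ $H$ is an $(\epsilon,p)$-confiner of average uniformity $\kappa=LU$ and sparsity $r=D/(LU)$.

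Part (i) is then immediate from \cref{TH:worst_error_confine}: since $\frac1{p(1-p)}=\frac{U^2}{U-1}=O_u(1)$, for $n$ above this ($u$-dependent) threshold the theorem gives $\epsilon\ge p^{\kappa}-2\kappa/N=(1-1/U)^{LU}-2LU/N$, and $(1-1/U)^{LU}\approx e^{-L}$ (exactly $e^{-L}$ in the limit $U\to\infty$, since $(1-1/U)^U\to e^{-1}$).

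For part (ii) set $\ee:=\epsilon-(1-1/U)^{LU}=\epsilon-p^{\kappa}$, so by hypothesis $\ee\in(0,1-p^{\kappa})$ and $H$ is a $(p^{\kappa}+\ee,p)$-confiner. If $d\ge-\lg\ee-C_u$ with $C_u$ the target constant there is nothing to prove, so I may assume $d<\lg(1/\ee)$, hence $D<1/\ee$ and $r=D/(LU)<1/\ee$; together with $n\ge\Theta_u(\ee^{-1}+\ell)$ this is what makes the polynomial side-conditions of \cref{TH:hitting_bound_show}/\cref{CR:eps_view_show} hold — the point being that in the resulting bound $|E|\ge N\cdot\min\{\Omega(\frac{p(1-p)^2}{\kappa\ee}),\,N^{c}/\poly(\kappa,p^{-1},(1-p)^{-1})\}$ the second term is astronomically large once $n\ge\Theta_u(\ell)$ with a big enough constant (as $\poly(\kappa,p^{-1},(1-p)^{-1})=2^{O_u(\ell)}$ while $N^{c}=2^{cn}$), so the first term binds. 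Thus $M=|E|\ge N\cdot\Omega\big(\frac{p(1-p)^2}{\kappa\ee}\big)=N\cdot\Omega_u\big(\frac1{L\ee}\big)$, i.e.\ $D/(LU)=r=M/N\ge\Omega_u(1/(L\ee))$, whence $D\ge\Omega_u(U/\ee)=\Omega_u(1/\ee)$ and $d=\lg D\ge-\lg\ee-O_u(1)=-\lg[\epsilon-(1-1/U)^{LU}]-O_u(1)$, as claimed; all suppressed constants depend only on $u$ because $p=1-2^{-u}$ and $U=2^u$ are functions of $u$ and the only remaining parameter, $L=2^{\ell}$, appears explicitly.

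The exact confiner description of $F$ in the first paragraph is routine bookkeeping of powers of two, and part (i) is a one-line substitution. The step I expect to cost the most care is the one flagged in the third paragraph: verifying that the clean hypothesis $n\ge\Theta_u(\ee^{-1}+\ell)$ really does imply the polynomial lower bounds on $n$ required by \cref{TH:hitting_bound_show} after the substitutions $\kappa=2^{\ell+u}$, $r=D/(LU)$, $1/(1-p)=U$. This forces one to first reduce to the case $d=O_u(\lg\ee^{-1})$ and then to use the full $\min$-form of \cref{CR:eps_view_show} rather than its already-simplified ``$n\ge\poly(\cdots)$'' corollary, tracking along the way that every constant absorbed into $\Omega_u,O_u$ depends on $u$ alone.
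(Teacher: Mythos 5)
Your proof is correct and takes the same route as the paper: identify $F$ with a $D$-regular hypergraph on $N=2^n$ vertices and $M=2^m$ hyperedges, observe that the disperser property makes it an $(\epsilon,1-\tfrac1U)$-confiner with average uniformity $ND/M=LU$ and sparsity $R=M/N=D/(LU)$, and then read off (i) from \cref{TH:worst_error_confine} and (ii) from \cref{CR:eps_view_show} after translating parameters. One small remark: your preliminary reduction to the case $d<\lg(1/\ee)$ is not strictly needed if one invokes the unconditional min-form of \cref{CR:eps_view_show}/\cref{TH:lower-bound-epsilon-view} (where the second branch of the min is ruled out once $n\geq\Theta_u(\ee^{-1}+\ell)$), but it is a perfectly valid safeguard and is in fact the cleanest way to satisfy the $n\geq\poly(r,k,p^{-1},(1-p)^{-1})$ side-condition of \cref{TH:hitting_bound_show} — a point the paper's own proof glosses over by writing the condition as $N\geq\poly_u(\ee^{-1},UL)$ without explaining why $r$ can be absorbed.
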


\begin{proof}
Let $H$ be the corresponding $D$-regular hypergraph with $N$ vertices and $M$
vertices as above. Since
$\frac{2^{n-u}}{N}=U^{-1}$, it is an
\hyperref[eq:eps_confiner]{$(\epsilon,1-\frac{1}{U})$-confiner}.
Moreover, the sparsity of $H$ is $R:=\frac{M}{N}$
and its average uniformity  is $D/R=2^{n+d-m}=LU$
(\cref{PR:degree-to-unif-ratio}).
Thus, by \cref{TH:worst_error_confine},
$\epsilon\geq (1-U^{-1})^{L/U}- \frac{2LU}{N} $.
Furthermore, for $\ee$ as in the theorem,
we have by \cref{TH:hitting_bound_show} that
$R\geq \Omega_u((LU)^{-1}\ee^{-1})$, provided
that $N\geq\poly_u(\ee^{-1},UL)$, which holds if $n\geq
\Theta_u(\ee^{-1}+u+\ell)=
\Theta_u(\ee^{-1} +\ell)$ (because $u$ is fixed).
Multiplying both sides of $R\geq \Omega_u((LU)^{-1}\ee^{-1})$ by $LU$
and recalling that $D=R\cdot LU$,
gives $D\geq  \Omega_u([\epsilon-(1-\frac{1}{U})^{LU}]^{-1})$,
and taking logarithms gives (ii).
\end{proof}

A graph-theoretic analogue of this result also holds, but we refrain
from stating it here to avoid
duplication of the presented material.  Part (ii) of
\cref{TH:disp-lower-bound-show}  improves on the lower bound $d\geq
-\lg \epsilon +\lg u-O(1)$ of \cite[Thm.~1.5(a)]{RadhakrishnanTS00},
although the latter
holds without assuming $u$ is constant and without restrictions on $n$.
Part (i) of \cref{TH:disp-lower-bound-show} appears to be new.

%

\subsubsection{Implications to Vertex Expanders}

A $(K,A)$-vertex expander is a hypergraph\footnote{
Most texts describe it as a two-sided graph.
} $H=(V,E)$
in which  every set $S\subseteq V$ of $K$ or less
vertices touches at least $A\cdot|S|$ edges, cf.~\cite[Dfn.~4.1]{Vadhan12},
for instance.
In  \cite[Prob.~4.7]{Vadhan12}, a slightly
weaker notion is introduced: $H$
is a $(=K,A)$-vertex expander if every set
$S\subseteq V$ of exactly $K$ vertices touches at least
$AK$ hyperedges.
By looking at the complement of $S$, this is the same
as saying that $H$ is a
\hyperref[eq:eps_confiner]{$(1-\frac{KA}{|E|},1-\frac{A}{|V|})$-confiner}.
Thus, our \hyperref[TH:worst_error_confine]{Theorems
\ref{TH:worst_error_confine}}
and \ref{TH:hitting_bound_show} imply the following new bounds.

\begin{corollary}
Let $\ee, \delta\in (0,1)$ and let $H=(V,E)$
be a hypergraph with $n$ vertices, $r\cdot n$
hyperedges and average uniformity $k$. Suppose that every set of
$\lceil \delta n\rceil$
vertices in $H$ meets at least $\lceil \ee rn \rceil$ hyperedges.
Then:
\begin{enumerate}[(i)]
\item $\ee \leq 1-(1-\delta)^k+ \frac{2k}{n} $, and
\item $\ee \leq
    1-(1-\delta)^k-\Omega(\frac{\delta^2(1-\delta)}{rk})$, provided
    $n\geq\poly_\delta(r,k)$.
\end{enumerate}
\end{corollary}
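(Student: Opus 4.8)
The plan is to recognize that the expansion hypothesis is nothing but the confiner condition for the complementary density, and then quote \cref{TH:worst_error_confine} and \cref{TH:hitting_bound_show}. First I would set $p:=1-\delta$ and use the elementary identity $\lfloor (1-\delta)n\rfloor = n-\lceil \delta n\rceil$: thus a subset $T\subseteq V$ satisfies $|T|=\lfloor pn\rfloor$ exactly when its complement $S:=V\setminus T$ satisfies $|S|=\lceil\delta n\rceil$. Since an edge $e\in E$ fails to meet $S$ if and only if $e\subseteq T$, the number of edges of $H$ contained in $T$ equals $|E|$ minus the number of edges meeting $S$, which by hypothesis is at most $rn-\lceil \ee rn\rceil$. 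Dividing by $|E|=rn$ gives $\Pp_{\rv e\sim E}[\rv e\subseteq T]\le 1-\lceil \ee rn\rceil/(rn)\le 1-\ee$ for every such $T$; that is, $H$ is a \hyperref[eq:eps_confiner]{$(1-\ee,\,1-\delta)$-confiner}.

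For part (i) I would apply \cref{TH:worst_error_confine} with this $p=1-\delta$ and $\epsilon=1-\ee$, which is legitimate once $n\ge \frac{1}{p(1-p)}=\frac{1}{\delta(1-\delta)}$, and in any case the claimed bound is vacuous for very small $n$. The theorem yields $1-\ee\ge (1-\delta)^k-\frac{2k}{n}$, i.e.\ $\ee\le 1-(1-\delta)^k+\frac{2k}{n}$, which is exactly (i). For part (ii) I would instead apply \cref{TH:hitting_bound_show} with the same $p$ and $\epsilon$, observing that its requirement $n\ge\poly(r,k,p^{-1},(1-p)^{-1})$ becomes $n\ge\poly_\delta(r,k)$. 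That theorem gives $1-\ee\ge \min\{(1-\delta)^k+\Omega(\frac{(1-\delta)\delta^2}{rk}),\,1\}$; since $\ee>0$ forces $1-\ee<1$, the minimum cannot be $1$, so it must equal the first argument, and rearranging produces $\ee\le 1-(1-\delta)^k-\Omega(\frac{\delta^2(1-\delta)}{rk})$, which is (ii).

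The argument is essentially bookkeeping, so I do not anticipate a real obstacle; the points needing care are (a) the floor/ceiling identities relating the expansion parameter $\delta$ to the confiner density $1-\delta$ and the passage from ``every $S$ of size $\lceil\delta n\rceil$'' to ``every complement $T$ of size $\lfloor(1-\delta)n\rfloor$''; (b) checking that the side conditions of the invoked theorems (the lower bound on $n$, and the requirement $k\ge 2$, or $k>1$ with the more general form of \cref{TH:hitting_bound_show}) transfer to the hypotheses of the corollary; and (c) confirming that the suppressed constant in the $\Omega$ of \cref{TH:hitting_bound_show}, which depends only on $p=1-\delta$, thereby depends only on $\delta$. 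A graph-theoretic analogue follows along identical lines, using the two-sided-graph description of a $(=K,A)$-vertex expander in place of the hypergraph one.
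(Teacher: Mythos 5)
Your argument is correct and is exactly the approach the paper intends: immediately before stating the corollary the paper observes that the $(=K,A)$-vertex-expander condition is equivalent, upon passing to complements, to a confiner condition, and then asserts that Theorems~\ref{TH:worst_error_confine} and~\ref{TH:hitting_bound_show} ``imply the following new bounds.'' Your complementation via $\lfloor(1-\delta)n\rfloor=n-\lceil\delta n\rceil$ to a $(1-\ee,\,1-\delta)$-confiner, followed by the two applications of those theorems with $p=1-\delta$ and $\epsilon=1-\ee$, is precisely that argument (and your bookkeeping caveats about the floor/ceiling identity, the side condition on $n$, and $\Omega_p=\Omega_\delta$ are the right ones to flag).
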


\subsection{Relation to Other Works}
\label{subsec:relation}


\paragraph{Comparison to Expander-Walk-Based Samplers.}


There are many works which study how length-$(k-1)$ random walks on expander
graphs
perform as samplers, Chernoff samplers and confiners.
In more detail, let $G = (V, E)$ be a $d$-regular graph
with spectral expansion $\lambda := \lambda(G)$, and let $H=(V,E')$
be the hypergraph whose hyperedges are the paths of length $k-1$ in
$G$.\footnote{
As in page~\pageref{footnote:repetitions}, we suppress
the technicality that random walks may repeat the same vertex several times.
See \cref{footnote:repetitions}.
}
Then the celebrated \hyperref[eq:hsl]{(Expander) Hitting Set Lemma}
states that
\begin{equation}\tag{Hitting Set Lemma}\label{eq:hsl}
\Pp\sqbr*{ \rv{\vec v} = (v_1,\ldots, v_k) \subseteq A} \le p \cdot
\parens*{(1-\lambda) p + \lambda}^{k-1}\qquad\forall~A \subseteq
V~\textrm{of density $p$},
\end{equation}
implying that $H'$ is an $(p((1-\lambda)p+\lambda)^{k-1},p)$-confiner.
(In this case,  lower bounds on  $\Pp\sqbr*{ \rv{\vec v}   \subseteq
A}$ are also known, cf.~\cite{AlonFW95, Ta-ShmaZ24}.)
Moreover, by a sequence of works \cite{Kahale97, Gillman98, Healy08,
GolowichV22}, $\rv N_H(A)$ satisfies a Chernoff-like tail bound
\begin{equation}\label{eq:exp_chernoff}\tag{Expander Chernoff
Bound}\Pp\sqbr*{ \Abs*{\rv N_H(A) - kp} \ge k \eta } \le \exp\parens*{ -
\Omega(1 - \lambda) \cdot k \eta^2 }
\,\quad\forall~A \subseteq V~\textrm{of density $p$}
\end{equation}
for all $\eta > 0$,\footnote{See \cite[Cor.~2]{GolowichV22} for a more
quantitative statement.} and the number-of-hits distributes nearly
binomially, namely,
\begin{equation}\label{eq:eps-estimation}
\dist_{\TV}( \rv N_H(A), \Bin(k, p))\leq \Theta(\lambda)
\qquad
\text{for every
$A\subseteq V$
of density $p$,}
\end{equation}
i.e.,~$H$ is a $\Theta(\lambda)$-sampler.

There are three main differences between our \cref{cor:fraction-show}
and these results.
First,  these results address \emph{all} subsets $A\subseteq V$ of density
$p$ while the conclusions of \cref{cor:fraction-show} apply only to
\emph{almost all}
such sets. Second, the above results address only (hypergraphs
arising from) \emph{expander walks},
while our result applies to \emph{all uniform hypergraphs}.
Third, when the expander graph $G$  is required to be sparse
(i.e.,\ $d=O(1)$), our \cref{cor:fraction-show} gives much better
bounds on the confinement probability,
tail bound, and sampling, albeit only for '99.99\%' of $A$-s of density $p$.
Indeed,
if the degree $d$ of $G$ is fixed, then $\lambda\geq
\frac{2\sqrt{d-1}}{d}=\Theta(1)$ by
the Alon--Boppana bound \cite{Alon86}, in which case the bounds
\cref{eq:dist-showcase},
\cref{eq:chernoff-showcase}, \cref{eq:confinement-showcase} from
\cref{cor:fraction-show} are eventually
better as $n\to\infty$ (and $k$ is not too large relative to\ $n$)
than their counterparts for expander walks recalled above.
This is particularly notable in the case of distance from the
binomial distribution,
where \cref{eq:eps-estimation} gives
$\dist_{\TV}( \rv N_H(A), \Bin(k, p))\leq \Theta(1)$ for \emph{all} $A$
of density $p$, while \cref{cor:fraction-show} say
that $\dist_{\TV}( \rv N_H(A), \Bin(k, p))\leq o(1)$ for \emph{almost
all} such $A$.
The latter holds despite the fact that by \cite[Thm.~5]{GolowichV22},
it is impossible to improve upon $\dist_{\TV}( \rv N_H(A), \Bin(k,
p))\leq \Theta(\lambda)$ for  all  $A$  in general expander walks.
To conclude the last point, every sparse hypergraph sampler with
\emph{not-too-outlandish} vertex
degrees performs better  than a sparse-expander-walk based sampler
\enquote{99.99\% of the time}.


\begin{remark}
The work of \cite{GolowichV22} studies the convergence of $\rv N_H(A)$ to
$\Bin(k, p)$ in the \emph{total variation} distance. Precursor works have
studied convergence in weaker distances such as the \emph{Kolmogorov
distance}\footnote{Let $\rv X, \rv Y$ be distributions supported on the reals.
This distance $\dist_K(\rv X, \rv Y)$ is defined by: $\dist_K(\rv X,
\rv Y ) = \sup_{t \in \RR} \Abs*{ \Pp\sqbr*{\rv X \le t} -
\Pp\sqbr*{\rv Y \le t}}$}
between distributions, see for example \cite{Lezaud01, Kloeckner17,
Kloeckner19}. Works of \cite{CohenPT21, CohenMPPT22}
generalized these precursor works by giving bounds in \emph{total
variation} distance
which are weaker than that of \cite{GolowichV22}. These works also studied
the advantage that IID samples give over expander walk based
samples for certain classes of circuits.

The subsequent work   \cite{Golowich23} (specifically Thm.~14
therein), which evolved from
\cite{GuruswamiK21, CohenPT21, CohenMPPT22, GolowichV22}, proves a
Berry-Esseen type theorem which shows that
for a  hypergraph $H  $ arising from a length-$(k-1)$ expander
walk, $\rv N_H(A)$ is at most $O(\lambda/k^{1/2 - o(1)})$ away
in the \emph{total variation} distance from a discrete Gaussian-like
distribution (which need  not be $\Bin(k,p)$ in general, as implied
by \cite[Thm.~5]{GolowichV22}, recalled in our earlier discussion). However,
we note that the \emph{limiting} Gaussian-like distribution
has a non-trivial dependency on the expander graph $G$ and the set $A
\subseteq V$. In this light, the discrepancy between the lowerbound
\cite[Thm.~5]{GolowichV22},
our typical case bound \cref{cor:fraction-show},
and the upperbound of \cite[Thm.~14]{Golowich23} can be interpreted by
stating that for a typical $A \subseteq V$ of density $p$, the
limiting distribution should
look \emph{binomial}, but for a \emph{vanishing} fraction of $p$-dense sets it
can be \emph{far} from binomial.\footnote{As permitted by the upperbound
of $O_p(\lambda)$ proven in \cite[Cor.~4]{GolowichV22}.}
\end{remark}


We compared  confiners coming from expander random walks with our results
about the worst-case confinement probability in
\hyperref[subsec:worst-case]{Subsections~\ref{subsec:worst-case}}
and~\ref{subsec:fine-scale}.

\paragraph{Comparison to Chernoff Bounds for High-Dimensional
Expanders.}  Chernoff-type (sub-Gaussian) tail bounds for hypergraphs
arising from  simplicial complexes (rather than expander walks) were
proven in \cite{DiksteinH24} under suitable assumptions of
\emph{high-dimensional expansion}. However, the assumptions of
\emph{high-dimensional expansion} for these
bounds to kick in are significantly
stronger than the assumption of \emph{expansion} of expander walks.
Therefore, these bounds fall short of recovering the \ref{eq:exp_chernoff}
when the graph $G$ has constant degree.
Consequently, in the case of expander walks, the typical behavior of
tail-bounds implied by \cref{cor:fraction-show}(ii) is better.

\paragraph{Other Related Work.}
Our lower-bound for the hitting set property can be thought as a
lower-bound for the expansion of hypergraphs. The work
\cite{LouisPR22} is close to our work in spirit and it compares the
expansion of hypergraphs to several notions of high-dimensional
expansion, in the sense of \cite{AlevJT19, DiksteinD19, GurLL22,
AlevP23, DiksteinH24}.

Quite recently,   a breakthrough result \cite{HsiehLMRZ25}
constructed explicit \emph{lossless} vertex expanders. These objects
are highly related to
the \hyperref[eq:eps_confiner]{$\epsilon$-confiners} we have studied, see the
discussion in \cite[Prob.~4.7]{Vadhan12}.

\subsection{Organization of the Paper}
In \cref{sec:Preliminaries}, we recall some general
definitions and results which will be used throughout the paper. In
\cref{sec:almost-every-set},
we present and  prove our typical-case results. In
\cref{sec:lowerbounds},
we   present our results about worst-case sampling.
The proofs of these results are
given in   in
\cref{sec:lower-bound-proof},
except for some
technical analytic arguments that are differed to \cref{sec:analytic}.
In~\hyperref[apx:dispersers]{Appendix \ref{apx:dispersers}}, we
restate some known results about
disperseres in the language
of confiners.

\subsection{Acknowledgements}

We thank Max Hopkins and Dor Minzer for useful and insightful
conversations and Amit Levi for his feedback on an earlier version of
this manuscript. The
second named author
thanks the MIT Mathematics Department for
their hospitality
while writing this manuscript.

This research was supported by an ISF Grant No.~721/2024.

\section{Preliminaries}\label{sec:Preliminaries}
\subsection{Discrete Probability Theory}
Throughout, we will use the boldface font, e.g. $\rv X$, to refer to
random variables. Let $\rv X$ be a discrete random variable taking
values over a set $\Omega$.  We will write $\Exp \rv X$ and $\Var(\rv
X)$ for the expectation and variance of the random variable $\rv X$,
respectively.

We recall the following well-known inequalities due to Markov and
Chebyshev, see for example \cite[Thms.~4.1, 4.2]{Wasserman13}:
\begin{fact}[Markov's Inequality]\label{fac:andrei}
Let $\rv X$ be a discrete random variable taking values over the
non-negative reals. Then, for any $t > 0$, we have:
\[ \Pp\sqbr*{ \rv X > t } \le \frac{\Exp \rv X}{t}. \]
\end{fact}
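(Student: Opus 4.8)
The plan is to give the standard proof, specialized to the discrete setting in which the fact is stated. First I would write the expectation as a sum over the values taken by $\rv X$, all of which are nonnegative by hypothesis: $\Exp \rv X = \sum_{x} x \cdot \Pp\sqbr*{\rv X = x}$. Since every summand is $\ge 0$, discarding those indexed by $x \le t$ can only decrease the sum, so $\Exp \rv X \ge \sum_{x > t} x \cdot \Pp\sqbr*{\rv X = x}$. On this restricted range one has $x > t$, so replacing each $x$ by the smaller quantity $t$ (legitimate because the remaining probabilities are nonnegative) yields $\Exp \rv X \ge t \sum_{x > t} \Pp\sqbr*{\rv X = x} = t \cdot \Pp\sqbr*{\rv X > t}$. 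Dividing through by $t > 0$ gives the asserted inequality.

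An equivalent route, which I would mention in a line as an alternative, is the indicator argument: for every outcome one has the pointwise bound $t \cdot \mathbf{1}\{\rv X > t\} \le \rv X$ --- both sides equal $0$ when $\rv X \le t$, and otherwise the left side equals $t < \rv X$. Taking expectations of both sides and using monotonicity and linearity of expectation together with $\Exp \mathbf{1}\{\rv X > t\} = \Pp\sqbr*{\rv X > t}$ gives $t \cdot \Pp\sqbr*{\rv X > t} \le \Exp \rv X$, which is the same conclusion.

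There is no substantive obstacle here. The only point meriting a word of care is the legitimacy of truncating and rearranging the defining series for $\Exp \rv X$; this is immediate from the nonnegativity of both the values of $\rv X$ and the probabilities, so no cancellation occurs and the manipulation remains valid even if $\rv X$ has countably infinite support (and the statement is trivially true when $\Exp \rv X = \infty$).
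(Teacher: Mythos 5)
Your proof is correct and is the standard argument for Markov's inequality; both the direct truncation of the defining sum and the alternative indicator-function route are valid, and your remark about nonnegativity justifying the rearrangement is exactly the right point of care. Note, however, that the paper does not actually prove this statement: it records it as a known \emph{Fact} with a citation to a textbook (\cite[Thms.~4.1, 4.2]{Wasserman13}), so there is no proof in the paper to compare against. Your write-up would serve as a perfectly adequate self-contained substitute for the citation.
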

\begin{fact}[Chebyshev's Inequality]\label{fac:pafnuty}
Let $\rv X$ be a discrete random variable taking values over the
reals. Then, for any $t > 0$, we have:
\[ \Pp\sqbr*{ \Abs*{\rv X - \Exp \rv X} > t } \le \frac{\Var(\rv X)}{t^2}. \]
\end{fact}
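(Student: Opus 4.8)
The plan is to obtain Chebyshev's Inequality as an essentially immediate corollary of Markov's Inequality (\cref{fac:andrei}). The key move is to pass from $\rv X$ to the squared deviation $\rv Y := (\rv X - \Exp \rv X)^2$, which is a discrete random variable taking values in the non-negative reals, so that \cref{fac:andrei} applies to it.

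First I would fix $t > 0$ and observe that, since the map $s \mapsto s^2$ is strictly increasing on $[0,\infty)$, the events $\set*{\Abs*{\rv X - \Exp \rv X} > t}$ and $\set*{\rv Y > t^2}$ coincide exactly. Then I would apply \cref{fac:andrei} to the non-negative variable $\rv Y$ with the positive threshold $t^2$, obtaining
\[
    \Pp\sqbr*{ \Abs*{\rv X - \Exp \rv X} > t }
    = \Pp\sqbr*{ \rv Y > t^2 }
    \le \frac{\Exp \rv Y}{t^2}.
\]
Finally I would invoke the definition of variance, $\Var(\rv X) = \Exp\sqbr*{ (\rv X - \Exp \rv X)^2 } = \Exp \rv Y$, to identify the numerator on the right-hand side with $\Var(\rv X)$, which is exactly the claimed bound.

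There is no genuine obstacle here; the only points deserving a word of care are that $\rv Y \ge 0$ pointwise (so that \cref{fac:andrei} is legitimately applicable) and that the two events above are literally equal rather than one merely containing the other, both of which are immediate consequences of $t > 0$. If desired, one could additionally note that the bound is vacuously true whenever $\Var(\rv X) \ge t^2$, since probabilities never exceed $1$, so nothing is lost in the borderline regime.
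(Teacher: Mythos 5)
Your proof is correct and is the standard textbook derivation of Chebyshev from Markov; the paper itself does not prove this statement but simply records it as a known fact (citing Wasserman), immediately after stating Markov's Inequality as \cref{fac:andrei}, so your argument supplies exactly the routine verification the paper leaves implicit. The reduction — passing to $\rv Y = (\rv X - \Exp \rv X)^2$, noting the events $\set*{\Abs*{\rv X - \Exp \rv X} > t}$ and $\set*{\rv Y > t^2}$ coincide, applying Markov at threshold $t^2$, and identifying $\Exp \rv Y$ with $\Var(\rv X)$ — is sound, and you correctly flag the two small hygiene points (non-negativity of $\rv Y$ and exact equality of events rather than mere inclusion).
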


We recall that a random variable taking values in $\{0, \ldots, k\}$
is said to have the \ref{eq:bind} with parameters $(k, p)$, denoted
$\rv X \sim \Bino(k, p)$, if its probability mass function satisfies:
\begin{equation}\tag{binomial distribution}\label{eq:bind}
\Pp\sqbr*{\rv X = \ell} = \binom{k}{\ell} \cdot p^\ell (1-p)^{k - \ell}
\end{equation}
As an urn experiment, $\Bin(k, p)$ is the distribution of the number
of red balls
obtained by pulling $k$ balls out of an urn containing $n$ balls, a
$p$-fraction of which is red and the others are blue, with replacement.

Recall further that a random variable $\rv X$ taking values in $\{0,
\ldots, k\}$ is said to have the \ref{eq:hypgd} with parameters $(k,
m, n)$, denoted $\rv X \sim \Hyp(k, m,n)$, if its probability mass
function satisfies:
\begin{equation}\tag{hypergeometric distribution}\label{eq:hypgd}
\Pp\sqbr*{\rv X = \ell} = \frac{\binom{k}{\ell} \cdot \binom{n - k}{m
- \ell}}{\binom{n}{m}}\quad\forall \ell = 0,\ldots, k.
\end{equation}
Similarly to the binomial distribution,   $\Hyp(k, m,n)$ is the
distribution describing the number of red balls obtained by pulling
$k$ balls out of on urn containing $m$ red and $n-m$ blue balls
without replacement.

When $n$ is large,   binomial and hypergeometric random variables
tend to concentrate around their mean. This is known as the
Chernoff bound
\cite[Thm.~4.5]{Wasserman13}.

\begin{fact}[Chernoff Bound]\label{fac:chernoff}
Let $\rv X \sim \Bin(k, p)$ or $\rv X \sim \Hyp(k, pn, n)$. Then for all
$\eta > 0$, we have:
\[ \Pp\sqbr*{ \Abs*{\rv X - pk } > k \cdot \eta } \le 2\exp\parens*{
- 2 k \eta^2 }. \]
\end{fact}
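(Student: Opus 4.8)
The plan is to prove both cases by the exponential-moment (Chernoff--Hoeffding) method, reducing the without-replacement case to the with-replacement one. Throughout, I would split the two-sided event by a union bound,
\[ \Pp\sqbr*{\Abs*{\rv X - pk} > k\eta} \le \Pp\sqbr*{\rv X - pk > k\eta} + \Pp\sqbr*{pk - \rv X > k\eta}, \]
and bound each tail by $\exp(-2k\eta^2)$; by the symmetry obtained from replacing each Bernoulli indicator with its complement, it suffices to treat the upper tail.

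For $\rv X \sim \Bin(k,p)$, write $\rv X = \sum_{i=1}^k \rv X_i$ with the $\rv X_i$ i.i.d.\ Bernoulli$(p)$. For any $t > 0$, Markov's inequality (Fact~\ref{fac:andrei}) applied to $e^{t\rv X}$ gives
\[ \Pp\sqbr*{\rv X - pk > k\eta} \le e^{-tk\eta}\,\parens*{\Exp\sqbr*{e^{t(\rv X_1 - p)}}}^{k}. \]
Since $\rv X_1 - p$ has mean $0$ and is supported in an interval of length $1$, Hoeffding's lemma yields $\Exp\sqbr*{e^{t(\rv X_1 - p)}} \le e^{t^2/8}$, so the right-hand side equals $\exp\parens*{k(t^2/8 - t\eta)}$; choosing $t = 4\eta$ gives the bound $\exp(-2k\eta^2)$.

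For $\rv X \sim \Hyp(k, pn, n)$ --- the number of red balls among $k$ draws \emph{without} replacement from an urn with $pn$ red and $(1-p)n$ blue balls --- the key point is that the same moment-generating-function estimate $\Exp\sqbr*{e^{t\rv X}} \le e^{tpk}\,e^{kt^2/8}$ continues to hold. This is the classical fact (due to Hoeffding; see \cite[Thm.~4.5]{Wasserman13}) that sampling without replacement is dominated in the convex order by sampling with replacement: if $\rv Y \sim \Bin(k,p)$, then $\Exp\sqbr*{\phi(\rv X)} \le \Exp\sqbr*{\phi(\rv Y)}$ for every convex $\phi\colon\RR\to\RR$, in particular for $\phi(x) = e^{tx}$. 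Granting this, the Markov-inequality computation of the previous paragraph applies verbatim with $\rv X$ in place of $\sum_i \rv X_i$, and the conclusion follows as before.

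The only substantive point is thus the convex-order domination in the hypergeometric case, and this is the step I expect to require the most care. I would prove it by an exchangeability argument: realize $\rv X$ as $\sum_{i=1}^{k} b_{\sigma(i)}$ for a fixed $0/1$ vector $(b_1,\dots,b_n)$ with $pn$ ones and a uniformly random permutation $\sigma$ of $\{1,\dots,n\}$, realize $\rv Y$ the same way but with i.i.d.\ uniformly random indices, and interpolate between the two distributions by resampling the draws one at a time, checking that each replacement of a ``with-replacement'' draw by a ``without-replacement'' draw is a mean-preserving contraction and hence can only decrease $\Exp\sqbr*{\phi(\cdot)}$ for convex $\phi$. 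Everything else is the textbook Chernoff--Hoeffding argument, and for our applications even a weaker constant in the exponent (e.g.\ from a Doob-martingale / bounded-differences argument directly on $\sigma$) would suffice, so there is no real obstacle beyond this classical lemma.
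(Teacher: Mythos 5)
Your proof is correct and follows the standard route the paper itself points to: the paper states this as a \emph{Fact} with citations (Wasserman for the with-replacement Chernoff--Hoeffding bound, and in the remark Mulzer for the without-replacement extension) rather than giving a proof, and both of those references use exactly the MGF / Hoeffding's-lemma argument together with Hoeffding's 1963 convex-order domination of sampling without replacement by sampling with replacement, which is precisely your plan. The only caveat is that your sketch of the domination lemma via interpolation is left informal — the cleanest rendering is Hoeffding's original observation that the without-replacement sum can be realized as a conditional expectation of the with-replacement sum, after which Jensen gives $\Exp[\phi(\rv X)] \le \Exp[\phi(\rv Y)]$ for convex $\phi$ — but since you correctly identify it as the one substantive point and it is a classical result, this is not a gap.
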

\begin{remark}
The Chernoff bound is typically stated for sums of independent random
variables.
For binomial random variables $\rv X \sim \Bin(k, p)$, we can think of $\rv X$
as $\sum_{i = 1}^k \rv Z_i$
where the $\rv Z_i$ are independent random variables with
distribution $\Ber(p)$.
The proof extends without much difficulty to hypergeometric random
variables, e.g.~see~\cite[Thm.~5.2]{Mulzer18}.
\end{remark}

Next, we recall that the \ref{eq:totv} distance between two random
variables $\rv X$ and $\rv Y$ taking values over the same set
$\Omega$ is given by
\begin{equation}\tag{total variation}\label{eq:totv}
\dist_{\TV}(\rv X, \rv Y) = \frac{1}{2} \sum_{x \in \Omega} \Abs*{
\Pp\sqbr*{\rv X = x} - \Pp\sqbr*{\rv Y = y} } = \max_{A \subseteq
\Omega} \Abs*{ \Pp[\rv X \in A] - \Pp[\rv Y \in A]}.
\end{equation}
By an abuse of notation, when we know that $\rv Y$ is distributed
with law $\mu$, we will simply write $\dist_{\TV}(\rv X, \mu)$ in
place of $\dist_{\TV}(\rv X, \rv Y)$.

It is well-known that the hypergeometric distribution $\Hyp(k,m,n)$ is
close in total variation to a binomial distribution when $k$
is small relative to $n$:
\begin{theorem}[\cite{Kunsch94}]\label{thm:binhyp}
Let $k, m,n \in \NN$ be given such that $k \le m \le n$. Writing $p =
m/n$ and $q = 1 - p$, if $n p q \ge 1$, we have:
\[
\frac{1}{28} \cdot \frac{k-1}{n- 1}\le \dist_{\TV}(\Hyp(k, pn,
n),\Bin(k, p)) \le \frac{k-1}{n- 1}
\]
\end{theorem}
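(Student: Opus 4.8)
The lower bound only certifies that the estimate is sharp and is used nowhere else, so I would dispatch it quickly with a single test set and take the absolute constant from \cite{Kunsch94}; the real content is the upper bound, which I would establish by coupling. Put both laws on one probability space by drawing $k$ balls one at a time from an urn with $m=pn$ red and $n-m$ blue balls: let $X_1,\dots,X_k\in\{0,1\}$ be the colours of a draw \emph{without} replacement, so $S_j:=X_1+\dots+X_j$ has law $\Hyp(j,m,n)$, and let $Y_1,\dots,Y_k$ be i.i.d.\ $\Ber(p)$. Conditionally on the first $j-1$ draws we have $X_j\sim\Ber\!\parens*{\frac{m-S_{j-1}}{n-j+1}}$ and $Y_j\sim\Ber(p)$; couple them maximally, so that they disagree with conditional probability $\Abs*{\frac{m-S_{j-1}}{n-j+1}-p}=\frac{\Abs*{S_{j-1}-p(j-1)}}{n-j+1}$ (using $m=pn$). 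As the counts $S_k$ and $Y_1+\dots+Y_k$ can differ only when the two sequences differ,
\[
\dist_{\TV}\!\parens*{\Hyp(k,pn,n),\Bin(k,p)}\ \le\ \Pp\!\sqbr*{\exists\,j\le k:\ X_j\neq Y_j}\ \le\ \sum_{j=1}^{k}\Exp\!\sqbr*{\frac{\Abs*{S_{j-1}-p(j-1)}}{n-j+1}},
\]
and bounding each summand by $\sqrt{\Var(S_{j-1})}/(n-j+1)\le\sqrt{(j-1)pq}/(n-j+1)$ yields the self-contained estimate $\dist_{\TV}=O(k^{3/2}\sqrt{pq}/n)$, already of the right order for bounded $k$.

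\emph{The one genuine obstacle} is upgrading this to the stated constant $\frac{k-1}{n-1}$ uniformly in $k$ --- note that $k^{3/2}\sqrt{pq}$ beats $k-1$ only when $k\lesssim 1/(pq)$. Two routes seem viable. (a) Keep the conditioning instead of discarding it: with $A_j$ the event $\set*{X_i=Y_i:i\le j}$ one has $\Pp[\exists j:X_j\neq Y_j]=\sum_j\Exp\!\sqbr*{\mathbf 1_{A_{j-1}}\Abs*{S_{j-1}-p(j-1)}}/(n-j+1)$, and conditioning on $A_{j-1}$ drags $S_{j-1}$ toward its mean (a large deviation of $S$ would have forced an earlier disagreement with high probability) --- precisely the slack that is missing; turning this into a closed recursion is, in effect, K\"unsch's argument. (b) Bypass coupling and compute directly: writing $h_j:=\Pp[\Hyp(k,pn,n)=j]$ and $b_j:=\Pp[\Bin(k,p)=j]$ one has the closed form $h_j/b_j=\parens*{m^{\underline j}(n-m)^{\underline{k-j}}/n^{\underline k}}/\parens*{p^{j}q^{k-j}}$, where $x^{\underline a}=x(x-1)\cdots(x-a+1)$; a one-line computation shows that the increments of $\log(h_j/b_j)$ are decreasing in $j$ and that $\log(h_0/b_0),\log(h_k/b_k)<0$, so $h_j-b_j$ is negative on the two tails and nonnegative on a single middle block. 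Hence $\dist_{\TV}=\sum_{j\in\text{middle}}(h_j-b_j)=\sup_t(H(t)-B(t))+\sup_t(B(t)-H(t))$ for the c.d.f.'s $H,B$, and it remains only to bound each of these Kolmogorov-type suprema by $\frac{k-1}{2(n-1)}$ --- the finite estimate that is the actual arithmetic heart of the theorem.

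For the lower bound I would test the set $A=\{0,1,\dots,\lfloor pk\rfloor\}$, expand $H$ and $B$ to second order about the common mean $pk$ (legitimate once $npq\ge 1$ and $p$ stays away from $0$ and $1$), and read off $\Abs*{H(\lfloor pk\rfloor)-B(\lfloor pk\rfloor)}=\Theta(\frac{k-1}{n-1})$; for the precise absolute constant $\frac1{28}$ I would simply cite \cite{Kunsch94}.
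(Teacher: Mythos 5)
The paper does not prove this theorem --- it is imported verbatim from \cite{Kunsch94} and used as a black box, so there is no in-paper argument to compare your proposal against. Your sketch is technically sound as far as it goes: the maximal-coupling computation is correct, and so is your observation that bounding $\Exp\Abs*{S_{j-1}-p(j-1)}/(n-j+1)$ by $\sqrt{(j-1)pq}/(n-j+1)$ and summing gives only $O(k^{3/2}\sqrt{pq}/n)$, which is weaker than $\frac{k-1}{n-1}$ once $k\gtrsim 1/(pq)$; likewise the likelihood-ratio monotonicity observation in route (b) correctly reduces $\dist_{\TV}$ to two one-sided Kolmogorov-type suprema. But by your own admission both routes terminate at \enquote{this is, in effect, K\"unsch's argument} and \enquote{cite \cite{Kunsch94}}, so what you have is an annotated road-map rather than a proof: neither the $\frac{k-1}{n-1}$ upper bound nor the $\frac{1}{28}\cdot\frac{k-1}{n-1}$ lower bound is actually closed. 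That is consistent with how the paper handles the result, but if the goal were a self-contained argument, the hard step --- converting the self-correcting conditioning in (a) into a closed recursion, or carrying out the finite arithmetic behind the two Kolmogorov bounds in (b) --- is exactly what is missing.
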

The intuitive explanation of \cref{thm:binhyp} is that as $n$ grows
relative to $k$, the impact of not replacing the balls in the urn
experiment producing
the hypergeometric distribution becomes negligible.
\subsection{Hypergraphs}
\label{subsec:hypergraphs}

For our purposes, a \emph{hypergraph} $H = (V, E)$ (possibly with
repeated hyperedges)
consists of a set of vertices $V=V(H)$, a set of hyperedges $E=E(H)$
and an incidence relation  specifying which vertices
are included in which hyperedges.
We write $v<e$ to indicate that the vertex $v$ is in the hyperedge $e$,
and denote by $\Vrt(e)$ or $\Vrt_H(e)$ the set of vertices of $e$, i.e.,\
$\{v\in V\,:\,v<e\}$.
However, when there is no risk of confusion, we shall identify
$e$ with its set of vertices $\Vrt_H(e)$. Unless specified otherwise,
we assume that all hypergraphs are finite and
have at least one vertex and at least
one hyperedge.

Suppose that $H=(V,E)$ is a hypergraph.
Given
set of vertices $A\subseteq V$, we write $E(A)$ or $E_H(A)$ for the
set of hyperedges having
all their vertices in $A$, i.e.,
\[
E(A)=E_H(A)=\set*{e\in E ~:~ \Vrt(e)\subseteq A}.
\]
We also say that the edges in $E(A)$ are confined to $A$.
Similarly, given a set of
edges $B\subseteq E$,
we write $V(B)$ or $V_H(B)$ for the set of vertices covered
by the
edges from $B$, i.e.,
\[
V(B)=V_H(B)=\bigcup_{e\in B} \Vrt(e).
\]

The uniformity of a hyperedge $e \in E$ is the number of vertices
contained in it and is denoted $|e|$ or $|e|_H$. We say that the
hypergraph $H$ is $k$-uniform if every edge has uniformity $k$.
As usual,
the average uniformity $\bar k$ and maximum uniformity $k^\star$ of
$H = (V, E)$ are defined by
\[\bar k = \frac{1}{|E|} \cdot \sum_{e \in E}
|e|\quad\textrm{and}\quad k^\star = \max_{e \in E} |e|.\]
We also recall that degree of a vertex $v \in V$ in $H$ is the number
of hyperedges $e \in V$ containing it, i.e.,
\[
\ddeg_H(v) = \set*{ e \in V~:~v < e~~(\textrm{or equivalently}~v
\in \Vrt(e))}.
\]
When all vertices have the same positive degree $d$, we say that $H$
is $d$-regular. The hypergraph $H$ is called regular if it is
$d$-regular for some $d>0$ (but when all vertices have degree $0$,
equiv.\ all the hyperedges are empty, the hypergraph is not considered
to be regular).
In general, the average degree and maximum   degree of $H$ are
$\frac{1}{|V|}\sum_{v\in V}\ddeg v$
and
$\max\{\ddeg v : v\in V\}$, respectively. We recall the following result, which
is just a hypergraph analogue of the \emph{handshaking lemma}:
\begin{proposition}\label{PR:degree-to-unif-ratio}
Let $H=(V,E)$ be a hypergraph of average uniformity $\bar{k}$ and average degree
$\bar d $. Then:
\[
\frac{|E|}{|V|}=\frac{\bar{d}}{\bar{k}}.
\]
\end{proposition}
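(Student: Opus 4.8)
The plan is a standard double-counting of the vertex--hyperedge incidences. Let $I = \set*{(v,e)\in V\times E : v < e}$ be the set of incident pairs of $H$. First I would count $|I|$ by grouping its elements according to the second coordinate: for each fixed $e\in E$, the number of pairs $(v,e)\in I$ is precisely $|\Vrt_H(e)| = |e|$, so $|I| = \sum_{e\in E}|e|$, which by the definition of the average uniformity equals $\bar k\cdot |E|$. Next I would count $|I|$ by grouping its elements according to the first coordinate: for each fixed $v\in V$, the number of pairs $(v,e)\in I$ is precisely $\ddeg_H(v)$, so $|I| = \sum_{v\in V}\ddeg_H(v)$, which by the definition of the average degree equals $\bar d\cdot |V|$.

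Equating the two counts gives $\bar k\cdot |E| = \bar d\cdot |V|$, and dividing both sides by $|V|\,\bar k$ yields $|E|/|V| = \bar d/\bar k$, as asserted. The only point deserving a word of care is that the stated ratio presupposes $\bar k \neq 0$ (equivalently, that not every hyperedge of $H$ is empty), which I would note in passing; the underlying identity $\bar k\cdot|E| = \bar d\cdot|V|$ holds unconditionally. There is no genuine obstacle here: the argument is nothing more than a finite Fubini/double-counting, so I expect the entire proof to be a couple of lines.
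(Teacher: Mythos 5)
Your proof is correct and is essentially the paper's own argument: both count the set of vertex--hyperedge incidences once by edges (giving $\bar{k}|E|$) and once by vertices (giving $\bar{d}|V|$), and equate. The small caveat you add about $\bar{k}\neq 0$ is reasonable (it is guaranteed by the paper's standing convention that hypergraphs have at least one hyperedge and, for the statement to be meaningful, a nonempty one), but otherwise there is nothing to add.
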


\begin{proof}
This is equivalent to showing $|E|\bar{k}= |V|\bar{d}$.
This holds because
\[
|E|\bar{k} = \sum_{e\in E}\sum_{v:v<e} 1
=\sum_{v\in V}\sum_{e:e>v} 1 = |V| \bar{d}.
\qedhere
\]
\end{proof}

Given a hypergraph $H=(V,E)$ and a subset $A\subseteq V$, we write
$\rv N_H(A)$ for the   random variable
obtained by choosing an edge $\rv e\in E$ uniformly at random
and counting how many of its vertices are in $A$,
i.e.,
$\rv N_H(A)=|\rv e\cap A|$ (here we identified $\rv e$ its set of
vertices). We call $\rv N_H(A)$ the \emph{number of hits} in the set
$A$.
%
Finally, recall that the dual hypergraph  of $H = (V, E)$ is the hypergraph
$H^* = (V^*, E^*)$ obtained
by replacing the roles of the vertices and the hyperedges.
That is,  $V^* = E$ and $E^* = V$, and we have $e<v$ in $H^*$
if and only if $v<e$ in $H$.
Observe $H$ is $k$-uniform if and only if $H^*$ is $k$-regular.
We also have:

\begin{proposition}\label{PR:conf_duality}
Let $H=(V,E)$ be an $(\epsilon,p)$-confiner.
Then $H^*=(E,V)$ is a $(1-p,1-\epsilon')$-confiner for every
$\epsilon'>\epsilon$.
\end{proposition}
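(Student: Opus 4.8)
The plan is to unwind the definition of the dual hypergraph and reduce the assertion to the confiner property of $H$ by a short extremal argument. Recall that the vertices of $H^*=(E,V)$ are the hyperedges of $H$, so a vertex $v\in V$, viewed as a \emph{hyperedge} of $H^*$, has vertex set $\set*{e\in E:v<e}$, and a uniformly random hyperedge of $H^*$ is nothing but a uniformly random vertex $\rv v$ of $H$. Thus, for $B\subseteq E=V(H^*)$, the hyperedge $\rv v$ of $H^*$ is confined to $B$ exactly when every hyperedge of $H$ through $\rv v$ lies in $B$, equivalently when $\rv v$ is covered by no hyperedge of $C:=E\setminus B$, i.e.\ when $\rv v\notin V_H(C)$. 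Hence the probability that the random hyperedge of $H^*$ is confined to $B$ equals $1-|V_H(C)|/|V|$.

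First I would dispose of a triviality: if $\epsilon'>1$ then $\lfloor(1-\epsilon')|E|\rfloor<0$ and $\binom{V(H^*)}{\lfloor(1-\epsilon')|E|\rfloor}$ is empty, so the claim is vacuous; assume $0<\epsilon'\le 1$. It then suffices to show, for every $B\subseteq E$ with $|B|=\lfloor(1-\epsilon')|E|\rfloor$, that $1-|V_H(C)|/|V|\le 1-p$, i.e.\ that $|V_H(C)|\ge p|V|$, where $C=E\setminus B$ satisfies
\[
    |C| \;=\; |E|-\lfloor(1-\epsilon')|E|\rfloor \;\ge\; \epsilon'|E| \;>\; \epsilon|E|.
\]
I would prove $|V_H(C)|\ge p|V|$ by contradiction. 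If $|V_H(C)|<p|V|$, then $|V_H(C)|\le\lfloor p|V|\rfloor$ since $|V_H(C)|\in\NN$, so $V_H(C)$ extends to a set $A\subseteq V$ with exactly $\lfloor p|V|\rfloor$ vertices. Every $e\in C$ has $\Vrt(e)\subseteq V_H(C)\subseteq A$, so $C\subseteq E_H(A)$ and therefore $|E_H(A)|\ge|C|>\epsilon|E|$, i.e.\ $\Pp_{\rv e\sim E}\sqbr*{\rv e\subseteq A}>\epsilon$ for a density-$p$ set $A$ --- contradicting that $H$ is an $(\epsilon,p)$-confiner. Hence $|V_H(C)|\ge p|V|$, as required, and $H^*$ is a $(1-p,1-\epsilon')$-confiner.

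I do not expect a genuine obstacle: this is a routine dualization and the resulting proof is short. The only points meriting a little care are the floor-function bookkeeping --- in particular the strict inequality $|C|>\epsilon|E|$, which is exactly where the hypothesis $\epsilon'>\epsilon$ (rather than $\epsilon'\ge\epsilon$) is used, and which also explains why one should not expect $1-\epsilon$ in place of $1-\epsilon'$ in the conclusion --- together with the degenerate case $\epsilon'>1$ handled above (and one may note that $C=\emptyset$ cannot occur, since $|C|\ge\epsilon'|E|>0$).
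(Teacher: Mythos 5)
Your proof is correct and takes essentially the same route as the paper's: identify the set of $H^*$-confined vertices with the complement of $V_H(C)$ where $C=E\setminus B$, observe $C\subseteq E_H(A)$ for any density-$p$ extension $A$ of $V_H(C)$, and invoke the confiner property of $H$ via the strict inequality $|C|>\epsilon|E|$ coming from $\epsilon'>\epsilon$. You are somewhat more scrupulous than the paper about the floor-function bookkeeping (e.g., $|V_H(C)|\le\lfloor p|V|\rfloor$ and the extension step) and you explicitly dispose of the vacuous case $\epsilon'>1$, which the paper leaves implicit, but the underlying argument is identical.
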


\begin{proof}
Write $n=|V|$ and $m=|E|$.
Let $B\subseteq E$ be a subset of density $1-\epsilon'$
(i.e.\ $|B|=\lfloor(1-\epsilon')m\rfloor$),
and let $A$ be the set of vertices in $H$ that are incident only to
edges in $B$,
equiv., $A=E_{H^*}(B)$. We need to show that $|A|\leq (1-p)n$.
It is easy to see that $E_H(V-A)\supseteq E-B$. Since
$|E-B|\geq  \epsilon' m>\epsilon m$ and $H$
is an $(\epsilon,p)$-confiner, we must have $|V-A|\geq pn$, and it follows
that $|A|\leq (1-p)n$.
\end{proof}

\subsection{Some Inequalities}

We recall several inequalities that will be needed several times in the sequel.
Their proofs are easy  and are included for completeness.

\begin{lemma}\label{LM:diff_trick}
Let $I\subseteq \RR$ be an interval and let
$f:I\to \RR$ be a differentiable function
such that $A\leq f'(x)\leq B$ for all $x\in I$.
Then for every $x<y$ in $I$, we have
$A(y-x)\leq f(y)-f(x)\leq B(y-x)$.
\end{lemma}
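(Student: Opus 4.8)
The plan is to invoke the Mean Value Theorem, which applies since $f$ is differentiable (hence continuous) on the interval $I$ and $[x,y]\subseteq I$. The theorem produces a point $c\in(x,y)$ with $f(y)-f(x)=f'(c)\,(y-x)$. From here the hypothesis $A\le f'(c)\le B$ together with $y-x>0$ immediately yields $A(y-x)\le f(y)-f(x)\le B(y-x)$, which is exactly the claim.

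An alternative route, if one prefers to avoid citing the Mean Value Theorem directly, is to argue via monotonicity: set $g(t):=f(t)-At$, so that $g'(t)=f'(t)-A\ge 0$ on $I$, whence $g$ is non-decreasing and $g(x)\le g(y)$, i.e.\ $f(x)-Ax\le f(y)-Ay$, which rearranges to $A(y-x)\le f(y)-f(x)$. Applying the same idea to $h(t):=Bt-f(t)$, whose derivative $B-f'(t)\ge 0$, gives $f(y)-f(x)\le B(y-x)$. Either way, the two one-sided bounds combine to give the statement.

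There is essentially no obstacle here; the only minor point worth noting is that the interval $I$ need not be closed or bounded, but this causes no difficulty since we only apply the argument on the compact subinterval $[x,y]$ where $f$ is automatically continuous and differentiable. I would simply state the one-line Mean Value Theorem argument and be done.

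\begin{proof}
Since $f$ is differentiable on $I$, it is continuous on $[x,y]$ and differentiable on $(x,y)$, so by the Mean Value Theorem there is $c\in(x,y)$ with $f(y)-f(x)=f'(c)(y-x)$. As $A\le f'(c)\le B$ and $y-x>0$, multiplying through gives $A(y-x)\le f(y)-f(x)\le B(y-x)$.
\end{proof}
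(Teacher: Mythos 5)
Your proof is correct, but it takes a genuinely different route from the paper's. The paper argues by integrating the derivative: $A(y-x)=\int_x^y A\,dt\leq \int_x^y f'(t)\,dt=f(y)-f(x)$, and symmetrically for the upper bound. You instead invoke the Mean Value Theorem to get $f(y)-f(x)=f'(c)(y-x)$ and then sandwich $f'(c)$. Both are one-liners, but yours is arguably slightly more careful: the paper's integral argument tacitly uses the Fundamental Theorem of Calculus in the form $\int_x^y f'=f(y)-f(x)$, which requires $f'$ to be integrable, and this is not automatic from differentiability alone (a bounded derivative need not be Riemann integrable). The Mean Value Theorem avoids any integrability assumption entirely, needing only continuity on $[x,y]$ and differentiability on $(x,y)$, both guaranteed by the hypothesis. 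Your alternative monotonicity argument (showing $f(t)-At$ and $Bt-f(t)$ are non-decreasing) is a third valid route and is essentially the proof of the MVT corollary spelled out; it is also integrability-free. In practice all the functions to which this lemma is applied in the paper are smooth, so the distinction is immaterial for the paper's purposes, but your version is the cleaner general statement.
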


\begin{proof}
$A(y-x)=\int_x^y A\,\dif t\leq \int_x^y f'(t)\dif t=f(y)-f(x)$.
The other inequality is shown similarly.
\end{proof}

\begin{lemma}\label{LM:power-dist}
Let $x,y,k,\varepsilon  $ be non-negative real numbers such that $x<y$.
\begin{enumerate}[(i)]
\item If $k\geq 1$ or $k=0$, then
    $y^k-x^k\leq k y^{k-1} (y-x)$.
    In particular, $y^k-x^k\leq k(y-x)$ when $x,y\in [0,1]$.
\item If $0<k\leq  1$ and $x<1$, then $y^k-x^k\leq
    \frac{y-x}{ex(1-x)}$.
\end{enumerate}
\end{lemma}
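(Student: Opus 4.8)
The plan is to obtain both inequalities from the elementary mean-value bound of \cref{LM:diff_trick} applied to the function $f(t) = t^k$, exploiting the fact that its derivative $f'(t) = k t^{k-1}$ is non-negative and monotone increasing on $[0,\infty)$ when $k \ge 1$, and monotone decreasing on $(0,\infty)$ when $0 < k \le 1$.

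For part (i), I would first dispose of the case $k = 0$, where both sides vanish (note $y > 0$ since $0 \le x < y$). For $k \ge 1$, the derivative $f'(t) = k t^{k-1}$ satisfies $0 \le f'(t) \le k y^{k-1}$ for all $t \in [x,y]$, because $t^{k-1}$ is non-decreasing in $t$ when $k - 1 \ge 0$. Applying \cref{LM:diff_trick} with upper bound $B = k y^{k-1}$ then gives $y^k - x^k = f(y) - f(x) \le k y^{k-1}(y-x)$, which is the claim. The \emph{in particular} clause is immediate: when $y \in [0,1]$ and $k - 1 \ge 0$ we have $y^{k-1} \le 1$, so the bound reads $y^k - x^k \le k(y-x)$.

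For part (ii) I may assume $x > 0$, since otherwise the right-hand side $\frac{y-x}{ex(1-x)}$ is $+\infty$ and there is nothing to prove. As $0 < k \le 1$, the derivative $f'(t) = k t^{k-1}$ is non-increasing on $(0,\infty)$, so $f'(t) \le f'(x) = k x^{k-1}$ for $t \in [x,y]$, and \cref{LM:diff_trick} yields $y^k - x^k \le k x^{k-1}(y-x)$. It therefore suffices to check that $k x^{k-1} \le \frac{1}{e x (1-x)}$, equivalently $k x^k (1-x) \le e^{-1}$. I would prove this last inequality by first maximizing $x \mapsto x^k(1-x)$ over $(0,1)$: a one-line computation shows the maximizer is $x = \frac{k}{k+1}$, which reduces the claim to $\bigl(\frac{k}{k+1}\bigr)^{k+1} \le e^{-1}$. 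Substituting $u = k+1 \in (1,2]$, this is $(1-\frac{1}{u})^u \le e^{-1}$, which holds since $u \mapsto (1-\frac{1}{u})^u$ is increasing with supremum $e^{-1}$ as $u \to \infty$.

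The steps invoking \cref{LM:diff_trick} and the single-variable optimization are entirely routine; the one point that deserves a little care is verifying $k x^k (1-x) \le e^{-1}$ \emph{uniformly} in $k \in (0,1]$ and $x \in (0,1)$, which is precisely where the constant $e$ in the statement originates. I expect this to be the only genuine obstacle, and it is dispatched by the monotonicity of $(1-\frac{1}{u})^u$ noted above together with the elementary value of the maximizer $\frac{k}{k+1}$.
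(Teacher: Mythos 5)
Your argument for part (i) is exactly the paper's: bound $f'(t)=kt^{k-1}$ by $ky^{k-1}$ on $[x,y]$ and apply \cref{LM:diff_trick}. For part (ii), both you and the paper start from the same concavity/mean-value estimate $y^k-x^k\le kx^{k-1}(y-x)$ (the paper writes it as $(a+\ee)^k\le a^k+\ee\cdot ka^{k-1}$), but you then diverge in how the remaining factor $kx^{k-1}$ is controlled. The paper holds $a=x$ fixed and maximizes the map $k\mapsto ka^{k-1}$ over $k$, obtaining the cap $\frac{1}{ea\ln(a^{-1})}$ and finishing with $\ln(a^{-1})\ge 1-a$. You instead hold $k$ fixed and reduce the claim to $kx^k(1-x)\le e^{-1}$, which you prove by maximizing $x\mapsto x^k(1-x)$ at $x=\frac{k}{k+1}$ and invoking the monotone limit $(1-\tfrac1u)^u\nearrow e^{-1}$. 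Both routes are elementary and correctly account for the constant $e$; yours has the mild advantage of avoiding the auxiliary logarithm inequality, at the cost of appealing to the monotonicity of $(1-\tfrac1u)^u$, which should itself be justified or cited. The proposal is correct.
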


\begin{proof}
(i) This clear for $k=0$, so assume $k\geq 1$.
Let $f(t)=t^k$. Then for $t\in [x,y]$, we have $f'(t)=kt^{k-1}\leq
ky^{k-1}$, so by \cref{LM:diff_trick}, $y^k-x^k\leq ky^{k-1}(y-x)$.


(ii)
Since $0<k\leq 1$,
the function $f(t)=t^{k}$ is concave down
in $(0,\infty)$. Thus, for every $a,\ee\geq 0$, we have
\[(a+\ee)^{k}=f(a+\ee)\leq f(a)+\ee\cdot
f'(a)=a^{k}+\ee \cdot k a^{k-1}.\]
When $a<1$, elementary analysis implies that the function
$k\mapsto ka^{k-1}$
has a global maximum at $\parens*{\frac{1}{\ln
(a^{-1})},\frac{1}{ea\ln(a^{-1})}}$,
so
\[(a+\ee)^{k}\geq a^{k}+\ee \cdot \frac{1}{ea\ln(a^{-1})}
\leq a^{k}+\ee \cdot \frac{1}{ea(1-a)} \]
(in the second inequality we used that $\ln a\leq a-1$).
Applying this with $a=x$ and $\ee=y-x$ gives what we want.
\end{proof}


\begin{lemma}\label{LM:power-ineq}
For every $x,t\in (0,1)$, we have
$tx< 1-(1-t)^x\leq -\ln(1-t)x$.
The right inequality actually holds for every real $x$.
\end{lemma}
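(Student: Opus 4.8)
The plan is to handle the two inequalities separately, each by reducing to a standard one-variable fact.

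For the right-hand inequality $1-(1-t)^x \le -\ln(1-t)\,x$, which is asserted for \emph{all} real $x$, I would invoke the elementary bound $e^u \ge 1+u$ for every $u\in\RR$ (equivalently $1-e^{u}\le -u$). Writing $(1-t)^x = e^{x\ln(1-t)}$ and setting $u := x\ln(1-t)$, this gives at once $1-(1-t)^x = 1-e^{u}\le -u = -x\ln(1-t)$, with no restriction on $x$. This is exactly the claimed form, and keeping in mind that $\ln(1-t)<0$ there is nothing further to check.

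For the left-hand (strict) inequality $tx < 1-(1-t)^x$ on $x\in(0,1)$, I would fix $x\in(0,1)$ and study $h(t):=1-(1-t)^x-tx$ as a function of $t\in[0,1)$. One computes $h(0)=0$ and $h'(t)=x\bigl((1-t)^{x-1}-1\bigr)$; since $0<1-t<1$ and $x-1<0$, the power $(1-t)^{x-1}$ exceeds $1$, so $h'(t)>0$ on $(0,1)$. Hence $h$ is strictly increasing and $h(t)>h(0)=0$ for $t\in(0,1)$, which is precisely the desired strict inequality. (An equivalent route: fix $t$ and use that $x\mapsto 1-(1-t)^x$ is strictly concave with value $0$ at $x=0$ and value $t$ at $x=1$, hence lies strictly above the chord $tx$ on the open interval; the restriction to $x\in(0,1)$ is genuinely needed, as the inequality fails for large $x$, where $1-(1-t)^x\to 1$ but $tx\to\infty$.)

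I do not anticipate a real obstacle: both parts are one-line reductions once the right auxiliary function is in hand. The only points requiring care are sign bookkeeping (note $\ln(1-t)<0$, and $(1-t)^{x-1}>1$ because the base is below $1$ and the exponent is negative) and being explicit that the second inequality holds for every real $x$ while the first is special to the interval $(0,1)$.
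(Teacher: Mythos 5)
Your proof is correct. For the right inequality you use precisely the paper's argument: write $(1-t)^x=e^{x\ln(1-t)}$ and apply $e^u\geq 1+u$. For the left (strict) inequality your primary route differs from the paper's: you fix $x\in(0,1)$, set $h(t)=1-(1-t)^x-tx$, and show $h(0)=0$ and $h'(t)=x\bigl((1-t)^{x-1}-1\bigr)>0$, so $h$ is strictly increasing on $(0,1)$. The paper instead fixes $t$ and observes that $x\mapsto(1-t)^x$ is strictly convex, so its graph lies strictly below the secant line $y=1-tx$ over $(0,1)$ (a Bernoulli-type inequality); this is exactly the alternative you describe in your parenthetical, since concavity of $x\mapsto 1-(1-t)^x$ is equivalent to convexity of $x\mapsto(1-t)^x$. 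Both routes are one-line once set up; yours differentiates in the "wrong" variable but needs no appeal to convexity, while the paper's convexity argument makes it immediately transparent why the restriction to $x\in(0,1)$ is essential (the secant estimate only holds between the endpoints). Either is fine, and your parenthetical recovering the paper's argument verbatim shows you see the equivalence.
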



\begin{proof}
The left inequality is equivalent to
$(1-t)^x< 1-tx$, which is a variant of Bernoulli's inequality
that can be proved as follows: The function $f(x)=(1-t)^x$ is
strictly convex, so the line segment connecting
$(0,f(0))=(0,1)$ and $(1,f(1))=(1,1-t)$ is above the graph of $f(x)$.
The equation of that line segment is $y=1-tx$, hence the inequality.
The right inequality holds because
$1-(1-t)^x=1-e^{x\ln(1-t)}\leq 1-(1+\ln(1-t)x)=-\ln(1-t)x$.
(Here, we used the fact that $e^a\geq 1+a$ for all real $a$.)
%
%
\end{proof}

\begin{lemma}\label{LM:ln-approximation}
For every $x\in (0,1)$, we have
$
\ln(1-x)> \frac{-x+0.5x^2}{1-x}
$.
\end{lemma}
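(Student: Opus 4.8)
The plan is to reduce the inequality to a single derivative computation. Set
\[
f(x) := \ln(1-x) - \frac{-x+\tfrac12 x^2}{1-x}
\]
on the interval $[0,1)$; the lemma asserts $f(x) > 0$ for all $x \in (0,1)$. First I would record that $f$ is continuous on $[0,1)$ (both summands vanish at $x = 0$) with $f(0) = 0$, so it suffices to show $f' > 0$ on $(0,1)$.

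Next I would differentiate. Since $\frac{d}{dx}\ln(1-x) = \frac{-1}{1-x}$, and writing the second term as $u/v$ with $u = -x + \tfrac12 x^2$, $v = 1-x$ (so $u' = -1+x$, $v' = -1$), the quotient rule together with $(-1+x)(1-x) = -1 + 2x - x^2$ gives
\[
\frac{d}{dx}\frac{u}{v} = \frac{u'v - uv'}{v^2}
= \frac{(-1+2x-x^2) - (x - \tfrac12 x^2)}{(1-x)^2}
= \frac{-1 + x - \tfrac12 x^2}{(1-x)^2}.
\]
Subtracting, and bringing $\frac{-1}{1-x} = \frac{-1+x}{(1-x)^2}$ over a common denominator, the numerators collapse:
\[
f'(x) = \frac{(-1+x) - \bigl(-1 + x - \tfrac12 x^2\bigr)}{(1-x)^2}
= \frac{\tfrac12 x^2}{(1-x)^2}
= \frac12\left(\frac{x}{1-x}\right)^2.
\]
Hence $f'(x) > 0$ for every $x \in (0,1)$, so $f$ is strictly increasing on $[0,1)$ and therefore $f(x) > f(0) = 0$ on $(0,1)$, which is the claim.

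The computation is entirely routine; the only point worth flagging is the fortunate cancellation that makes $f'$ a perfect square, so no estimates beyond the trivial fact that a positive derivative forces monotonicity are required. If one wishes to phrase even this last step in terms of \cref{LM:diff_trick} rather than invoking monotonicity of a $C^1$ function directly, one can fix $y \in (0,1)$, choose any $0 < x_0 < y$, and apply \cref{LM:diff_trick} on $[x_0,y]$ with $A = f'(x_0) > 0$, together with $f(x_0) \ge 0$ (by continuity, letting $x_0 \to 0^+$, or by the same argument), to conclude $f(y) \ge f(x_0) + f'(x_0)(y - x_0) > 0$. I do not anticipate any real obstacle here.
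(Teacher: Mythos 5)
Your proof is correct. The derivative computation checks out: with $f(x)=\ln(1-x)-\frac{-x+\tfrac12 x^2}{1-x}$ one indeed gets $f'(x)=\tfrac12\bigl(\tfrac{x}{1-x}\bigr)^2>0$ for $x\in(0,1)$, and together with $f(0)=0$ this yields the strict inequality. However, the route differs from the paper's: the paper expands $\ln(1-x)=-x-\tfrac{x^2}{2}-\tfrac{x^3}{3}-\cdots$, bounds $\tfrac{1}{j}\le\tfrac12$ for $j\geq 2$, and sums the resulting geometric series $-x-\tfrac{x^2}{2}(1+x+x^2+\cdots)=-x-\tfrac{x^2}{2(1-x)}=\frac{-x+0.5x^2}{1-x}$. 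Your argument trades the power-series manipulation for a single differentiation and a monotonicity conclusion; it is arguably more self-contained (it does not invoke the Mercator series), and the fact that $f'$ turns out to be a perfect square makes the positivity transparent with no estimates. The paper's version has the small advantage of being one displayed chain of inequalities with no auxiliary function. Both are correct and of comparable length; pick whichever style fits the surrounding exposition. One small polish: the closing paragraph about rephrasing via \cref{LM:diff_trick} is unnecessary and slightly awkward as written (you would want $A=\inf_{[x_0,y]}f'$ rather than $f'(x_0)$ to invoke that lemma literally, since $f'$ need not be monotone a priori — though here it happens to be increasing); the direct monotonicity argument is cleaner and is all you need.
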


\begin{proof}
\begin{align*}
\ln(1-x)
&=-x-\frac{x^2}{2}-\frac{x^3}{3}-\frac{x^4}{4}-\dots
> -x-\frac{x^2}{2}-\frac{x^3}{2}-\frac{x^4}{2}-\cdots
\\
&= -x-\frac{x^2}{2}(1+x+x^2+\dots)
=-x-\frac{x^2}{2}\cdot\frac{1}{1-x}=\frac{-x(1-x)-0.5x^2}{1-x}=\frac{-x+0.5x^2}{1-x}.
\qedhere
\end{align*}
\end{proof}

\begin{lemma}\label{LM:log-vs-poly}
Let $a>0$. Then for every $x\geq (\frac{2}{a})^{\frac{2}{a}}$, we
have $\ln x\leq x^a$.
\end{lemma}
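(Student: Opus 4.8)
The plan is to reduce the statement to the elementary inequality $\ln y \le y - 1$ (equivalently $e^{y-1}\ge y$), valid for every $y > 0$, via a change of variables. First I would apply this with $y = x^{b}$ for an auxiliary exponent $b > 0$, which gives $b\ln x = \ln(x^{b}) \le x^{b} - 1 < x^{b}$, and hence
\[ \ln x \;<\; \frac{x^{b}}{b} \qquad \textrm{for all } x > 0 \textrm{ and } b > 0. \]
The crucial choice is $b = a/2$, which specializes this to $\ln x < \frac{2}{a}\,x^{a/2}$ for every $x>0$.

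Next I would use the hypothesis $x \ge (2/a)^{2/a}$ to upgrade the bound $\frac{2}{a}\,x^{a/2}$ to $x^{a}$. Raising both sides of $x \ge (2/a)^{2/a}$ to the power $a/2 > 0$ gives $x^{a/2} \ge (2/a)^{(2/a)(a/2)} = 2/a$, i.e.\ $\frac{2}{a} \le x^{a/2}$. Multiplying this inequality by $x^{a/2} > 0$ yields $\frac{2}{a}\,x^{a/2} \le x^{a/2}\cdot x^{a/2} = x^{a}$. Chaining with the previous display,
\[ \ln x \;<\; \frac{2}{a}\,x^{a/2} \;\le\; x^{a}, \]
which is exactly the claimed inequality (even with a strict sign).

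I do not expect any genuine obstacle here; the only point requiring care is choosing the auxiliary exponent so that the factor $\frac{2}{a}$ produced by the $\ln y \le y - 1$ step matches the threshold $(2/a)^{2/a}$ appearing in the hypothesis — the choice $b = a/2$ is precisely what makes these line up. A more pedestrian alternative would be to study $g(x) = x^{a} - \ln x$ directly, noting that $g'(x) = a x^{a-1} - 1/x$ vanishes only at $x_{0} = a^{-1/a}$, that $g$ decreases then increases, and that $g(x)\ge 0$ already at $x_0$; but this requires evaluating $g$ at the critical point and is messier, so I would present the substitution argument above.
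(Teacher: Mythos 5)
Your proof is correct, and it takes a genuinely different route from the paper. You bound $\ln x$ uniformly by $\frac{2}{a}x^{a/2}$ via the substitution $y = x^{b}$ with $b=a/2$ in $\ln y \le y-1$, and then use the hypothesis $x \ge (2/a)^{2/a}$ only once, to absorb the constant factor $2/a$ into an extra factor of $x^{a/2}$. The paper instead takes the "more pedestrian" alternative you mention at the end: it splits into cases $a\ge 1$ (where $\ln x\le x\le x^a$) and $a\in(0,1)$, and in the latter case shows by elementary calculus that $g(x)=x^{a}-\ln x$ is increasing for $x\ge(1/a)^{1/a}$ and then verifies $g\bigl((2/a)^{2/a}\bigr)\ge 0$ directly (that last step uses $\ln(2/a)\le 2/a$, i.e.\ the same elementary inequality you start from). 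Your argument is a bit cleaner: it avoids the case split on $a$, avoids differentiating, and hands back a strict inequality for free; the paper's version has the minor virtue of making the threshold's near-optimality visible, since the critical point $(1/a)^{1/a}$ of $g$ appears explicitly. Both are perfectly valid; there is no gap in what you wrote.
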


\begin{proof}
This is clear for $a\geq 1$, because $\ln x\leq x\leq x^a$ for all
$x>0$, so assume
$a\in (0,1)$. Then,
by elementary analysis, the function $f(x)= x^a-\ln x$ is increasing
when $x\geq (\frac{1}{a})^{\frac{1}{a}}$. It is therefore enough to
check that $f((\frac{2}{a})^{\frac{2}{a}})\geq 0$. Indeed,
\[
f\parens*{\parens*{\frac{2}{a}}^{\frac{2}{a}}}
=\parens*{\frac{2}{a}}^2-\frac{2}{a}\ln\parens*{\frac{2}{a}}
\geq \parens*{\frac{2}{a}}^2-\frac{2}{a}\cdot\frac{2}{a}\geq 0.
\qedhere
\]
\end{proof}

\section{Performance of Hypergraph Samplers in the Typical Case}
\label{sec:almost-every-set}


In this section we prove several results
sharing a common theme: The sampling properties of \emph{every}
$k$-uniform hypergraph $H=(V,E)$
with not-too-wild vertex degrees are \emph{almost} always nearly as
good as those of the complete $k$-uniform
hypergraph on $V$.

Recall (\cref{subsec:hypergraphs}) that given $A\subseteq V$,
the number of hits $\rv N_H(A)$ is the number of times a uniformly
random edge in $E$ hits the set $A$.
The first main result of this section   states that if the set $A
\subseteq V$  we are interested in sampling
is chosen uniformly at random among all subsets of $V$ of density $p$,
then almost surely, the distribution of the number of hits
is close to the hypergeometric
distribution $\Hyp(k, pn, n)$, where $n=|V|$.
Formally:


\begin{theorem}[Typical Hypergeometric Convergence]\label{thm:dist}
Let $H = (V, E)$ be a $k$-uniform hypergraph on $n$ vertices in which
every vertex has
degree at most $D$.
Let $p\in\{\frac{0}{n},\frac{1}{n},\dots,\frac{n}{n}\}$ be given
and let $\rv A \subseteq V$ be a uniformly random
subset of cardinality $|\rv A| = p \cdot n$.
Then provided $n\geq \max\{4k^2,\frac{2k}{p(1-p)}\}$, we have
\[
\Pp_{\rv A}\sqbr*{ \dist_{\TV}(\rv N_H(\rv A), \Hyp(k, pn, n)) \ge
\ee } \le \frac{1}{\ee^2}
\cdot \parens*{ \frac{a_{\ref{thm:dist}}(k, p)}{n} + \frac{D-1}{|E|}
\cdot b_{\ref{thm:dist}}(k) }
\]
where the functions $a_{\ref{thm:dist}}(k, p)$ and $b_{\ref{thm:dist}}(k)$ are
defined by:
\[ a_{\ref{thm:dist}}(k, p) = \frac{e}{2} \cdot k \cdot (k+1)^2
\parens*{ \frac{k-1}{p(1-p)} + 4k}\quad\textrm{and}\quad
b_{\ref{thm:dist}}(k) = 0.325 \cdot k \cdot (k+1)^2.
\]
\end{theorem}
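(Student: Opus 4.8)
The plan is a second-moment (Chebyshev) argument applied coordinate-wise to the law of $\rv N_H(\rv A)$; the variance computation it rests on is essentially the content of \cref{lem:hypg}. Fix a set $A\subseteq V$ with $|A|=pn$ and, for $0\le r\le k$, write $Z_r(A):=|E_r(A)|/|E|=\Pp_{\rv e}\sqbr*{\Abs*{\rv e\cap A}=r}$, so that $(Z_0(A),\dots,Z_k(A))$ is exactly the probability mass function of $\rv N_H(A)$. Putting $h_r:=\Pp\sqbr*{\Hyp(k,pn,n)=r}$, we have
\[
\dist_{\TV}\parens*{\rv N_H(A),\Hyp(k,pn,n)}=\tfrac12\sum_{r=0}^{k}\Abs*{Z_r(A)-h_r}.
\]
Hence it suffices to show that, for a uniformly random $\rv A$, each $Z_r(\rv A)$ concentrates around $h_r$, and then to union-bound over the $k+1$ coordinates $r$. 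The first moment is exact: because $\rv A$ is a uniformly random $pn$-subset of $V$, for every fixed $k$-edge $e$ the intersection size $\Abs*{e\cap\rv A}$ has law $\Hyp(k,pn,n)$, so $\Exp_{\rv A}\mathbf{1}\sqbr*{\Abs*{e\cap\rv A}=r}=h_r$ and, summing over the $|E|$ edges, $\Exp_{\rv A}Z_r(\rv A)=h_r$ for every $r$.

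The substance of the proof is the variance bound. Writing $Z_r(\rv A)=|E|^{-1}\sum_{e\in E}\mathbf{1}\sqbr*{\Abs*{e\cap\rv A}=r}$ and expanding,
\[
\Var_{\rv A}\parens*{Z_r(\rv A)}=\frac1{|E|^2}\sum_{e,e'\in E}\mathrm{Cov}\parens*{\mathbf{1}\sqbr*{\Abs*{e\cap\rv A}=r},\,\mathbf{1}\sqbr*{\Abs*{e'\cap\rv A}=r}},
\]
and I would split the double sum according to the overlap $t:=\Abs*{e\cap e'}$. For disjoint pairs ($t=0$) the joint probability $\Pp\sqbr*{\Abs*{e\cap\rv A}=r,\ \Abs*{e'\cap\rv A}=r}$ is an explicit quotient of binomial coefficients, and the crux is to show that it equals $h_r^2$ up to a multiplicative factor of the form $1+O_{k,p}(n^{-1})$; the hypotheses $n\ge 4k^2$ and $n\ge 2k/(p(1-p))$ are precisely what keep the intervening ratios $(1\pm\Theta(k/n))$-close to $1$ and produce the constant $e$ and the term $\tfrac{k-1}{p(1-p)}+4k$ appearing in $a_{\ref{thm:dist}}(k,p)$. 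Summing the resulting covariances over all (at most $|E|^2$) disjoint pairs and over $r$, and absorbing the $e=e'$ diagonal contribution $\tfrac1{|E|}\parens*{1-\sum_r h_r^2}\le 1/|E|$ (using that $H$ may be assumed to have no isolated vertices, so $k|E|=\sum_v\ddeg_H(v)\ge n$ and $|E|\ge n/k$), gives a contribution of at most $\tfrac{2ek}{n}\parens*{\tfrac{k-1}{p(1-p)}+4k}$ to $\sum_r\Var_{\rv A}(Z_r(\rv A))$. For pairs with $t\ge1$ I would only use the crude fact that a covariance of two indicators is at most a marginal probability, in particular $\le 1$, together with the count
\[
\#\set*{(e,e')\ :\ e\ne e',\ e\cap e'\ne\emptyset}\ \le\ \sum_v\ddeg_H(v)\parens*{\ddeg_H(v)-1}\ \le\ (D-1)\sum_v\ddeg_H(v)\ =\ (D-1)k|E|;
\]
dividing by $|E|^2$ and summing over the $k+1$ values of $r$, this contributes at most $\tfrac{1.3\,k(D-1)}{|E|}$. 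Altogether
\[
\sum_{r=0}^{k}\Var_{\rv A}\parens*{Z_r(\rv A)}\ \le\ \frac{2ek}{n}\parens*{\frac{k-1}{p(1-p)}+4k}+\frac{1.3\,k(D-1)}{|E|}.
\]

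To finish, observe that $\dist_{\TV}(\rv N_H(\rv A),\Hyp(k,pn,n))\ge\ee$ forces $\sum_r\Abs*{Z_r(\rv A)-h_r}\ge 2\ee$, hence $\Abs*{Z_r(\rv A)-h_r}\ge 2\ee/(k+1)$ for at least one $r$. A union bound over $r$ together with Chebyshev's inequality (\cref{fac:pafnuty}) yields
\[
\Pp_{\rv A}\sqbr*{\dist_{\TV}(\rv N_H(\rv A),\Hyp(k,pn,n))\ge\ee}\ \le\ \sum_{r=0}^{k}\frac{\Var_{\rv A}(Z_r(\rv A))}{(2\ee/(k+1))^2}\ =\ \frac{(k+1)^2}{4\ee^2}\sum_{r=0}^{k}\Var_{\rv A}\parens*{Z_r(\rv A)},
\]
and inserting the variance bound reproduces exactly $\tfrac1{\ee^2}\parens*{\tfrac{a_{\ref{thm:dist}}(k,p)}{n}+\tfrac{D-1}{|E|}b_{\ref{thm:dist}}(k)}$ with the stated $a_{\ref{thm:dist}}$ and $b_{\ref{thm:dist}}$. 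The one genuinely delicate step is the disjoint-pairs covariance estimate: it has to be carried out with explicit, usable constants in $n,k,p$ — i.e.\ showing the relevant product of binomial coefficients is $1+O_{k,p}(n^{-1})$ — and this is exactly where the quantitative hypotheses on $n$ enter; by contrast the first moment, the overlapping-pairs counting, and the final Chebyshev bookkeeping are routine. It is worth noting at the outset that the estimate is only informative once $n\gtrsim\poly(k)/(p(1-p))$, so that $a_{\ref{thm:dist}}(k,p)/n<1$; for smaller $n$ the right-hand side already exceeds $1$ and there is nothing to prove.
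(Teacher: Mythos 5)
Your proposal is correct and follows essentially the same route as the paper: compute $\Exp Z_r(\rv A)$ and bound $\Var Z_r(\rv A)$ by splitting edge pairs into disjoint and overlapping ones (the paper's \cref{lem:hypg}), then apply pigeonhole over the $k+1$ values of $r$, Chebyshev, and a union bound (the paper's \cref{cor:event-bd} followed by the proof of \cref{thm:dist}). The only cosmetic difference is that you separate the $e=e'$ diagonal explicitly and absorb it into the disjoint-pair term, whereas the paper lumps the diagonal with the overlapping pairs; the numerology for $a_{\ref{thm:dist}}$ and $b_{\ref{thm:dist}}$ comes out the same either way.
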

We   prove \cref{thm:dist} in
\cref{ss:dist_proof}.

It is well-known that the hypergeometric distribution enjoys a Chernoff bound
(\cref{fac:chernoff}).
Our next result states
that for almost every density-$p$ subset $A\subseteq V$, the random
variable $\rv N_H(A)$
also satisfies a Chernoff bound and is therefore
strongly concentrated around the mean $k p$. Moreover, the fraction
of $A$-s for which this fails is \emph{exponentially} small in $k$ (in addition
to decaying polynomially  in $n=|V|$).


\begin{theorem}[Typical Chernoff Bound]
\label{thm:chernoff}
Let $H = (V, E)$ be a $k$-uniform hypergraph on $n$ vertices in which
every vertex has
degree at most $D$.
Let $p\in\{\frac{0}{n},\frac{1}{n},\dots,\frac{n}{n}\}$ be given and
$\rv A \subseteq V$ be a uniformly random
subset of cardinality $|\rv A| = p \cdot n$.
Suppose $n\geq \max\{4k^2,\frac{2k}{p(1-p)}\}$.
Then for every $\eta>0$, we have\footnote{More accurately the $2
\exp(-2k \eta^2)$ in the LHS can be replaced by any tail estimate for
the binomial/hypergeometric distributions, carrying out analogous
changes in $a_{\ref{thm:chernoff}}$ and $b_{\ref{thm:chernoff}}$.}
\[ \Pp_{\rv A}\sqbr*{ \Pp\sqbr*{ \Abs*{\rv N_H(\rv A) - kp } \ge k
    \cdot \eta} \ge
2\exp(-2 k \eta^2) + \ee} \le \frac{1}{\ee^2} \parens*{
    \frac{a_{\ref{thm:chernoff}}(k, p, \eta)}n +
\frac{D-1}{|E|} \cdot b_{\ref{thm:chernoff}}(k, \eta)},
\]
where
\begin{align*}a_{\ref{thm:chernoff}}(k, p, \eta)
&~ = 16 a_{\ref{thm:dist}}(k,p) \exp(-4k \eta^2)
=~8ek(k+1)^2\parens*{
\frac{k-1}{p(1-p)} + 4k} \exp(-4k\eta^2)\\
b_{\ref{thm:chernoff}}(k, \eta) &~= 8
b_{\ref{thm:dist}} (k)  \exp(-2 k \eta^2)
=~ 2.6 k(k+1)^2 \exp(- 2 k \eta^2)
\end{align*}
and with the functions $a_{\ref{thm:dist}}(k, p)$ and
$b_{\ref{thm:dist}}(k)$ defined as in \cref{thm:dist}.
\end{theorem}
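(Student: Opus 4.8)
The plan is to rerun the second-moment argument behind \cref{thm:dist}, but to sum the per-hit-count moment estimates only over the \emph{tail} of the hit distribution, which is exactly what produces the extra factors $\exp(-4k\eta^2)$ and $\exp(-2k\eta^2)$. Fix $\eta>0$, put $S=\set*{r\in\{0,\dots,k\}:\Abs*{r-kp}\ge k\eta}$, and write $h_r=\Pp\sqbr*{\Hyp(k,pn,n)=r}$; if $S=\emptyset$ the statement is vacuous, so assume otherwise. For a fixed $A\subseteq V$ of density $p$ we have $\Pp\sqbr*{\rv N_H(A)=r}=|E_r(A)|/|E|$, hence the inner tail probability is
\[ T(A):=\Pp\sqbr*{\Abs*{\rv N_H(A)-kp}\ge k\eta}=\sum_{r\in S}\frac{|E_r(A)|}{|E|}. \]
The proof of \cref{thm:dist} goes through the per-index moment computation (\cref{lem:hypg}), which I would reuse here: it gives $\Exp_{\rv A}\sqbr*{|E_r(\rv A)|/|E|}=h_r$ for each $r$. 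Since the hypergeometric Chernoff bound (\cref{fac:chernoff}) gives $\sum_{r\in S}h_r\le 2\exp(-2k\eta^2)$, for every realization of $\rv A$ we get $T(\rv A)\le 2\exp(-2k\eta^2)+W(\rv A)$, where $W(\rv A):=\sum_{r\in S}\Abs*{|E_r(\rv A)|/|E|-h_r}$. Thus it suffices to bound $\Pp_{\rv A}\sqbr*{W(\rv A)\ge\ee}$, which I would do by Markov's inequality applied to $W(\rv A)^2$.

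Estimating $\Exp_{\rv A}\sqbr*{W(\rv A)^2}$ is where the two exponential savings enter. Expanding the square and bounding each cross term by Cauchy--Schwarz (each summand has mean $0$ after subtracting $h_r$) gives $\Exp_{\rv A}\sqbr*{W(\rv A)^2}\le\parens*{\sum_{r\in S}\sqrt{\Var_{\rv A}\sqbr*{|E_r(\rv A)|/|E|}}}^{2}$. Into this I would substitute the per-index variance bound from \cref{lem:hypg}, which I expect to take the shape
\[ \Var_{\rv A}\sqbr*{|E_r(\rv A)|/|E|}\;\le\;\frac{2\,a_{\ref{thm:dist}}(k,p)}{n}\,h_r^{2}\;+\;\frac{2\,b_{\ref{thm:dist}}(k)}{k+1}\cdot\frac{D-1}{|E|}\,h_r, \]
i.e.\ the $1/n$-part scales like $h_r^2$ (it comes from disjoint edge-pairs together with the hypergeometric-to-binomial comparison, which is where the $1/(p(1-p))$ factor sits) while the $(D-1)/|E|$-part scales like $h_r$ (it comes from intersecting edge-pairs); summed over \emph{all} $r$ via $\sum_r h_r=1$ these are precisely the estimates behind \cref{thm:dist}. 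Applying $\sqrt{x+y}\le\sqrt x+\sqrt y$ termwise, the $h_r^2$-piece contributes a multiple of $\sum_{r\in S}h_r\le 2\exp(-2k\eta^2)$ and the $h_r$-piece a multiple of $\sum_{r\in S}\sqrt{h_r}\le\sqrt{|S|\cdot\sum_{r\in S}h_r}\le\sqrt{2(k+1)}\,\exp(-k\eta^2)$. Since the $h_r^2$-piece already carries one factor $\sum_{r\in S}h_r$ outside the square root, after squaring it produces $\parens*{\sum_{r\in S}h_r}^2\le 4\exp(-4k\eta^2)$; squaring the resulting two-term bound (losing a factor $2$) then yields exactly
\[ \Exp_{\rv A}\sqbr*{W(\rv A)^2}\;\le\;\frac{16\,a_{\ref{thm:dist}}(k,p)\exp(-4k\eta^2)}{n}+\frac{D-1}{|E|}\cdot 8\,b_{\ref{thm:dist}}(k)\exp(-2k\eta^2)\;=\;\frac{a_{\ref{thm:chernoff}}(k,p,\eta)}{n}+\frac{D-1}{|E|}\,b_{\ref{thm:chernoff}}(k,\eta), \]
and Markov's inequality finishes the proof.

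I expect the only real work, and hence the main obstacle, to be the bookkeeping in the step above: one must check that the computation behind \cref{thm:dist}/\cref{lem:hypg} genuinely separates $\Var_{\rv A}\sqbr*{|E_r(\rv A)|/|E|}$ into an $h_r^2/n$-term and an $h_r(D-1)/|E|$-term with the stated constants, since it is precisely this separation that lets restricting the sum to $S$ turn $\sum_r h_r=1$ into $\sum_{r\in S}h_r\le 2\exp(-2k\eta^2)$ for the second term and into $\sum_{r\in S}h_r^2\le 4\exp(-4k\eta^2)$ for the first. Everything else --- the reduction $T(\rv A)\le 2\exp(-2k\eta^2)+W(\rv A)$, the Cauchy--Schwarz and Markov steps, and the $\sqrt{x+y}\le\sqrt x+\sqrt y$ manipulation --- is routine and merely generates the numerical constants $16$ and $8$ (the factors $2$ from $(a+b)^2\le 2a^2+2b^2$, and the $k+1$ from $\sum_{r\in S}\sqrt{h_r}$, cancelled by the $1/(k+1)$ in the $h_r$-part). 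As a consistency check, letting $\eta\to 0$ collapses the exponential factors to $1$ and recovers a bound of the same order as \cref{thm:dist}, as it should.
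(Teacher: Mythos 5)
Your proposal is correct and rests on the same reduction and the same key lemma as the paper, but uses a mildly different aggregation step. To compare: the paper's proof also starts from the observation that for fixed $A$,
\[
\Pp\sqbr*{ \Abs*{\rv N_H(A) - kp } \ge k \eta}\le 2\exp(-2k\eta^2)+\sum_{j\in J}\Abs*{\Pp[\rv N_H(A)=j]-\Pp[\rv Z=j]}
\]
with $J$ the tail index set and $\rv Z\sim\Hyp(k,pn,n)$, and then argues by \emph{pigeonhole} (if the deviation sum exceeds $\ee$, some individual term exceeds $\ee/(k+1)$), followed by a \emph{union bound} over $j\in J$ and a per-coordinate Chebyshev via \cref{cor:event-bd}. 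You instead apply \emph{Markov's inequality to $W^2$} and bound $\Exp[W^2]$ via the $L^2$-triangle (Minkowski) inequality $\|W\|_2\le\sum_{r\in S}\|X_r-h_r\|_2=\sum_{r\in S}\sqrt{\Var X_r}$. Both are second-moment arguments; the source of the exponential savings in $\eta$ is identical in both (the per-coordinate constants $C_{\ref{lem:hypg}}(k,r,p)$, $C'_{\ref{lem:hypg}}(k,r)$ from \cref{lem:hypg} are proportional to $b_r^2$ and $b_r$ where $b_r=\binom{k}{r}p^r(1-p)^{k-r}$, so summing over the tail $S$ brings in the binomial Chernoff bound, precisely as you anticipated). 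Two remarks on the details. First, your guessed per-coordinate variance form should be stated with the binomial pmf $b_r$ rather than the hypergeometric $h_r$; that is what \cref{lem:hypg} actually produces (namely $\Var X_r\le\frac{4a_{\ref{thm:dist}}(k,p)}{(k+1)^2 n}b_r^2+\frac{4b_{\ref{thm:dist}}(k)}{(k+1)^2}\frac{D-1}{|E|}b_r$), and since the Chernoff bound applies to $\Bin(k,p)$ as well, the tail sums $\sum_{r\in S}b_r$ and $\sum_{r\in S}\sqrt{b_r}$ behave exactly as you claimed. Your guessed coefficients are overestimates of the true ones, which is why your bookkeeping lands on the stated constants $a_{\ref{thm:chernoff}},b_{\ref{thm:chernoff}}$. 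Second, if you used the sharp coefficients, your route would in fact give a \emph{slightly tighter} bound than the paper's: the paper's union bound costs a factor $(k+1)^2$, while your Minkowski plus the Cauchy--Schwarz step $\sum_{r\in S}\sqrt{b_r}\le\sqrt{|S|\sum_{r\in S}b_r}$ only costs a factor $2(k+1)$. So this is a cleaner and marginally stronger aggregation, at the cost of having to keep track of the square-root manipulations.
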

We will prove   \cref{thm:chernoff} in
\cref{sub:proof_of_the_cite_thm_chernoff_}.

The analogous statement to \cref{thm:chernoff}
with $a_{\ref{thm:chernoff}}$ and $b_{\ref{thm:chernoff}}$ replaced
by $a_{\ref{thm:dist}}$ and $b_{\ref{thm:dist}}$
follows immediately follows from \cref{thm:dist}  simply by appealing
to the definition of the \ref{eq:totv} distance.
\cref{thm:chernoff} provides an exponential improvement (in
$k$) over this. Indeed, when $k\eta^2 = \Omega(\log k)$,
the exponential term $\exp(-k\eta^2)$ shrinks much faster than
the growth of the polynomial
prefactors in $k$ occuring in $a_{\ref{thm:dist}},
b_{\ref{thm:dist}}$.


An immediate consequence of the above results and \cref{thm:binhyp}
is the following corollary, which implies parts (i) and (ii)
of \cref{cor:fraction-show}.

\begin{corollary}\label{cor:fraction}
Let $H = (V, E)$ be a $k$-uniform hypergraph on $n$ vertices with
maximum degree at most $D$.
Let $p\in\{\frac{0}{n},\frac{1}{n},\dots,\frac{n}{n}\}$ and  $\alpha
\in (0,1]$ be given
and suppose $n\geq \max\{4k^2,\frac{2k}{p(1-p)}\}$.
Then, writing $\widetilde n=|E|/D$ (e.g., $\widetilde n=\frac{n}{k}$
if $H$ is regular), we have
\[
\dist_{\TV}(\rv N_H(A), \Hyp(k, pn, n)) < \widetilde
n^{-\frac{1-\alpha}2}\qquad\textrm{and}\qquad\dist_{\TV}(\rv N_H(A),
\Bin(k, p)) < \widetilde n^{-\frac{1-\alpha}2} + \frac{k-1}{n-1}
\]
for all but $ (a_{\ref{thm:dist}}(k, p) + b_{\ref{thm:dist}}(k))
\cdot \widetilde n^{-\alpha}$ fraction of the sets $A \subseteq V$
with $|A| = p \cdot n$.

Furthermore, for any $\eta > 0$ and $\alpha'\in [0,1]$, we have
\begin{align*} \Pp\sqbr*{ \Abs*{\rv N_H(A) - kp } > k \eta}
& \le
2\exp(-2 k\eta^2)+\exp\parens*{-\frac{(1-\alpha') k \eta^2}{2}} \cdot
\widetilde n^{-\frac{1-\alpha}2 }
\\
& \leq
(2+\widetilde n^{-\frac{1-\alpha}2 })\exp\parens*{-\frac{(1-\alpha')
k \eta^2}{2}}
\end{align*}
for all but
$ 16(a_{\ref{thm:dist}}(k, p) +
b_{\ref{thm:dist}}(k)) e^{-\alpha'k\eta^2} \widetilde n^{-\alpha}$
fraction of the sets $A \subseteq V$ with $|A| = p \cdot n$.
\end{corollary}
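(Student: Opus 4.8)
The plan is to deduce \cref{cor:fraction} from \cref{thm:dist}, \cref{thm:chernoff} and \cref{thm:binhyp} by choosing the free parameter $\ee$ in the two theorems suitably and then simplifying the resulting bounds. Throughout I would use two elementary estimates. First, by the handshaking identity \cref{PR:degree-to-unif-ratio}, $k\,|E| = \sum_{v\in V}\ddeg_H(v)\le nD$, so $\widetilde n = |E|/D \le n/k \le n$, and hence $\tfrac1n\le\tfrac1{\widetilde n}$. Second, $\tfrac{D-1}{|E|}\le\tfrac{D}{|E|}=\tfrac1{\widetilde n}$. Consequently each of the two-term right-hand sides appearing in \cref{thm:dist} and \cref{thm:chernoff} is at most $\tfrac1{\widetilde n}$ times the sum of the corresponding $a(\cdot)$ and $b(\cdot)$ coefficients.

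For the first two displayed inequalities, I would apply \cref{thm:dist} with $\ee := \widetilde n^{-(1-\alpha)/2}$, so that $\ee^{-2} = \widetilde n^{1-\alpha}$. The estimates above then bound the exceptional fraction by $\widetilde n^{1-\alpha}\cdot\widetilde n^{-1}\bigl(a_{\ref{thm:dist}}(k,p)+b_{\ref{thm:dist}}(k)\bigr) = \bigl(a_{\ref{thm:dist}}(k,p)+b_{\ref{thm:dist}}(k)\bigr)\widetilde n^{-\alpha}$, which is exactly the claimed bound, and for every non-exceptional $A$ we obtain $\dist_{\TV}(\rv N_H(A),\Hyp(k,pn,n)) < \widetilde n^{-(1-\alpha)/2}$. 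To pass to the binomial distribution I would use the triangle inequality for total variation together with \cref{thm:binhyp}: the hypothesis $n\ge 2k/(p(1-p))$ gives $np(1-p)\ge 2k\ge 1$, so $\dist_{\TV}(\Hyp(k,pn,n),\Bin(k,p))\le\tfrac{k-1}{n-1}$ (in the degenerate regime $pn<k$ one first invokes the symmetry of the hypergeometric pmf in its first two arguments to swap the roles of $k$ and $pn$). Adding the two distances yields the second displayed bound on the same non-exceptional set of $A$'s.

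For the Chernoff statement, I would apply \cref{thm:chernoff} with $\ee := \exp\!\bigl(-\tfrac{(1-\alpha')k\eta^2}{2}\bigr)\widetilde n^{-(1-\alpha)/2}$, so that $\ee^{-2} = \exp\!\bigl((1-\alpha')k\eta^2\bigr)\widetilde n^{1-\alpha}$ and the inner threshold $2\exp(-2k\eta^2)+\ee$ is precisely the one in the corollary. Substituting the formulas $a_{\ref{thm:chernoff}} = 16\,a_{\ref{thm:dist}}\,e^{-4k\eta^2}$ and $b_{\ref{thm:chernoff}} = 8\,b_{\ref{thm:dist}}\,e^{-2k\eta^2}$, applying the two elementary estimates and $e^{-4k\eta^2}\le e^{-2k\eta^2}$, the exceptional fraction is at most
\[
  16\bigl(a_{\ref{thm:dist}}(k,p)+b_{\ref{thm:dist}}(k)\bigr)\,
  \exp\!\bigl((1-\alpha')k\eta^2-2k\eta^2\bigr)\,\widetilde n^{-\alpha}
  \;\le\;
  16\bigl(a_{\ref{thm:dist}}(k,p)+b_{\ref{thm:dist}}(k)\bigr)\,e^{-\alpha'k\eta^2}\,\widetilde n^{-\alpha},
\]
since $(1-\alpha')-2 = -(1+\alpha')\le -\alpha'$. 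For the ``furthermore'' chain of inequalities I would note that $2\exp(-2k\eta^2)\le 2\exp\!\bigl(-\tfrac{(1-\alpha')k\eta^2}{2}\bigr)$ (as $\tfrac{1-\alpha'}{2}\le 2$), so the first bound is at most $\bigl(2+\widetilde n^{-(1-\alpha)/2}\bigr)\exp\!\bigl(-\tfrac{(1-\alpha')k\eta^2}{2}\bigr)$.

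I do not expect a serious obstacle: \cref{cor:fraction} is essentially a repackaging of \cref{thm:dist} and \cref{thm:chernoff} with specialized $\ee$. The only points needing genuine care are (i) the inequality $\widetilde n\le n$ and the harmless replacement of $\tfrac{D-1}{|E|}$ by $\tfrac1{\widetilde n}$, and (ii) verifying that the hypotheses of \cref{thm:binhyp} hold under $n\ge\max\{4k^2,2k/(p(1-p))\}$, including the degenerate case $pn<k$ handled via the symmetry of the hypergeometric distribution. The remaining work is the bookkeeping of the exponential factors in $\eta$, carried out above.
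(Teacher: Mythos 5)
Your proposal is correct and follows essentially the same route as the paper: substitute $\ee = \widetilde n^{-(1-\alpha)/2}$ in \cref{thm:dist} (respectively $\ee = e^{-(1-\alpha')k\eta^2/2}\,\widetilde n^{-(1-\alpha)/2}$ in \cref{thm:chernoff}), use $\widetilde n = |E|/D \le n/k \le n$ from \cref{PR:degree-to-unif-ratio} together with $(D-1)/|E|\le 1/\widetilde n$ to collapse the two error terms, and invoke \cref{thm:binhyp} via the triangle inequality for the binomial bound. The only superfluous remark is the one about a degenerate case $pn<k$: the hypothesis $n\ge 2k/(p(1-p))$ forces $pn \ge 2k$, so the hypergeometric-symmetry workaround is never needed.
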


\begin{proof}
For the first statement,
take $\ee =\widetilde n^{-\frac{1-\alpha}{2}}$ in \cref{thm:dist}
and note that $n\geq \frac{n}{k}\geq \frac{|E|}{D}$ by
\cref{PR:degree-to-unif-ratio}.
The second statement follows from the first and \cref{thm:binhyp}.
For the third statement,
take $\ee = \widetilde n^{-\frac{1-\alpha}{2}}
\exp(-\frac{(1-\alpha')k\eta^2}{2})$ in \cref{thm:chernoff}.
\end{proof}

Our proofs of \cref{thm:dist} and \cref{thm:chernoff} rely on the
following lemma and its
following corollaries, which   show that so long as $k$ grows slowly
in $n$, the variance of the fraction of edges which have an
intersection of size exactly $r$ with a uniformly random subset $\rv
A$ of density $p$ is vanishingly small. In other words, we show that
the fraction of edges that hit  $\rv A$ exactly $r$ times  converges
in \emph{quadratic mean} to the hypergeometric probability
$\binom{k}{r} \binom{n-k}{pn -r} {\binom{n}{pn}}^{-1}$.\footnote{For
more information on convergence between probability distributions we
refer to \cite[Ch.~5]{Wasserman13}.}  Formally:

\begin{lemma}[Main Lemma]\label{lem:hypg}
Let $H = (V,  E)$ be a $k$-uniform hypergraph on $n$ vertices  where
every vertex has degree at most $D$.
Let $p\in\{\frac{0}{n},\frac{1}{n},\dots,\frac{n}{n}\}$ be given
and $\rv A \subseteq V$ be a uniformly random subset of cardinality
$|\rv A| = p \cdot n$.
For an integer $r$, we write $  E_r(\rv A)$ for the set of edges in
$H$ that have an intersection of size $r$ with the random set $\rv A$.
Then\footnote{
Here, as usual, a binomial coefficient ${n \choose m}$ is defined to be
$0$ if $m\notin [0,n]$ or $n<0$.
}
\begin{equation}\label{eq:expectation}
\Exp \sqbr*{  \frac{ |E_r(\rv A)| }{|E|}} = \frac{ \binom{k}{r}
    \binom{n-k}{pn-
r}}{\binom{n}{pn}}.
\end{equation}
If moreover $n \ge \max\set*{4k^2,\frac{2r}{p},\frac{2(k-r)}{1-p}}$, then
\begin{equation}\label{eq:tighter}
\Var\sqbr*{  \frac{ |E_r(\rv A)| }{|E|}} \le
\parens*{\frac{C_{\ref{lem:hypg}}(k, r,p)}{n} +
C'_{\ref{lem:hypg}}(k,r) \cdot \frac{D-1}{|E|}},
\end{equation}
where
\[
C_{\ref{lem:hypg}}(k, r, p) = 2e\binom{k}{r}^2 p^{2r} (1-p)^{2(k-r)} k
\parens*{\frac{k-1}{pq} + 4k}\quad\textrm{and}\quad
C'_{\ref{lem:hypg}}(k, r) = p^r (1-p)^{k-r} \cdot 1.3k \cdot \binom{k}{r}
\]
\end{lemma}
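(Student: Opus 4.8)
The plan is to compute the expectation and variance of $|E_r(\rv A)|/|E|$ by writing $|E_r(\rv A)| = \sum_{e \in E} \mathbf 1[|e \cap \rv A| = r]$ and exploiting the fact that $\rv A$ is a uniformly random $pn$-subset. First I would establish \cref{eq:expectation}: for a fixed hyperedge $e$ with $|e| = k$, the probability that $|e \cap \rv A| = r$ is exactly $\binom kr \binom{n-k}{pn-r} / \binom n{pn}$, since choosing $\rv A$ uniformly is the same as the hypergeometric urn experiment with the $k$ vertices of $e$ as the ``marked'' balls. Summing over the $|E|$ edges and dividing by $|E|$ gives the claim; note this step uses only $k$-uniformity and needs no degree bound, and it is exactly what powers \cref{TH:worst_error_confine}.

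For the variance, I would write $\Var[|E_r(\rv A)|/|E|] = |E|^{-2} \sum_{e, f \in E} \big(\Pp[|e \cap \rv A| = r \wedge |f \cap \rv A| = r] - \Pp[|e \cap \rv A| = r]\Pp[|f \cap \rv A| = r]\big)$ and split the sum according to $t := |e \cap f|$, the number of shared vertices. When $t = 0$ the two events are \emph{negatively} correlated (sampling without replacement), so those terms contribute nothing positive — this is the key structural observation that lets the bound depend on $D$ rather than on $n$. For $t \geq 1$, I would bound the covariance of the pair $(e,f)$ crudely by $\Pp[|e\cap\rv A| = r]$ times something like $O(1)$ (or more carefully track the hypergeometric conditional probabilities), giving a per-pair bound. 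The number of ordered pairs $(e,f)$ with $|e \cap f| \geq 1$ is at most $\sum_{e} |e| \cdot (D-1) \cdot (\text{stuff}) \le |E| \cdot k(D-1)$ roughly — each edge $e$ has $k$ vertices, each vertex lies in at most $D-1$ \emph{other} edges. Dividing by $|E|^2$ produces the $\frac{D-1}{|E|}$ factor in \cref{eq:tighter}. Separately, there is the diagonal-ish contribution (and the small correction from comparing hypergeometric to its ``binomial surrogate'' $\binom kr p^r(1-p)^{k-r}$), which is of order $1/n$ and yields the $C_{\ref{lem:hypg}}(k,r,p)/n$ term; this is where the hypotheses $n \geq 4k^2$ and $n \geq 2r/p$, $n \geq 2(k-r)/(1-p)$ enter, to control ratios like $\frac{\binom{n-k}{pn-r}}{\binom n{pn}}$ against $p^r(1-p)^{k-r}$ via bounds of the form $(1 - \frac{k}{pn})^{-r} \le e^{2rk/(pn)}$ etc.

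The main obstacle I expect is the bookkeeping for the $t \geq 1$ covariance terms: one must get a clean upper bound on $\Pp[|e\cap\rv A| = r \wedge |f \cap \rv A| = r]$ in terms of $\Pp[|e\cap\rv A| = r]$ uniformly over all overlap sizes $t \in \{1, \dots, k\}$ and all split-types of how $\rv A$ meets $e \setminus f$, $f \setminus e$, and $e \cap f$. The naive approach sums over a three-parameter family of hypergeometric-type probabilities; the trick will be to bound the conditional probability $\Pp[|f \cap \rv A| = r \mid |e \cap \rv A| = r]$ by something absolute (like $1$, or a constant, after the $n \geq 4k^2$ assumption tames the without-replacement correction), so that the covariance of each overlapping pair is at most $C \cdot p^r(1-p)^{k-r} \binom kr$ and the explicit constant $1.3k$ in $C'_{\ref{lem:hypg}}$ comes out. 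I would organize this as: (1) a self-contained sub-lemma bounding the ratio of the relevant hypergeometric probabilities to their binomial surrogates under the stated lower bounds on $n$; (2) the negative-correlation claim for disjoint edges; (3) the pair-counting via degrees; (4) assembling the two error terms. The polynomial-in-$k$ constants ($2e$, $1.3$, the $\binom kr^2$ and $\binom kr$ factors) will fall out of these estimates without needing to be optimized.
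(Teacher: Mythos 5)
The expectation step, the split of the variance sum by whether $e$ and $f$ overlap, and the treatment of the overlapping pairs (bound $\rv X_r(e,\rv A)\rv X_r(f,\rv A) \le \rv X_r(e,\rv A)$, then count at most $k(D-1)$ overlapping $f$ per $e$ via the degree bound) all match the paper's argument and are correct. The proposal goes wrong, however, at what you call ``the key structural observation.''

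The claim that disjoint edges give non-positive covariance is \emph{false}. The event $\{|e \cap \rv A| = r\}$ is a ``slice'' event, not a monotone one, so the standard negative-association results for sampling without replacement do not apply to it. Concretely, take $n=4$, $pn=2$, $k=2$, $r=1$, $e=\{1,2\}$, $f=\{3,4\}$: one computes $\Pp[|e\cap\rv A|=1] = \Pp[|f\cap\rv A|=1] = 2/3$ while $\Pp[|e\cap\rv A|=1 \wedge |f\cap\rv A|=1] = 2/3 > (2/3)^2$, so the covariance is strictly positive. The same sign persists for large $n$: conditioning on the ``typical'' outcome $|e\cap\rv A|\approx kp$ leaves the density of $\rv A$ on $V\setminus e$ essentially unchanged while shrinking the pool, which slightly \emph{raises} the chance of the typical outcome on a disjoint $f$. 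So the disjoint pairs cannot simply be discarded, and your plan, as written, would not yield the $C_{\ref{lem:hypg}}(k,r,p)/n$ term.

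What the paper actually does for the disjoint pairs is to use the exact formula
$\Exp[\rv X_r(e,\rv A)\rv X_r(f,\rv A)] = \binom{k}{r}^2 \binom{n-2k}{pn-2r}/\binom{n}{pn}$
when $e\cap f=\emptyset$, bound the number of such pairs crudely by $|E|^2$, and subtract $\Exp[\rv Y]^2$, leaving the (nonnegative, in the relevant range) quantity
$\binom{k}{r}^2\bigl(\tfrac{\binom{n-2k}{pn-2r}}{\binom{n}{pn}} - \tfrac{\binom{n-k}{pn-r}^2}{\binom{n}{pn}^2}\bigr)$.
It then shows this is $O_{k,p}(1/n)$ by sandwiching $\binom{n-k}{pn-r}/\binom{n}{pn}$ and $\binom{n-2k}{pn-2r}/\binom{n}{pn}$ between $p^rq^{k-r}(1\pm O(k/n))^{O(k)}$ and using a difference-of-powers estimate. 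That comparison-to-the-binomial-surrogate is precisely where the hypotheses $n\geq 4k^2$, $n\geq 2r/p$, $n\geq 2(k-r)/(1-p)$ enter, and it is the genuine source of the $1/n$ error term. The ``diagonal'' $e=f$ that you flag as a separate $1/n$ contribution is, in the paper, already absorbed into the overlapping-pair count (since $e\cap e\neq\emptyset$) and hence lands in the $(D-1)/|E|$ term, not the $1/n$ term. In short: the hard part of the lemma is exactly the estimate on the disjoint-pair contribution, which your plan assumes away.
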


We prove \cref{lem:hypg} in \cref{ss:main_lemma_proof}.

By using \cref{lem:hypg} and  appealing to Chebyshev's Inequality
(\cref{fac:pafnuty}), we  show that for \emph{almost} all density-$p$
subsets $A\subseteq V$, the probability
that
$\rv N_H( A)=r$ is very close to the probability that a
hypergeometric random variable $\Hyp(k, pn, n)$ takes the value $r$. Formally:


\begin{corollary}\label{cor:event-bd}
With the  notation and assumptions of \cref{lem:hypg}, for all $\ee>0$, we have
\[ \Pp_{\rv A}\sqbr*{ \Abs*{ \Pp\sqbr*{\rv N_H(\rv A) = r } - \frac{\binom{k}{r}
\binom{n-k}{pn-r}}{\binom{n}{pn}} } \ge \ee}
~\le~\frac{1}{\ee^2} \cdot \parens*{ \frac{ C_{\ref{lem:hypg}}(k,r,
    p) }{n} + \frac{D
- 1}{|E|} \cdot C'_{\ref{lem:hypg}}(k, r)}.
\]
\end{corollary}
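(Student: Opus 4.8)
The plan is to read off \cref{cor:event-bd} directly from the \hyperref[lem:hypg]{Main Lemma} and Chebyshev's Inequality (\cref{fac:pafnuty}). First I would introduce the random variable $\rv X := \frac{|E_r(\rv A)|}{|E|}$, which depends only on $\rv A$, and observe that with $\rv A$ held fixed and $\rv e$ a uniformly random edge of $H$ one has $\Pp_{\rv e}\sqbr*{\rv N_H(\rv A) = r} = \frac{|E_r(\rv A)|}{|E|} = \rv X$. Thus $\rv X$ is precisely the quantity sitting inside the absolute value in the statement, and the corollary is an assertion about the concentration of $\rv X$ around its mean over the random choice of $\rv A$.

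Next I would invoke \cref{lem:hypg}, whose notation and hypotheses are exactly those we are working under here. Equation \eqref{eq:expectation} identifies $\Exp_{\rv A}[\rv X] = \binom{k}{r}\binom{n-k}{pn-r}\big/\binom{n}{pn}$, which is exactly the hypergeometric probability being subtracted in the statement, so the event inside $\Pp_{\rv A}$ is precisely $\{\Abs*{\rv X - \Exp_{\rv A}[\rv X]} \ge \ee\}$. Equation \eqref{eq:tighter} gives the variance bound $\Var_{\rv A}(\rv X) \le \frac{C_{\ref{lem:hypg}}(k,r,p)}{n} + C'_{\ref{lem:hypg}}(k,r)\cdot\frac{D-1}{|E|}$; the side condition $n \ge \max\{4k^2, 2r/p, 2(k-r)/(1-p)\}$ needed for this estimate is among the assumptions carried over from \cref{lem:hypg}.

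Finally, applying \cref{fac:pafnuty} to $\rv X$ with threshold $t = \ee$ yields $\Pp_{\rv A}\sqbr*{\Abs*{\rv X - \Exp_{\rv A}[\rv X]} > \ee} \le \Var_{\rv A}(\rv X)/\ee^2$, and substituting the expectation and variance expressions above gives the claimed inequality (the $\ge\ee$ form in the statement follows from the $>\ee$ form since the events nest, or directly from the $\ge$-version of Chebyshev). There is essentially no obstacle at this stage: all of the genuine content — the exact expectation identity and, more importantly, the delicate $O(1/n) + O((D-1)/|E|)$ bound on the variance of $\rv X$ — has been front-loaded into \cref{lem:hypg}, whose proof is the real work and is carried out separately in \cref{ss:main_lemma_proof}.
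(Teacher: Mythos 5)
Your proof is correct and follows exactly the same route as the paper: identify $\Pp[\rv N_H(\rv A)=r]$ with $|E_r(\rv A)|/|E|$, read off its mean and variance from \cref{lem:hypg}, and apply Chebyshev's inequality. The small clarification about the $\ge\ee$ versus $>\ee$ threshold is a fair point, but otherwise this is the argument the paper gives.
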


\begin{proof}
Let $A \subseteq V$ be a fixed subset. By the definition of $\rv
N_H(A)$, we have
\[
\Pp\sqbr*{ \rv N_H(A) = r } =
\frac{\Abs*{\{e\in E\,:\, |e\cap A|=r\}}}{|E|}=
\frac{|E_r(A)|}{|E|}
\]
Replacing $A$ with the random set $\rv A$
then gives $\Pp_{\rv A}\sqbr*{ \rv N_H(A) = r }=\frac{|E_r(\rv
A)|}{|E|}$. Now, by \cref{lem:hypg} and Chebyshev's Inequality
(\cref{fac:pafnuty}),
%
\begin{align*}
\Pp_{\rv A}\sqbr*{ \Abs*{ \Pp\sqbr*{\rv N_H(\rv A) = r } -
        \frac{\binom{k}{r}
\binom{n-k}{pn-r}}{\binom{n}{pn}} } \ge \ee}
&~=~\Pp_{\rv A}\sqbr*{ \Abs*{ \frac{|E_r(\rv A)|}{|E|} - \Exp \sqbr*{
\frac{ |E_r(\rv A)| }{|E|}} } \ge \ee},\\
&~\le~\frac{\Var\parens*{ \rv E_r(\rv A) }}{\ee^2},\\
&~\le~\frac{1}{\ee^2} \cdot \parens*{ \frac{ C_{\ref{lem:hypg}}(k,r,
    p) }{n} + \frac{D
- 1}{|E|} \cdot C'_{\ref{lem:hypg}}(k, r)}.
\qedhere
\end{align*}
\end{proof}

The next corollary follows by taking $r=k$ and $r=0$ in \cref{cor:event-bd}.
It implies \cref{cor:fraction-show}(iii),
and will also be needed to prove our lower bounds on the worst-case
confinement probability in \cref{sec:lowerbounds}.



\begin{corollary}\label{cor:thm0}
Let $H = (V, E)$ be a $k$-uniform hypergraph on $n $ vertices with
maximum degree $D$.
Let $p\in\{\frac{0}{n},\frac{1}{n},\dots,\frac{n}{n}\}$ and
$\alpha\in [0,1]$ be given,
$\rv A \subseteq V$ be a uniformly random subset of cardinality $|\rv
A| = p \cdot n$, and   $\rv e$ denote a uniformly random  hyperedge in $E $.
Suppose
$n\geq \max\{4k^2,\frac{2k}{p(1-p)}\}$.
Then,
setting $\widetilde n =\frac{|E|}{D}$,
we have
\begin{align*}\Pp_{\rv A}\sqbr*{ \Abs*{ \Pp_{\rv e}\sqbr*{\rv e \cap
    \rv A = \emptyset} - (1-p)^k}
\ge  n^{-\frac{1-\alpha}{2}}}
&~\le~C_{\ref{cor:thm0}} \cdot \frac{(1-p)^k \cdot k^2 \cdot
\widetilde n^{-\alpha}}{p}
\leq \frac{4C_{\ref{cor:thm0}}}{e^2p(\ln(1-p))^2}
\cdot \widetilde n^{-\alpha}
\\
\Pp_{\rv A}\sqbr*{ \Abs*{ \Pp_{\rv e}\sqbr*{\rv e \subseteq \rv A } - p^k}
\ge  n^{-\frac{1-\alpha}{2}}}
&~\le~C_{\ref{cor:thm0}} \cdot \frac{p^k \cdot k^2 \cdot \widetilde
n^{-\alpha}}{1-p}
\leq \frac{4C_{\ref{cor:thm0}}}{e^2(1-p)(\ln p)^2}
\cdot \widetilde n^{-\alpha}
\end{align*}
for some absolute constant $C_{\ref{cor:thm0}} > 0$ independent of
$n$, $k$, $p$, $D$. In fact, we can take $C_{\ref{cor:thm0}}=114$.
\end{corollary}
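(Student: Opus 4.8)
The plan is to read both displayed inequalities directly off \cref{cor:event-bd} by specializing the intersection size. The first line is the case $r=0$: indeed $\Pp_{\rv e}[\rv e\cap\rv A=\emptyset]=\Pp[\rv N_H(\rv A)=0]$, and the reference value $\binom{k}{0}\binom{n-k}{pn}\binom{n}{pn}^{-1}$ occurring in \cref{cor:event-bd} is exactly $\Pp[\Hyp(k,pn,n)=0]$. The second line is the case $r=k$: $\Pp_{\rv e}[\rv e\subseteq\rv A]=\Pp[\rv N_H(\rv A)=k]$ and $\binom{k}{k}\binom{n-k}{pn-k}\binom{n}{pn}^{-1}=\Pp[\Hyp(k,pn,n)=k]$. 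The two lines are exchanged under $p\leftrightarrow 1-p$, so I would only carry out the $r=0$ computation and then invoke this symmetry.

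Two routine reductions finish the job. First, trade the hypergeometric value for $(1-p)^k=\Pp[\Bin(k,p)=0]$: \cref{thm:binhyp} applies because $np(1-p)\ge 2\ge 1$ (as $n\ge\frac{2k}{p(1-p)}\ge\frac{2}{p(1-p)}$), and it yields $\Abs*{\Pp[\Hyp(k,pn,n)=0]-(1-p)^k}\le\frac{k-1}{n-1}$; a short estimate using $n\ge 4k^2$ shows that $\frac{k-1}{n-1}$ is at most half the threshold $\widetilde n^{-(1-\alpha)/2}$ (which is how the threshold should be read, matching \cref{cor:fraction}). By the triangle inequality the bad event $\{\Abs*{\Pp[\rv N_H(\rv A)=0]-(1-p)^k}\ge\widetilde n^{-(1-\alpha)/2}\}$ is then contained in the event of \cref{cor:event-bd} with $\ee=\frac12\widetilde n^{-(1-\alpha)/2}$, at a price of a factor $4$. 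Second, simplify the constants: with $\binom{k}{0}=1$ and $\frac{k-1}{p(1-p)}+4k\le\frac{5k}{p(1-p)}$ (because $p(1-p)\le\frac14$) one gets $C_{\ref{lem:hypg}}(k,0,p)\le\frac{10e\,k^2(1-p)^k}{p}$, while $C'_{\ref{lem:hypg}}(k,0)=1.3\,k(1-p)^k\le\frac{1.3\,k^2(1-p)^k}{p}$, so their sum is at most $(10e+1.3)\frac{k^2(1-p)^k}{p}$.

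Assembling, \cref{cor:event-bd} with $\ee=\frac12\widetilde n^{-(1-\alpha)/2}$ — after bounding $\frac1n\le\frac1{\widetilde n}$ (legitimate since $n\ge\frac nk\ge\frac{|E|}{D}=\widetilde n$ by \cref{PR:degree-to-unif-ratio}) and $\frac{D-1}{|E|}\le\frac{D}{|E|}=\frac1{\widetilde n}$ — bounds the bad fraction by $\frac{4}{\ee^2}\cdot\frac1{\widetilde n}\bigl(C_{\ref{lem:hypg}}(k,0,p)+C'_{\ref{lem:hypg}}(k,0)\bigr)\le 4(10e+1.3)\frac{k^2(1-p)^k}{p}\widetilde n^{-\alpha}$, and $4(10e+1.3)=40e+5.2<114$, giving the first inequality with $C_{\ref{cor:thm0}}=114$; the $r=k$ line is identical with $p$ and $1-p$ swapped. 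The second inequality of each line is pure calculus: for fixed $q\in(0,1)$ the function $k\mapsto k^2q^k$ is maximized at $k=-2/\ln q$ with value $\frac{4}{e^2(\ln q)^2}$, so $k^2(1-p)^k\le\frac{4}{e^2(\ln(1-p))^2}$ and $k^2p^k\le\frac{4}{e^2(\ln p)^2}$ for every $k$. I expect no conceptual obstacle here — all the probabilistic substance is already packaged in the variance bound of \cref{lem:hypg} and its Chebyshev consequence \cref{cor:event-bd} — so the only real work, and the main place to be careful, is the quantitative bookkeeping: keeping the $\binom{k}{r}$-simplifications, the hypergeometric-to-binomial correction, and the passage from $n$ to $\widetilde n$ all tight enough that the final absolute constant genuinely lands below $114$, and verifying the ``$\frac{k-1}{n-1}\le$ half the threshold'' step uniformly over the admissible range $n\ge\max\{4k^2,\frac{2k}{p(1-p)}\}$ (the small values of $k$ need a separate glance).
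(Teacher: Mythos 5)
Your proposal follows the same route as the paper's own proof: specialize \cref{cor:event-bd} to $r=0$ (and $r=k$ by the $p\leftrightarrow 1-p$ symmetry), trade the hypergeometric reference probability for $(1-p)^k$ via \cref{thm:binhyp}, absorb the $\frac{k-1}{n-1}$ error by running Chebyshev at half the threshold, and bound the constants $C_{\ref{lem:hypg}}(k,0,p)+C'_{\ref{lem:hypg}}(k,0)\le(10e+1.3)\frac{k^2(1-p)^k}{p}$, giving $4(10e+1.3)<114$ and the calculus fact $k^2q^k\le\frac{4}{e^2(\ln q)^2}$. All of that matches the paper, including the $\frac1n\le\frac1{\widetilde n}$ and $\frac{D-1}{|E|}\le\frac1{\widetilde n}$ reductions.

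The one place you deviate is the threshold, and your deviation is the correct one. You read the threshold as $\widetilde n^{-(1-\alpha)/2}$, whereas the printed statement has $n^{-(1-\alpha)/2}$; you flag this explicitly, noting it is what \cref{cor:fraction} uses. You are right that this is the version the paper's argument actually delivers. If one runs Chebyshev at threshold $n^{-(1-\alpha)/2}$, the Chebyshev prefactor is $4n^{1-\alpha}$, and the paper's final displayed inequality
\[
    \frac{4n^{1-\alpha}}{\widetilde n}\cdot\frac{(10e+1.3)(1-p)^kk^2}{p}
    \;\le\;
    \widetilde n^{-\alpha}\cdot\frac{4(10e+1.3)(1-p)^kk^2}{p}
\]
amounts to $n^{1-\alpha}\le\widetilde n^{1-\alpha}$, i.e.\ $n\le\widetilde n$ when $\alpha<1$. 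But $\widetilde n=|E|/D\le n/k\le n$ always (by \cref{PR:degree-to-unif-ratio} and $D\ge\bar d$), so this step fails for every $\alpha\in[0,1)$, and the ratio $(n/\widetilde n)^{1-\alpha}$ is unbounded when $D$ is large relative to $|E|/n$. With your $\widetilde n$-threshold the prefactor is $4\widetilde n^{1-\alpha}$ and everything closes cleanly, with the same final constant $4(10e+1.3)<114$. So either the displayed threshold in \cref{cor:thm0} (and, by the same token, the $\beta$-less exponent in \cref{cor:fraction-show}(iii), which differs from parts (i)--(ii)) is a typo for $\widetilde n$, or the stated absolute constant cannot be correct; your reading is the one consistent with \cref{cor:fraction} and with the paper's own computation. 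Two minor notes: after introducing $\ee=\frac12\widetilde n^{-(1-\alpha)/2}$ you then write $\frac{4}{\ee^2}$ with $\ee$ meaning the un-halved quantity (the numbers come out right, the notation does not); and you also need $n\ge\max\{2r/p,2(k-r)/(1-p)\}$ from \cref{lem:hypg}, which for $r\in\{0,k\}$ reduces to $n\ge\frac{2k}{p(1-p)}$, exactly the hypothesis — worth stating so the application of \cref{cor:event-bd} is seen to be licensed.
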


\begin{proof}
Since $\rv e \subseteq \rv A $ if and only
if $\rv e\cap(V-\rv A)=  \emptyset$, and since $V-\rv A$
distributes uniformly on the $(1-p)$-dense subsets of $V$,
the statement
about $ \Pp_{\rv e}\sqbr*{\rv e \subseteq \rv A }$ follows
from the statement about $\Pp_{\rv e}\sqbr*{\rv e \cap \rv A = \emptyset}$
once $p$ is replaced with $1-p$. It is therefore enough
to prove the statement about $\Pp_{\rv e}\sqbr*{\rv e \cap \rv A =
\emptyset}$.

The event $\set*{\rv e \cap \rv A = \emptyset}$ holds precisely
when $\rv N_H(\rv A) = 0$. By \cref{cor:event-bd}, we have
\begin{align*}
\Pp_{\rv A}\sqbr*{ \Abs*{ \Pp_{\rv e}\sqbr*{\rv e \cap \rv A =
\emptyset} - \frac{ \binom{n-k}{pn} }{\binom{n}{pn}}} \ge \ee}
&
=
\frac{1}{\ee^2} \cdot \parens*{
    \frac{  C_{\ref{lem:hypg}}(k, 0,
    p)}{n}+\frac{D-1}{|E|}C'_{\ref{lem:hypg}}(k,0)
}
\\
&
\le \frac{1}{\ee^2} \cdot \parens*{ 2e \cdot \frac{(1-p)^{2k-1} \cdot
k(5k-1)}{p \cdot n} + (1-p)^k \frac{D-1}{|E| } \cdot  1.3 k} \\
&\leq  \frac{1}{\varepsilon^2}\cdot \frac{(10e+1.3)(1-p)^k
k^2}{p\cdot \tilde{n}},
\end{align*}
where in the last inequality, we used the fact that $n\geq
\frac{|E|}{D}$ shown in proof of \cref{cor:fraction}.
By \cref{thm:binhyp}, noting that $\binom{n-k}{pn}/\binom{n}{pn}$ and
$(1-p)^k$ are the probabilities of producing $k$ hits from the
hypergeometric and binomial distributions respectively, we have:
\[ \Abs*{\frac{\binom{n-k}{pn}}{\binom{n}{pn}} - (1-p)^k} \le
\frac{k-1}{n-1} \le \frac{k}{n} := \delta.\]
Thus,  we get:
\[ \Pp_{\rv A}\sqbr*{ \Abs*{\Pp_{\rv e}\sqbr*{ \rv e \cap \rv A =
\emptyset} - (1-p)^k} \ge \ee} \le \frac{1}{(\ee - \delta)^2} \cdot
\frac{(10e+1.3) \cdot (1-p)^k \cdot k^2}{p\cdot \tilde{n}}.\]
Take $\varepsilon=n^{-\frac{1-\alpha}{2}}$.
Our assumption
$n\geq 4k^2$
implies that
\[
\varepsilon=
n^{-\frac{1-\alpha}{2}}=\frac{n^\frac{1+\alpha}{2}}{n}\geq
\frac{2k}{n}=2\delta,
\]
so $(\varepsilon-\delta)^{-2}\leq (\varepsilon/2)^{-2}=4n^{-(1-\alpha)}$.
We therefore have
\begin{align*}
\Pp_{\rv A}\sqbr*{ \Abs*{\Pp_{\rv e}\sqbr*{ \rv e \cap \rv A =
\emptyset} - (1-p)^k} \ge   n^{-\frac{1-\alpha}{2}}}
& \le
\frac{4}{n^{-(1-\alpha)}} \cdot \frac{(10e+1.3)(1-p)^k \cdot
k^2}{p\cdot \tilde{n}}
\\
& \leq
\widetilde n^{-\alpha} \cdot \frac{4(10e+1.3) (1-p)^k\cdot k^2}{p}.
\end{align*}
We finish by taking $C_{\ref{cor:thm0}}=114\geq 4(10e+1.3)$ and
noting that by elementary analysis, $(1-p)^k\cdot k^2\leq
\frac{4}{e^2(\ln (1-p))^2}$.
\end{proof}

Suppose now that $H = (V, E)$ is a possibly \emph{non-uniform}
hypergraph with average uniformity $k$ that is also an
$(\epsilon,p)$-confiner.
We finally note that \cref{lem:hypg} and a convexity argument imply
the following lower bound on $\epsilon$, whose proof we present in
\cref{ss:conf_lowerbound_main}.
We will give a better lower bound for sparse hypergraphs in
Section~\ref{sec:lowerbounds}.
Contrary to  the other main results of this section, this is not a
result about the typical behavior of the confinement probability of
$p$-dense subsets of $V$,
but rather a result about their worst-case (i.e., largest possible)
confinement probability.

\begin{corollary}[Worst-Case Confinement Probability Lowerbound
{[General]}] \label{cor:conf_lowerbound_main}
Let $H = (V, E)$ be a (possibly non-uniform) hypergraph with average
uniformity $k$
and $n$ vertices.
If $H$
is an \hyperref[eq:eps_confiner]{$(\epsilon, p)$-confiner} and $n\geq
\frac{1}{p(1-p)}$, then
$\epsilon \ge p^k - 2k/n$.

Moreover, when $H = (V, E)$ is  $k$-uniform and $pn$ is integral, we
have the improved bound $\epsilon \ge
\binom{n}{k}^{-1} \binom{n-k}{pn- k} \ge p^k - (k-1)/(n-1)$.
\end{corollary}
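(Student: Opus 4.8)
The plan is to lower-bound $\epsilon$ by the \emph{expected} confinement probability of a uniformly random density-$p$ set, and then to evaluate that expectation via convexity. Let $\rv A\subseteq V$ be a uniformly random subset with $|\rv A|=\lfloor pn\rfloor$. Since $H$ is an \hyperref[eq:eps_confiner]{$(\epsilon,p)$-confiner}, we have $\Pp_{\rv e}[\rv e\subseteq A]\le\epsilon$ for every such $A$, hence $\epsilon\ge\Exp_{\rv A}\Pp_{\rv e}[\rv e\subseteq\rv A]$. Writing $\Pp_{\rv e}[\rv e\subseteq\rv A]=\frac1{|E|}\sum_{e\in E}\mathbf 1[e\subseteq\rv A]$ and using the expectation computation behind \cref{lem:hypg}, applied edge by edge (which only needs the elementary identity $\Pp_{\rv A}[e\subseteq\rv A]=\binom{n-|e|}{\lfloor pn\rfloor-|e|}\big/\binom{n}{\lfloor pn\rfloor}$), one gets
\[
\epsilon\ \ge\ \Exp_{\rv e\sim E}\bigl[\phi(|\rv e|)\bigr],\qquad
\phi(j):=\frac{\binom{n-j}{\lfloor pn\rfloor-j}}{\binom{n}{\lfloor pn\rfloor}}=\Pp\bigl[\Hyp(j,\lfloor pn\rfloor,n)=j\bigr].
\]
As $\rv e$ is uniform on $E$, the expectation of $|\rv e|$ equals the average uniformity $k$, so if $\phi$ is convex (in the number of draws $j$), Jensen's inequality gives $\epsilon\ge\phi(k)$, and it remains to compare $\phi(k)$ with $p^k$.

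For the convexity I would compute the second forward difference of $\phi$ directly. Using $\phi(j+1)=\phi(j)\cdot\frac{\lfloor pn\rfloor-j}{n-j}$ and $\phi(j-1)=\phi(j)\cdot\frac{n-j+1}{\lfloor pn\rfloor-j+1}$, a short calculation with $x:=n-j$ and $y:=\lfloor pn\rfloor-j$ gives
\[
\phi(j+1)-2\phi(j)+\phi(j-1)\ =\ \phi(j)\cdot\frac{(x-y)(x-y-1)}{x(y+1)}.
\]
Here $x-y=n-\lfloor pn\rfloor\ge(1-p)n$, so the hypothesis $n\ge\frac1{p(1-p)}\ge\frac1{1-p}$ forces $x-y\ge 1$, whence the second difference is $\ge 0$ for every $j$ (including the transition at $j=\lfloor pn\rfloor$, beyond which $\phi\equiv 0$). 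Thus the piecewise-linear interpolation $\tilde\phi$ of $\phi$ is convex on $[0,\infty)$ and agrees with $\phi$ on integers, and Jensen gives $\epsilon\ge\Exp_{\rv e}[\tilde\phi(|\rv e|)]\ge\tilde\phi(k)$. When $H$ is $k$-uniform and $pn$ is an integer this step is vacuous: every edge has size exactly $k$, so the expectation is literally $\phi(k)=\binom{n-k}{pn-k}\big/\binom{n}{pn}$, the confinement probability of a density-$p$ set in the complete $k$-uniform hypergraph.

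Finally, I would compare $\phi(k)$ with $p^k$. For integral $k$ and $pn$, \cref{thm:binhyp} (whose hypothesis $np(1-p)\ge 1$ is exactly $n\ge\frac1{p(1-p)}$) yields $|\phi(k)-p^k|=\bigl|\Pp[\Hyp(k,pn,n)=k]-\Pp[\Bin(k,p)=k]\bigr|\le\frac{k-1}{n-1}$, so $\epsilon\ge\phi(k)\ge p^k-\frac{k-1}{n-1}$, which is the ``moreover'' assertion; and $\frac{k-1}{n-1}\le\frac kn\le\frac{2k}{n}$ recovers the first assertion in that case. In general I would pass to $p':=\lfloor pn\rfloor/n\in[p-\frac1n,p]$, apply \cref{thm:binhyp} to each integer $j\le\lfloor pn\rfloor$ to get $\phi(j)\ge(p')^j-\frac{j-1}{n-1}\ge p^j-\frac{2j}{n}$ (using $(p')^j\ge p^j-\frac jn$), observe that for $j>\lfloor pn\rfloor$ the inequality $\phi(j)=0\ge p^j-\frac{2j}{n}$ holds directly, and then interpolate: with $\theta=k-\lfloor k\rfloor$, using $(1-\theta)\lfloor k\rfloor+\theta\lceil k\rceil=k$ and convexity of $t\mapsto p^t$,
\[
\tilde\phi(k)=(1-\theta)\phi(\lfloor k\rfloor)+\theta\,\phi(\lceil k\rceil)\ \ge\ (1-\theta)p^{\lfloor k\rfloor}+\theta\,p^{\lceil k\rceil}-\frac{2k}{n}\ \ge\ p^k-\frac{2k}{n}.
\]
Whenever the hypothesis of \cref{thm:binhyp} happens to fail, $2k/n$ is already large enough that $\epsilon\ge p^k-2k/n$ is trivially true, so that case is dispatched separately.

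\textbf{Main obstacle:} the crux is the convexity of $\phi$ --- i.e.\ the non-negativity of the second-difference expression above --- which is precisely what pins the hypothesis $n\ge\frac1{p(1-p)}$ into the statement; once convexity is in hand the rest is bookkeeping, the only slightly delicate point being to retain an $O(k/n)$ error term (rather than the $O(k^2/n)$ that a naive term-by-term estimate of $\prod_{i<k}\frac{\lfloor pn\rfloor-i}{n-i}$ versus $p^k$ would give) in the comparison of $\phi(k)$ with $p^k$, which is why one routes through the sharp total-variation bound of \cref{thm:binhyp}.
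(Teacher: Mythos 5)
Your proof is correct and takes a genuinely different route from the paper's. The paper writes $\Exp_{\rv A}\Pp_{\rv e}[\rv e\subseteq\rv A]=\sum_j f_j\phi(j)$ (where $f_j$ is the fraction of $j$-uniform edges and $\phi(j)=\binom{n-j}{\lfloor pn\rfloor-j}/\binom{n}{\lfloor pn\rfloor}$), then immediately replaces each $\phi(j)$ by $p^j-\frac{j-1}{n-1}$ via \cref{thm:binhyp} and finishes with Jensen applied to the (obviously convex) map $t\mapsto p^t$. You instead apply Jensen to $\phi$ itself: you prove that $\phi$ is discretely convex on $\NN$ via the second-difference identity $\phi(j+1)-2\phi(j)+\phi(j-1)=\phi(j)\frac{(x-y)(x-y-1)}{x(y+1)}$ with $x=n-j$, $y=\lfloor pn\rfloor-j$, conclude $\sum_j f_j\phi(j)\ge\tilde\phi(k)$ for the piecewise-linear interpolation $\tilde\phi$, and only then compare $\tilde\phi(k)$ with $p^k$. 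The two approaches commute the order of ``apply Jensen'' and ``pass from hypergeometric to $p^j$''. Your ordering requires the extra convexity-of-$\phi$ lemma, which is a nontrivial fact the paper never needs, but in exchange yields the clean intermediate bound $\epsilon\ge\tilde\phi(k)$, which is the exact hypergeometric analogue of $p^k$ and is in general sharper than $p^k-(k-1)/(n-1)$. The convexity computation itself is correct (and holds once $n-\lfloor pn\rfloor\ge 1$, which is automatic for $p<1$ rather than needing the full hypothesis $n\ge\frac{1}{p(1-p)}$, as you over-cautiously suggest).

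One point that deserves a bit more care, though it afflicts the paper's own proof in exactly the same place: passing to $p'=\lfloor pn\rfloor/n$ and invoking \cref{thm:binhyp} requires $np'(1-p')\ge1$, which is \emph{not} implied by $np(1-p)\ge1$ in all cases (try $n=5$, $p=0.3$). You dispatch this with ``whenever the hypothesis of \cref{thm:binhyp} fails, $2k/n$ is already large enough that the claim is trivial''; that is plausible in the regime you need (when $k\ge 1$ and $np(1-p)$ is close to $1$, the term $p^k$ is of order $1/n$ while $2k/n\ge 2/n$), but you should actually verify it, since for $0<k<1$ the estimate $p^k\le 2k/n$ can fail. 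The paper has the same silent gap for non-integral $pn$ (it quotes \cref{LM:power-dist}(i), which assumes $k\ge 1$, and the $np'(1-p')\ge 1$ issue); so this is not a defect you introduced, but it is one you should not paper over with ``trivially true'' without checking.
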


We present the proofs of our
results \cref{thm:dist} , \cref{thm:chernoff}, \cref{lem:hypg}, and
\cref{cor:conf_lowerbound_main} in the following subsections.

\subsection{Proof of Typical Hypergeometric Convergence
(\cref{thm:dist})}\label{ss:dist_proof}
\begin{proof}[Proof of \cref{thm:dist}]
Let $A \subseteq V$ be a fixed set of cardinality $p \cdot n$.
Recall that by the definition of the \ref{eq:totv} distance, we have:
\begin{align}
\dist_{\TV}(\rv N_H(A), \Hyp(k, pn, n))
&~=~\frac{1}{2} \cdot \sum_{r = 0}^k \Abs*{ \Pp\sqbr*{\rv N_H(A) = r}
- \Pp\sqbr*{\Hyp(k, pn, n) = r}},\notag\\
&~=~\frac{1}{2} \cdot \sum_{r = 0}^k \Abs*{ \Pp[\rv N_H(A) = r] -
    \frac{\binom{k}{r} \binom{n - k}{pn -
r}}{\binom{n}{pn}}}\label{eq:ftra}
\end{align}
where the second equality is obtained by the definition of the
\ref{eq:hypgd}. We observe that by the pigeon hole principle, if
$\dist_{\TV}(\rv N_H(A), \Hyp(k, pn, n)) \ge \ee$, there should
necessarily exist some $r \in \set*{0, \ldots, k}$ for which we have
\[ \Abs*{ \Pp\sqbr*{ \rv N_H(A) = r} - \frac{\binom{k}{r}
\binom{n-k}{pn -r}}{\binom{n}{pn}}} \ge \frac{2\ee}{k+1}.\]

Randomizing over the choice of $\rv A$ and using the above observation:
\begin{align*}
\Pp_{\rv A}\sqbr*{ \dist_{\TV}( \rv N_H(\rv A), \Hyp(k, pn, n)) \ge \ee }
&~\le~\Pp_{\rv A}\sqbr*{\exists~r~\in~\set*{0, \ldots,
    k}~\tst~\Abs*{\Pp\sqbr*{\rv N_H(\rv A) = r} - \frac{\binom{k}{r}
\binom{n-k}{np - r}}{\binom{n}{pn}}} \ge \frac{2\ee}{k+1}},\\
&~\le~\sum_{r = 0}^k \Pp_{\rv A}\sqbr*{ \Abs*{ \Pp\sqbr*{\rv N_H(\rv A)
    = r} - \frac{\binom{k}{r}\binom{n-k}{np - r}}{\binom{n}{pn}} } \ge
\frac{2\ee}{k+1} },\\
&~\le~\frac{(k+1)^2}{4\ee^2} \sum_{r = 0}^k
\parens*{\frac{C_{\ref{lem:hypg}}(k, r, p)}{n} +
\frac{D - 1}{|E|} \cdot C'_{\ref{lem:hypg}}(k, r)}.
\end{align*}
where the second inequality follows by a simple union bound and the
third by \cref{cor:event-bd}.

We observe,
\[
(k+1)^2\sum_{r = 0}^k C'_{\ref{lem:hypg}}(k, r) = 1.3k(k+1)^2 \sum_{r
= 0}^k p^r(1-p)^{k-r}\binom{k}{r} = 1.3k (k+1)^2 = 4 b_{\ref{thm:dist}}(k)
\]
by the binomial formula.
Further, we have
\[
\textstyle{
    \sum_{r = 0}^k C_{\ref{lem:hypg}}(k, r, p) = 2e \cdot k \cdot
    \parens*{\frac{k-1}{p(1-p)} +
    4k} \cdot \sum_{r = 0}^k \binom{k}{r}^2 p^{2r} (1-p)^{2(k-r)} \le
    2e \cdot k\parens*{ \frac{k-1}{p(1-p)} + 4k} = \frac{ 4
    a_{\ref{thm:dist}}(k, p)}{(k+1)^2},
}
\]
where we used that the summands are squared probabilities in
upperbounding the sum by 1.

Putting everything together, we have
\[
\Pp\sqbr*{ \dist_{\TV}(\rv N_H(\rv A), \Hyp(k, pn, n)) \ge \ee } \le
\frac{1}{\ee^2} \cdot \parens*{ \frac{a_{\ref{thm:dist}}(k, p)}n +
\frac{D-1}{|E|} \cdot b_{\ref{thm:dist}}(k)},
\]
as we wanted to prove.
\end{proof}
\subsection{Proof of the Typical Chernoff Bound
(\cref{thm:chernoff})}\label{sub:proof_of_the_cite_thm_chernoff_} 

\begin{proof}[Proof of \cref{thm:chernoff}]
Let $A \subseteq V$ be fixed and let $\rv Z \sim \Hyp(k, pn, n)$ be a
hypergeometric random variable. Notice that for every $\eta>0$,
\begin{align}
\Pp\sqbr*{ \Abs*{\rv N_H(A) - kp } \ge k \cdot \eta}
&~=~\sum_{j \in [0, k],\atop |j - kp| > k \eta} \Pp\sqbr*{\rv N_H(A) =
j}\notag\\
&~\le~\sum_{j \in [0, k], \atop |j- kp| > k \eta} \parens*{\Pp\sqbr*{
    \rv Z = j} +
\Abs*{ \Pp\sqbr*{\rv N_H(A) = j} - \Pp\sqbr*{\rv Z = j} }}\notag\\
&~=~\sum_{j \in [0, k],\atop |j - kp| > k \eta} \Pp\sqbr*{\rv Z = j} +
\sum_{j \in [0, k], \atop |j - kp| > k \eta}\Abs*{ \Pp\sqbr*{\rv
N_H(A) = j} - \Pp\sqbr*{\rv Z = j} }\notag\\
&~=~\Pp\sqbr*{ \Abs*{\rv Z - kp } > k \eta} + \sum_{j \in [0, k],\atop |j -
kp| > k \eta} \Abs*{ \Pp\sqbr*{\rv N_H(A) = j} - \Pp\sqbr*{\rv Z = j}
}\notag\\
&~\le~2\exp(-2 k \eta^2)  + \sum_{j \in [0, k], \atop |j - kp| > k \eta}
\Abs*{\Pp\sqbr*{\rv N_H(A) = j} - \Pp\sqbr*{\rv Z = j}},\label{eq:yestagh}
\end{align}
where we have appealed to the Chernoff bound
for hypergeometric random
variables   (\cref{fac:chernoff}) in the final
inequality. As a result, if we have
\[ \Pp\sqbr*{ \Abs*{\rv N_H(A) - kp} > k \cdot \eta} \ge 2\exp(-2 k \eta^2) +
\ee,\]
then by the pidgeon-hole principle, at least one of the terms
occuring in the sum
in \cref{eq:yestagh} should be greater than  $\frac{\ee}{k+1}$.
Thus, writing $J = \set*{j \in [0, k] : |j - kp| > k \eta}$ and
letting $\rv A\subseteq V$
be chosen at random as in the theorem,   we have
\begin{align}
\Pp_{\rv A}\sqbr*{ \Pp\sqbr*{ \Abs*{\rv N_H(\rv A) - kp } \ge k
    \cdot \eta} \ge
2\exp(-2 k \eta^2) + \ee}
&~\le~\Pp\sqbr*{ \exists~j~\in J~\tst~\Abs*{\Pp\sqbr*{\rv N_H(\rv A)
    = j} - \Pp\sqbr*{\rv Z = j}} \ge
\frac{\ee}{k+1} }\notag\\
&~=~\sum_{j \in [0, k], \atop |j - kp| > k \eta}
\Pp_{\rv A}\sqbr*{ \Abs*{\Pp\sqbr*{ \rv N_H(\rv A) = j} - \Pp\sqbr*{\rv Z =
j} } \ge \frac{\ee}{k+1} } \notag\\
&~\le~\frac{(k+1)^2}{\ee^2 } \sum_{j \in [0, k], \atop |j - kp| > k
\eta}\parens*{
    \frac{C_{\ref{lem:hypg}}(k, j, p)}n +
\frac{D-1}{|E|} C'_{\ref{lem:hypg}}(k, j) }
\label{eq:yestaghh}
\end{align}
where we have appealed to \cref{cor:event-bd} to obtain the
final inequality \cref{eq:yestaghh}. Now, we bound the sums
occuring in \cref{eq:yestaghh}
individually. We have
\begin{align*}
\sum_{j \in [0, k],\atop |j - kp| > k \eta}
C_{\ref{lem:hypg}}(k, j, p)
&~=~ 2ek\parens*{\frac{k-1}{p(1-p)} + 4k}\sum_{j \in [k+1], \atop |j
- kp| > k \eta} \binom{k}{j}^2 p^{2j}
(1-p)^{2(k-j)},\\
&~\le~2ek\parens*{\frac{k-1}{p(1-p)} + 4k}\parens*{ \sum_{j \in
        [0, k],\atop |j
- kp| > k \eta} \binom{k}{j} p^j (1-p)^{k-j} }^2 .
\end{align*}
Writing $\rv B \sim \Bin(k, p)$ for a binomially distributed random
variable we see that
\[ \sum_{j \in [0, k],\atop |j - kp| > k\eta} \binom{k}{j} p^j (1-p)^{k-j} =
\Pp\sqbr*{ |\rv B - kp| > k \eta} \le 2\exp(-2 k \eta^2) , \]
where we have appealed to the Chernoff bound for binomial random variables in
the inequality (\cref{fac:chernoff}).
Thus,
\begin{equation}
\label{eq:cher_first_term}
\sum_{j \in [0, k],\atop |j - kp| > k \eta}
C_{\ref{lem:hypg}}(k, j, p)
\le 2ek\parens*{\frac{k-1}{p(1-p)} + 4k} \cdot 4\exp(-4k \eta^2)
\end{equation}
Along similar lines,
\begin{align}
\sum_{j \in [0, k], \atop |j - kp| > k \eta}
C'_{\ref{lem:hypg}}(k, j)
&~=~1.3k \sum_{j \in [0, k], \atop |j - kp| > k \eta} p^j (1-p)^{k-
j}\binom{k}{r}\notag\\
&~=~1.3k \Pp\sqbr*{ |\rv B - kp | > k \eta}\notag\\
&~\leq ~2.6k \exp(-2 k \eta^2).\label{eq:cher_second_term}
\end{align}
Combining the bounds from \cref{eq:cher_first_term}
and \cref{eq:cher_second_term} with \cref{eq:yestaghh}
we obtain
{
\begin{align*}\Pp_{\rv A}\big[ \Pp\sqbr*{ \Abs*{\rv N_H(\rv A) - kp }
        \ge k \cdot \eta} & \ge
    2\exp(-2 k \eta^2) + \ee\big]
    \\
    &~\le~\frac{(k+1)^2}{\ee^2}\parens*{ \frac{8ek}{n}
        \parens*{\frac{k-1}{p(1-p)} + 4k}
    \exp(-4k\eta^2) + \frac{D-1}{|E|} 2.6 k \exp(- 2 k\eta^2)}\\
    &~=~\frac{1}{\ee^2}
    \parens*{\frac{a_{\ref{thm:chernoff}}(k, p, \eta )}n +
    \frac{D-1}{|E|} \cdot b_{\ref{thm:chernoff}}(k, \eta) },
\end{align*}
}
which concludes our proof.
\end{proof}

\subsection{Proof of the Main \cref{lem:hypg}}\label{ss:main_lemma_proof}
\begin{proof}[Proof of \cref{lem:hypg}]
Suppose $\rv A\subseteq V$ is chosen at random as in the lemma.
For all $e \in E$, define
\[\rv X_r(e, \rv A) =
\begin{cases}
    1 & |e \cap \rv A| = r,\\
    0 & \textrm{otherwise.}
\end{cases}\]
Then $|E_r(\rv A)|=\sum_{e\in E}\rv X_r(e,\rv A)$.

Now, observe that for any hyperdege $e \in E$, the size $|e \cap \rv
A|$ of the intersection distributes according
to the hypergeometric distribution $\Hyp(k, pn, n)$. Thus, provided
\[
\Exp \rv X_r(e, \rv A) = \Pp_{\rv A}\sqbr*{ \Abs*{e \cap \rv A }
= r } = \frac{
\binom{k}{r} \cdot \binom{n - k}{pn - r}}{\binom{n}{pn}}.
\]
Since $\frac{| E_r(\rv A) |}{|E|}= \frac{1}{|E|} \cdot \sum_{e \in E}
\rv X_{r}(e, \rv
A)$ this establishes \cref{eq:expectation} by linearity of the expectation.

In order to prove \cref{eq:tighter}, we will first assume the following bound:
\begin{claim}\label{cl:tent}
We have
\[\Var\parens*{\frac{| E_r(\rv A) |}{|E|} } \le \binom{k}{r}^2
    \underbrace{\parens*{\frac{\binom{n-2k}{pn -
            2r}}{\binom{n}{pn}} - \frac{\binom{n- k}{pn
    -r}^2}{\binom{n}{pn}^2 }}}_{\mathsf{term}_1} + \,\frac{k
    \cdot (D - 1)}{|E|}
    \cdot \binom{k}{r}
\underbrace{\frac{\binom{n-k}{pn -r}}{\binom{n}{pn}}}_{\mathsf{term}_2}.\]
\end{claim}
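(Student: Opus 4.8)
The plan is to expand the variance as a sum of pairwise covariances, evaluate these exactly on \emph{disjoint} pairs of edges, and control the comparatively few \emph{intersecting} pairs by a crude estimate. Writing $\rv X_r(e):=\rv X_r(e,\rv A)$, we have $|E_r(\rv A)|=\sum_{e\in E}\rv X_r(e)$, hence
\[
\Var\parens*{\frac{|E_r(\rv A)|}{|E|}}=\frac{1}{|E|^{2}}\sum_{e,f\in E}\mathrm{Cov}\parens*{\rv X_r(e),\rv X_r(f)},
\]
the sum being over ordered pairs. Since $\rv A$ is a uniformly random $pn$-element subset of $V$, the joint distribution of $\parens*{\rv X_r(e),\rv X_r(f)}$, and hence the covariance, depends only on $j:=|\Vrt(e)\cap\Vrt(f)|$.

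For a disjoint pair ($j=0$) I would count directly: $\Pp[\,|e\cap\rv A|=r \text{ and } |f\cap\rv A|=r\,]=\binom{k}{r}^{2}\binom{n-2k}{pn-2r}/\binom{n}{pn}$, obtained by choosing which $r$ of $e$'s $k$ vertices and which $r$ of $f$'s $k$ vertices lie in $\rv A$ and then the remaining $pn-2r$ elements of $\rv A$ among the other $n-2k$ vertices. Combined with $\Exp\rv X_r(e)=\binom{k}{r}\binom{n-k}{pn-r}/\binom{n}{pn}$ (as used to prove \cref{eq:expectation}), this gives $\mathrm{Cov}(\rv X_r(e),\rv X_r(f))=\binom{k}{r}^{2}\,\mathsf{term}_1$ for \emph{every} disjoint pair. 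I would then split
\[
\sum_{e,f\in E}\mathrm{Cov}(\rv X_r(e),\rv X_r(f))=|E|^{2}\binom{k}{r}^{2}\mathsf{term}_1+\sum_{e,f:\,\Vrt(e)\cap\Vrt(f)\ne\emptyset}\parens*{\mathrm{Cov}(\rv X_r(e),\rv X_r(f))-\binom{k}{r}^{2}\mathsf{term}_1},
\]
the bracket being $0$ on disjoint pairs.

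For an intersecting ordered pair (the diagonal $e=f$ included) the bracket equals $\Exp[\rv X_r(e)\rv X_r(f)]-\binom{k}{r}^{2}\binom{n-2k}{pn-2r}/\binom{n}{pn}$; the subtracted quantity is $\ge 0$ and $\rv X_r(e)\rv X_r(f)\le\rv X_r(e)$, so the bracket is at most $\Exp\rv X_r(e)=\binom{k}{r}\,\mathsf{term}_2$. It then remains to count the intersecting ordered pairs: there are $|E|$ diagonal pairs, and the number of ordered pairs of \emph{distinct} intersecting edges is at most $\sum_{v\in V}\ddeg_H(v)(\ddeg_H(v)-1)\le (D-1)\sum_{v\in V}\ddeg_H(v)=(D-1)k|E|$, since $H$ is $k$-uniform. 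Feeding these two facts into the display and dividing by $|E|^{2}$ yields a bound of the form asserted in \cref{cl:tent}.

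The argument is short, and this claim is not the hard part of the paper. The only points requiring care are the sign bookkeeping in the middle display — one should check $\mathsf{term}_1\ge 0$ (equivalently $\binom{n-2k}{pn-2r}\binom{n}{pn}\ge\binom{n-k}{pn-r}^{2}$), so that replacing the true number of disjoint ordered pairs by $|E|^{2}$ and discarding the nonnegative term $\binom{k}{r}^{2}\binom{n-2k}{pn-2r}/\binom{n}{pn}$ both weaken the bound in the right direction — together with the precise tally of diagonal versus off-diagonal intersecting pairs so as to land on the stated constant $k(D-1)$. The genuinely laborious step comes \emph{after} this claim: deriving \cref{eq:tighter} with the explicit $C_{\ref{lem:hypg}},C'_{\ref{lem:hypg}}$ requires estimating the binomial ratios $\mathsf{term}_1$ and $\mathsf{term}_2$ under the hypotheses $n\ge 4k^{2}$, $n\ge 2r/p$, $n\ge 2(k-r)/(1-p)$, but that lies beyond the statement treated here.
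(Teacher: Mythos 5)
Your plan is the paper's proof, merely rephrased in covariance language: split the pair sum into disjoint and intersecting ordered pairs, evaluate the disjoint ones exactly, and use the Boolean estimate $\rv X_r(e,\rv A)\rv X_r(f,\rv A)\le\rv X_r(e,\rv A)$ on the intersecting ones together with a degree count. Two corrections, one small and one substantive. Small: your middle display is an exact identity (the bracket vanishes on disjoint pairs), so the verification that $\mathsf{term}_1\ge 0$ that you flag at the end is not actually needed for your argument --- and it is in fact false in general (e.g.,~$k=1$, $r=0$, $p=\tfrac12$, $n=4$ gives $\mathsf{term}_1=-\tfrac1{12}$). Substantive: your scruple about landing exactly on $k(D-1)$ is well-founded, but the resolution is the opposite of what you expected. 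Your honest tally is $|E|$ diagonal pairs plus at most $k(D-1)|E|$ off-diagonal intersecting ordered pairs, giving coefficient $1+k(D-1)$, and that is the correct coefficient: the paper's phrase \enquote{at most $k(D-1)$ hyperedges $f$ that intersect $e$} is the count of $f\ne e$ sharing a vertex with $e$, so its proof silently drops the diagonal $f=e$. Indeed the claim as stated, with coefficient $k(D-1)$, can fail --- for the one-edge hypergraph $|E|=1$, $k=1$, $r=0$, $p=\tfrac12$, $n=4$, $D=1$ (which satisfies the hypotheses of \cref{lem:hypg}) the right-hand side equals $-\tfrac1{12}$ while the variance is $\tfrac14$. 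The coefficient should therefore be $1+k(D-1)=kD-k+1$; the discrepancy is harmless for \cref{lem:hypg} and what follows since the extra $\binom{k}{r}\mathsf{term}_2/|E|$ absorbs into $C_{\ref{lem:hypg}}$, $C'_{\ref{lem:hypg}}$ after a negligible adjustment (or vanishes once $D\ge2$), but you were right not to force your count to produce exactly $k(D-1)$.
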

The inequality stated in \cref{eq:tighter}, will now follow from a careful
analysis of   $\mathsf{term}_1$ and $\mathsf{term}_2$ occuring in
\cref{cl:tent}. We proceed with this analysis and postpone   the
proof of \cref{cl:tent} to the end of this subsection.
Suppose $n\geq\max\{4k^2,\frac{2r}{p},\frac{2(k-r)}{1-p}\}$.
Writing $q=1-p$ and $s=k-r$ and note that our assumption  on
$n$ implies $pn\geq 2r$ and $qn\geq 2s$. Thus, we have
\begin{align*}
\frac{\binom{n-k}{pn-r}}{\binom{n}{pn}}
&~=~\frac{(n-k)! \cdot (pn)! \cdot (qn)!}{n! \cdot (pn - r)!(qn -  s)!}\\
&~=~\frac{\prod_{i =0}^{r-1} (pn -i) \cdot \prod_{j = 0}^{s-1} (qn -
j)}{\prod_{\ell = 0}^{k-1} (n-\ell)}\\
&~=~p^rq^s \cdot \frac{\prod_{i = 0}^{r-1} (n - i/p) \cdot \prod_{j =
0}^{s-1} (n - j/q)}{
\prod_{\ell = 0}^{k-1} (n-\ell)}
\\
&~=~p^rq^s \cdot \prod_{i = 0}^{r-1} \frac{n - i/p}{n-i}
\cdot \prod_{j = 0}^{s-1} \frac{n - j/q}{n-(r+j)}.
\end{align*}
In this expression, the factors $ \frac{n - i/p}{n-i}$ and  $\frac{n
- j/q}{n-(r+j)}$
lie in the interval $\sqbr*{ 1 - \frac{c(p, r ,s)}{n},
1-\frac{k-1}{n-(k-1)}}$, where
\[ c(p,r,s) = \max\set*{\frac{r-1}{p}, \frac{s-1}{q}} \le \frac{k-1}{p q}.\]
Therefore,
\begin{equation}\label{eq:sandwich}
p^r q^s \cdot\parens*{1 - \frac{c(p, r, s)}{n}}^{k}  \le
\frac{\binom{n-k}{pn-r}}{\binom{n}{pn}} \le  p^r q^s \cdot \parens*{1
+ \frac{k-1}{n-(k-1)}}^{k}
\end{equation}
%
%
By an analogous argument
\begin{equation}\label{eq:sandwich2}
p^{2r} q^{2s} \cdot \parens*{1 - \frac{c(p, 2r, 2s)}{n}}^{2k} \le
\frac{\binom{n- 2k}{pn - 2r}}{\binom{n}{pn}} \le p^{2r} q^{2s} \cdot
\parens*{1+\frac{2k-1}{n - (2k - 1)}}^{2k}.
\end{equation}

Using  \cref{eq:sandwich} and \cref{eq:sandwich2} together gives
\[
\mathsf{term}_1=
\frac{\binom{n- 2k}{pn - 2r}}{\binom{n}{pn}} -
\frac{\binom{n-k}{pn-r}^2}{\binom{n}{pn}^2}
\le
p^{2r}q^{2s} \cdot \parens*{\parens*{ 1 + \frac{2k-1}{n-(2k-1)}}^{2k}
- \parens*{1 - \frac{c(p, r,s)}{n}}^{2k}},
\]
and \cref{LM:power-dist} implies
\[
{\textstyle
    \parens*{\parens*{ 1 + \frac{2k-1}{n-(2k-1)}}^{2k} - \parens*{1 -
    \frac{c(p, r,s)}{n}}^{2k}}
    \le 2k \parens*{ \frac{2k-1}{n - (2k-1)} + \frac{c(p, r,s)}n } \cdot
    \parens*{ 1 + \frac{2k-1}{n - (2k-1)}}^{2k-1}.
}
\]
Since
\[ \parens*{ 1 + \frac{2k - 1}{n- (2k-1)}}^{2k-1} \le
\parens*{\exp\parens*{ \frac{2k - 1}{n- (2k-1)}}}^{2k-1} =
\exp\parens*{\frac{(2k-1)^2}{n - (2k-1)}},\]
we obtain
\[
{\textstyle
    \parens*{\parens*{ 1 + \frac{2k-1}{n-(2k-1)}}^{2k} - \parens*{1 -
    \frac{c(p, r,s)}{n}}^{2k}}
    \le \parens*{ \frac{2k-1}{n- (2k-1)} + \frac{c(p, r, s)}{n}} \cdot
    2k \cdot \exp\parens*{\frac{(2k-1)^2}{n - (2k-1)}}
}\]
Thus, we have
\[
\mathsf{term}_1  = \frac{\binom{n-2k}{pn - 2r}}{\binom{n}{pn}} -
\frac{ \binom{n-k}{pn - r}^2 }{\binom{n}{pn}^2} \le p^{2r}q^{2s}
\cdot \parens*{ \frac{2k-1}{n -(2k-1)}+ \frac{c(p, r, s)}n} \cdot 2k
\cdot \exp\parens*{ \frac{(2k-1)^2}{n -(2k-1)}}.
\]
Since we assume $n \ge 2(2k-1)$,  we can further write
\[
\mathsf{term}_1 \le p^{2r}q^{2s} \cdot \frac{1}{n} \parens*{ c(r,p,s)
+ 2(2k-1)} \cdot 2k \cdot \exp\parens*{ \frac{(2k-1)^2}{n -(2k-1)}}.
\]

We can also use \cref{eq:sandwich} to obtain,
\[
\mathsf{term}_2 \le \frac{\binom{n-k}{pn - r}}{\binom{n}{pn}} \le
p^r q^s \cdot \exp\parens*{ \frac{k(k-1)}{n - (k-1)}}.
\]
Plugging the preceding into \cref{cl:tent}, we obtain:
{
\[
    \Var(\rv E_r(\rv A)) \le \frac{1}{n} \cdot \parens*{ p^{2r} q^{2s}
        \cdot (c(r, p, s) + 2(2k-1)) \cdot 2k \cdot
    \exp\parens*{\frac{(2k-1)^2}{n - (2k-1)}}}
    + \frac{(D-1)}{|E|} p^r q^s \cdot \parens*{k \cdot \binom{k}{r}
    \cdot \exp\parens*{ \frac{k(k-1)}{n - (k-1)}}}.
\]
}
By our assumption that $n \ge 4 k^2$,
we have $\frac{(2k-1)^2}{n - (2k-1)}\leq 1$ and
$\frac{k(k-1)}{n - (k-1)}\leq \frac{1}{4}$, so
the last inequality implies:
\[
\Var(\rv E_r(\rv A)) \le \frac{1}{n} \cdot \parens*{ 2e \cdot
    \binom{k}{r}^2 p^{2r} q^{2s}
\cdot k \cdot (c(r, p, s) + 4k)} + \frac{(D-1)}{|E|} \cdot p^r q^s \parens*{
\sqrt[4]{e} k \cdot \binom{k}{r}}.
\]
Finally, using that $c(r, p, s) \le (k-1)/pq$ and $\sqrt[4]{e}<1.3$, we get
\[
\Var(\rv E_r(\rv A)) \le \frac{1}{n} \cdot \parens*{ 2e \cdot
    \binom{k}{r}^2 p^{2r}
q^{2s} \cdot k \cdot \parens*{\frac{k-1}{pq} + 4k}} + \frac{(D-1)}{|E|}
\cdot p^r q^s \parens*{ 1.3\cdot k \cdot \binom{k}{r} },
\]
which proves assertion \cref{eq:tighter} of  lemma.

\medskip

\emph{Proof of \cref{cl:tent}.}
Write $\rv Y= \frac{| E_r(\rv A)|}{|E|}$.
Since $\rv Y = \frac{1}{|E|} \cdot \sum_{e\in E} \rv X_r(e, \rv A)$, we have
\begin{align*}
\Var\parens*{\rv Y}
&~=~\Exp[\rv Y^2] - \Exp[\rv Y]^2\\
&~=~\frac{1}{|E|^2}\parens*{\sum_{e, f \in E} \Exp[\rv X_r(e, \rv A) \rv
X_r(f, \rv A)]}  - \Exp[\rv Y]^2\\
&~=~\frac{1}{|E|^2}\sum_{e, f: e \cap f = \emptyset} \Exp[\rv X_r(e, \rv
A) \rv X_r(f, \rv A)] + \frac{1}{|E|^2}\sum_{e, f: e \cap f \ne
\emptyset} \Exp[\rv X_r(e, \rv
A) \rv X_r(f, \rv A)]] -
\Exp[\rv Y]^2.
\end{align*}
Now notice: Since $\rv X_r(e, \rv A)$ are Boolean random variables, we
have $\rv X_r(e, \rv A) \rv X_r(f, \rv A) \le \rv X_r(e, \rv A)$. We
use this observation to bound the second sum. As for each $e \in E$,
there are at most $k (D-1)$ hyperedges $f \in E$ that intersect $e$,
we proceed as follows:
\begin{align*}
\Var(\rv E_r(\rv A))
&~\le~\frac{1}{|E|^2} \sum_{e, f: e \cap f = \emptyset} \Exp[\rv X_r(e, \rv
A) \rv X_r(f, \rv A)] + \frac{k (D-1)}{|E|^2} \sum_{e \in E}\Exp[\rv
X_r(e, \rv A)] - \Exp[\rv Y]^2
\\
&~=~\frac{1}{|E|^2} \sum_{e, f: e \cap f = \emptyset} \Exp[\rv X_r(e, \rv
A) \rv X_r(f, \rv A)]
+ \frac{ k(D-1)}{|E| } \Exp[\rv Y]  - \Exp[\rv Y]^2
\\
&~=~\frac{1}{|E|^2} \sum_{e, f: e \cap f = \emptyset} \Exp[\rv X_r(e, \rv
A) \rv X_r(f, \rv A)]
+ \frac{ k(D-1)}{|E| } \frac{\binom{k}{r} \binom{n - k}{pn -
r}}{\binom{n}{pn}} - \frac{ \binom{k}{r}^2
\binom{n-k}{pn-r}^2}{\binom{n}{pn}^2},
\end{align*}
where we have appealed to \cref{eq:expectation} for the second equality.

To finish, note that for $e \cap f = \emptyset$, by appealing to analogous
arguments to those  we used in establishing \cref{eq:expectation}, we get
\[ \Exp[\rv X_r(e, \rv A) \rv X_r(f, \rv A)] = \frac{ \binom{k}{r}
\binom{k}{r} \binom{n-2k}{pn - 2r}}{\binom{n}{pn}}. \]
Since the the number of pairs $(e,f)\in E\times E$ with $e\cap f=\emptyset$
is at most $|E|^2$, we have
\[\frac{1}{|E|^2} \sum_{e, f: e \cap f = \emptyset} \Exp[\rv X_r(e, \rv
A) \rv X_r(f, \rv A)]\leq \frac{\binom{k}{r}^2 \binom{n-2k}{pn -
2r}}{\binom{n}{pn}}
\]
and \cref{cl:tent} follows.
This completes the proof  of \cref{lem:hypg}.
\end{proof}
\subsection{Proof of the Worst-Case Confinement Probability
Lowerbound (\cref{cor:conf_lowerbound_main})}\label{ss:conf_lowerbound_main}
\begin{proof}[Proof of \cref{cor:conf_lowerbound_main}]
Suppose $pn$ is an integer. If $H= (V, E)$ were a $k$-uniform
hypergraph, then the statement would follow from
\cref{lem:hypg}: Let $A \subseteq V$ be a fixed set
of density $p$. The probability of $\rv e$ being
confined to the set $A$, i.e.,~$\rv e \subseteq A$, is given by the
fraction of hyperedges having all $k$ elements from $A$,
i.e.,~$\Pp_{\rv e}[\rv e \subseteq A] = \frac{|E_k(A)|}{|E|}$. Further, note
that picking $\rv A$ uniformly random over all density-$p$ sets, we have
\begin{equation}\label{eq:average_integral}\Exp_{\rv A} \Pp_{\rv e
\sim E}\sqbr*{\rv e \subseteq A} = \Exp_{\rv A}\sqbr*{\frac{ |E_k(\rv
A)|}{|E|}} = \frac{\binom{n-k}{pn-k}}{\binom{n}{pn}} \ge p^k -
\frac{k-1}{n-1}
\end{equation}
where the equality is due to \cref{eq:expectation} and
the inequality due to \cref{thm:binhyp} (here we need $n\geq (p (1-p))^{-1}$).
Thus, there exists at least one subset $A \subseteq E$ of density $p$ such
that $\Pp_{\rv e}\sqbr*{\rv e \subseteq A} \ge p^k - (k-1)/(n-1)$.

For the case where $H$ is non-uniform, we decompose the edge set into
disjoint sets $E
= E_0 \sqcup  E_1 \sqcup \cdots \sqcup E_K$, where $K$ is the maximum
uniformity of
$H$. Then $H_j := (V, E_j)$ is a $j$-uniform graph for all $j =0,
\ldots, K$. Writing $f_j = |E_j|/|E|$ for the fraction of $j$-uniform edges
in $H$, we first observe that our average uniformity assumption implies
\begin{equation}\label{eq:average_uniformity_here}
\sum_{j = 0}^K f_j \cdot j=  k.
\end{equation}
Then, by the law of total expectation, we have
\begin{align*}
\Exp_{\rv A} \Pp_{\rv e \sim E}\sqbr*{\rv e \subseteq \rv A}
&~=~\Exp_{\rv A}\sqbr*{\sum_{j = 0}^K \Pp\sqbr*{\rv e \in E_j} \cdot \Pp_{\rv e
\sim E_j}\sqbr*{\rv e \subseteq \rv A}}\\
&~=~\sum_{j = 0}^K f_j \Exp_{\rv A}\sqbr*{ \Pp_{\rv e \sim E_j}\sqbr*{\rv e
\subseteq \rv A}}\\
&~\ge~\sum_{j=  0}^K f_j \cdot \parens*{p^j -
\frac{j-1}{n-1}} \\
&~=~\parens*{\sum_{j = 0}^K f_j \cdot  p^j } - \frac{k-1}{n-1},
\end{align*}
where we have used \cref{eq:average_integral} for the inequality
and \cref{eq:average_uniformity_here} for the last equality. Now, the
convextiy of $x \mapsto p^x$ and \cref{eq:average_uniformity_here}
yield
\[ \sum_{j = 0}^K f_j p^j \ge p^k\quad\textrm{and consequently}\quad
\Exp_{\rv A} \Pp_{\rv e \sim E}\sqbr*{\rv e \subseteq \rv A} \ge p^k
- \frac{k-1}{n-1}.\]
Thus, there still exists at least one $A \subseteq V$ of density $p$
satisfying $\Pp_{\rv e}\sqbr*{\rv e \subseteq A } \ge p^k -  (k-1)/(n-1)$,
which concludes the proof when $pn$ is integral.

When $pn$ is not integral, let $p' \in (0, 1)$ be such that $p'n =
\lfloor pn \rfloor$.
Then, the argument above shows that there exists a $p$-dense set $A
\subseteq V$ such
that
$\Pp_{\rv e}\sqbr*{\rv e \subseteq A } \ge (p')^k - (k-1)/(n-1)$. By
\cref{LM:power-dist}, we have that $(p')^k \ge p^k - k/n$
since $|p - p'| \le 1/n$. Thus, we have
\[ \Pp_{\rv e}\sqbr*{\rv e \subseteq A} \ge p^k - \frac{k-1}{n-1} -
\frac{k}{n} \ge p^k - \frac{2k}{n}\]
where for the final inequality we have used $(k-1)/(n-1) \le k/n$.
\end{proof}

\section{Performance of Hypergraph Confiners in the Worst
Case}\label{sec:lowerbounds}

We have already seen in \cref{cor:conf_lowerbound_main} that if we have a
family of \hyperref[eq:eps_confiner]{$(\epsilon,p)$-confiners}
each having  average uniformity at most $k$,
then we must have $\epsilon \geq p^k$. In this section,
we present better lower bounds, which hold
when the family  is sparse,
i.e.,~each hypergraph $H=(V,E)$ in the
family satisfies $\frac{|E|}{|V|}\leq r$ for some fixed $r$.
These bounds are summarized in \cref{CR:all_lower_bounds}.
Our first and main result of this kind says that
$\epsilon\geq p^k+\Omega_{p,k} (\frac{1}{r})$.
Formally:


\begin{theorem}\label{TH:lower-bound}
There
are   constants $a,b,u,v,w,C>0$ such that the following hold.
Let $k>1$, $r>0$ and $\delta\in (0,1)$ be real numbers.
Let $H=(V,E)$ be a hypergraph with $n$ vertices, $rn$ edges and
average uniformity $k$.
If
\[
n\geq C\,k^a r^b \delta^{-u}(1-\delta)^{-v} \min\{1,k-1\}^{-w}
=r^b\cdot
\poly(k,\textstyle{\frac{1}{k-1},\frac{1}{\delta},\frac{1}{1-\delta}}),
\]
then there is $A\subseteq V$ of size $\lfloor \delta n\rfloor$ such that
\[
\frac{|E(A)|}{|E|}\geq
\min\left\{\delta^k+
\frac{c_{\ref{TH:lower-bound}}(k,\delta)}{r },
1\right\},
\]
where
\begin{align*}
c_{\ref{TH:lower-bound}}(k,\delta)&=\textstyle{\frac{1}{30}\delta(1-\delta)^2\min\{\frac{1}{k},\frac{k-1}{2}\}}.
\end{align*}
Moreover, setting $t=7+\sqrt{33}$, we can take
$b=w=t\leq 12.75$,
$a=u=5t\leq 63.73$
and
$v=2t\leq 25.49$.
\end{theorem}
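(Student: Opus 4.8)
The strategy is the one sketched around \cref{PR:edge_fraction_show}: produce the set $A$ as (a slight enlargement of) the vertex support $V(B)$ of a random collection of hyperedges $B \subseteq E$. I would sample $\rv B \subseteq E$ by including each edge of $H$ independently with probability $\gamma$, where $\gamma = \gamma(k,r,\delta)$ is chosen a hair below the value $1 - (1-\delta)^{1/(rk)}$ so that $|V(\rv B)|$ concentrates just under $\lfloor \delta n\rfloor$. The proof then has three ingredients: (a) $|V(\rv B)| \le \lfloor \delta n\rfloor$ with probability $1-o(1)$; (b) $\Exp\bigl[\, |E(V(\rv B))|/|E| \,\bigr] \ge f_{k,r}(\delta) - o(1)$ (or its average-uniformity analogue); and (c) the analytic inequality $f_{k,r}(\delta) \ge \delta^k + c_{\ref{TH:lower-bound}}(k,\delta)/r + o(1)$. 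Granting (a)--(c) and using $|E(V(\rv B))|/|E| \le 1$ always, one gets $\Exp\bigl[\,|E(V(\rv B))|/|E|\cdot\mathbf 1_{\{|V(\rv B)| \le \lfloor\delta n\rfloor\}}\,\bigr] \ge \delta^k + c_{\ref{TH:lower-bound}}(k,\delta)/r$, so a fixed $B$ exists with $|V(B)| \le \lfloor \delta n\rfloor$ and $|E(V(B))|/|E| \ge \delta^k + c_{\ref{TH:lower-bound}}(k,\delta)/r$; enlarging $V(B)$ to a set $A$ of size exactly $\lfloor\delta n\rfloor$ only enlarges $E(A) \supseteq E(V(B))$, and the $\min$ with $1$ covers the degenerate case.

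For (a) I would pass to the dual hypergraph $H^\ast = (E, V)$, which by handshaking (\cref{PR:degree-to-unif-ratio}) has average uniformity $rk$. A vertex $v \in V$ lies outside $V(\rv B)$ precisely when $v$, viewed as a hyperedge of $H^\ast$, is confined to $E \setminus \rv B$, so $|V \setminus V(\rv B)| = |E_{H^\ast}(E \setminus \rv B)|$. Conditioning on $|\rv B|$ (which concentrates around $\gamma r n$ by \cref{fac:chernoff}), the set $E \setminus \rv B$ is uniform among subsets of $E$ of a fixed density $\approx 1-\gamma$, so the typical-case confinement machinery of \cref{sec:almost-every-set} applied to $H^\ast$ gives $|V \setminus V(\rv B)| \ge (1-o(1))(1-\gamma)^{rk} n$ with probability $1-o(1)$; since $H^\ast$ is non-uniform, one either splits it into uniformity classes and sums the estimates of \cref{lem:hypg}, or adapts that lemma directly, and a Jensen step, $\Exp_v[(1-\gamma)^{\deg v}] \ge (1-\gamma)^{rk}$, absorbs the degree irregularity. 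Picking $\gamma$ slightly below $1 - (1-\delta)^{1/(rk)}$ then gives $|V(\rv B)| \le \lfloor\delta n\rfloor$ with probability $1-o(1)$. The various $o(1)$'s are negative powers of $n$ with exponents depending on $k,r,\delta$, and demanding that their sum be dominated by a constant fraction of $c_{\ref{TH:lower-bound}}(k,\delta)/r$ is exactly what forces the polynomial lower bound on $n$; optimising the degree of the dominating power leads to a quadratic with root $7+\sqrt{33}$, whence the exponents $a,b,u,v,w$.

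Ingredient (b) is the general form \cref{PR:edges-fraction-avg} of \cref{PR:edge_fraction_show}: it bounds $\Exp[\,|E(V(\rv B))|/|E|\,]$ below by $\gamma$ — from $\rv B \subseteq E(V(\rv B))$ — plus an explicit ``unintentionally covered'' term, and for our $\gamma$ this sum is, when $H$ is $d$-regular, \emph{exactly} $f_{k,r}(\delta)$, and in the general case at least a comparable quantity. Ingredient (c), namely $f_{k,r}(\delta) \ge \delta^k + \Omega(\delta(1-\delta)^2/(rk))$, is \cref{TH:distance-to-delta-to-k}; the passage to the minimum-degree-free $f_{k,3r}$ version (cf.\ the remark after \cref{TH:lower_bound_strong_show}) costs only constants, and the conservative constant $\frac{1}{30}$ is chosen so as to swallow both the slack $\gamma < 1 - (1-\delta)^{1/(rk)}$ and the $o(1)$ errors from (a) and (b).

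The main obstacle is the non-regular case of (b): showing that the unintentionally-covered-edge term of \cref{PR:edges-fraction-avg} still dominates (essentially) $f_{k,r}(\delta)$ uniformly over all admissible uniformity/degree profiles. As the introduction notes, that term is minimised when every two hyperedges meet in at most one vertex; locating and lower-bounding this worst configuration is the very technical \cref{TH:hard-optimization}, which carries essentially all the analytic weight. A secondary nuisance is respecting the hypothesis $|e|(\deg v - 1) < |E|$ of \cref{PR:edge_fraction_show}, handling low-degree vertices (absorbed into the $f_{k,3r}$ bound), and keeping the several independent $o(1)$ errors — Chernoff concentration of $|\rv B|$, the Chebyshev estimate for $|V(\rv B)|$, the padding of $V(B)$ up to $\lfloor\delta n\rfloor$ — simultaneously below a constant fraction of $c_{\ref{TH:lower-bound}}(k,\delta)/r$, which is what pins down the precise polynomial degree required of $n$.
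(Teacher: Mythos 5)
Your proposal follows essentially the same route as the paper: random inclusion of edges with probability $\gamma$, concentration of $|V(\rv B)|$ via the dual hypergraph and the typical-case machinery, the key lower bound on $\Exp[|E(V(\rv B))|/|E|]$ from \cref{PR:edges-fraction-avg}, the analytic floor $f_{k,r}(\delta) \ge \delta^k + \Omega(\delta(1-\delta)^2/(rk))$ from \cref{TH:distance-to-delta-to-k}, and the hard optimization \cref{TH:hard-optimization} to close the non-regular case.

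One concrete point you gloss over: the edge-tripling trick that yields $f_{k,3r}$ promotes vertices of degree $1$ or $2$ to degree $\ge 3$, but it does nothing for vertices of degree $0$, and these block the whole argument because $V(\rv B)$ never touches them (so if, say, $99\%$ of $V$ has degree zero, $|V(\rv B)|$ cannot approach $\delta n$ for moderate $\delta$, the $\gamma$ you solve for is wrong, and \cref{PR:edges-fraction-avg}'s hypothesis $\deg(v)\ge 2$ simply fails). The paper resolves this with a separate reduction: delete the degree-$0$ vertices to get $H_1$ of density $\alpha$ in $V$, and split into cases $\alpha\le\delta$ (take $A=V(H_1)$, giving $|E(A)|=|E|$ outright), $\alpha\ge\tfrac12$ (apply the min-degree result to $H_1$ at density $\delta$), and $\delta<\alpha<\tfrac12$ (apply it at a rescaled density $\delta_1=\delta/(2\alpha)$). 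This casework is also where the explicit constant $\tfrac1{30}$ and the exponent $w$ on $\min\{1,k-1\}$ actually come from, rather than from "swallowing $o(1)$ slack" as your sketch suggests; the rest of your bookkeeping account is on target.
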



As a consequence, we get the following theorem which extends
\cref{CR:eps_view_show}
from the introduction.


\begin{theorem}\label{TH:lower-bound-epsilon-view}
Let $a,b,u,v,w,C$ be   as in \cref{TH:lower-bound}
(e.g., we can take $b=7+\sqrt{33}\leq 12.75$),
and let $k>1$, $\delta\in (0,1)$ and $\ee\in (0,1-\delta^k)$
be real numbers.
Let $H=(V,E)$ be a hypergraph with $n$ vertices
and average uniformity $k$
that is also a  \hyperref[eq:eps_confiner]{$(\delta^k+\ee,\delta)$-confiner}.
If $k\geq 2$, then
\[\frac{|E|}{|V|}\geq \min\left\{
\frac{\delta(1-\delta)^2}{30k}\cdot \ee^{-1}
\,,\,
\parens*{\frac{n}{Ck^a\delta^{-u}(1-\delta)^{-v}}}^{\frac{1}{b}}
\right\}
=\min\left\{\Omega_{k,\delta}(\epsilon^{-1})\,,\,\Omega_{k,\delta}(n^{\frac{1}{b}})\right\},\]
and if $1<k<2$, then
\[\frac{|E|}{|V|}\geq \min\left\{
\frac{\delta(1-\delta)^2(k-1)}{60}\cdot \ee^{-1}
\,,\,
\parens*{\frac{n}{Ck^a\delta^{-u}(1-\delta)^{-v}(k-1)^{-w}}}^{\frac{1}{b}}
\right\}
=\min\left\{\Omega_{k,\delta}(\epsilon^{-1})\,,\,\Omega_{k,\delta}(n^{\frac{1}{b}})\right\}.
\]
%
\end{theorem}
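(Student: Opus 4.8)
The plan is to deduce this statement directly from \cref{TH:lower-bound} by contraposition and a two-case split, so that essentially no new work is needed beyond bookkeeping; all the analytic content is already packaged in \cref{TH:lower-bound}. Write $r=|E|/|V|$ and observe first that, since $H$ is a $(\delta^k+\ee,\delta)$-confiner with $\ee<1-\delta^k$, we have
\[
\frac{|E(A)|}{|E|}=\Pp_{\rv e\sim E}\sqbr*{\rv e\subseteq A}\leq \delta^k+\ee<1
\qquad\text{for every }A\subseteq V\text{ with }|A|=\lfloor\delta n\rfloor .
\]

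Next I would distinguish whether the size hypothesis of \cref{TH:lower-bound} holds for our $n$ and $r$. In the first case, $n< Ck^ar^b\delta^{-u}(1-\delta)^{-v}\min\{1,k-1\}^{-w}$; solving this inequality for $r$ gives at once
\[
r> \parens*{\frac{n}{Ck^a\delta^{-u}(1-\delta)^{-v}\min\{1,k-1\}^{-w}}}^{1/b},
\]
and since $\min\{1,k-1\}=1$ for $k\geq 2$ and $\min\{1,k-1\}=k-1$ for $1<k<2$, this is precisely the second term in the asserted minimum. In the second case, $n\geq Ck^ar^b\delta^{-u}(1-\delta)^{-v}\min\{1,k-1\}^{-w}$, so \cref{TH:lower-bound} (applied with this same $\delta$) produces a set $A\subseteq V$ with $|A|=\lfloor\delta n\rfloor$ and $|E(A)|/|E|\geq \min\{\delta^k+c_{\ref{TH:lower-bound}}(k,\delta)/r,\,1\}$. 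Comparing with the confiner bound $|E(A)|/|E|\leq\delta^k+\ee<1$ rules out the value $1$ for the minimum, so it must equal $\delta^k+c_{\ref{TH:lower-bound}}(k,\delta)/r$, whence $\ee\geq c_{\ref{TH:lower-bound}}(k,\delta)/r$, i.e.\ $r\geq c_{\ref{TH:lower-bound}}(k,\delta)\,\ee^{-1}$.

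To finish I would unwind $c_{\ref{TH:lower-bound}}(k,\delta)=\frac{1}{30}\delta(1-\delta)^2\min\{\frac1k,\frac{k-1}{2}\}$ using the elementary equivalence $\frac1k\leq\frac{k-1}{2}\iff k^2-k-2\geq 0\iff k\geq 2$: the minimum equals $\frac1k$ for $k\geq 2$, giving $c_{\ref{TH:lower-bound}}(k,\delta)=\frac{\delta(1-\delta)^2}{30k}$, and equals $\frac{k-1}{2}$ for $1<k<2$, giving $c_{\ref{TH:lower-bound}}(k,\delta)=\frac{\delta(1-\delta)^2(k-1)}{60}$; these match the first term of the two displayed bounds. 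Since at least one of the two cases always occurs, $r$ is bounded below by the minimum of the two quantities, which is exactly the claim; the trailing $\min\{\Omega_{k,\delta}(\ee^{-1}),\Omega_{k,\delta}(n^{1/b})\}$ reformulation merely records that $b$ is an absolute constant and that the remaining prefactors depend only on $k$ and $\delta$. The only step requiring any care is the bookkeeping of the exponents $a,b,u,v,w$ and the comparison $\frac1k$ versus $\frac{k-1}{2}$; I do not anticipate a genuine obstacle here, since the hard part lies entirely in \cref{TH:lower-bound}.
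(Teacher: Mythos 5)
Your proposal is correct and follows essentially the same route as the paper's own proof: case-split on whether $n$ satisfies the size hypothesis of \cref{TH:lower-bound}, in the small-$n$ case solve the size inequality for $r$, and in the large-$n$ case combine the confiner upper bound on $|E(A)|/|E|$ with the lower bound produced by \cref{TH:lower-bound} to force $r\geq c_{\ref{TH:lower-bound}}(k,\delta)\ee^{-1}$. The only cosmetic difference is that you explicitly unwind $\min\{1,k-1\}$ and $\min\{\frac1k,\frac{k-1}{2}\}$ into the $k\geq 2$ and $1<k<2$ subcases, which the paper leaves implicit in the displayed formulas.
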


\begin{proof}[Proof of \cref{TH:lower-bound-epsilon-view} using
\cref{TH:lower-bound}]
Let $r=\frac{|E|}{|V|}$
be the sparsity of $H$.
If $n\leq Ck^a r^b \delta^{-u} (1-\delta)^{-v}\min\{1,k-1\}^{-w}$,
then $r\geq (\frac{n}{C k^a  \delta^{-u}
(1-\delta)^{-v}\min\{1,k-1\}^{-w}})^{1/b}$, and we are clearly done.
We may therefore assume that
$n\geq Ck^a r^b \delta^{-u} (1-\delta)^{-v}\min\{1,k-1\}^{-w}$.
Now, by \cref{TH:lower-bound}, there is $A\subseteq V$ of density $\delta$
such that
\[
\frac{|E(A)|}{|E|}\geq
\min\left\{\delta^k+
\frac{c_{\ref{TH:lower-bound}}(k,\delta)}{r },
1\right\}.
\]
On the other hand, since $H$ is a
\hyperref[eq:eps_confiner]{$(\delta^k+\ee,\delta)$-confiner},
\[
\frac{|E(A)|}{|E|}\leq \delta^k+\ee<  1 .
\]
Together, this implies that
$\delta^k+\frac{c_{\ref{TH:lower-bound}}(k,\delta)}{r }\leq
\delta^k+\ee$, and rearranging gives $r\geq
c_{\ref{TH:lower-bound}}(k,\delta)\ee^{-1}$.
\end{proof}

\cref{TH:lower-bound} is   a consequence of an even stronger
lower bound that applies when every vertex of $H$ is included in some edge.
To state it, note first that the extra assumption on $H$ implies that
its average degree,
which is $rk$
by \cref{PR:degree-to-unif-ratio}, is at least $1$, and hence
$r\geq \frac{1}{k}$.
With this in mind, for every real $k\geq 1$ and $r\geq \frac{1}{k}$,
define a function
%
\begin{equation}\label{EQ:f-k-r-delta-dfn}
f_{k,r}(x)=
1-(1-x)^{\frac{1}{rk}}
+(1-x)^{\frac{1}{rk}}
(1-(1-x)^{\frac{rk-1}{rk}})^k
\end{equation}
We will show in
\cref{subsec:properties-of-f-k-r-delta} that the functions
$f_{r,k}(x)$ have the following properties:
\begin{itemize}
\item $f_{k,r}(x)\geq f_{k,r'}(x)$ for all $r\leq r'$ and $x\in
[0,1]$ (\cref{LM:f-k-r-decrease-in-r}).
\item $\lim_{r\to\infty} f_{k,r}(x)=x^k$ (elementary calculus).
\item $f_{k,r}(x)\geq
x^k+\min\{\frac{1}{k},\frac{k-1}{2}\}\cdot\frac{x(1-x)^2}{r}$
(\cref{TH:distance-to-delta-to-k}).
\item $f'_{k,r}(x)\in [\frac{1}{rk},k]$ for all $x\in [0,1]$
(\cref{PR:derivative}).
In particular, $f_{k,r}(x)$ is strictly increasing.
\end{itemize}
It is also straightforward to see that $f_{1,r}(x)=x$.
The graph of $f_{k,r}(x)$ for some values of $k,r$ is compared to
that of $y=x^k$ in \cref{FG:f-k-r} in the introduction.

Our next theorem states that
if every vertex of an
$(\epsilon,\delta)$-confiner is included in some edge,
then we must have $\epsilon \geq f_{k, 3r}(\delta)-o(1)$, where
$k$ and $r$ are the sparsity and average uniformity of $H$, respectively.
Moreover, if $H$ is regular, or has minimal vertex degree $3$,
then the better lower bound $\epsilon \geq f_{k, r}(\delta)-o(1)$
holds.

\begin{theorem}\label{TH:lower-bound-strong}
For every $\sigma\in (0,\frac{1}{6})$, there
are positive constants $a,b,u,v,C>0$
such that the following hold:
Let   $k\geq 1$, $r>0$ and $\delta\in (0,1)$ be real numbers.
Let $H=(V,E)$ be a hypergraph with $n$ vertices, $r\cdot n$ edges and
average uniformity $k$.
Suppose further that every vertex of $H$ is included in some edge.
\begin{enumerate}[(i)]
\item If $n\geq Ck^a r^b \delta^{-u} (1-\delta)^{-v} $,  then there is
$A\subseteq V$ with $|A|=\lfloor \delta n \rfloor$ such that
\[\frac{|E(A)|}{|E|}\geq f_{k,3r}(\delta)-63200\,
    k^4\delta^{-3} n^{-\sigma}
=f_{k,3r}(\delta)-\poly(k,\delta^{-1})\cdot O(n^{-\sigma}).\]
\item If, in addition to the assumption in (i), $H$ is regular, or
$\deg(v)\geq 3$ for all $v\in V$, then there is
$A\subseteq V$ with $|A|=\lfloor \delta n \rfloor$ such that
\[\frac{|E(A)|}{|E|}\geq f_{k,r}(\delta)-63200\,
    k^4\delta^{-3} n^{-\sigma}
=f_{k,r}(\delta)-\poly(k,\delta^{-1})\cdot O(n^{-\sigma}).\]
\end{enumerate}
Moreover, for   (ii), we can take
\[
a=12,\quad b=\textstyle{\frac{24}{1-6\sigma}},\quad u=v=6,\quad
C=3\cdot 10^6,
\]
or we can take
$\sigma=\frac{7-\sqrt{33}}{16}=0.07846\dots$ and choose
\[
a=b=u=\sigma^{-1}=7+\sqrt{33}=12.7445\dots,
\quad
v=\textstyle{\frac{\sigma^{-1}}{2}}=6.3722\dots,
\quad
C=7.6\cdot 10^6.
\]
For (i), one can use the same $a,b,u,v$ from (ii) if one replaces
$C$ by $3^b C$.
\end{theorem}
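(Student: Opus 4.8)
The plan is to construct $A$ by the probabilistic method. We sample a random set of hyperedges $\rv B\subseteq E$ (we include each $e\in E$ in $\rv B$ independently with probability $\gamma$; conditioning on $|\rv B|$, which is tightly concentrated, also lets us invoke the fixed-size results of \cref{sec:almost-every-set}) and set $A:=V_H(\rv B)$; then $E(A)\supseteq\rv B$ automatically, so the issue is to make $E(A)$ much larger than $\rv B$ while keeping $|A|\le\lfloor\delta n\rfloor$. We will pick $\gamma$ so that, with probability $1-n^{-\Omega(1)}$, one has $|V_H(\rv B)|\le\lfloor\delta n\rfloor$, while $\Exp[\,|E(V_H(\rv B))|/|E|\,]$ is at least $f_{k,r}(\delta)$ (resp.\ $f_{k,3r}(\delta)$) up to a $\poly(k,\delta^{-1})n^{-\sigma}$ error. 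Since $|E(V_H(\rv B))|/|E|\le 1$ always, these two facts together produce a concrete $B\subseteq E$ with $|V_H(B)|\le\lfloor\delta n\rfloor$ and $|E(V_H(B))|/|E|$ at least the claimed amount; enlarging $V_H(B)$ by arbitrary extra vertices to size exactly $\lfloor\delta n\rfloor$ can only grow $E(V_H(B))$, which yields $A$.

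To control $|V_H(\rv B)|$ we pass to the dual hypergraph $H^*$ (\cref{subsec:hypergraphs}): a vertex $v$ of $H$ fails to be covered by $\rv B$ exactly when every hyperedge through $v$ lies outside $\rv B$, i.e.\ when the hyperedge $v$ of $H^*$ is confined to the vertex set $E\setminus\rv B$ of $H^*$. Hence $|V\setminus V_H(\rv B)|=|E_{H^*}(E\setminus\rv B)|$, and $E\setminus\rv B$ is a uniformly random density-$(1-\gamma)$ subset of $V(H^*)$. Writing $\pi_j$ for the fraction of degree-$j$ vertices of $H$ (so $\sum_j j\pi_j=rk$ by \cref{PR:degree-to-unif-ratio}), we split $H^*$ into its uniform pieces and, after discarding the vertices of $H$ of very large degree — whose expected contribution to $|V\setminus V_H(\rv B)|$ is only about $n(1-\gamma)^T\le n^{1-\sigma}$ once $T$ is a suitable $\poly(r,k,\delta^{-1})\cdot\ln n$, using $\gamma\ge\delta/(rk)$ from \cref{LM:power-ineq} — as well as the (equally harmless) hyperedges of $H$ of very large uniformity, we apply \cref{lem:hypg} together with \cref{thm:binhyp} to each piece. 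This gives $\Exp[\,|V_H(\rv B)|/n\,]=1-\sum_j\pi_j(1-\gamma)^j+O(\poly(k,\delta^{-1})n^{-\sigma})$ and, via Chebyshev's inequality (\cref{fac:pafnuty}), one-sided concentration of $|V_H(\rv B)|/n$ to within $n^{-\sigma}$ of its mean except with probability $O(\poly(r,k,\delta^{-1})n^{-\sigma})$; this last step is exactly where the hypothesis $n\ge Ck^ar^b\delta^{-u}(1-\delta)^{-v}$ is spent. We then choose $\gamma$ to solve $\sum_j\pi_j(1-\gamma)^j=1-\delta'$ with $\delta':=\delta-\Theta(\poly(k,\delta^{-1})n^{-\sigma})$ (so $\gamma=1-(1-\delta')^{1/(rk)}$ when $H$ is regular), which forces $|V_H(\rv B)|\le\lfloor\delta n\rfloor$ with the stated probability.

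For the lower bound on $\Exp[\,|E(V_H(\rv B))|/|E|\,]$ we feed this $\gamma$ into the general version \cref{PR:edges-fraction-avg} of \cref{PR:edge_fraction_show}; its per-edge hypotheses hold because every vertex of $H$ lies in some edge (after the harmless truncation of giant hyperedges, which are never confined when $\delta<1$), and it yields $\Exp[\,|E(V_H(\rv B))|/|E|\,]\ge\Phi(\{\pi_j\},\gamma)$ for an explicit functional $\Phi$ of the degree distribution and $\gamma$. When $H$ is $d$-regular one has $d=rk$ and a direct computation gives $\Phi=f_{k,r}(\delta')$ on the nose — this is the easy half of (ii). In general we must bound $\Phi(\{\pi_j\},\gamma)$ from below over all degree distributions with $\sum_j j\pi_j=rk$, where $\gamma$ is pinned by $\sum_j\pi_j(1-\gamma)^j=1-\delta'$; this purely analytic minimisation is precisely the (very technical) \cref{TH:hard-optimization}, whose output is that the infimum equals $f_{k,r}(\delta')$ when every degree is at least $3$ and $f_{k,3r}(\delta')$ with no lower bound on degrees, the loss of the factor $3$ being caused by the extremal distributions being supported on small degrees. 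Combining with the previous paragraph, and using $f_{k,r}'\le k$ (\cref{PR:derivative}) to replace $\delta'$ by $\delta$ at a cost of $O(k\,\poly(k,\delta^{-1})n^{-\sigma})$, gives the stated lower bounds on $|E(A)|/|E|$; propagating the constants through the two previous paragraphs produces the explicit error $63200\,k^4\delta^{-3}n^{-\sigma}$ and the admissible tuples $(a,b,u,v,C)$, where the freedom $\sigma<\frac{1}{6}$ reflects the trade-off between the slack $n^{-\sigma}$ we can afford and the power of $n$ that Chebyshev needs.

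The main obstacle is the optimisation \cref{TH:hard-optimization}: the regular case is an immediate computation, but showing that \emph{every} admissible degree distribution yields $\Phi\ge f_{k,r}$ (resp.\ $f_{k,3r}$) is delicate, since $\Phi$ is a nonlinear functional of the distribution and $\gamma$ itself is determined by a nonlinear constraint involving that distribution. A secondary difficulty is the bookkeeping around the dual $H^*$, which is genuinely non-uniform and may have both very-high-degree vertices and very-large hyperedges, so the truncations feeding into \cref{lem:hypg} must be set up carefully and the discarded mass shown to be $n^{-\Omega(1)}$.
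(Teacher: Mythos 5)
Your high-level plan matches the paper's closely: sample $\rv B$ with inclusion probability $\gamma$, set $A = V_H(\rv B)$, control $|V_H(\rv B)|$ via the dual hypergraph and the results of \cref{sec:almost-every-set}, lower-bound $\Exp[|E(V_H(\rv B))|/|E|]$ via \cref{PR:edges-fraction-avg}, and finally feed the resulting functional of the degree distribution into \cref{TH:hard-optimization} and combine with a Markov-type inequality (\cref{LM:Markov-modified}) to extract a concrete $A$. That is indeed the paper's architecture for part (ii).

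However, there is a genuine gap in how you propose to obtain part (i). You assert that the analytic minimisation ``outputs'' $f_{k,3r}(\delta')$ when there is no lower bound on the vertex degrees, attributing the factor $3$ to extremal degree distributions supported on small degrees. The paper does not prove this, and it is not clear it is even true: if some degree class has $d_i=1$, then the factor $(1-(1-\gamma)^{d_i-1})^{u_id_i/r}=0^{u_i/r}$ annihilates the whole product, leaving only $\Phi=\gamma$, and the paper's \cref{TH:hard-optimization} is stated only under $x_1,\dots,x_m\geq 3$ (the remark after \cref{TH:lower_bound_strong_show} even notes that the $\geq 2$ case is an open conjecture). What the paper actually does is prove part (i) as a corollary of part (ii) via a separate combinatorial trick: replace $H$ by $H^{(3)}$, the hypergraph in which every hyperedge appears three times. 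This keeps the average uniformity $k$ and average degree unchanged but triples the sparsity to $3r$ and, since every vertex has positive degree, raises all degrees to $\geq 3$. Applying part (ii) to $H^{(3)}$ with parameter $3r$ then gives $f_{k,3r}(\delta)-o(1)$, and the confinement fractions of $H$ and $H^{(3)}$ coincide. Your proposal skips this reduction entirely, so part (i) is not covered by your argument as written.

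A secondary, smaller discrepancy: you propose to ``discard'' vertices of very large degree and hyperedges of very large uniformity before invoking \cref{lem:hypg}. The paper instead performs a \emph{vertex $D$-rewiring} and an \emph{edge $K$-rewiring} (reassigning incidences rather than deleting) precisely so that the average uniformity, average degree, and regularity are preserved; outright deletion would change $r$ and $k$ and could turn a regular hypergraph irregular, which would break the $\geq 3$/regularity hypothesis you need to carry into \cref{TH:hard-optimization}. This is fixable, but your proposal does not account for it.
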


\begin{proof}[Proof of \cref{TH:lower-bound} assuming
\cref{TH:lower-bound-strong}.]
Fix some $\sigma\in (0,\frac{1}{6})$
and let $a',b',u',v',C'$ be the constants $a,b,u,v,C$ promised by
\cref{TH:lower-bound-strong};
we will specify $\sigma$ and these constants later.
We write $c(k)=\min\{\frac{1}{k},\frac{k-1}{2}\}$.

Since $H$ may have vertices of degree $0$,
we cannot  directly appeal to \cref{TH:lower-bound-strong}.
We overcome this by removing these vertices.
Formally, let $V_1=\{v\in V\,:\,\deg_H v\geq 1\}$
and   $H_1=(V_1,E)$, i.e., $H_1$ is the hypergraph
obtained from $H$ by removing all vertices of degree $0$.
Put $\alpha=\frac{|V_1|}{|V|}$. Then $H_1$ has $\alpha n$ vertices,
$rn$ edges and average uniformity $k$. Thus, its sparsity is
$r_1:=\frac{r}{\alpha}$.
We now break into three cases.

\smallskip

{\it Case I. $\alpha\leq \delta$.}
In this case, $A:=V_1$ has density at most $\delta$ and satisfies
$\frac{|E_H(A)|}{|E|}=\frac{|E|}{|E|}=1$.

\smallskip

{\it Case II. $\alpha>\delta$ and $\alpha\geq \frac{1}{2}$.}
Suppose that
$
\alpha n\geq C'k^{a'}r_1^{b'}\delta^{-u'}(1-\delta)^{-v'}=
\alpha^{-b'} C'k^{a'}r^{b'}\delta^{-u'}(1-\delta)^{-v'} $.
Since $\alpha>\delta$, this condition is satisfied  when
\[
n\geq  C'k^{a'}r^{b'}\delta^{-u'-b'-1}(1-\delta)^{-v'}:=N_1.
\]
Now, by applying  \cref{TH:lower-bound-strong} to $H_1$, we obtain
a subset $A\subseteq V_1$ of density $\delta$ in $V_1$ such that
\[
\frac{|E_{H}(A)|}{|E|}=
\frac{|E_{H_1}(A)|}{|E|}
\geq
f_{k,3r_1}(\delta )-63200 k^4\delta^{-3} (\alpha n)^{-\sigma}=(\star).
\]
Using the facts
we recalled about $f_{k,r}(x)$ (\cref{TH:distance-to-delta-to-k})
and our assumptions $\alpha>\delta$ and $\alpha\geq\frac{1}{2}$
(which imply $r_1\leq 2r$), we further have
\begin{align*}
(\star)
&\geq
\parens*{\delta^k+\frac{c(k)\delta(1-\delta)^2}{3r_1}}-63200
k^4\delta^{-3} (\delta n)^{-\sigma}
\geq
\delta^k+\frac{ c(k)\delta(1-\delta)^2}{6r}-63200 k^4\delta^{-4}
n^{-\sigma}.
\end{align*}

\smallskip

{\it Case III. $\frac{1}{2}>\alpha>\delta$.}
Put $\delta_1=\frac{\delta}{2\alpha}$. Then
$\delta\leq \delta_1\leq \frac{1}{2}$, and hence $1-\delta_1\geq \frac{1}{2}$.
As a result, the condition
$\alpha n\geq C'k^{a'}r_1^{b'}\delta_1^{-u'}(1-\delta_1)^{-v'}
=\alpha^{-b'}C'k^{a'}r^{b'}\delta_1^{-u'}(1-\delta_1)^{-v'}$ holds
if
\[
n\geq
C'k^{a'}r^{b'}\delta^{-u'-b'-1}(0.5)^{-v'}=(2^{v'}C')k^{a'}r^{b'}\delta^{-u'-b'-1}=:N_2.
\]
Assuming this, \cref{TH:lower-bound-strong} tells us that there is
$A\subseteq V_1$ of density $\delta_1$ in $V_1$ such that
\[
\frac{|E_{H}(A)|}{|E|}=
\frac{|E_{H_1}(A)|}{|E|}
\geq
f_{k,3r_1}(\delta_1 )-63200 k^4\delta_1^{-3} (\alpha
n)^{-\sigma}=(\star\star),
\]
and similarly to Case II,
\begin{align*}
(\star\star) &\geq
\parens*{\delta_1^k+\frac{c(k)\delta_1(1-\delta_1)^2}{3r_1}}-63200
k^4\delta_1^{-3} (\delta n)^{-\sigma}
\\
&\geq
\delta^k + \frac{c(k)\cdot\frac{\delta}{2\alpha} \cdot
(\frac{1}{2})^2 }{\frac{3r}{\alpha}}
-63200 k^4\delta^{-4} n^{-\sigma}
\\
&=\delta^k   +\frac{ c(k)\delta(1-\delta)^2}{24 r}-63200
k^4\delta^{-4}   n^{-\sigma}.
\end{align*}
Note also that $|A|\leq \frac{\delta}{2\alpha}\cdot\alpha
n=\frac{\delta}{2}n$, so
$A$ has density smaller than $\delta$.

\smallskip

Now, since at least one of Case I--III holds, if $n\geq \max\{N_1,N_2\}$,
then there is $A\subseteq V$ of density at most $\delta$ such that
\[
\frac{|E_{H}(A)|}{|E|}\geq \min\left\{\delta^k   +\frac{
c(k)\delta(1-\delta)^2}{24r}-63200 k^4\delta^{-4}   n^{-\sigma},1\right\}.
\]
If we assume further that
\[
n\geq \parens*{120\cdot \frac{r}{ c(k)\delta(1-\delta)^2}\cdot 63200
k^4\delta^{-4}}^{\sigma^{-1}}=:N_3,
\]
then we would get
\[
\frac{|E_{H}(A)|}{|E|}\geq
\min\left\{
\delta^k   +\parens*{\frac{1}{24}-\frac{1}{120}}
\frac{ c(k)\delta(1-\delta)^2}{ r}
,
1
\right\}
=
\min\left\{
\delta^k +\frac{ c(k)\delta(1-\delta)^2}{30 r},
1\right\}.
\]
Moreover, we can add vertices to   $A$ until it has density $\delta$
in $V$ without affecting the validity of this conclusion.
It is therefore enough to choose $a,b,u,v,w,C$ such that
\[
C k^a r^b \delta^{-u} (1-\delta)^{-v} \min\{1,k-1\}^{-w}
\geq \max\{N_1,N_2,N_3\}.
\]
One readily checks that we can take
\begin{align*}
a &= \max\{a',5\sigma^{-1}\},
&
b &= \max\{b',\sigma^{-1}\},
\\
u &= \max\{u'+b'+1, 5\sigma^{-1}\},
&
v &=\max\{v',2\sigma^{-1}\},
\\
w &= \sigma^{-1},
&
C &= \max\{2^{v'}C', (2\cdot 120\cdot 63200)^{\sigma^{-1}}\}.
\end{align*}
We finally choose $\sigma=\frac{7-\sqrt{33}}{6}$
and $a'=b'=u'=\sigma'^{-1}=7+\sqrt{33}$,
$v'=\frac{1}{2}\sigma'=\frac{7+\sqrt{33}}{2}$ to obtain the $a,b,u,v,w$
specified in the theorem. The value of $C$ evaluates to be
slightly less than $10^{91.52}$.
\end{proof}

Next, we observe that part (i) of \cref{TH:lower-bound-strong}
follows from part (ii).

\begin{proof}[Proof of \cref{TH:lower-bound-strong}(i) assuming
\cref{TH:lower-bound-strong}(ii).]
Let $a,b,u,v,C$ be constants promised for part (ii).
Suppose $  n\geq  Ck^a (3r)^b \delta^{-u}(1-\delta)^{-v}
=(3^b C)k^a r^b \delta^{-u}(1-\delta)^{-v}$.
Let $H^{(3)}$ be the hypergraph obtained from $H$ by repeating each
hyperedge $3$ times.
Formally, $H^{(3)}=(V,E\times\{1,2,3\})$, where for a hyperedge
$(e,i)\in E\times \{1,2,3\}=E(H^{(3)})$, we set $\Vrt_{H^{(3)}}(e,i)=\Vrt_H(e)$.
Clearly, $|E(H^{(3)})|/|V(H^{(3)})|=3|E|/|V|=3r$ and
$\deg_{H^{(3)}}(v)=3\deg_H(v)$ for all $v\in V$. Since we assumed $H$
has no veritces of degree $0$, every   vertex in $H^{(3)}$ has
degree at least $3$.
Thus, by applying \cref{TH:lower-bound-strong}(ii) to $H^{(3)}$,
we see that there is $A\subseteq V$ with $\lfloor \delta n\rfloor$
elements such that
\[
\frac{|E_{H^{(3)}}(A)|}{|E(H^{(3)})|}\geq
f_{k,3r}(\delta)-63200k^4\delta^{-3}n^{-\sigma}.
\]
However, it is straighforward to see that
\[
\frac{|E_{H^{(3)}}(A)|}{|E(H^{(3)})|}=
\frac{|E_H(A)\times\{1,2,3\}|}{|E\times\{1,2,3\}|}=\frac{|E_H(A)|}{|E|},
\]
so $A$ is the set we are looking for.
\end{proof}

To conclude our previous discussion, we have reduced the proofs of
\cref{TH:lower-bound} and
\cref{TH:lower-bound-strong}(i)
into proving the stronger lower
bound on the worst-case confinement probability in
\cref{TH:lower-bound-strong}(ii).


The proof of \cref{TH:lower-bound-strong}(ii) is quite involved and
occupies all of \cref{sec:lower-bound-proof} and half of
\cref{sec:analytic}. We survey it now before diving into the details.


Let $k$, $r$ and $d$ denote the average uniformity, sparsity
and average vertex degree of $V$, and note that $d=kr$
(\cref{PR:degree-to-unif-ratio}).
Suppose   first that the maximum vertex degree    and the
maximum uniformity  of $H$ are not too
large in the sense that they do not exceed
$dn^\alpha$ and $kn^\alpha$, respectively, for some small $\alpha>0$.
Under this assumption, we prove \cref{TH:lower-bound-strong}(ii)
by using a random process: Fixing some $\gamma\in (0,1)$, we choose
a random subset $\rv B$ of $E$ by including each $e\in E$ in $B$ with
probability
$\gamma$ (independently of the other edges), and let $\rv A$ be the
set of vertices covered by the edges of $B$, i.e., $\rv
A=\bigcup_{e\in \rv B}\Vrt(e)$.
The rationale behind this choice is that $E(\rv A)$ is guaranteed
to include $\rv B$, which consists of approximately $\gamma$-fraction of
the edges in $H$.
We show that for a carefully chosen value of $\gamma$, there is a positive
probability 
that $\rv A$ will satisfy both $|\rv A|\leq \delta n$
and $\frac{|E(\rv A)|}{|\rv A|}\geq f_{k,r}(\delta)-o(1)$, and thus
the desired set $A$ from \cref{TH:lower-bound-strong}(ii) exists. We
establish this probabilistic statement  in three main steps
carried out in \cref{subsec:lower-bound-with-assumpt}:
\begin{enumerate}[(1)]
\item First, we show that the size of $\rv A$ is almost surely very close to
a certain number, which is determined by $\gamma$ and the  vertex
degrees appearing in $H$
(\cref{PR:vertex-upper-bound-flip}).
Interestingly, this is shown using results from
\cref{sec:almost-every-set}. For example, when $H$ regular, the dual
hypergraph $H^*=(E,V)$ (see \cref{subsec:hypergraphs}) is a uniform
hypergraph, and the set $\rv A$
is precisely the set of $H^*$-edges that meet the set of
$H^*$-vertices $\rv B$, so we can use   \cref{cor:thm0} to estimate
its size.
A similar but more sophisticated argument is used for non-regular
hypergraphs.

\item Next, we show that the expected number of edges in
$E(\rv A)$ is bounded from below by $|E|$ times a certain number
depending on $\gamma$ and the vertex degrees occurring in $H$
(\cref{PR:edges-fraction-avg}).
We remark that $E(\rv A)$ is guaranteed to contain $\rv B$, and the
almost all the work is dedicated to estimate the contribution of the
edges that are \emph{not} in $\rv B$ and are still covered by  $\rv A$. To
prove this step
we establish a general result
about the probability that a random subset of a set meets some other
subsets (\cref{PR:meeting-probability}),
which may be of independent interest.

\item Finally, we prove that if we choose $\gamma$ so that the
expected size of $\rv A$, as determined in (1), is $\delta n$, then
the lower bound on $\Exp(|E(\rv A)|)$ is at least $f_{k,r}(\delta)
\cdot |E|$. This is the content of the extermely technical multi-variable
ineqaulity established in \cref{TH:hard-optimization}. By settling
for a slightly smaller
$\delta$, this is enough to prove that  $|\rv A|\leq \delta n$
and $\frac{|E(\rv A)|}{|\rv A|}\geq f_{k,r}(\delta)-o(1)$ happen
simultaneously with positive probability
(\cref{TH:pre-lower-bound-i}, see also \cref{LM:Markov-modified}).
\end{enumerate}
For example, in the special case where $H$ is regular (the degree of
regularity is $d=rk$),
the expected size of $\rv A$ in (1) turns out to be $1-(1-\gamma)^{rk}$, so we
need to choose $\gamma=1-(1-\delta)^{\frac{1}{rk}}$ in order to have
$|\rv A|\approx \delta n$ almost surely. For this $\gamma$, the expected size of
$|E(\rv A)|$ in (2) turns out
to be $f_{k,r}(\delta)$, so step (3) is unecessary for this case. (In
fact, in this case, the term $1-(1-\delta)^{\frac{1}{rk}}$ in
$f_{r,k}(\delta)\approx \frac{|E(\rv A)|}{|\rv A|}$ is the
contribution of the edges in $\rv B$, whereas the remaining term
$(1-\delta)^{\frac{1}{rk}}
(1-(1-\delta)^{\frac{rk-1}{rk}})^k$ is the contribution of the edges
in $E(\rv A)-\rv B$.)

In the general case, i.e., when $H$ does have vertices of large
degree, or edges of large size, we first modify $H$ to eliminate these
vertices and edges. This affects only a negligible amount of
vertices and edges, and so we can    still assert
\cref{TH:lower-bound-strong}(ii). The details of this modifcation are
the subject matter of
\cref{subsec:removing-vrt-edges}, where the proof of
\cref{TH:lower-bound-strong}(ii) is completed.

\medskip

We conclude this section with some corollaries and remarks.
We begin with the following corollary which summarizes the implications
of \cref{TH:lower-bound} and \cref{TH:lower-bound-strong} (and also
\cref{cor:conf_lowerbound_main})
to confiners.

\begin{corollary}
\label{CR:all_lower_bounds}
Let $H=(V,E)$ be a hypergraph with $n$ vertices,
$rn$  edges ($r>0$) and average uniformity $k$. Suppose
that $H$ is also
an $(\epsilon,p)$-confiner with $\epsilon, p\in(0,1)$. Then:
\begin{enumerate}[(i)]
\item \label{item:simple} $\epsilon\geq p^{k}-\frac{2k}{n}$, provided $n\geq
\frac{1}{\delta(1-p)}$;
\item $\epsilon\geq
p^k+\min\{\frac{1}{k},\frac{k-1}{2}\}\cdot\frac{p(1-p)^2}{30}\cdot\frac{1}{r}$,
provided $n\geq r^{12.75}\cdot
\poly(k,\textstyle{ \frac{1}{p},\frac{1}{1-p}})$ and $k>1$;
\item \label{item:stronger} $\epsilon\geq f_{k,3r}(p)-\poly(k,p^{-1})\cdot
n^{-0.078}$,
provided $n\geq r^{12.75}\cdot
\poly(k,\textstyle{ \frac{1}{p},\frac{1}{1-p}})$, $k\geq 1$,
and $H$ has minimum vertex degree at least $1$;
\item \label{item:strongest} $\epsilon\geq f_{k,r}(p)-\poly(k,p^{-1})\cdot
n^{-0.078}$,
provided $n\geq r^{12.75}\cdot
\poly(k,\textstyle{\frac{1}{k-1},\frac{1}{p},\frac{1}{1-p}})$, $k\geq 1$,
and $H$ is regular or has minimum vertex degree at least $3$.
%
%
\end{enumerate}
\end{corollary}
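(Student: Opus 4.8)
The approach is to reduce everything to the single observation that, for any $A\subseteq V$, $\Pp_{\rv e\sim E}[\rv e\subseteq A]=\frac{|E_H(A)|}{|E|}$, because $\rv e$ is a uniformly random hyperedge and $E_H(A)$ is exactly the set of hyperedges confined to $A$. Thus $H$ being an \hyperref[eq:eps_confiner]{$(\epsilon,p)$-confiner} means precisely that $\frac{|E_H(A)|}{|E|}\le\epsilon$ for every density-$p$ set $A\subseteq V$, so to prove any one of the four lower bounds on $\epsilon$ it suffices to exhibit a single density-$p$ set $A$ with $\frac{|E_H(A)|}{|E|}$ at least the asserted quantity. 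Each of \cref{cor:conf_lowerbound_main}, \cref{TH:lower-bound}, \cref{TH:lower-bound-strong} produces exactly such a set once specialized to $\delta:=p$, so the whole proof is four invocations plus a little bookkeeping.

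In detail: for part (i) I would just quote \cref{cor:conf_lowerbound_main} (whose hypothesis is $n\ge\frac{1}{p(1-p)}$, which is what \cref{CR:all_lower_bounds}(i) intends). For part (ii) I would apply \cref{TH:lower-bound} with $\delta=p$, which yields $A$ of density $p$ with $\frac{|E_H(A)|}{|E|}\ge\min\{p^k+\frac{c_{\ref{TH:lower-bound}}(k,p)}{r},1\}$, and then observe that $c_{\ref{TH:lower-bound}}(k,p)=\frac1{30}p(1-p)^2\min\{\frac1k,\frac{k-1}2\}$ is exactly the numerator in the bound stated in \cref{CR:all_lower_bounds}(ii). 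For parts (iii) and (iv) I would apply \cref{TH:lower-bound-strong}(i) and \cref{TH:lower-bound-strong}(ii) respectively with $\delta=p$ and, say, $\sigma=\frac{7-\sqrt{33}}{16}$ (so $\sigma>0.078$), getting $A$ with $\frac{|E_H(A)|}{|E|}\ge f_{k,3r}(p)-63200\,k^4p^{-3}n^{-\sigma}$, resp.\ $\ge f_{k,r}(p)-63200\,k^4p^{-3}n^{-\sigma}$; here $n^{-\sigma}\le n^{-0.078}$ and $63200\,k^4p^{-3}=\poly(k,p^{-1})$, and the degree hypotheses (``every vertex in some edge'', resp.\ ``regular or minimum degree $\ge 3$'') and the $n$-thresholds of those theorems are, up to the polynomial factors displayed in \cref{CR:all_lower_bounds}, the ones listed there. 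In every case $\frac{|E_H(A)|}{|E|}\le\epsilon$ by the reduction above, which gives the claimed inequality.

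Two routine points need attention. When the produced set $A$ has fewer than $\lfloor pn\rfloor$ vertices --- which can happen in Case~III of \cref{TH:lower-bound} and in \cref{cor:conf_lowerbound_main} when $pn$ is not an integer --- I would pad $A$ with arbitrary extra vertices; this only enlarges $E_H(A)$, so the lower bound is preserved. And the $\min\{\cdot,1\}$ in \cref{TH:lower-bound} is in fact never active here: by \cref{TH:distance-to-delta-to-k} together with the trivial inequality $f_{k,r}(x)\le 1$ (immediate from \cref{EQ:f-k-r-delta-dfn}), one has $p^k+\frac{c_{\ref{TH:lower-bound}}(k,p)}{r}\le f_{k,r}(p)\le 1$, so the minimum equals $p^k+\frac{c_{\ref{TH:lower-bound}}(k,p)}{r}$ and (ii) comes out clean. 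I do not expect any genuine obstacle: all the depth of \cref{CR:all_lower_bounds} sits in the three cited theorems, and what remains is only the translation between ``some density-$p$ set confines many edges'' and ``the confinement parameter is large'', together with matching side conditions.
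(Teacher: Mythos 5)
Your proposal is correct in substance and follows the route the paper leaves implicit: the corollary is stated without proof precisely because it amounts to specializing \cref{cor:conf_lowerbound_main}, \cref{TH:lower-bound} and \cref{TH:lower-bound-strong} to $\delta=p$ and then reading off the confiner inequality $\frac{|E_H(A)|}{|E|}\le\epsilon$ from the density-$p$ set those results produce. Two small points of bookkeeping deserve attention. First, your argument that the $\min\{\cdot,1\}$ in \cref{TH:lower-bound} is never active, namely $p^k+\frac{c_{\ref{TH:lower-bound}}(k,p)}{r}\le f_{k,r}(p)\le 1$, quietly assumes $rk\ge 1$: that is the regime where $f_{k,r}$ is even defined (\cref{EQ:f-k-r-delta-dfn}) and where \cref{TH:distance-to-delta-to-k} applies, whereas \cref{TH:lower-bound} allows arbitrary $r>0$, and for $r<\frac1k$ the quantity $p^k+\frac{c_{\ref{TH:lower-bound}}(k,p)}{r}$ may well exceed $1$. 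The cleaner route, which you gesture at but do not spell out, is to use the standing hypothesis $\epsilon\in(0,1)$: if the $\min$ equalled $1$, the confiner property would force $\epsilon\ge 1$, a contradiction, so the $\min$ is $p^k+\frac{c_{\ref{TH:lower-bound}}(k,p)}{r}$ regardless of $r$. Second, the padding step you add is superfluous: \cref{TH:lower-bound} and \cref{TH:lower-bound-strong} are stated so that the produced $A$ already has exactly $\lfloor\delta n\rfloor$ vertices (any padding is already performed inside their proofs), and \cref{cor:conf_lowerbound_main} asserts the bound on $\epsilon$ directly without passing through an intermediate set. Neither point affects the validity of your argument.
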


Each lower bound in \cref{CR:all_lower_bounds}
is asymptotically stronger than its predecessor, namely,
$p^k\leq
p^k+\min\{\frac{1}{k},\frac{k-1}{2}\}\cdot\frac{p(1-p)^2}{30}\cdot\frac{1}{r}
\leq f_{k,3r}(p)\leq f_{k,r}(p)$, but on the other hand, it applies
under stronger assumptions.

Recall from \cref{PR:conf_duality} that the dual hypergraph of an
$(\epsilon,p)$-confiner is a $(1-p,1-\epsilon')$-confiner
for all $\epsilon'>\epsilon$.
By combining this with \cref{CR:all_lower_bounds}, we obtain the following:

\begin{corollary}
\label{CR:all_lower_bounds_dual}
With notation and assumptions as in  Corollary~\ref{CR:all_lower_bounds}:
\begin{enumerate}
\item[({\ref{item:simple}}$^*$)] $\epsilon\geq
1-(1-p)^{\frac{1}{rk}}-\frac{2k}{e p(1-p) n}$,
provided $n\geq \frac{4k}{p(1-p)}$ and $rk\geq 1$;
\item[({\ref{item:stronger}}$^*$)] $\epsilon\geq 1-
f_{rk,3/r}^{-1}(1-p)-\poly(r,k,\frac{1}{p},\frac{1}{1-p})n^{-0.078}$,
provided $n\geq \poly(r,k,\frac{1}{p},\frac{1}{1-p})$, $rk\geq 1$,
and $H$ has no empty hyperedges;
\item[({\ref{item:strongest}}$^*$)] $\epsilon\geq 1-
f_{rk,1/r}^{-1}(1-p)-\poly(r,k,\frac{1}{p},\frac{1}{1-p})n^{-0.078}$,
provided $n\geq \poly(r,k,\frac{1}{p},\frac{1}{1-p})$, $rk\geq 1$,
and $H$ has minimum uniformity at least $3$.
\end{enumerate}
\end{corollary}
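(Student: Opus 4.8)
The plan is to deduce all three bounds from \cref{CR:all_lower_bounds} applied to the \emph{dual} hypergraph $H^*=(E,V)$, via \cref{PR:conf_duality}. First I would record the parameters of $H^*$: it has $rn$ vertices and $n$ hyperedges, hence sparsity $\frac{1}{r}$; by \cref{PR:degree-to-unif-ratio} its average uniformity equals the average degree of $H$, which is $rk$, and dually its average degree is $k$. Moreover a vertex of $H^*$ is a hyperedge $e$ of $H$ and its degree in $H^*$ is $|e|_H$, so $H^*$ has minimum vertex degree $\ge 1$ exactly when $H$ has no empty hyperedges, and minimum vertex degree $\ge 3$ exactly when $H$ has minimum uniformity $\ge 3$ (and $H^*$ is regular exactly when $H$ is uniform). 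By \cref{PR:conf_duality}, for every $\ee'>\epsilon$ the hypergraph $H^*$ is a $(1-p,\,1-\ee')$-confiner, i.e.\ an $(\bar\epsilon,\bar p)$-confiner with $\bar\epsilon=1-p$, $\bar p=1-\ee'$, on $\bar n=rn$ vertices, of sparsity $\bar r=\frac{1}{r}$ and average uniformity $\bar k=rk$ (note $\bar k=rk\ge 1$, which is why the standing hypothesis $rk\ge 1$ is imposed).

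Next I would substitute these parameters into the three parts of \cref{CR:all_lower_bounds} and solve for $\ee'$, letting $\ee'\to\epsilon^+$ at the end. For the first bound: part (i) of \cref{CR:all_lower_bounds} (equivalently \cref{cor:conf_lowerbound_main}) gives $1-p\ge(1-\ee')^{rk}-\frac{2rk}{rn}=(1-\ee')^{rk}-\frac{2k}{n}$, whence $1-\ee'\le\bigl((1-p)+\frac{2k}{n}\bigr)^{1/(rk)}$; since $n\ge\frac{4k}{p(1-p)}$ forces $(1-p)+\frac{2k}{n}<1$, \cref{LM:power-dist} with exponent $\frac{1}{rk}\le 1$ bounds this by $(1-p)^{1/(rk)}+\frac{2k/n}{e\,p(1-p)}$, and rearranging gives the asserted inequality. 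For the second and third bounds: parts (iii) and (iv) of \cref{CR:all_lower_bounds} give $1-p\ge f_{rk,3/r}(1-\ee')-\poly(r,k,\frac1p,\frac1{1-p})\,n^{-0.078}$, respectively the same with $f_{rk,1/r}$; since $f_{k,r}$ is strictly increasing with $f_{k,r}'\ge\frac{1}{rk}$ on $[0,1]$ by \cref{PR:derivative}, the inverses $f_{rk,3/r}^{-1}$ and $f_{rk,1/r}^{-1}$ are well defined and Lipschitz with constants $\le 3k$ and $\le k$, so applying the inverse and folding these factors into the polynomial error term yields $\ee'\ge 1-f_{rk,3/r}^{-1}(1-p)-\poly(r,k,\frac1p,\frac1{1-p})\,n^{-0.078}$, and $\ee'\to\epsilon^+$ finishes.

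The delicate point is checking that the hypotheses of \cref{CR:all_lower_bounds} really are satisfied by $H^*$ for $\ee'$ in a right-neighborhood of $\epsilon$: those hypotheses ask $\bar n$ to dominate $\poly(\bar k,\frac1{\bar p},\frac1{1-\bar p})$, and here $\frac1{\bar p}=\frac1{1-\ee'}$ blows up as $\ee'\to 1$ while $\frac1{1-\bar p}=\frac1{\ee'}$ blows up as $\ee'\to 0$ (and part (i) additionally needs $\bar n\bar p(1-\bar p)\ge 1$ in order to invoke \cref{thm:binhyp}). I would dispose of this by the usual dichotomy: if the claimed lower bound on $\epsilon$ is $\le 0$ there is nothing to prove; otherwise a crude one-sided estimate --- e.g.\ $1-(1-p)^{1/(rk)}>\frac{p}{rk}$ via \cref{LM:power-ineq}, or the direct inequality $\epsilon\ge p^k-\frac{2k}{n}$ from \cref{cor:conf_lowerbound_main} --- pins $\epsilon$, hence any $\ee'$ sufficiently close to it, away from $0$ and $1$ by amounts that are $\poly(r,k,\frac1p,\frac1{1-p})$, so that $\frac1{\bar p}$ and $\frac1{1-\bar p}$ are themselves $\poly(r,k,\frac1p,\frac1{1-p})$ and get absorbed into the stated polynomial conditions on $n$. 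After this, the remaining work --- the rescaling identities for $f_{rk,c/r}$ and the explicit tracking of constants --- is routine.
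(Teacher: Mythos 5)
Your overall strategy matches the paper's: dualize via \cref{PR:conf_duality}, read off the dual parameters using \cref{PR:degree-to-unif-ratio} (sparsity $\frac{1}{r}$, average uniformity $rk$), identify the dual of the degree/uniformity conditions, apply the corresponding part of \cref{CR:all_lower_bounds} to $H^*$, and unwind via \cref{LM:power-dist}(ii) (for (\ref{item:simple}$^*$)) or the Lipschitz estimate on $f^{-1}$ from \cref{PR:derivative} and \cref{LM:diff_trick} (for (\ref{item:stronger}$^*$), (\ref{item:strongest}$^*$)). Your parameter arithmetic and Lipschitz constants $3k$, $k$ are all correct.

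However, the way you propose to handle the "delicate point" has a genuine gap, and the paper uses a different device. You want to take $\ee'$ in a right-neighborhood of $\epsilon$ and then argue that $\epsilon$ (hence $\ee'$) is pinned away from $0$ and $1$ by $\poly(r,k,\frac1p,\frac1{1-p})$ amounts. But neither of the pinning estimates you cite actually does this. The inequality $1-(1-p)^{1/(rk)}>\frac{p}{rk}$ is a lower bound on the \emph{target} quantity, not on $\epsilon$; inside a contradiction argument you cannot assume $\epsilon$ exceeds it. And the unconditional bound $\epsilon\ge p^k-\frac{2k}{n}$ from \cref{cor:conf_lowerbound_main} can be \emph{negative} under the standing hypotheses (e.g. already with $n\approx \frac{4k}{p(1-p)}$ and $k$ moderately large, $\frac{2k}{n}$ is not dominated by $p^k$); making it positive would need $n\gtrsim k/p^k$, which is exponential in $k$ and cannot be absorbed into a $\poly$ condition. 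So if $\epsilon$ happens to be very small, every $\ee'$ near $\epsilon$ sends $\frac{1}{\ee'}=\frac{1}{1-\bar p}$ out of the admissible range and \cref{CR:all_lower_bounds} cannot be invoked.

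The paper avoids this by \emph{not} taking $\ee'$ near $\epsilon$. It fixes an interval $[c_0,c_1]$ depending only on $r,k,p$ --- e.g.\ for (\ref{item:simple}$^*$), $c_0=1-(1-\frac{p}{2})^{1/(rk)}$ and $c_1=1-(1-p)^{1/(rk)}$ --- and, assuming for contradiction $\epsilon<\epsilon_0$ (where $\epsilon_0$ is the target bound and one checks $c_0<\epsilon_0\le c_1$), picks $\ee'\in(\epsilon,\epsilon_0)\cap[c_0,c_1]$, which is nonempty. Any such $\ee'$ satisfies $\ee'\ge c_0$ and $1-\ee'\ge 1-c_1$, both of which are bounded below by explicit $\poly(r,k,\frac1p,\frac1{1-p})^{-1}$ quantities (via \cref{LM:power-ineq} in part (\ref{item:simple}$^*$), and via the derivative bound $f'\ge\frac{1}{rk}$ together with \cref{LM:diff_trick} in parts (\ref{item:stronger}$^*$), (\ref{item:strongest}$^*$)), so the hypotheses of \cref{CR:all_lower_bounds} applied to $H^*$ are genuinely verifiable. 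Applying the corollary then gives $\ee'\ge\epsilon_0$, contradicting $\ee'<\epsilon_0$. In short: the fix is to choose $\ee'$ near the \emph{bound}, not near $\epsilon$, so that the prerequisites on $H^*$ hold a priori; your limit $\ee'\to\epsilon^+$ should be replaced by this choice.
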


Here, $f^{-1}_{k,r}(x)$ is the inverse function of $f_{k,r}:[0,1]\to [0,1]$.
As in \cref{CR:all_lower_bounds}, each item
of \cref{CR:all_lower_bounds_dual} asserts an asymptotically
stronger lower bound than the previous item, but also makes
stronger assumptions.
When $p$ is small, all three lower bounds of \cref{CR:all_lower_bounds_dual}
are very close to the lower bound $f_{k,r}(p)$ in
\cref{CR:all_lower_bounds}(\ref{item:strongest}). Indeed, by
checking the derivatives at $p=0$, one finds that all of these bounds
become $\epsilon \geq \frac{p}{rk}(1+o(1))$ as $p\to 0$.
Note, however, that (\ref{item:simple}*) holds under very mild
assumptions, and its proof is very simple compared to (\ref{item:strongest}).
By contrast, when $p$ approches $1$, the bounds in (\ref{item:simple}*)
and (\ref{item:stronger}*) are weaker than the bound
$\epsilon\geq p^k-o(1)$ of \cref{CR:all_lower_bounds}(\ref{item:simple});
this can be seen by  checking that the slopes
the graphs of $1-(1-x)^{\frac{1}{rk}}$, $1-f^{-1}_{rk,3/r}(1-x)$
and $x^k$ at the point $(1,1)$ are $\infty$, $3k$ and $k$, respectively.
Finally, regarding the two asymptotically strongest bounds
(\ref{item:strongest}) and (\ref{item:strongest}*),
experiments suggest that there is $a=a(r,k)\in (0,1)$
such that (\ref{item:strongest}*) is stronger when $0< p<a$ while
(\ref{item:strongest}) is stronger when $a<p<1$. It also seems
that using (\ref{item:strongest}*) instead of (\ref{item:strongest})
does not lead to notable improvement in \cref{TH:lower-bound}.

\begin{proof}[Proof of \cref{CR:all_lower_bounds_dual}]
Throughout,  $H$ is assumed to be an $(\epsilon,p)$-confiner.

({\ref{item:simple}}$^*$)
Set $c_1=1-(1-p)^{\frac{1}{rk}}$
and $c_0=1-(1-\frac{p}{2})^{\frac{1}{rk}}$.
Suppose that $n$ is large enough so that
$\epsilon_0:=1-(1-p+\frac{2k}{n})^{\frac{1}{rk}}> c_0$
and $r n\geq
\max\{\frac{1}{\epsilon'(1-\epsilon')}\,|\,\epsilon'\in [c_0,c_1]\}$.
Under this assumption, we claim
that $\epsilon\geq \epsilon_0$. Indeed, suppose it is not
the case. Then there exists $\epsilon'\in (\epsilon,\epsilon_0)\cap
[c_0,c_1]$ (here we need $c_0<\epsilon_0$). By
\cref{PR:conf_duality}, $H^*$ is a $(1-p,1-\epsilon')$-confiner.
Note also that $H^*$
has $rn$ vertices, $n$ edges, average uniformity $rk$
(this is the average degree of $H$),
and sparsity $\frac{1}{r}$. Moreover, our assumption on $n$
implies that $rn\geq\frac{1}{\epsilon'(1-\epsilon')}$.
Thus, by  \cref{CR:all_lower_bounds}(\ref{item:simple}),
$1-p\geq (1-\epsilon')^{rk}-\frac{2rk}{rn}$.
This rearranges to $\epsilon'\geq
1-(1-p+\frac{2k}{n})^{\frac{1}{rk}}=\epsilon_0$, and contradicts our
choice of $\epsilon'$.
We conclude that it must be the case that $\epsilon\geq \epsilon_0$.

Now, in order to prove ({\ref{item:simple}}$^*$),
it is enough to check
that $\epsilon_0\geq 1-(1-p)^{\frac{1}{rk}}-\frac{2k}{e p(1-p) n}$
and that our assumptions $\epsilon_0>c_0$ and  $r n\geq
\max\{\frac{1}{\epsilon'(1-\epsilon')}\,|\,\epsilon'\in [c_0,c_1]\}$
are satisfied when $n\geq \frac{4k}{p(1-p)}$.
The inequality $\epsilon_0\geq 1-(1-p)^{\frac{1}{rk}}-\frac{2k}{e
p(1-p) n}$ is equivalent to
$(1-p+\frac{2k}{n})^{\frac{1}{rk}}-(1-p)^{\frac{1}{rk}}
\leq \frac{2k}{e p(1-p) n}$,  which
is a consequence of \cref{LM:power-dist}(ii) and our assumption $rk\geq 1$.
The condition $\epsilon_0>c_0$
rearranges to
$(1-p+\frac{2k}{n})^{\frac{1}{rk}}<(1-\frac{p}{2})^{\frac{1}{rk}}$,
and holds because $n\geq\frac{4k}{p(1-p)}>\frac{4k}{p}$. Finally, if
$\epsilon'\in [c_0,c_1]$,
then
\[
\frac{1}{\epsilon'(1-\epsilon')}
\leq \frac{1}{c_0(1-c_1)}
=\frac{1}{(1-(1-\frac{p}{2})^{\frac{1}{rk}})(1-p)^{\frac{1}{rk}}}
\leq \frac{1}{\frac{p}{2}\cdot \frac{1}{rk}\cdot (1-p)}=
r\cdot \frac{2k}{p(1-p)}\leq rn,
\]
where here, we used \cref{LM:power-ineq} and the fact that $rk\geq 1$.

({\ref{item:stronger}}$^*$)
Let $N(k,r,p)$ and
$C(k,p)$ denote the terms $\poly(k,r,\frac{1}{p},\frac{1}{1-p})$
and $\poly(k,\frac{1}{p},\frac{1}{1-p})$
appearing in the statement of
\cref{CR:all_lower_bounds}(\ref{item:stronger}), and put $\sigma=0.078$.
Then, according to that corollary, $\epsilon\geq f_{k,r}(p)-C(k,p)n^{-\sigma}$
when $n\geq N(k,r,p)$.

Put $c_1=1-f^{-1}_{kr,3/r}(1-p)$,
$c_0=1-f^{-1}_{kr,3/r}(1-\frac{p}{2})$, and define
\begin{align*}
& M(k,r,p)=\max_{\epsilon'\in [c_0,c_1]}
N(kr,{\textstyle\frac{1}{r}},1-\epsilon'),
& &
D(k,r,p)=\max_{\epsilon'\in [c_0,c_1]} C(kr,1-\epsilon'),
\\
&
\epsilon_0= 1-f^{-1}_{kr,3/r}(1-p+D(k,r,p)n^{-\sigma}).
\end{align*}
Suppose first that  $n$ is large enough
so that  $n\geq \frac{1}{r}M(k,r,p)$ and $\epsilon_0>c_0$.
We claim that
$\epsilon\geq \epsilon_0$.
Indeed, for the sake of contradiction, suppose that
$\epsilon< \epsilon_0$. Then there exists $\epsilon'\in
(\epsilon,\epsilon_0)\cap [c_0,c_1]$ (here we need $\epsilon_0>c_0$).
By \cref{PR:conf_duality}, $H^*$ is an $(1-p,1-\epsilon')$-confiner,
and by our assumption on $n$, we have $rn\geq M(k,r,p)\geq
N(kr,\frac{1}{r}, 1-\epsilon')$.
Thus, by \cref{CR:all_lower_bounds}(\ref{item:stronger}) (applied to $H^*$),
we have
\[1-p\geq f_{kr,3/r}(1-\epsilon')-C(k,1-\epsilon')n^{-\sigma}\geq
f_{kr,3/r}(1-\epsilon')-D(k,r,p)n^{-\sigma}.\]
This rearranges to $\epsilon'\geq
1-f_{kr,3/r}^{-1}(1-p+D(k,r,p)n^{-\sigma})=\epsilon_0 $,
which contradicts the choice of $\epsilon'$. We conclude that our
claim must be correct.

Now, in order to prove ({\ref{item:stronger}}$^*$), it is enough
to show that
when $n\geq\poly(k,r,\frac{1}{p},\frac{1}{1-p})$,
we have,
$ \epsilon_0\geq
1-f^{-1}_{kr,3/r}(1-p)-\poly(k,r,\frac{1}{p},\frac{1}{1-p})n^{-\sigma}$,
$\epsilon_0>c_0$ and
$n\geq \frac{1}{r}M(k,r,p)$.
Observe first that by \cref{PR:derivative},
on $[0,1]$, the derivative of $f_{kr,3/r}(x)$ takes values in
$[\frac{1}{3k},kr]$,
so the derivative of $f_{kr,3/r}^{-1}(x)$ takes values in $[\frac{1}{kr},3k]$.
Thus, by  \cref{LM:diff_trick},
\[
\epsilon_0=1-f^{-1}_{kr,3/r}(1-p+D(k,r,p)n^{-\sigma})\geq 1-
f^{-1}_{kr,3/r}(1-p)-3k D(k,r,p)n^{-\sigma}.
\]
Furthermore, it is straightforward
to see that $\epsilon_0>c_0$ if and only if $n^\sigma\geq
\frac{2}{p}\cdot D(k,r,p)$. It is therefore enough to show that
$\frac{1}{r}M(k,r,p)$ and $D(k,r,p)$
are in the class $\poly(k,r,\frac{1}{p},\frac{1}{1-p}) $.
By definition, both functions are bounded by
$\poly(k,r,\frac{1}{r},\frac{1}{\epsilon'},\frac{1}{1-\epsilon'})$
for some $\epsilon'\in [c_0,c_1]$, so it is enough
to show that $\frac{1}{r}\leq \poly(k)$
and $\frac{1}{\epsilon'}, \frac{1}{1-\epsilon'}\leq
\poly(r,k,\frac{1}{p},\frac{1}{1-p})$.
The first statement follows from our assumption $rk\geq 1$,
which implies $\frac{1}{r}\leq k$. For the second statement,
since $\frac{1}{\epsilon'}\leq\frac{1}{c_0}$ and
$\frac{1}{1-\epsilon'}\leq \frac{1}{1-c_1}$,
it is enough to show that $\frac{1}{c_0},\frac{1}{1-c_1}\leq
\poly(r,k,\frac{1}{p},\frac{1}{1-p})$.
Indeed, by \cref{LM:diff_trick} and our ealier conclusion
on the derivative of $f^{-1}_{kr,3/r}$,
\[
c_0=f^{-1}_{kr,3/r}(1)-f^{-1}_{kr,3/r}(1-{\textstyle\frac{p}{2}})
\geq \frac{1}{rk}\cdot \frac{p}{2},
\]
so $\frac{1}{c_0}\leq \frac{2rk}{p}$, and similarly,
\[
1-c_1=f^{-1}_{kr,3/r}(1-p)=f^{-1}_{kr,3/r}(1-p)-f_{kr,3/r}(0)\geq
\frac{1-p}{rk},
\]
so $\frac{1}{1-c_1}\leq \frac{kr}{1-p}$. This is exactly what we want.

%

({\ref{item:strongest}}$^*$) This is completely analogous to proof of
({\ref{item:stronger}}$^*$).
%
%
\end{proof}

\begin{remark}
We conjecture that \cref{CR:all_lower_bounds}(\ref{item:strongest})
holds for general hypergraphs, possibly with different constants.
\cref{CR:all_lower_bounds_dual}(\ref{item:simple}*),
which gives a very similar bound for small $p$ and holds under very
mild assumptions, may serve as evidence for that.
%
\end{remark}

\begin{remark}
The proof of \cref{TH:lower-bound-strong}(ii) suggests that
the best confiners are regular and have the property
that every two edges meet at no more than one vertex;
see \cref{RM:not-simplicial} and \cref{RM:regular_is_best} for details.
In fact, for non-regular hypergraphs of minimum degree
at least $3$, the lower bound of \cref{TH:lower-bound-strong}(ii) can
be slightly improved; see  \cref{RM:regular_is_best}.
\end{remark}

\begin{remark}
(i)
There are more options for choosing the parameters
$a,b,u,v,C$ of
\cref{TH:lower-bound-strong}
than is stated in the theorem.
The possible values that these parameters
can take are described \cref{CR:lower_bound_indep_of_r},
which is a special case of the more general
\cref{TH:lower-bound-strong-detailed}.
The values of $a,b,u,v,C$ that we specified
as the second option in \cref{TH:lower-bound-strong}
are meant to minimize the value of the parameter
$b$ (the exponent of $r$ or $\ee^{-1}$)
in \cref{TH:lower-bound} and \cref{TH:lower-bound-epsilon-view}.
This in turn increases the constant $c$ from \cref{CR:random_algs},
because $c=\frac{1}{b}$.

(ii)
When $H$ is regular and uniform, or even when the maximal degree and
maximal uniformity of $H$
are not too large,
one can lower the smallest value
that $n=|V|$ can take in \cref{TH:lower-bound-strong}(ii)
and also the error term
$63200 k^3\delta^{-3}n^{-\sigma}$.
See \cref{TH:pre-lower-bound-i} and
\cref{CR:lower_bound_for_reg_unif}.

(iii) In all the results of this section and \cref{sec:lower-bound-proof},
we have made no attempt in optimizing constants,
save for minimizing the exponent of $r$ (resp.\ $\ee^{-1}$)
in \cref{TH:lower-bound} (resp.\ \cref{TH:lower-bound-epsilon-view}).
The constants appearing
in  these results can very
likely be improved.
\end{remark}

\begin{remark}
The bound $f_{k,r}(x)-x^k\geq \min\{\frac{1}{k},\frac{k-1}{2}\}\cdot
\frac{x(1-x)^2}{r}$
(\cref{TH:distance-to-delta-to-k}) that is used
in deriving \cref{TH:lower-bound}
from \cref{TH:lower-bound-strong} is
likely optimal in order of magnitude, except possibly
when $x$ approaches $1$, where it is \emph{nearly} optimal.
Indeed, setting $D_{k,r}(x)=f_{k,r}(x)-x^k$,
experiments suggest that  the limits
$\lim_{r\to\infty} \frac{D_{k,r}(x)}{r}$,
$\lim_{k\to\infty} \frac{D_{k,r}(x)}{k}$
and
$\lim_{k\to1^-} \frac{D_{k,r}(x)}{1/(k-1)}$
exist and are positive for all $x\in (0,1)$.
Furthermore,  $D'_{k,r}(0)=\frac{1}{rk}$,
which means that $D_{k,r}(x)\approx \Theta_{k,r}(x)$
as $x$ approaches $0$,
and  $D'_{k,r}(1)=0$, which means
that $D'_{k,r}(x)=o (1-x)$ when $x$ approaches $1$ from the left.
Experiments also suggest that when $x$ approaches $1$ from the left,
$D_{k,r}(x)=\Omega_{k,r}((1-x)^{2})$
and $D_{k,r}(x)=o((1-x)^{2-\ee})$
for all $\ee>0$.
\end{remark}

\begin{remark}\label{RM:k_less_than_1}
(i)
There is no hope in proving a result similar to
\cref{TH:lower-bound} for $k=1$. That is,
in a hypergraph $H=(V,E)$ with average uniformity $1$ and sparsity $r$,
the maximum value of $\frac{|E(A)|}{|E|}$
as $A$ ranges over the $\delta$-dense subsets of $V$
is not always bounded away from $\delta$.
This can be seen, for example, by fixing some $r\in\NN$ and
considering the hypergraph
$H=(V,E)$ with $V=\{1,\dots,n\}$ and with hyperedges
$\{1\},\dots,\{n\}$ each repeated
$r$ times. Indeed, for this hypergraph, any $A\subseteq V$ of size
$\lfloor\delta n\rfloor$ satisfies $\frac{|E(A)|}{|E|}\leq \delta$.

(ii) A variant of \cref{TH:lower-bound} is also true for $0<k<1$,
although this case
is not so interesting because $H$ must have edges having no vertices.
This case can
be reduced to the case $k\geq 1$ by removing the edges with no
vertices. We omit the details.
\end{remark}

\section{Proof of Lower Bounds on Sparse Confiners}
\label{sec:lower-bound-proof}

In this section we prove \cref{TH:lower-bound-strong}(ii).
First, in \cref{subsec:prob-result}, we prove a general probabilistic
result that is needed for the proof,
but may be of independent interest.
Then, in \cref{subsec:lower-bound-with-assumpt}, we prove a variant
of \cref{TH:lower-bound-strong}(ii) under
the assumption that
the maximum degree and maximum uniformity are not too large with respect
to the average degree and average uniformity, respectively.
These assumptions are then removed at the expense
of changing the constants in \cref{subsec:removing-vrt-edges}. During
the proof, we use a few technical analytic results
whose proofs are postponed to \cref{sec:analytic}.

\subsection{A Probabilistic Result}
\label{subsec:prob-result}

\begin{proposition}\label{PR:meeting-probability}
Let $X$ be a finite nonempty set
and let $A_1,\dots,A_t,A'_1,\dots,A'_t$ be nonempty subsets of $X$
such that $|A_i|=|A'_i|$ for all $i \in [t]$.
Assume further that $A'_1,\dots,A'_t$ are pairwise disjoint.
Let $\rv B$ be a subset of $X$ chosen at random in one of the following ways:
\begin{enumerate}
\item[(1)] $\rv B$ is chosen uniformly at random among all
$m$-element subsets of $X$ ($0\leq m\leq |X|$);
\item[(2)] every $x\in X$ is included in $\rv B$ with probability $p$
independently of the other elements of $X$ ($0\leq p\leq 1$).
\end{enumerate}
Then, in both cases (1) and (2),
\[
\Pp\sqbr*{\text{$\rv B$ \emph{meets} $A_1,\dots,A_t$}}\geq
\Pp\sqbr*{\text{$\rv B$ \emph{meets} $A'_1,\dots,A'_t$}}.
\]
\end{proposition}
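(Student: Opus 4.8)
The plan is to handle both sampling models simultaneously, using the only structural feature they share: in either case the law of $\rv B$ is invariant under permutations of $X$, and every single vertex lies in $\rv B$ with the same probability (namely $m/|X|$, resp.\ $p$). A first consequence is that for pairwise disjoint sets $C_1,\dots,C_t$ the probability $\Pp[\rv B\text{ meets }C_1,\dots,C_t]$ depends only on the multiset of sizes $\{|C_1|,\dots,|C_t|\}$, since a permutation of $X$ carries any disjoint configuration to any other with the same sizes. Hence it suffices to produce \emph{some} pairwise disjoint $A^{\star}_1,\dots,A^{\star}_t$ with $|A^{\star}_i|=|A_i|$ and
\[
\Pp[\rv B\text{ meets }A_1,\dots,A_t]\ \ge\ \Pp[\rv B\text{ meets }A^{\star}_1,\dots,A^{\star}_t],
\]
and I will reach such a configuration from $(A_1,\dots,A_t)$ by repeatedly ``uncrossing.'' (The degenerate cases $m=0$, resp.\ $p=0$, are trivial since all the $A_i,A'_i$ are nonempty; so assume $m\ge 1$, resp.\ $p>0$, which legitimizes the conditionings below. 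In model (2) one could alternatively deduce the result from Harris's inequality — the events $\{\rv B\cap A_i\ne\emptyset\}$ are increasing, hence positively correlated, and disjoint $A'_i$ give independence — but the uncrossing argument covers both models at once.)

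The single uncrossing step is as follows. If $A_i\cap A_j\ne\emptyset$ for some $i\ne j$, choose $x\in A_i\cap A_j$ and a ``fresh'' vertex $y\in X\setminus\bigcup_k A_k$, and replace $A_i$ by $\widehat A_i:=(A_i\setminus\{x\})\cup\{y\}$. I claim this does not increase the meeting probability. Put $U:=A_i\setminus\{x\}$, $M:=\{\rv B\cap U\ne\emptyset\}$, and $E:=\{\rv B\text{ meets }A_k\text{ for all }k\ne i\}$. Since $\{\rv B\cap A_i\ne\emptyset\}=M\cup\{x\in\rv B\}$ and $\{\rv B\cap\widehat A_i\ne\emptyset\}=M\cup\{y\in\rv B\}$, expanding the two probabilities over $M$ versus $M^{c}$ and using $\Pp[x\in\rv B]=\Pp[y\in\rv B]$ gives the identity
\[
\Pp[\rv B\text{ meets }A_1,\dots,A_t]-\Pp[\rv B\text{ meets }A_1,\dots,\widehat A_i,\dots,A_t]=\Pp[x\in\rv B]\bigl(\Pp[E\cap M^{c}\mid x\in\rv B]-\Pp[E\cap M^{c}\mid y\in\rv B]\bigr),
\]
so everything reduces to showing that conditioning on $x\in\rv B$ makes $E\cap M^{c}$ at least as likely as conditioning on $y\in\rv B$ does.

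Here exchangeability enters a second time. Conditioning on $x\in\rv B$ leaves $\rv B\setminus\{x\}$ a sample of the \emph{same type} on $X\setminus\{x\}$ (a uniform $(m-1)$-subset, resp.\ an independent $p$-sample), and conditioning on $y\in\rv B$ does the analogous thing on $X\setminus\{y\}$, a set of the same size; transport the second setup onto the first by the transposition $\tau=(x\ y)$, which preserves the relevant law. Now $y$ lies in none of $U,A_1,\dots,\widehat A_i,\dots,A_t$ and fixes $U$ under $\tau$, whereas $x$ may lie in some $A_k$ with $k\ne i$. Consequently, after conditioning on $x\in\rv B$ the event $E$ becomes exactly ``$\rv B\setminus\{x\}$ meets $A_k$ for all $k\ne i$ with $x\notin A_k$,'' and $M^{c}$ is unchanged; while the $\tau$-transport of the $y$-conditioning turns $E$ into the \emph{same} constraints together with the extra constraints ``meets $(A_k\setminus\{x\})\cup\{y\}$'' for the $k$ with $x\in A_k$. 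An event with strictly more constraints has no larger probability, so $\Pp[E\cap M^{c}\mid x\in\rv B]\ge\Pp[E\cap M^{c}\mid y\in\rv B]$, which proves the step.

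It remains to run the process: each step strictly decreases $\sum_{k<l}|A_k\cap A_l|$ (it deletes $x$ from $A_i\cap A_j$ and creates no new incidences, $y$ being fresh), so it terminates; and whenever the configuration is not yet pairwise disjoint a fresh $y$ exists, because $\sum_k|A_k|=\sum_k|A'_k|\le|X|$ forces $\bigl|\bigcup_k A_k\bigr|<|X|$ unless the $A_k$ are already disjoint. At termination we have a pairwise disjoint configuration with the same sizes and no larger meeting probability, which by the first paragraph equals $\Pp[\rv B\text{ meets }A'_1,\dots,A'_t]$, finishing the proof. I expect the one genuinely delicate point to be the bookkeeping in the third paragraph — making the ``conditioning on $x$ drops some constraints while the $\tau$-image of conditioning on $y$ strengthens them'' comparison airtight via the explicit relabeling; the rest is routine.
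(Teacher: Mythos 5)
Your proof is correct, and it takes a genuinely different route from the paper's. The paper first proves a counting lemma (\cref{LM:meeting-count}) by constructing an explicit injection from the $m$-subsets meeting $A_1,\dots,A_{t-1},A'_t$ to those meeting $A_1,\dots,A_t$, assuming $A'_t$ is disjoint from $A_1,\dots,A_{t-1}$: when $B$ misses $A_t$, it swaps elements of $A'_t\setminus A_t$ for elements of $A_t\setminus A'_t$ via a fixed bijection. It then proves the proposition by a ``batch'' exchange, replacing $A_t$ by $A'_t$, then $A_{t-1}$ by $A'_{t-1}$, etc., after first using permutations of $X$ to place the $A'_i$ in general position; case (2) is finally deduced from case (1) by conditioning on $\lvert\rv B\rvert$. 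Your argument instead does an incremental ``uncrossing,'' moving a single overlapping element at a time, and treats both sampling models uniformly from the outset by replacing the injection with a conditioning/exchangeability computation: you show that deleting an overlap vertex $x$ from $A_i$ and substituting a fresh $y$ cannot increase the meeting probability because, after transporting the $y$-conditioned law by the transposition $\tau=(x\ y)$, the event $E\cap M^c$ acquires \emph{more} constraints than in the $x$-conditioned law (precisely the constraints from the $A_k$ containing $x$, which the $x$-conditioning renders trivial). Both proofs fundamentally exploit the permutation invariance of the sampling distribution; your version buys parallel handling of models (1) and (2) and a conceptually tidy constraint-comparison, while the paper's buys a self-contained combinatorial injection and explicit counting. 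One thing worth making explicit in a write-up: your termination and ``fresh-$y$-exists'' argument rely on the step preserving all set sizes, so $\sum_k\lvert A_k\rvert=\sum_k\lvert A'_k\rvert\le\lvert X\rvert$ remains an invariant throughout the process; you state it, and it is correct, but it deserves a sentence rather than a clause.
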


Here, by saying that at set $B$ meets $A_1,\dots,A_t$, we mean that
$B\cap A_i\neq \emptyset$
for all $i$. Informally, the proposition says that having overlaps
between the sets
$A_1,\dots,A_t$ can only increase the probability
that $\rv B$ meets them. This is useful because
computing $\Pp\sqbr*{\text{$\rv B$ meets $A_1,\dots,A_t$}}$ is easier
when $A_1,\dots,A_t$
are pairwise disjoint, especially in case (2).

We first prove the following lemma.

\begin{lemma}\label{LM:meeting-count}
Let $X,A_1,\dots,A_t $ be as in \cref{PR:meeting-probability}.
Fix $m\in\{0,1,\dots,|X|\}$ and let
$M(A_1,\dots,A_t)$ denote the collection of subsets $B\subseteq X$
having $m$ elements and meeting $A_1,\dots,A_t$.
Let $A'_t$ be another subset of $X$ with $|A'_t|=|A_t|$
that does not meet $A_1,\dots,A_{t-1}$.
Then
\[
\Abs*{M(A_1,\dots,A_t)} \geq \Abs*{M(A_1,\dots,A_{t-1},A'_t)}.
\]
\end{lemma}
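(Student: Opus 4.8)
The plan is to complement and then build an explicit injection. Write $Y := A_1 \cup \dots \cup A_{t-1}$; the hypothesis that $A'_t$ meets none of $A_1,\dots,A_{t-1}$ says precisely that $A'_t \cap Y = \emptyset$. For a subset $C \subseteq X$, let $\mathcal{M}(C)$ be the collection of $m$-element subsets $B \subseteq X$ that meet every $A_i$ with $i < t$ but satisfy $B \cap C = \emptyset$. The number $N$ of $m$-subsets of $X$ meeting all of $A_1,\dots,A_{t-1}$ is independent of $C$, and splitting such subsets according to whether they also meet $C$ gives $|M(A_1,\dots,A_{t-1},C)| = N - |\mathcal{M}(C)|$. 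Hence the lemma is equivalent to the inequality $|\mathcal{M}(A_t)| \le |\mathcal{M}(A'_t)|$; informally this is where the disjointness hypothesis is used — avoiding a set disjoint from the constraints $A_1,\dots,A_{t-1}$ is a weaker requirement than avoiding one that overlaps them, so more $m$-subsets survive.

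To prove $|\mathcal{M}(A_t)| \le |\mathcal{M}(A'_t)|$ I would construct an injection $\psi \colon \mathcal{M}(A_t) \to \mathcal{M}(A'_t)$ directly, regarding $\mathcal{M}(C)$ as the family of $m$-subsets of $X \setminus C$ hitting each $A_i$ ($i<t$). Set $W_1 := X \setminus A_t$ and $W_2 := X \setminus A'_t$. Since $|A_t| = |A'_t|$ we have $|W_1| = |W_2|$, hence $|W_1 \setminus W_2| = |W_2 \setminus W_1|$; fix any bijection $\rho \colon W_1 \setminus W_2 \to W_2 \setminus W_1$. The key observation is that $W_1 \setminus W_2 = A'_t \setminus A_t \subseteq A'_t$ is disjoint from $Y$, hence from every $A_i$ with $i < t$. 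For $B \in \mathcal{M}(A_t)$ put $\psi(B) := \bigl(B \setminus (W_1 \setminus W_2)\bigr) \cup \rho\bigl(B \cap (W_1 \setminus W_2)\bigr)$. Then $\psi(B) \subseteq W_2$; $|\psi(B)| = |B| = m$, since $\rho$ is injective and the moved elements land in the block $W_2 \setminus W_1$ which is disjoint from $B \setminus (W_1\setminus W_2) \subseteq W_1 \cap W_2$; and $\psi(B)$ still meets each $A_i$ with $i < t$, because $A_i$ disjoint from $W_1 \setminus W_2$ forces $\emptyset \neq B \cap A_i \subseteq B \setminus (W_1 \setminus W_2) \subseteq \psi(B)$. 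So $\psi$ lands in $\mathcal{M}(A'_t)$, and it is injective: $\psi(B) \cap (W_2 \setminus W_1) = \rho(B \cap (W_1 \setminus W_2))$ while $\psi(B) \setminus (W_2 \setminus W_1) = B \setminus (W_1 \setminus W_2)$, so $B$ is recovered from $\psi(B)$ by applying $\rho^{-1}$ to the first piece.

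The step I expect to be the genuine obstacle is not this injection but recognizing that the seemingly natural alternative fails: if one conditions on the trace $S := B \cap Y$ and compares, for each admissible $S$, the number of completions meeting $A_t$ with the number meeting $A'_t$, the comparison can go the wrong way for individual $S$ (for instance when $S$ misses $A_t \cap Y$ while $A_t \setminus Y$ is small or empty), and only the sum over all $S$ is monotone. The complement-and-inject route above circumvents this cleanly: it needs no case analysis on how $A_t$ meets $Y$, no restriction on sizes beyond $|A_t| = |A'_t|$, and not even the symmetry of $X \setminus Y$ under permutations fixing $Y$ pointwise. Iterating the lemma — replacing $A_t$, then $A_{t-1}$, and so on by disjoint sets of the same sizes — then yields Proposition~\ref{PR:meeting-probability} in case (1), and case (2) follows by conditioning on $|\rv B|$.
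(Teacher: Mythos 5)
Your proposal is correct. The core of the argument is the same as the paper's: a bijection $\rho$ (the paper calls it $\sigma$) between $A'_t\setminus A_t$ and $A_t\setminus A'_t$, and a ``swap along $\rho$'' injection whose formula $(B\setminus(A'_t\setminus A_t))\cup\rho(B\cap(A'_t\setminus A_t))$ coincides with the nontrivial case of the paper's map $\Psi$. Where you genuinely depart is in the packaging. The paper builds $\Psi$ directly from $M(A_1,\dots,A_{t-1},A'_t)$ into $M(A_1,\dots,A_t)$ and therefore needs a case split on whether $B$ already meets $A_t$ (identity in one case, swap in the other), plus an auxiliary observation that $\Psi(B)\cap A'_t\neq\emptyset$ iff $B\cap A_t\neq\emptyset$ to verify injectivity across the two cases. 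You first complement inside the family $N$ of $m$-subsets meeting $A_1,\dots,A_{t-1}$, rewriting the target inequality as $|\mathcal M(A_t)|\le|\mathcal M(A'_t)|$, and on that side every $B\in\mathcal M(A_t)$ automatically avoids $A_t$ — so a single uniform swap injection works, and injectivity is immediate by reading off the two disjoint pieces $\psi(B)\cap(A_t\setminus A'_t)$ and $\psi(B)\setminus(A_t\setminus A'_t)$. The cost is the small preliminary identity $|M(A_1,\dots,A_{t-1},C)|=N-|\mathcal M(C)|$; the gain is eliminating the case distinction and the cross-case injectivity argument. Your closing remark about how this feeds into Proposition~\ref{PR:meeting-probability} is also accurate.
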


\begin{proof}
We   prove the lemma by
constructing an injection $\Psi:M(A_1,\dots,A_{t-1},A'_t)\to M(A_1,\dots,A_t)$.
Since $|A_t|=|A'_t|$, we have $|A_t-A'_t|=|A'_t-A_t|$.
Choose a bijection $\sigma:A'_t-A_t\to A_t-A'_t$
and define $\Psi:M(A_1,\dots,A_{t-1},A'_t)\to M(A_1,\dots,A_t)$
by
\[
\Psi(B)=\left\{
\begin{array}{ll}
    B & B\cap A_t\neq\emptyset
    \\
    (B-(A'_t-A_t))\cup \sigma(B\cap (A'_t-A_t))  & B\cap A_t=\emptyset .
\end{array}
\right.
\]
Let us check first that $\Psi(B)$ is indeed in $M(A_1,\dots,A_t)$.
This is clear if $B$ meets $A_t$. On the other hand, if $B\cap
A_t=\emptyset$, then
$B\cap (A'_t-A_t)\neq\emptyset$ because $B$ meets $A'_t$,
and so $\Psi(B)\cap A_t\supseteq \sigma (B\cap (A'_t-A_t))\neq\emptyset$.
In addition, since $A'_t$ is disjoint from $\bigcup_{i=1}^{t-1} A_i$
and $B$ meets
$A_1,\dots,A_{t-1}$,
the set $B-A'_t$ meets $A_1,\dots,A_{t-1}$, and therefore so is $\Psi(B)$.
Finally,   $|\Psi(B)|=|B|=m$ because $\Psi(B)$ is obtained
from $B$ by removing $|B\cap(A'_t-A_t)|$ elements and then adding the
same amount
of elements. We conclude that $B\in M(A_1,\dots,A_t)$ when $B\cap
A_t=\emptyset$.

Before proving that $\Psi$ is injective, we first show that for every
$B\in M(A_1,\dots,A_{t-1},A'_t)$,  we have $\Psi(B)\cap A'_t\neq\emptyset$
if and only if $B\cap A_t\neq \emptyset$. Indeed, if $B\cap
A_t\neq\emptyset$,
then $\Psi(B)=B$ and thus $\Psi(B)\cap A'_t=B\cap A'_t\neq\emptyset$.
On the other hand, if $B\cap A_t=\emptyset$, then
$B-(A'_t-A_t)=B-(A'_t\cup A_t)$
and thus
$\Psi(B)=(B-(A'_t\cup A_t))\cup\sigma((B\cap (A'_t-A_t)))
\subseteq (B-(A'_t\cup A_t))\cup (A_t-A'_t)\subseteq B-A'_t$,
and it follows that $\Psi(B)\cap A'_t=\emptyset$.

We now prove that $\Psi$ is injective.
Suppose that $B,C\in M(A_1,\dots,A_{t-1},A'_t)$
and $\Psi(B)=\Psi(C)$.
We need to show that $B=C$.
If $\Psi(B) $ meets $A'_t$, then by the previous paragraph
both $B$ and $C$ meet $A_t$, and thus $B=\Psi(B)=\Psi(C)=C$.
On the other hand, if $\Psi(B)\cap A'_t=\emptyset$,
then both $B$ and $C$ do not meet $A_t$.
This means that $\Psi(B)\cap A_t=\sigma(B\cap (A'_t-A_t))$,
so
\begin{align*}
(\Psi(B)-(A_t-A'_t))\cup \sigma^{-1}(\Psi(B)\cap A_t)
&=
(B-(A'_t-A_t))\cup \sigma^{-1}(\sigma(B\cap (A'_t-A_t)))
\\
&=
(B-(A'_t-A_t))\cup  (B\cap (A'_t-A_t))
\\
&=B.
\end{align*}
The same holds if we replace $B$ with $C$, and it follows that
\[
B=
(\Psi(B)-(A_t-A'_t))\cup \sigma^{-1}(\Psi(B)\cap A_t)
=
(\Psi(C)-(A_t-A'_t))\cup \sigma^{-1}(\Psi(C)\cap A_t)=C,
\]
as required.
This proves that $\Psi$ is injective and completes the proof.
\end{proof}

\begin{proof}[Proof of \cref{PR:meeting-probability}]
Suppose first that $\rv B$ is chosen as in (1).
Then, in the notation of \cref{LM:meeting-count}, it is enough
to prove that
\begin{equation*}
|M(A_1,\dots,A_t)|\geq |M(A'_1,\dots,A'_t)|.
\end{equation*}

To that end, observe first that we may assume that $A'_t$ does not meet
$A_1,\dots,A_{t-1}$. Indeed, since $A'_1,\dots,A'_t$
are pairwise-disjoint, we have
$\sum_{i=1}^{t-1}|A_i|=\sum_{i=1}^{t-1}|A'_i|\leq |X|-|A'_t|$,
so there is a permutation $\sigma:X\to X$ such that
$\sigma(A'_t)$ is disjoint from $\bigcup_{i=1}^{t-1} A_i$.
Replacing $A'_1,\dots,A'_t$ with $\sigma(A'_1),\dots,\sigma(A'_t)$
does not affect the value of $|M(A'_1,\dots,A'_t)|$, and after
that replacement,  $A'_t$ does not meet
$A_1,\dots,A_{t-1}$.

Now that $A'_{t}$ does not meet $A_1,\dots,A_{t-1}$,
the sets $A_1,\dots,A_{t-1},A'_{t-1}$ are all contained in
$X':= X-A'_t$.
Thus, repeating the argument of the last paragraph with $X'$ in place of $X$
allows us to further assume that $A'_{t-1}$ does not meet
$A_1,\dots,A_{t-2}$ (provided $t\geq 2$).
Continuing in this manner, we may assume that
each $A'_i$ ($1\leq i\leq t$) does not meet $A_1,\dots,A_{i-1}$.

Fix some $i\in \{1,\dots,t\}$.
Then $A'_i$ is  disjoint from both $A_1,\dots,A_{i-1}$ and
$A'_{i+1},\dots,A'_{t}$.
Thus,  \cref{LM:meeting-count} tells us that
\[
\Abs*{M(A_1,\dots,A_{i-1},A_i,A'_{i+1},\dots,A'_t)}\geq
\Abs*{M(A_1,\dots,A_{i-1},A'_i,A'_{i+1},\dots,A'_t)}.
\]
As this holds for every $i$, we get that
\[
|M(A_1,\dots,A_t)|\geq |M(A_1,\dots,A_{t-1},A'_t)|
\geq |M(A_1,\dots,A_{t-2},A'_{t-1},A'_t)|\geq\dots\geq
|M(A'_1,\dots,A'_t)|.
\]
This proves the proposition   in case (1).

Next, suppose that $\rv B$ is chosen as in (2). In this case, if
we condition on the event that $\Abs*{\rv B} =m$ ($0\leq m\leq |X|$),
then $\rv B$ is distributed uniformly among the $m$-element subsets of $X$.
Our proof of case (1) therefore implies
\[
\Pp\sqbr*{\text{$\rv B$ meets $A_1,\dots,A_t$}\,:\,\Abs*{\rv B}=m}\geq
\Pp\sqbr*{\text{$\rv B$ meets $A'_1,\dots,A'_t$}\,:\,\Abs*{\rv B}=m}.
\]
As this holds for all $m\in\{0,1,\dots,|X|\}$, we get
\begin{align*}
\Pp\sqbr*{\text{$\rv B$ meets $A_1,\dots,A_t$}}
&=\sum_{m=0}^{|X|} \Pp\sqbr*{\text{$\rv B$ meets
$A_1,\dots,A_t$}\,:\,\Abs*{\rv B}=m}\Pp\sqbr*{\Abs*{\rv B}=m}
\\
&\geq \sum_{m=0}^{|X|} \Pp\sqbr*{\text{$\rv B$ meets
$A'_1,\dots,A'_t$}\,:\,\Abs*{\rv B}=m}\Pp[\Abs*{\rv B}=m]
\\
&=\Pp[\text{$\rv B$ meets $A'_1,\dots,A'_t$}],
\end{align*}
so the proposition holds in case (2).
\end{proof}

\subsection{Proof of \cref{TH:lower-bound-strong}(ii) When The
Maximum Degree Uniformity are Not Large}
\label{subsec:lower-bound-with-assumpt}

\begin{boxedc}{0.95\textwidth}{Notation}
The following general notation  will be assumed throughout this subsection:
\begin{itemize}
\item $H=(V,E)$ is a hypergraph
    with $n$ vertices,
\item $r=|E|/|V|$ is the sparsity (hyperedges-to-vertices ratio) of $H$,
\item $k$ is the average uniformity of $H$,
\item $d_1,\dots,d_m$
    are the vertex degrees that occur in $H$, i.e.,~$\{\deg(v)\,~:~\,
    v\in V\}=\{d_1,\dots,d_m\}$,
\item $V_i=\{v\in V~:~\deg (v)=d_i\}$ for $1\leq i\leq m$ is
    the set of vertices of degree $d_i$,
\item $u_i=\frac{|V_i|}{|V|}$ is the fraction of vertices of
    degree $d_i$
    for all $i = 1,\ldots, m$,
\item $d$ is the average degree of $H$.
\end{itemize}
Note that $d=\sum_{i=1}^m u_i d_i=rk$, where the second equality holds
by \cref{PR:degree-to-unif-ratio}.
\end{boxedc}

In order to construct the set $A\subseteq V$ promised by
\cref{TH:lower-bound-strong}(ii),
we will sample a  random set of edges $\rv B\subseteq E$ with density
$\gamma$ and set $\rv A=V(\rv B)$, i.e.,~$\rv A$ is the set of
vertices which are met by some edge in $\rv B$. We will show that for
a certain choice
of $\gamma$  (depending on $d_1,\dots,d_m$, $u_1,\dots,u_m$ and
$\delta$), there is a positive probability that $|\rv A|\leq \delta n$
and at the same time $|E(\rv A)|$ includes at least a $f_{k,r}(\delta)$-fraction
of the edges, where $f_{k,r}$ is as defined in \cref{EQ:f-k-r-delta-dfn}.

%

\begin{proposition}\label{PR:vertex-upper-bound-i}
Let $\alpha\in [0,\frac{1}{2})$, $\beta\in [0,1)$, $\gamma\in(0,1)$,
$\eta\in (0,\frac{1}{2})$ be real numbers such that
\[\mu:=(1-\alpha-\beta-\eta)(1-2\eta)-
\alpha>0\quad\quad\tand\quad\quad \gamma |E| \in \NN.
\]
Then there  are   constants
$A_{\ref{PR:vertex-upper-bound-i}}(k,r) ,
B_{\ref{PR:vertex-upper-bound-i}}(k,r,\gamma) ,
N_{\ref{PR:vertex-upper-bound-i}}(k,r,\alpha,\gamma)>0$
(depending only on the indicated parameters)
such that the following hold:
With notation as above, suppose that the hypergraph
$H=(V,E)$ satisfies    $d=rk\geq 1$ and
\begin{equation}\label{eq:extra_assumptions}\deg(v) \leq
d n^\alpha~\textrm{for all}~v \in V\quad\tand\quad |e| \le
k n^{\beta}~\textrm{for all}~e \in E.
\end{equation}
Let be $\rv B\subseteq E$ be chosen uniformly at
random among all  $\gamma|E|$-element subsets of $E$.
If   $n\geq N_{\ref{PR:vertex-upper-bound-i}}(k,r,\alpha,\gamma)$,  then
\[
\Pp_{\rv B}\sqbr*{\frac{|V(\rv B)|}{n}\leq
    \sum_{i=1}^m
    u_i(1-(1-\gamma)^{d_i})+A_{\ref{PR:vertex-upper-bound-i}}(k,r)
n^{-\eta}}
\geq 1-B_{\ref{PR:vertex-upper-bound-i}}(k,r,\gamma) n^{-\mu}.
\]
Moreover, we can take
\begin{align*}
A_{\ref{PR:vertex-upper-bound-i}}(k,r)\qquad&~=~2kd=O(k^2r),
\\
B_{\ref{PR:vertex-upper-bound-i}}(k,r,\gamma)~\quad&~=~\frac{4
C_{\ref{cor:thm0}} d}
{e^2\gamma(\ln(1-\gamma))^2} = O(rk\gamma^{-3}),
\\
N_{\ref{PR:vertex-upper-bound-i}}(k,r,\alpha,\gamma)&~=~
\max\left\{4dk,\frac{2k}{\gamma(1-\gamma)}\right\}^{(1-2\alpha)^{-1}}=\poly_{\alpha}(k,r,\gamma^{-1},(1-\gamma)^{-1})
.
\end{align*}
\end{proposition}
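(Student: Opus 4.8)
The plan is to pass to the dual hypergraph and invoke \cref{cor:thm0}. We may assume every vertex of $H$ has positive degree, since degree-$0$ vertices are never covered by $\rv B$ and contribute nothing to either side of the asserted inequality. For each degree value $d_i$ occurring in $H$, let $H_i^\ast=(E,V_i)$ be the dual hypergraph $H^\ast$ restricted to the degree-$d_i$ vertices: this is a $d_i$-uniform hypergraph whose vertex set is $E$ (so it has $|E|=rn$ vertices), whose edge set is $V_i$ (so it has $u_in$ edges), and whose maximum degree $D_i$ is at most the maximum uniformity $k^\star\le kn^\beta$ of $H$ by \cref{eq:extra_assumptions}. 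The crucial observation is that a degree-$d_i$ vertex $v$ of $H$ fails to be covered by $\rv B$ precisely when the edge $v$ of $H_i^\ast$ is confined to $E\setminus\rv B$, so that $|V_i\setminus V(\rv B)|=|E_{H_i^\ast}(E\setminus\rv B)|$. Since $\rv B$ is uniform among the $\gamma|E|$-element subsets of $E$ and $\gamma|E|\in\NN$, the complement $E\setminus\rv B$ is a uniformly random $(1-\gamma)$-dense subset of the vertex set of $H_i^\ast$; thus $E_{H_i^\ast}(E\setminus\rv B)$ is exactly the kind of confinement-count to which \cref{cor:thm0} applies.

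Next I would split the degree classes into \emph{small} ones, with $|V_i|<T$, and \emph{large} ones, with $|V_i|\ge T$, for the threshold $T=kn^{1-\eta-\alpha}$. For small classes I only use the trivial bound $|V_i\cap V(\rv B)|\le|V_i|$; since $\deg(v)\le dn^\alpha$ for every $v$, there are at most $dn^\alpha$ distinct degree values, so the small classes contain at most $dn^\alpha\cdot T=rk^2n^{1-\eta}$ vertices altogether, which fits inside the slack $A_{\ref{PR:vertex-upper-bound-i}}(k,r)n^{1-\eta}$ with $A_{\ref{PR:vertex-upper-bound-i}}(k,r)=2kd$. For a large class I apply \cref{cor:thm0} to $H_i^\ast$ with density parameter $1-\gamma$ in place of $p$, uniformity $d_i$ in place of $k$, vertex count $rn$ in place of $n$, and exponent $1-2\eta$ in place of $\alpha$; the hypothesis $rn\ge\max\{4d_i^2,\,2d_i/(\gamma(1-\gamma))\}$ of that corollary follows from $d_i\le dn^\alpha$ as soon as $n\ge N_{\ref{PR:vertex-upper-bound-i}}(k,r,\alpha,\gamma)=\max\{4dk,\,2k/(\gamma(1-\gamma))\}^{1/(1-2\alpha)}$. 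The conclusion is that, outside an event of probability at most $C_{\ref{cor:thm0}}\,\gamma^{-1}(1-\gamma)^{d_i}d_i^{2}\,(|V_i|/D_i)^{-(1-2\eta)}$, the fraction of edges of $H_i^\ast$ confined to $E\setminus\rv B$ is within $(rn)^{-\eta}$ of $(1-\gamma)^{d_i}$, equivalently $|V_i\cap V(\rv B)|\le|V_i|\bigl(1-(1-\gamma)^{d_i}\bigr)+|V_i|(rn)^{-\eta}$ on that event.

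Finally I union-bound over the at most $dn^\alpha$ large classes. Using $|V_i|\ge T$, $D_i\le kn^\beta$, and $(1-\gamma)^{d_i}d_i^{2}\le 4e^{-2}(\ln(1-\gamma))^{-2}$ (elementary calculus, as in the proof of \cref{cor:thm0}), the total failure probability is of order $dn^\alpha\cdot C_{\ref{cor:thm0}}\,\gamma^{-1}\cdot 4e^{-2}(\ln(1-\gamma))^{-2}\cdot\bigl(n^{1-\eta-\alpha-\beta}\bigr)^{-(1-2\eta)}$, and since $\alpha-(1-2\eta)(1-\eta-\alpha-\beta)=-\mu$ this is $B_{\ref{PR:vertex-upper-bound-i}}(k,r,\gamma)\,n^{-\mu}$ with $B_{\ref{PR:vertex-upper-bound-i}}(k,r,\gamma)=4C_{\ref{cor:thm0}}\,d\,e^{-2}\gamma^{-1}(\ln(1-\gamma))^{-2}$. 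On the complementary event, summing over all $i$ and bounding $\sum_{\text{large }i}|V_i|(1-(1-\gamma)^{d_i})\le n\sum_iu_i(1-(1-\gamma)^{d_i})$ gives $|V(\rv B)|\le n\sum_iu_i(1-(1-\gamma)^{d_i})+rk^2n^{1-\eta}+n(rn)^{-\eta}$; as $r\ge 1/k$ forces $n(rn)^{-\eta}=r^{-\eta}n^{1-\eta}\le rk^2n^{1-\eta}$, the two error terms add to at most $2kd\,n^{1-\eta}$, and dividing by $n$ yields the claim. I expect the main obstacle to be the non-regular case: the dual of a non-regular hypergraph is not uniform, which forces the decomposition into degree classes, and classes with few vertices simply do not concentrate — so one must show their aggregate size is negligible while keeping the number of classes (hence the union-bound loss) under control, and it is exactly this balancing, against the concentration exponent $1-2\eta$ and the degree/uniformity caps $n^\alpha,n^\beta$, that dictates the precise shape of $\mu$ and of $N_{\ref{PR:vertex-upper-bound-i}}$. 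When $H$ is regular there is a single class, $H^\ast$ is genuinely uniform, and the split and union bound disappear; that is the clean prototype of the argument.
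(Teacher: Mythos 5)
Your proposal is correct and follows essentially the same route as the paper: pass to the dual and split vertices into degree classes, apply \cref{cor:thm0} to the $d_i$-uniform dual hypergraph $H_i^*$ for classes large enough to concentrate (threshold $|V_i|\geq kn^{1-\eta-\alpha}$, which is exactly the paper's choice $\tau=\alpha+\eta$), absorb the small classes using the cap of at most $dn^\alpha$ distinct degrees, and union-bound. Your reformulation via confinement to $E\setminus\rv B$ (with density $1-\gamma$) rather than the paper's $\rv e\cap\rv B=\emptyset$ (with density $\gamma$) is symmetric and yields the identical constants.
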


\begin{proof}
We may assume that $d_1=0$ by setting $u_1=0$ if there are no
vertices of degree $0$.
Under this assumption, $V(\rv B)$ never meets $V_1=\{v\in
V\,:\,\deg(v)=0\}$.

%
Our assumption   $\mu>0$ implies that $\alpha<1-\beta$.
Let $\tau\in (\alpha,1-\beta)$; we will specify
$\tau$ later.  We choose
$A_{\ref{PR:vertex-upper-bound-i}},
B_{\ref{PR:vertex-upper-bound-i}}, N_{\ref{PR:vertex-upper-bound-i}}$
as in the proposition.
In addition,
we define
\[
I=\{i\in \{2,\dots,m\}\,:\, u_i \geq  kn^{-\tau}\}
\qquad\text{and}\qquad
J=\{i\in \{2,\dots,m\}\,:\, u_i < kn^{-\tau} \},
\]
and note that $|I|,|J|\leq m-1\leq d n^\alpha$ by the first assumption  in
\cref{eq:extra_assumptions}.

Suppose $n\geq N_{\ref{PR:vertex-upper-bound-i}} $.
For every $i\in I$,
let $H_i$ denote the hypergraph obtained from $H$
by deleting the vertices that are not in $V_i$.
Formally, $H_i=(V_i,E)$, i.e.,~$H_i$ has $V_i$ as its set of vertices
and the same edge set as $H$,
but we only consider the edge-to-vertex incidences to vertices in
$V_i$ so that $ \Vrt_{H_i}(e)=\Vrt_H(e)\cap V_i$.
Consequently, $H_i$ is a $d_i$-regular hypergraph with $u_in$
vertices and $rn$ edges. By  \cref{eq:extra_assumptions},
its maximum uniformity is at most $k n^\beta$.
This means that the dual hypergraph $H_i^*$ (q.v.~\cref{subsec:hypergraphs})
is $d_i$-uniform with $u_in$ edges and $rn$ vertices and its
maximum vertex degree $D$ is at most $k n^\beta$.

We now apply
\cref{cor:thm0} to $H_i^*$ ($i\in I$) and the random subset
$\rv B\subseteq V(H_i^*)$, which has density $\gamma$.
Note first that
in order to apply it we need to have
$rn=|V(H^*_i)|\geq \max\{4d_i^2,\frac{2d_i}{\gamma(1-\gamma)}\}$.
Since $d_i\leq d n^\alpha$, it is enough to check
that $rn\geq n^{2\alpha}\max\{4d^2,\frac{2d }{\gamma(1-\gamma)}\}$,
which rearranges to
$n^{1-2\alpha}\geq \max\{4dk,\frac{2k}{\gamma(1-\gamma)}\}$.
This prerequisite holds by our assumption $n\geq
N_{\ref{PR:vertex-upper-bound-i}}$.
Next, observe that the number $\tilde{n}=\tilde{n}(H_i^*)$
associated with $H_i^*$ in \cref{cor:thm0} is
\[\tilde{n}(H_i^*)=\frac{u_i n}{D}\geq
\frac{kn^{1-\tau}}{kn^\beta}=n^{1-\beta-\tau}.\]
Thus, the corollary (applied with $1-2\eta$ in place of $\alpha$)
implies that
\begin{align*}
p_i& :=\Pp_{\rv B}\sqbr*{\Pp_{\rv v\sim
    E(H_i^*)}\sqbr*{\Vrt_{H_i^*}(\rv v)\cap \rv B=\emptyset}
\geq (1-\gamma)^{d_i}-(rn)^{-\eta}}
\\
& \geq 1-
\frac{4C_{\ref{cor:thm0}}}{e^2\gamma(\ln (1-\gamma))^2}\cdot
\tilde{n}(H_i^*)^{2\eta-1}
\\
& \geq 1-
\frac{C_{\ref{PR:vertex-upper-bound-i}}}{\gamma(\ln (1-\gamma))^2}
\cdot
n^{ (1-\beta-\tau)(2\eta-1)}.
\end{align*}

On the other hand, since $\Pp_{\rv v\sim
E(H_i^*)}\sqbr*{\Vrt_{H_i^*}(\rv v)\cap \rv B=\emptyset}
=\Pp_{\rv v\sim V_i}\sqbr*{\rv v\notin V_{H_i}(\rv B)}$, we have
\begin{align*}
p_i &= \Pp_{\rv B}\sqbr*{\Pp_{\rv v\sim V_i}\sqbr*{\rv v\notin
V_{H_i}(\rv B)}\geq (1-\gamma)^i-( rn)^{-\eta}}
\\
&=
\Pp_{\rv B}\sqbr*{\Pp_{\rv v\sim V_i}\sqbr*{\rv v\in V_{H_i}(\rv
B)}\leq 1-(1-\gamma)^i+(rn)^{-\eta}}
\\
&=\Pp_{\rv B}\sqbr*{|V_{H}(\rv B)\cap V_i|\leq
u_in(1-(1-\gamma)^i+(rn)^{-\eta})}.
\end{align*}
Hence, by the union bound,
\begin{align}\label{EQ:union-bound}
\Pp_{\rv B}\sqbr*{\text{$|V_{H}(\rv B)\cap V_i|\leq
        u_in\parens*{1-(1-\gamma)^i+(rn)^{-\eta}}$ for
all $i\in I$}}
&\geq 1-|I| \cdot
\textstyle\frac{C_{\ref{PR:vertex-upper-bound-i}}}{\gamma(\ln
(1-\gamma))^2}  n^{ (1-\beta-\tau)(2\eta-1)}
\\
&\geq
1-\textstyle\frac{C_{\ref{PR:vertex-upper-bound-i}}d}{\gamma(\ln
(1-\gamma))^2}  n^{ \alpha+(1-\beta-\tau)(2\eta-1)}
\nonumber
\end{align}
Observe now that since $V(\rv B)\cap V_1=\emptyset$,
when the event in \cref{EQ:union-bound} occurs, we have
\begin{align*}
|V(\rv B)|
&\leq
\sum_{i\in I}
\parens*{u_in\parens*{1-(1-\gamma)^i+(rn)^{-\eta}}}+\sum_{i\in J} u_in
\\
&\leq
n\sum_{i\in I}  u_i (1-(1-\gamma)^i) +
\sum_{i=1}^m(u_in)(rn)^{-\eta}+\sum_{i\in J} k n^{1-\tau}
\\
&\leq
n\sum_{i\in I}  u_i (1-(1-\gamma)^i) +
r^{-\eta}n^{1-\eta}+k|J|n^{1 -\tau}
\\
&\leq
n\sum_{i\in I}  u_i (1-(1-\gamma)^i) +k^\eta n^{1-\eta}+k d
n^{1+\alpha-\tau}
\\
&\leq
n\sum_{i=1}^m  u_i (1-(1-\gamma)^i) +k d n^{1-\eta}+k d
n^{1+\alpha-\tau}
\end{align*}
where in the fourth inequality we used that $r\geq \frac{1}{k}$.

We now choose $\tau$ so that $1-\eta =1+\alpha-\tau$, i.e.,
\[
\tau = \alpha+\eta.
\]
Note that $\tau<1-\beta$ by our assumption that $\mu>0$.
For this choice of $\tau$, we get that
\[
\Pp_{\rv B}\sqbr*{|V(\rv B)|\leq n\sum_{i\in I}  u_i (1-(1-\gamma)^i)
+2kd n^{1-\eta}}
\geq
1-\frac{C_{\ref{PR:vertex-upper-bound-i}}d}{\gamma(\ln (1-\gamma))^2}
n^{ \alpha+(1-\alpha-\beta-\eta)(2\eta-1)},
\]
which is   exactly what we want.
\end{proof}

\begin{proposition}\label{PR:vertex-upper-bound-flip}
Let $\alpha\in [0,\frac{1}{2})$, $\beta\in [0,1)$, $\gamma\in(0,1)$,
$\eta\in (0,\frac{1}{2} )$  be real numbers satisfying
\[
\mu:=( 1-\alpha  -\beta-\eta)(1-2\eta)-\alpha>0.
\]
Then there are constants $
A_{\ref{PR:vertex-upper-bound-flip}}(k,r),
B_{\ref{PR:vertex-upper-bound-flip}}(k,r,\gamma),
N_{\ref{PR:vertex-upper-bound-flip}}(k,r,\alpha,\beta,\gamma, \eta)>0$
(depending on the indicated parameters)
such that the following holds:
Suppose that
\begin{equation}\label{eq:extra_assumptions2}
\deg(v) \leq
dn^\alpha~\textrm{for all}~v \in V\quad\tand\quad |e| \le
kn^{\beta}~\textrm{for all}~e \in E.
\end{equation}
Let $\rv B\subseteq E$ be sampled at random by
including each $e\in E$ in $\rv B$ with probability $\gamma$
(independently of the other edges).
If   $n\geq
N_{\ref{PR:vertex-upper-bound-flip}}(k,r,\alpha,\beta,\gamma,\eta)$
and $d=rk\geq 1$, then
\[
\Pp_{\rv B}\sqbr*{\frac{|V(\rv B)|}{n}\leq  \sum_{i=1}^m
    u_i(1-(1-\gamma)^{d_i})+A_{\ref{PR:vertex-upper-bound-flip}}(k,r)
n^{-\eta }}
\geq
1-B_{\ref{PR:vertex-upper-bound-flip}}(k,r,\gamma)
n^{-\mu}.
\]
Moreover, we can take
\begin{align*}
A_{\ref{PR:vertex-upper-bound-flip}}(k,r)\qquad\qquad~ &~=~
A_{\ref{PR:vertex-upper-bound-i}}(k,r)+d=2kd+d=O(k^2r),
\\
B_{\ref{PR:vertex-upper-bound-flip}}(k,r,\gamma)\qquad\quad~~ &~=~
\max_{\lambda\in[\frac{\gamma}{2},\frac{1}{2}+\frac{\gamma}{2}]}
B_{\ref{PR:vertex-upper-bound-i}}(k,r,\lambda) + 2 =
\frac{8C_{\ref{cor:thm0}}d}{e^2\gamma(\ln(1-0.5\gamma))^2}+2
=
O(kr\gamma^{-3}),
\\
N_{\ref{PR:vertex-upper-bound-flip}}(k,r,\alpha,\beta,\gamma,\eta)
&~=~\max_{\lambda\in[\frac{\gamma}{2},\frac{1}{2}+\frac{\gamma}{2}]}
\left\{
    N_{\ref{PR:vertex-upper-bound-i}}(k,r,\alpha,\lambda),
    \parens*{\frac{2}{\gamma(1-\gamma)}}^{\eta^{-1}},
    \max\left\{
        \frac{16}{(1 -2\eta)^2},\mu k
    \right\}^{2(1 -2\eta)^{-1}}
\right\}
\\
&
\le
~\quad~\max
\left\{
    (4dk)^{(1-2\alpha)^{-1}},
    \parens*{\frac{8k}{\gamma(1-\gamma)}}^{(1-2\alpha)^{-1}},
    \parens*{\frac{2}{\gamma(1-\gamma)}}^{\eta^{-1}},
    \max\left\{
        \frac{16}{(1 -2\eta)^2},\mu k
    \right\}^{2(1 -2\eta)^{-1}}
\right\}
.
\end{align*}
%
%
\end{proposition}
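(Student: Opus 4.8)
The plan is to reduce the ``independent inclusion'' sampling of $\rv B$ here to the ``uniformly random fixed-cardinality subset'' sampling handled in \cref{PR:vertex-upper-bound-i}, by conditioning on $|\rv B|$. Recall the elementary fact that if each edge of $E$ is placed in $\rv B$ independently with probability $\gamma$, then conditioned on $\{|\rv B|=m\}$ the set $\rv B$ is uniform among the $m$-element subsets of $E$. Hence, for each $m$ with $0<m<|E|$, I would invoke \cref{PR:vertex-upper-bound-i} with its parameter ``$\gamma$'' taken to be $\lambda_m:=m/|E|\in(0,1)$ — the requirement ``$\gamma|E|\in\NN$'' then holds automatically, while $\alpha,\beta,\eta$ (hence also $\mu$, which does not depend on $\gamma$) and the hypotheses $d=rk\ge1$ and \cref{eq:extra_assumptions2} are unchanged — to obtain, whenever $n\ge N_{\ref{PR:vertex-upper-bound-i}}(k,r,\alpha,\lambda_m)$,
\[
\Pp_{\rv B}\sqbr*{\tfrac{|V(\rv B)|}{n}\le \textstyle\sum_{i=1}^m u_i\parens*{1-(1-\lambda_m)^{d_i}}+A_{\ref{PR:vertex-upper-bound-i}}(k,r)\,n^{-\eta}\,:\,|\rv B|=m}\ge 1-B_{\ref{PR:vertex-upper-bound-i}}(k,r,\lambda_m)\,n^{-\mu}.
\]

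Two points then need attention. First, the main term must be pushed from $\lambda_m$ back to $\gamma$: since $x\mapsto\sum_i u_i(1-(1-x)^{d_i})$ has derivative $\sum_i u_i d_i(1-x)^{d_i-1}\in[0,d]$ on $[0,1]$, \cref{LM:diff_trick} gives $\bigl|\sum_i u_i(1-(1-\lambda_m)^{d_i})-\sum_i u_i(1-(1-\gamma)^{d_i})\bigr|\le d\,|\lambda_m-\gamma|$, so I want $|\lambda_m-\gamma|\le n^{-\eta}$. As $|\rv B|$ is a sum of $|E|=rn$ independent $\mathrm{Ber}(\gamma)$ variables, Hoeffding's inequality yields $\Pp[\,|\lambda_m-\gamma|>n^{-\eta}\,]=\Pp[\,\bigl||\rv B|-\gamma rn\bigr|>rn^{1-\eta}\,]\le 2\exp(-2rn^{1-2\eta})$, and since $\eta<\tfrac12$ and $r\ge\tfrac1k$, this is $\le 2n^{-\mu}$ once $n\ge\max\{16(1-2\eta)^{-2},\,\mu k\}^{2(1-2\eta)^{-1}}$ (via \cref{LM:log-vs-poly} with exponent $\tfrac{1-2\eta}{2}$). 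Second, on the complementary (``good'') event $\{|\lambda_m-\gamma|\le n^{-\eta}\}$, provided also $n\ge(2/(\gamma(1-\gamma)))^{\eta^{-1}}$ so that $n^{-\eta}\le\tfrac12\min\{\gamma,1-\gamma\}$, one has $\lambda_m\in[\tfrac\gamma2,\tfrac12+\tfrac\gamma2]$; on this interval $\lambda_m(1-\lambda_m)\ge\tfrac{\gamma(1-\gamma)}{4}$ and $(\ln(1-\lambda_m))^2\ge(\ln(1-\tfrac\gamma2))^2$, so both $N_{\ref{PR:vertex-upper-bound-i}}(k,r,\alpha,\lambda_m)$ and $B_{\ref{PR:vertex-upper-bound-i}}(k,r,\lambda_m)$ are controlled by their suprema over $\lambda\in[\tfrac\gamma2,\tfrac12+\tfrac\gamma2]$ — namely $\max\{4dk,\,8k/(\gamma(1-\gamma))\}^{(1-2\alpha)^{-1}}$ and $\tfrac{8C_{\ref{cor:thm0}}d}{e^2\gamma(\ln(1-0.5\gamma))^2}$ respectively — which is exactly why those quantities enter $N_{\ref{PR:vertex-upper-bound-flip}}$ and $B_{\ref{PR:vertex-upper-bound-flip}}$.

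Assembling: on the good event (and for $n$ at least as large as required above) the displayed bound holds with $\lambda_m$ replaced by $\gamma$ at the cost of an extra additive $d\,n^{-\eta}$, i.e.\ with constant $A_{\ref{PR:vertex-upper-bound-i}}(k,r)+d=2kd+d=A_{\ref{PR:vertex-upper-bound-flip}}(k,r)$; a union bound over the Hoeffding-bad event and the conditional failure event then gives total failure probability at most $2\exp(-2rn^{1-2\eta})+\bigl(\sup_\lambda B_{\ref{PR:vertex-upper-bound-i}}(k,r,\lambda)\bigr)n^{-\mu}\le B_{\ref{PR:vertex-upper-bound-flip}}(k,r,\gamma)\,n^{-\mu}$, with the ``$+2$'' in $B_{\ref{PR:vertex-upper-bound-flip}}$ absorbing the Hoeffding term. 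The edge cases $m\in\{0,|E|\}$ are either trivial ($V(\rv B)$ is $\emptyset$ or contained in $V$) or lie inside the Hoeffding-bad event for large $n$. I do not expect a genuine obstacle: the only delicate point is the quantitative bookkeeping — in particular one must establish concentration of $|\rv B|$ at the \emph{fine} scale $rn^{1-\eta}$ (not a coarser one) so that the drift from $\gamma$ to $\lambda_m$ costs only $O(d\,n^{-\eta})$, matching the advertised $A_{\ref{PR:vertex-upper-bound-flip}}=A_{\ref{PR:vertex-upper-bound-i}}+d$ — while everything else follows from \cref{PR:vertex-upper-bound-i}, Hoeffding's inequality, and the elementary lemmas of \cref{sec:Preliminaries}.
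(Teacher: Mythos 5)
Your proposal is correct and follows essentially the same route as the paper's own proof: condition on $|\rv B|=m$ to reduce to the fixed-cardinality case of \cref{PR:vertex-upper-bound-i}, shift the main term from $\lambda_m$ to $\gamma$ at cost $d\,|\lambda_m-\gamma|$ via \cref{LM:power-dist} (the Lipschitz bound), concentrate $|\rv B|$ at scale $rn^{1-\eta}$ via Chernoff/Hoeffding so the shift is $O(d\,n^{-\eta})$, note that the good $\lambda_m$ land in $[\tfrac{\gamma}{2},\tfrac{1}{2}+\tfrac{\gamma}{2}]$, and combine by a union bound. The paper introduces an auxiliary exponent $\tau$ and sets $\tau=\eta$ only at the end, whereas you work with $n^{-\eta}$ directly, but this is a purely cosmetic difference.
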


\begin{proof}
Again, we choose
$A_{\ref{PR:vertex-upper-bound-flip}},
B_{\ref{PR:vertex-upper-bound-flip}},
N_{\ref{PR:vertex-upper-bound-flip}}$
as in the proposition.  Suppose $n\geq
N_{\ref{PR:vertex-upper-bound-flip}}$.
This means in particular that
$n\geq N_{\ref{PR:vertex-upper-bound-i}}(k,\alpha,\lambda,\eta)$
for every $\lambda\in [\frac{\gamma}{2},\frac{1}{2}+\frac{\gamma}{2}]$.

Fix some $\lambda\in [\frac{\gamma}{2},\frac{1}{2}+\frac{\gamma}{2}]$ such
that $\lambda|E|$ is an integer. Then once conditioning on the event
$\Abs*{ \rv B}=\lambda|E|$,
the set $\rv B$ distributes uniformly on the subsets of $E$ with
density $\lambda$.
Therefore, by \cref{PR:vertex-upper-bound-i},
\begin{equation}\label{EQ:PR:vertex-upper-bound-flip:ineq-ii}
\Pp_{\rv B}\sqbr*{\frac{|V(\rv B)|}{n}\leq  \sum_{i=1}^m
    u_i(1-(1-\lambda)^{d_i})+A_{\ref{PR:vertex-upper-bound-i}}(r,k)
    n^{-\eta }
\,\Big|\, \Abs*{ \rv B}=\lambda |E|}
\geq 1-B_{\ref{PR:vertex-upper-bound-i}}(r,k,\lambda) n^{-\mu}.
\end{equation}
In addition, \cref{LM:power-dist}    implies
that for all $i\in \{1,\dots,m\}$,
\[
|(1-(1-\lambda)^{d_i})-(1-(1-\gamma)^{d_i})|=|(1-\lambda)^{d_i}-(1-\gamma)^{d_i}|
\leq d_i |(1-\lambda)-(1-\gamma)| \leq d_i |\lambda-\gamma|.
\]
Thus,
\begin{equation}\label{EQ:PR:vertex-upper-bound-flip:ineq-i}
\sum_{i=1}^m u_i(1-(1-\lambda)^{d_i})\leq
\sum_{i=1}^m u_i(1-(1-\gamma)^{d_i})+\sum_{i=1}^m u_i
d_i|\lambda-\gamma|
=
\sum_{i=1}^m u_i(1-(1-\gamma)^{d_i})+  d|\lambda-\gamma|
.
\end{equation}

Fix some $\tau\in (0,\frac{1}{2})$, to be chosen later, and put
$\varepsilon=n^{-\tau}$.
We will assume that $n$ is big enough so that
\begin{equation}\label{EQ:PR:vertex-upper-bound-flip:cond-on-n}
\varepsilon= n^{-\tau}\leq \frac{\gamma(1-\gamma)}{2}
\qquad\text{and}\qquad
\exp\left( \frac{ n^{1-2\tau}}{k } \right)\geq    n^\mu.
\end{equation}
This first inequality implies that $\ee\leq
\min\{\frac{\gamma}{2},\frac{1-\gamma}{2}\}$,
and therefore $[\gamma-\ee,\gamma+\ee]\subseteq
[\frac{\gamma}{2},\frac{1}{2}+\frac{\gamma}{2}]$.

By the Chernoff bound (\cref{fac:chernoff}),
we have
\begin{align*}
\Pp_{\rv B}\sqbr*{|\Abs*{\rv B} -\gamma rn|> \varepsilon rn}
\leq
2 \exp\left(-rn\cdot\ee^2 \right)
=
2 \exp\left(-rn^{1-2\tau} \right)
\leq 2\exp\left(-\frac{ n^{1-2\tau}}{ k } \right)
\leq  2  n^{-\mu}.
\end{align*}
Combining this with \cref{EQ:PR:vertex-upper-bound-flip:ineq-ii}
and \cref{EQ:PR:vertex-upper-bound-flip:ineq-i} and noting
that $I:=\frac{1}{|E|} \mathbb{N} \cap [\gamma-\varepsilon ,
\gamma+\varepsilon]\subseteq
[\frac{\gamma}{2},\frac{1}{2}+\frac{\gamma}{2}]$, we get
\begin{align*}
\Pp_{\rv B} & \sqbr*{\frac{|V(\rv B)|}{n}\leq  \sum_{i=1}^m
    u_i(1-(1-\gamma)^{d_i})+A_{\ref{PR:vertex-upper-bound-i}}(r,k)
    n^{-\eta}
+dn^{ -\tau}}
\\
&\geq
\sum_{\lambda\in I} \Pp_{\rv B}\sqbr*{\frac{|V(\rv B)|}{n}\leq
    \sum_{i=1}^m
    u_i(1-(1-\gamma)^{d_i})+A_{\ref{PR:vertex-upper-bound-i}}(r,k)
    n^{-\eta }
+d\varepsilon  \,\Big|\, |\rv B|=\lambda|E|}\Pp_{\rv
B}\sqbr*{|\rv B|=\lambda|E|}
\\
&\geq
\sum_{\lambda\in I} \Pp_{\rv B}\sqbr*{\frac{|V(\rv B)|}{n}\leq
    \sum_{i=1}^m
    u_i(1-(1-\lambda)^{d_i})+A_{\ref{PR:vertex-upper-bound-i}}(r,k)
    n^{-\eta}\,\Big|\,
|\rv B|=\lambda|E|} \cdot \Pp_{\rv B}\sqbr*{|\rv B|=\lambda|E|}
\\
&\geq
\sum_{\lambda\in I}
\left(
    1-B_{\ref{PR:vertex-upper-bound-i}}(r,k,\lambda)
n^{-\mu}\right)
\Pp_{\rv B}\sqbr*{|\rv B|=\lambda|E|}
\\
&\geq
\sum_{\lambda\in I}
\left(
    1-(B_{\ref{PR:vertex-upper-bound-flip}}(r,k,\gamma)-2)
n^{-\mu}\right)
\Pp_{\rv B}\sqbr*{|\rv B|=\lambda|E|}
\\
&=
\left(
    1-(B_{\ref{PR:vertex-upper-bound-flip}}(r,k,\gamma)-2)
n^{-\mu}\right)
\parens*{1-\Pp_{\rv B}\sqbr*{|\Abs*{\rv B} -\gamma rn|> \varepsilon r n)}}
\\
&\geq
\left(
1-(B_{\ref{PR:vertex-upper-bound-flip}}(r,k,\gamma)-2) n^{-\mu}\right)
(1-  2n^{-\mu})
\\
&\geq 1-
B_{\ref{PR:vertex-upper-bound-flip}}(r,k,\gamma)  n^{-\mu}
\end{align*}

At this point, we choose
$\tau=\eta$;
note that   $\tau\in (0,\frac{1}{2})$ because we assume
that $\eta\in (0,\frac{1}{2} )$.
What we have shown in the last paragraph now simplifies to
\begin{equation}\label{EQ:PR:vertex-upper-bound-flip:final}
\Pp_{\rv B}   \sqbr*{\frac{|V(\rv B)|}{n}\leq  \sum_{i=1}^m
u_i(1-(1-\gamma)^{d_i})+(A_{\ref{PR:vertex-upper-bound-i}}(r,k
)+d)n^{-\eta}
}\geq 1-B_{\ref{PR:vertex-upper-bound-flip}}(r,k,\gamma) n^{-\mu}.
\end{equation}
This proves the theorem, provided that
\eqref{EQ:PR:vertex-upper-bound-flip:cond-on-n} holds
for $\tau= \eta$.

We finish by verifying that
\cref{EQ:PR:vertex-upper-bound-flip:cond-on-n} is true
when $\tau= \eta$ and $n\geq N_{\ref{PR:vertex-upper-bound-flip}}$.
The first statement of \cref{EQ:PR:vertex-upper-bound-flip:cond-on-n}
follows from the fact that $N_{\ref{PR:vertex-upper-bound-flip}}\geq
\parens*{\frac{2}{\gamma(1-\gamma)}}^{ \eta^{-1}}$. The
second statement
is equivalent to $\frac{n^{1 -2\eta}}{k}-\mu\ln n\geq 0$.
By Lemma~\ref{LM:log-vs-poly} and the fact that
\[N_{\ref{PR:vertex-upper-bound-flip}}\geq
\max\left\{
\frac{16}{(1 -2\eta)^2},\mu k
\right\}^{2(1 -2\eta)^{-1}}=
\max\left\{
\parens*{\frac{2}{0.5(1 -2\eta)}}^{\frac{2}{0.5(1 -2\eta)}},
(\mu k)^{\frac{2}{ 1 -2\eta }}
\right\},\]
we have
\[
\frac{n^{1 -2\eta}}{k}-\mu\ln n
\geq
\frac{n^{1 -2\eta}}{k} -\mu n^{0.5(1 -2\eta)}
=\frac{n^{0.5(1 -2\eta)}}{k}
\parens*{n^{0.5(1 -2\eta)}-k\mu}\geq 0,
\]
as required.
%
\end{proof}

Next, we bound the average number of edges in $E(V(\rv B))$ when
$\rv B$ is chosen as in \cref{PR:vertex-upper-bound-flip}.
It is here that we shall need \cref{PR:meeting-probability}.

\begin{proposition}\label{PR:edges-fraction-avg}
With notation as above, suppose that
the hypergraph $H=(V,E)$ satisfies
\begin{equation}\label{eq:extra_assumptions-edges-avg}\deg(v) \geq
2~~\textrm{for all}~~v \in V\quad\tand\quad
\sum_{v\in \Vrt(e)}(\deg(v)-1) <|E|~~\textrm{for all}~~e \in E.
\end{equation}
Let $\gamma\in (0,1)$ and let
$\rv B\subseteq E$ be sampled at random by
including each $e\in E$ in $\rv B$ with probability $\gamma$
(independently of the other edges).
Then
\[\Exp\sqbr*{\frac{|E(V(\rv B))|}{|E|}}
\geq
\gamma + (1-\gamma)\prod_{i=1}^m(1-(1-\gamma)^{d_i-1})^{\frac{u_i d_i}{r}}.
\]
\end{proposition}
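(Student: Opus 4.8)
The plan is to compute $\Exp[|E(V(\rv B))|]$ by linearity of expectation over the hyperedges, lower-bounding $\Pp[e \in E(V(\rv B))]$ for each fixed $e \in E$. Note that $e \in E(V(\rv B))$ holds exactly when every vertex $v < e$ lies in at least one hyperedge of $\rv B$. Splitting according to whether $e \in \rv B$, and using that $e \in \rv B$ already forces $\Vrt(e) \subseteq V(\rv B)$, we get
\[
\Pp[e \in E(V(\rv B))] = \gamma + (1-\gamma)\cdot\Pp[e \in E(V(\rv B)) \mid e \notin \rv B].
\]
Conditioned on $e \notin \rv B$, the set $\rv B' := \rv B \cap X$, where $X := E \setminus \{e\}$, is obtained by including each element of $X$ independently with probability $\gamma$, and (since $V(\rv B) = V(\rv B')$ on this event) the event $e \in E(V(\rv B))$ coincides with the event that $\rv B'$ \emph{meets} $S_v := \{f \in X : v < f\}$ for every $v < e$; here $|S_v| = \deg(v) - 1 \ge 1$ by the first hypothesis in \cref{eq:extra_assumptions-edges-avg}.

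Next I would invoke \cref{PR:meeting-probability} in its independent-inclusion form (case (2)), with ground set $X$. The sets $\{S_v\}_{v<e}$ may overlap, but the second hypothesis in \cref{eq:extra_assumptions-edges-avg} gives $\sum_{v<e}(\deg(v)-1) < |E| = |X| + 1$, hence $\le |X|$, so there exist pairwise disjoint $S'_v \subseteq X$ with $|S'_v| = |S_v|$. Then \cref{PR:meeting-probability} yields
\[
\Pp[\rv B' \text{ meets all } S_v] \;\ge\; \Pp[\rv B' \text{ meets all } S'_v] \;=\; \prod_{v<e}\bigl(1 - (1-\gamma)^{\deg(v)-1}\bigr),
\]
where the equality uses that, for pairwise disjoint $S'_v$ and $\rv B'$ with independent coordinates, the events ``$\rv B'$ meets $S'_v$'' are independent, each of probability $1 - (1-\gamma)^{|S'_v|}$. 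Writing $n_i(e) := |\Vrt(e) \cap V_i|$ and $c_i := 1 - (1-\gamma)^{d_i-1}$ (which lies in $(0,1)$ precisely because every $d_i \ge 2$ and $\gamma\in(0,1)$), and grouping the factors by degree class, this gives $\Pp[e \in E(V(\rv B))] \ge \gamma + (1-\gamma)\prod_{i=1}^m c_i^{n_i(e)}$.

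Summing over $e \in E$ and dividing by $|E| = rn$, it then remains to show $\frac{1}{|E|}\sum_{e \in E}\prod_i c_i^{n_i(e)} \ge \prod_i c_i^{u_i d_i/r}$. I would obtain this from concavity of $\log$: Jensen's inequality gives $\log\!\bigl(\frac{1}{|E|}\sum_e \prod_i c_i^{n_i(e)}\bigr) \ge \frac{1}{|E|}\sum_e\sum_i n_i(e)\log c_i = \sum_i (\log c_i)\cdot\frac{1}{|E|}\sum_e n_i(e)$, and the handshaking identity $\sum_{e \in E} n_i(e) = \sum_{v \in V_i}\deg(v) = |V_i|\,d_i = u_i n d_i$ shows $\frac{1}{|E|}\sum_e n_i(e) = u_i d_i/r$; exponentiating back (valid since all $c_i>0$) finishes it. Combining with the previous display yields $\Exp[|E(V(\rv B))|/|E|] \ge \gamma + (1-\gamma)\prod_{i=1}^m(1-(1-\gamma)^{d_i-1})^{u_i d_i/r}$, which is the claim.

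The one genuinely delicate step — and the place where both hypotheses in \cref{eq:extra_assumptions-edges-avg} are spent — is the reduction to pairwise disjoint $S'_v$ inside $X = E \setminus \{e\}$ through \cref{PR:meeting-probability}: this needs each $S_v$ to be nonempty (so $\deg(v)\ge 2$) and the total $\sum_{v<e}(\deg(v)-1)$ to fit into the $|E|-1$ slots of $X$ (exactly the strict inequality assumed for every $e$). The positivity $c_i>0$ from $\deg(v)\ge2$ is also what makes the final Jensen/AM--GM step and the product $\prod_i c_i^{u_id_i/r}$ legitimate. Everything else — linearity of expectation, the conditioning argument, the independence computation, the one-line convexity estimate, and the handshaking count — is routine bookkeeping.
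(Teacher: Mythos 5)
Your proof is correct and follows essentially the same route as the paper's: condition on whether $e \in \rv B$, reduce the conditional probability to the event that $\rv B$ meets the sets $E_v - \{e\}$, apply \cref{PR:meeting-probability} to pass to a pairwise-disjoint comparison system inside $E - \{e\}$ (which is where both hypotheses in \cref{eq:extra_assumptions-edges-avg} are used), and then a Jensen-plus-handshaking step to turn the average of products into a product of geometric means. The only cosmetic difference is the order of operations in the last stage: the paper applies Jensen to the conditional probabilities and then lower-bounds each summand of the resulting log-sum, whereas you lower-bound each probability first and then apply Jensen to the bounded quantities; since $\log$ is monotone these are interchangeable and yield the same estimate.
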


\begin{proof}
For every $e\in E$, define a random variable $\rv X_e$ by
\[
\rv X_e=\left\{
\begin{array}{ll}
    1 & e\in E(V(\rv B)) \\
    0 & e\notin E(V(\rv B)).
\end{array}
\right.
\]
Then
\begin{equation}\label{EQ:PR:edges-fraction-avg:eq1}
\Exp\sqbr*{\frac{|E(V(\rv B))|}{|E|}}=\frac{1}{|E|}\sum_{e\in
E}\Exp \rv X_e.
\end{equation}
For every $e\in E$, we have
\begin{align*}
\Exp \rv X_e &=
\Exp[ \rv X_e \,|\,e\in B]\Pp[e\in \rv B]+\Exp[\rv X_e \,|\,e\notin
\rv B]\Pp[e\notin \rv B]
\\
&=1\cdot \gamma+\Pp[e\in E(V(\rv B))\,|\,e\notin B]\cdot (1-\gamma).
\end{align*}
By substituting this into \cref{EQ:PR:edges-fraction-avg:eq1}, we get
\begin{align*}
\Exp\sqbr*{\frac{|E(V(\rv B))|}{|E|}}
=\gamma +(1-\gamma)\frac{1}{|E|}\sum_{e\in E} \Pp[e\in E(V(\rv
B))\,|\,e\notin \rv B].
\end{align*}
Observe that by Jensen's inequality,
\[
\ln \parens*{\frac{1}{|E|}\sum_{e\in E} \Pp[e\in E(V(\rv
B))\,|\,e\notin \rv B]}
\geq \frac{1}{|E|}\sum_{e\in E} \ln \Pp[e\in E(V(\rv
B))\,|\,e\notin \rv B].
\]
Thus, in order to prove the proposition, it is enough to show that
\begin{align}\label{EQ:PR:edges-fraction-avg:what-we-need}
\frac{1}{|E|}\sum_{e\in E} \ln \Pp[e\in E(V(\rv B))\,|\,e\notin \rv B]
& \geq \ln
\parens*{\prod_{i=1}^m(1-(1-\gamma)^{d_i-1})^{\frac{u_i d_i}{r}}}
\\
&= \frac{1}{r}\sum_{i=1}^m u_id_i\ln (1-(1-\gamma)^{d_i-1}). \nonumber
\end{align}
(Notice that we need $d_i>1$  for the right hand side to be defined,
and this holds
by the first part of \cref{eq:extra_assumptions-edges-avg}.)

Fix some $e\in E$. For every $v\in \Vrt(e)$, let $E_v$ denote the set
of edges in $H$ having $v$ as a vertex. Then $|E_v|=\deg(v)$.
Moreover, provided that $e\notin B$,
we have $e\in E(V(B))$ if and only if $B$ meets the set
$E_v-\{e\}$ for every $v<e$. This means that
\[
\Pp[e\in E(V(\rv B))\,|\,e\notin \rv B] = \Pp[\text{$\rv B$ meets
$E_v-\{e\}$ for all $v<e$} \,|\, e\notin \rv B]
=:(\star).
\]
Conditioning on the event $e\notin\rv B$ simply means
that $\rv B$
is chosen as a random subset of $E-\{e\}$
by adding each edge in $E-\{e\}$ to $\rv B$ with probability $\gamma$.
In addition, by the second part of \cref{eq:extra_assumptions-edges-avg},
$\sum_{v<e}|E_v-\{e\}|=\sum_{v<e} (\deg(v)-1)\leq |E|-1=|E-\{e\}|$.
We may therefore choose pairwise disjoint subsets
$\{A'_v\}_{v<e}$ of $E-\{e\}$
such that $|A'_v|=|E_v-\{e\}|=\deg(v)-1$ for all $v<e$.
We are now in position to apply \cref{PR:meeting-probability}
to the set $X=E-\{e\}$ and the collections of subsets $\{E_v-\{e\}\}_{v<e}$
and $\{A'_v\}_{v<e}$. According to that proposition,
\begin{align}\label{EQ:PR:edges-fraction-avg:star-bound}
(\star)&\geq
\Pp[\text{$\rv B$ meets $A'_v$ for all $v<e$}\,|\,e\notin\rv B]
=\prod_{v<e}\Pp[\rv B\cap A'_v=\emptyset]
\\
&=\prod_{v<e}(1-(1-\gamma)^{|A'_v|})=\prod_{v<e}(1-(1-\gamma)^{\deg(v)-1}),
\nonumber
\end{align}
where the first equality holds because the sets $\{A'_v\}_{v<e}$
are pairwise disjoint. Thus,
\begin{align*}
\frac{1}{|E|}\sum_{e\in E} \ln \Pp[e\in E(V(\rv B))\,|\,e\notin \rv B]
&\geq
\frac{1}{|E|}\sum_{e\in E} \ln
\parens*{\prod_{v<e}(1-(1-\gamma)^{\deg(v)-1})}
\\
&
=
\frac{1}{r n}\sum_{e\in E} \sum_{v<e} \ln  (1-(1-\gamma)^{\deg(v)-1})
\\
&
=
\frac{1}{rn}\sum_{v\in V}\sum_{e>v} \ln  (1-(1-\gamma)^{\deg(v)-1})
\\
&
=
\frac{1}{rn}\sum_{v\in V}\deg(v) \ln  (1-(1-\gamma)^{\deg(v)-1})
\\
&=
\frac{1}{rn}\sum_{i=1}^m (u_i n)d_i \ln(1-(1-\gamma)^{d_i-1})
\\
&=\frac{1}{r} \sum_{i=1}^m  u_i d_i \ln(1-(1-\gamma)^{d_i-1}).
\end{align*}
This proves \cref{EQ:PR:edges-fraction-avg:what-we-need} and hence
the proposition.
\end{proof}

\begin{remark}\label{RM:not-simplicial}
The proof of \cref{PR:edges-fraction-avg}
suggests that $\Exp\sqbr*{\frac{|E(V(\rv B))|}{|E|}}$ would be the smallest
when every two hyperedges in $H$ share at most one vertex.
Indeed, in this situation, for every $e\in E$, the sets
$\{E_v-\{e\}\}_{v<e}$
are pairwise disjoint, meaning that the lower bound on $(\star)$, see
\cref{EQ:PR:edges-fraction-avg:star-bound}, is actually an equality.
Since \cref{PR:edges-fraction-avg} is a step in proving our lower bounds on
confiners, this suggests that
the best confiners
are those in which every two hyperedges share
at most one vertex.
\end{remark}

\begin{remark}\label{RM:edges-frac-avg-regular}
When $H$ is a regular, the conclusion
of \cref{PR:edges-fraction-avg} simplifies
to
\[\Exp\sqbr*{\frac{|E(V(\rv B))|}{|E|}}
\geq \gamma+(1-\gamma)(1-(1-\gamma)^{d-1})^k.\]
Unfortunately, this simplified conclusion,
which is equivalent
to $\sum_{i=1}^m u_id_i\ln(1-(1-\gamma)^{d_i-1})\geq
d\ln((1-(1-\gamma)^{d-1})$, may  fail for general $H$,
because the function  $x\mapsto x\ln(1-(1-\gamma)^{x-1})$ need
not be concave up in an interval containing $d_1,\dots,d_m$.
\end{remark}

We shall use the following lemma to apply
\cref{PR:vertex-upper-bound-flip}
and~\cref{PR:edges-fraction-avg} together.

\begin{lemma}\label{LM:Markov-modified}
Let $\rv X$ be a random variable taking  values
in $[0,1]$. Put $a=\Exp \rv X$ and let $S$ be an event
with $\Pp(S)\geq 1-b$, for some $0\leq  b \leq 1$.
Then $\Pr\sqbr*{(\rv X\geq  a-2b) \wedge S}>0$.
\end{lemma}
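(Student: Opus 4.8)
The plan is to prove \cref{LM:Markov-modified} by contradiction: assume that the event $T := \{\rv X \ge a - 2b\} \cap S$ satisfies $\Pp[T] = 0$, and derive a contradiction with $a = \Exp \rv X$. No external input is needed; the argument is a single application of linearity of expectation together with the trivial bounds $0 \le \rv X \le 1$ and $\Pp[S^c] \le b$.

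First I would dispose of the degenerate range $a - 2b \le 0$. Since $\rv X$ takes values in $[0,1]$, in this range the event $\{\rv X \ge a - 2b\}$ is the entire probability space, so $T = S$ and therefore $\Pp[T] = \Pp[S] \ge 1 - b > 0$ (we are tacitly in the regime $b < 1$; for $b = 1$ the statement is vacuous). Hence from now on we may assume $a - 2b > 0$.

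The core step is a one-line expectation split. Writing $\mathbf{1}_T$ and $\mathbf{1}_{T^c}$ for the indicators of $T$ and its complement, linearity gives
\[
a \;=\; \Exp[\rv X \mathbf{1}_T] + \Exp[\rv X \mathbf{1}_{T^c}].
\]
For the first term use $\rv X \le 1$ to get $\Exp[\rv X \mathbf{1}_T] \le \Pp[T]$. For the second, note $T^c = \{\rv X < a - 2b\} \cup S^c$, so $\mathbf{1}_{T^c} \le \mathbf{1}_{\{\rv X < a - 2b\}} + \mathbf{1}_{S^c}$; on the first event $0 \le \rv X < a - 2b$ (here $a - 2b > 0$ is used) and on the second $\rv X \le 1$ while $\Pp[S^c] \le b$, so $\Exp[\rv X \mathbf{1}_{T^c}] \le (a - 2b) + b = a - b$. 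Combining the two bounds yields $a \le \Pp[T] + a - b$, i.e. $\Pp[T] \ge b$; if $b > 0$ this already contradicts $\Pp[T] = 0$.

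It then remains to treat the endpoint $b = 0$, which is exactly where the estimate $\Pp[T] \ge b$ becomes useless — this is the only point requiring a moment's thought. When $b = 0$ the hypothesis $\Pp[S] \ge 1 - b$ forces $\Pp[S] = 1$, so $T$ coincides, up to a null event, with $\{\rv X \ge a\}$, and the assumption $\Pp[T] = 0$ says $\rv X < a$ almost surely. But then $a - \rv X$ is a bounded, almost surely strictly positive random variable, hence has strictly positive expectation, i.e. $\Exp \rv X < a$, contradicting $a = \Exp \rv X$. This invokes only the elementary fact that a random variable cannot lie strictly below its own mean with probability one, and completes the proof.
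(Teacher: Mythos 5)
Your proof is correct, and it uses a similar expectation-splitting idea to the paper's but with a different decomposition. The paper conditions on $S$ versus $\neg S$, after first reducing (by monotonicity in $b$) to the case $b=1-\Pp(S)$ and disposing separately of the ranges $b\geq\frac12$ and $b=0$; you instead split $\Exp\rv X$ over $T:=\{\rv X\geq a-2b\}\cap S$ and its complement, and bound the contribution of $T^c$ directly via $T^c\subseteq\{\rv X<a-2b\}\cup S^c$ together with $\Pp\sqbr*{S^c}\leq b$, which sidesteps both preliminary reductions and yields the slightly sharper conclusion $\Pp\sqbr*{T}\geq b$ whenever $a-2b>0$. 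Both proofs dispose of $b=0$ by the same observation that a $[0,1]$-valued variable cannot lie strictly below its mean almost surely. One tiny imprecision: you describe the $b=1$ case as vacuous, whereas the lemma as literally stated is simply false there when $\Pp(S)=0$; the paper's claim that it holds trivially for $b\geq\frac12$ has the identical blind spot, and neither matters for the application, where $b$ is taken small.
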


\begin{proof}
This can be deduced from
Markov's Inequality, but
giving a direct proof is shorter.

The lemma holds trivially if $b\geq \frac{1}{2}$ or $b=0$,
so we may assume that $0<b<\frac{1}{2}$.
Moreover, decreasing $b$ can only
decrease  $\Pr\sqbr*{(\rv X > a-2b) \wedge S}$, so it is enough
to prove the lemma for $b=1-\Pp(S)$.
Suppose for the sake of contradiction that
$\Pr\sqbr*{(\rv X \geq a-2b) \wedge S}=0$.
Then
\begin{align*}
    a&=\Exp \rv X =
    \Exp\sqbr*{\rv X\,|\,S}\Pp(S)+
    \Exp\sqbr*{\rv X\,|\,\neg S}\Pp(\neg S)
    \\
    &\leq (a-2b)(1-b)+1\cdot b =a-2b-ab+2b^2+b
    =a-ab +b(2b-1).
\end{align*}
Since $0<b< \frac{1}{2}$, the final expression is strictly smaller than
$a$, which is absurd.
\end{proof}


Fix some $\delta\in (0,1)$.
Recall that our goal is to apply \cref{PR:vertex-upper-bound-flip}
with the parameter $\gamma$ chosen so that
the density of $V(\rv B)$ is at most $\delta$,
i.e., we need $\delta= \sum_{i=1} u_i(1-(1-\gamma)^{d_i})$ (up to
additional negligible terms).
\cref{PR:edges-fraction-avg} then provides
us with a lower bound on $\Exp\sqbr*{\frac{|E(V(\rv B))|}{|E|}}$
given \emph{in terms of $\gamma$} and other parameters of the
hypergraph $H$.
For example, when $H$ is regular, we have $m=1$ and $d_1=d$, so we can solve
$\delta=1-(1-\gamma)^d$ for $\gamma$ and get
$\gamma=1-(1-\delta)^{\frac{1}{d}}$.
Substituting this into \cref{PR:edges-fraction-avg} gives
$\Exp\sqbr*{\frac{|E(V(\rv B))|}{|E|}}\geq
\gamma+(1-\gamma)(1-(1-\gamma)^{d-1})^k=f_{k,r}(\delta)$,
where $f_{k,r}(x)$ is as in \cref{EQ:f-k-r-delta-dfn}, and
this is actually enough to prove \cref{TH:lower-bound-strong}(ii)
in the regular case. To treat the case of a general hypergraph,
we use the following theorem, which says that
$f_{k,r}(\delta)$ is a lower bound for
$\gamma + (1-\gamma)\prod_{i=1}^m(1-(1-\gamma)^{d_i-1})^{\frac{u_i d_i}{r}}$
whenever the minimum vertex degree of $H$ is at least $3$.
The very technical proof of this result is given in
\cref{subsec:proof-of-hard-optimization}.

\begin{theorem}\label{TH:hard-optimization}
Suppose we are given
\begin{itemize}
    \item $u_1,\dots,u_m\in [0,1]$   such that $\sum_{i=1}^m u_i=1$,
    \item $k\in [1,\infty)$ and
    \item $\delta\in (0,1)$.
\end{itemize}
For every $x_1,\dots,x_m\in (1,\infty)$, let
\begin{itemize}
    \item $d=\sum_{i=1}^m x_iu_i$,
    \item $r=\frac{d}{k}$,
    \item $\gamma$ be the unique element
        of $(0,1)$ such that\footnote{
            The existence and uniqueness of $\gamma$
            is an easy consequence of applying the Mean Value Theorem
            to the increasing function
            $f(t) =u_i(1-(1-t)^{x_i})$.
        }
        \begin{equation}\label{EQ:gamma-dfn}
            \delta=\sum_{i=1}^m u_i(1-(1-\gamma)^{x_i}),
        \end{equation}
\end{itemize}
and put
\[
    F(x_1,\dots,x_m)=\gamma+(1-\gamma)\prod_{i=1}^m(1-(1-\gamma)^{x_i-1})^{\frac{u_i
    x_i}{r}}.
\]
Let
$f_{k,r}(x)=1-(1-x)^{\frac{1}{rk}}+(1-x)^{\frac{1}{rk}}(1-(1-x)^{\frac{rk-1}{rk}})^k$
be as in \cref{EQ:f-k-r-delta-dfn}.
If $x_1,\dots,x_m\geq 3$, then
\[
    F(x_1,\dots,x_m)\geq
    f_{k,r}(\delta)
\]
and
equality holds if and only if $x_1=\dots=x_m=d$.
\end{theorem}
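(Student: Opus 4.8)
\emph{Proof proposal.} First I would remove the trivial coordinates: if $u_i=0$ then $x_i$ affects neither $d$, nor $\gamma$, nor the exponent $\frac{u_ix_i}{r}$, so $F$ does not depend on $x_i$; hence we may assume $u_i>0$ for all $i$. The next, and in my view most useful, observation is that $f_{k,r}(\delta)$ depends on $(x_1,\dots,x_m)$ \emph{only through} $d=\sum_i u_ix_i$, since $rk=d$ forces $f_{k,r}(\delta)=1-(1-\delta)^{1/d}+(1-\delta)^{1/d}\bigl(1-(1-\delta)^{(d-1)/d}\bigr)^{k}$; a direct substitution shows this equals $F(d,\dots,d)$. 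Therefore it suffices to fix $d_0\ge 3$ and prove that on the \emph{slice} $\Delta_{d_0}=\{x\in[3,\infty)^m:\sum_i u_ix_i=d_0\}$ the function $F$ attains its minimum exactly at the diagonal point $(d_0,\dots,d_0)$ and nowhere else. Since all $u_i>0$, each coordinate obeys $x_i\le (d_0-3(1-u_i))/u_i$ on $\Delta_{d_0}$, so $\Delta_{d_0}$ is compact and the minimum is attained.

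On $\Delta_{d_0}$ the ratio $r=d_0/k$ is constant, so $F=\gamma+(1-\gamma)\prod_i\bigl(1-(1-\gamma)^{x_i-1}\bigr)^{ku_ix_i/d_0}$ with $\gamma=\gamma(x)\in(0,1)$ determined by $\sum_i u_i\bigl(1-(1-\gamma)^{x_i}\bigr)=\delta$. I would treat $\gamma$ as an auxiliary variable and run Lagrange multipliers with the two constraints $\sum_i u_ix_i=d_0$ and $\sum_i u_i(1-(1-\gamma)^{x_i})=\delta$. Writing $s=1-\gamma$, a short computation shows every interior critical point $x$ satisfies $h_s(x_i)=\lambda$ for \emph{all} $i$, where $\lambda$ is a constant and $h_s(t)=C\bigl[\ln(1-s^{t-1})-\frac{t\,s^{t-1}\ln s}{1-s^{t-1}}\bigr]+\mu\,s^{t}\ln s$ for suitable constants $C>0$ and $\mu$ coming from the multipliers --- the factor $u_i$ cancels cleanly. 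It then remains to analyse $h_s$: I would show $\frac{h_s'(t)}{s^{t}}$ is strictly monotone on $[3,\infty)$ (the $\mu$-term contributes a constant, and one checks that $s^{-t}\frac{d}{dt}\bigl[\ln(1-s^{t-1})-\frac{ts^{t-1}\ln s}{1-s^{t-1}}\bigr]$ is strictly monotone there), so $h_s'$ has at most one zero, $h_s$ is unimodal, and each level set of $h_s$ meets $[3,\infty)$ in at most two points. Hence at an interior minimum the multiset $\{x_i\}$ takes at most two distinct values; a standard refinement of the same argument shows that at a boundary minimum the coordinates not pinned at $3$ again take at most two distinct values, so in all cases the minimizer is described by at most a bounded number of parameters.

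This reduces the theorem to an explicit low-dimensional inequality: say two values $a\le b$ in $[3,\infty)$ carrying weights $p,1-p\in[0,1]$ with $pa+(1-p)b=d_0$ (plus, at the boundary, a third value equal to $3$, which is handled by the same method), and one must show $F\ge F(d_0,\dots,d_0)$ with equality iff $a=b$. Here I would substitute $\delta_a=1-(1-\gamma)^{a}$, $\delta_b=1-(1-\gamma)^{b}$, which turns the $\gamma$-constraint into the \emph{linear} relation $p\delta_a+(1-p)\delta_b=\delta$ and recovers $\gamma=1-(1-\delta_a)^{\theta}(1-\delta_b)^{1-\theta}$ for the appropriate weight $\theta$; the reduced objective then becomes an explicit function, and after eliminating one variable via the linear constraint I would prove the remaining one-variable inequality by a sign analysis of its derivative, using $a,b\ge 3$ to fix the sign at the relevant points. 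Strictness away from $a=b$ (hence the equality case) falls out of strict convexity/monotonicity in these one-variable estimates.

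\textbf{Main obstacle.} The genuinely hard parts are the two analytic lemmas flagged above: the unimodality of $h_s$ (equivalently, the monotonicity of $s^{-t}h_s'(t)$) and, especially, the final explicit inequality in the two-value regime. Both are concrete calculus statements about transcendental expressions in $\gamma$, $a$, $b$, $k$, and they are precisely where the hypothesis $x_i\ge 3$ is used --- for $x_i$ near $1$ the required sign conditions degenerate, which matches the paper's remarks that the bound $f_{k,r}$ is not available for small uniformities. By contrast, the reduction to a slice, compactness, the Lagrange bookkeeping, and the propagation of the equality case are all routine once these two analytic lemmas are in hand.
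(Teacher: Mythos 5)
Your proposal takes a genuinely different route from the paper, and the route is plausible in outline, but you have left unproved precisely the two analytic facts that constitute the substance of the theorem; by your own admission they are the "main obstacle." Specifically, you never establish (a) the unimodality of $h_s$ on $[3,\infty)$ (and note that $h_s$ depends on the Lagrange multipliers, which themselves vary with the critical point, so one would have to show unimodality for every admissible multiplier pair, or else argue the multipliers are pinned down), and (b) the final explicit inequality in the two-value regime, plus the boundary version with a third value pinned at $3$. Until those are supplied the argument is a reduction, not a proof, and it is not obvious that either lemma is true, much less provable by a "sign analysis of the derivative." Everything else — slicing by $d_0$, compactness, $F(d,\dots,d)=f_{k,r}(\delta)$, the Lagrange bookkeeping — is correct and routine, but it only shifts the difficulty onto these two statements.

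The paper avoids the Lagrange/two-value machinery entirely. After writing $F=\gamma+(1-\gamma)\exp\!\bigl(\tfrac{1}{r}\sum_i u_i x_i\ln(1-(1-\gamma)^{x_i-1})\bigr)$, it substitutes $c_i=1-(1-\gamma)^{x_i-1}$; the identity $1-(1-\gamma)^{x_i}=\gamma+(1-\gamma)c_i$ turns the $\gamma$-constraint into the \emph{linear} relation $\sum_i u_ic_i=\tfrac{\delta-\gamma}{1-\gamma}$, and the sum in the exponent becomes $\sum_i u_i\,g(c_i)$ where $g(x)=\bigl(1+\tfrac{\ln(1-x)}{\ln(1-\gamma)}\bigr)\ln x$. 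The hypothesis $x_i\ge 3$ is exactly what puts each $c_i$ in $[1-(1-\gamma)^2,1]$, where $g$ is shown to be strictly convex; a single application of Jensen then gives the lower bound (with equality iff all $c_i$, hence all $x_i$, coincide). A final one-variable monotonicity argument in $\gamma$ recovers $f_{k,r}(\delta)$. In other words, the paper's substitution linearizes the constraint and isolates a single convexity lemma, which is far more tractable than the unimodality-plus-two-value reduction you propose. If you wish to salvage your route, you would need to actually prove (a) and (b), and I would recommend instead looking for a convexity reformulation along these lines, since the $x_i\ge 3$ condition has a clean convexity interpretation that your framework obscures.
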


\begin{remark}
(i)
By straightforward computation,
$F(d,\dots,d)=f_{k,r}(\delta)$ even without assuming
$x_1,\dots,x_m\geq 3$.

(ii) We conjecture that \cref{TH:hard-optimization} continues
to hold if we
replace the assumption $x_1,\dots,x_m\geq 3$ with  $x_1,\dots,x_n\geq
2$. If that were true,
then \cref{TH:lower-bound-strong}(ii) would also hold for
non-regular hypergraphs having minimum vertex degree $ 2$.
\end{remark}


We shall also need the following two lemmas.

\begin{lemma}\label{LM:gamma-bounds}
With notation as in \cref{TH:hard-optimization},
if $x_1,\dots,x_m\geq c$ for some $c\geq 1$, then
\[1-(1-\delta)^{\frac{1}{d}}\leq \gamma\leq
1-(1-\delta)^{\frac{1}{c}}.\]
\end{lemma}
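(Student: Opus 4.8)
The plan is to pass to the variable $t := 1-\gamma$, which lies in $(0,1)$ by the existence/uniqueness statement recorded in the footnote to \cref{TH:hard-optimization}. In this notation the defining relation \eqref{EQ:gamma-dfn} reads
\[
    \sum_{i=1}^m u_i\, t^{x_i} \;=\; 1-\delta .
\]
Isolating $t=1-\gamma$ and raising to the (positive) powers $1/c$ and $1/d$ — which preserves the direction of inequalities between quantities in $[0,1]$ — one checks that the two asserted bounds $1-(1-\delta)^{1/d}\le \gamma\le 1-(1-\delta)^{1/c}$ are exactly equivalent to the chain
\[
    t^d \;\le\; 1-\delta \;=\; \sum_{i=1}^m u_i\, t^{x_i} \;\le\; t^c .
\]
So it suffices to prove these two estimates on the weighted power sum $\sum_i u_i t^{x_i}$.

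For the right-hand inequality I would use only monotonicity: since $t\in(0,1)$, the map $s\mapsto t^s$ is decreasing, so $x_i\ge c$ gives $t^{x_i}\le t^c$ for every $i$; summing against the probability weights $u_i$ (which sum to $1$) yields $\sum_i u_i t^{x_i}\le t^c$, i.e.\ $1-\delta\le t^c$, hence $t\ge (1-\delta)^{1/c}$ and $\gamma=1-t\le 1-(1-\delta)^{1/c}$. For the left-hand inequality the key point is convexity: $s\mapsto t^s=e^{s\ln t}$ is the exponential of a linear function of $s$, hence convex, so Jensen's inequality with the weights $u_1,\dots,u_m$ gives $\sum_i u_i t^{x_i}\ge t^{\sum_i u_i x_i}=t^{d}$ by definition of $d$; thus $1-\delta\ge t^d$, so $t\le(1-\delta)^{1/d}$ and $\gamma=1-t\ge 1-(1-\delta)^{1/d}$.

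There is no real obstacle here: the whole argument reduces to the two one-line bounds $t^d\le \sum_i u_i t^{x_i}\le t^c$, the first from weighted AM--GM / convexity of $t^{(\cdot)}$ and the second from its monotonicity. The only things requiring a word of care are that $\gamma$ (equivalently $t$) genuinely lies in $(0,1)$ so that all monotonicity statements apply, and that the exponentiations by $1/c$ and $1/d$ used to translate between bounds on $t$ and on $1-\delta$ are order-preserving, which holds since all exponents are positive and all bases lie in $[0,1]$.
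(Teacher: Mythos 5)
Your proof is correct and is essentially the paper's argument in the substituted variable $t=1-\gamma$: the monotonicity bound $t^{x_i}\le t^c$ is the same as the paper's $1-(1-\gamma)^{x_i}\ge 1-(1-\gamma)^c$, and your convexity/Jensen step for $s\mapsto t^s$ is the same as the paper's concavity/Jensen step for $x\mapsto 1-(1-\gamma)^x$.
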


\begin{proof}
Recall that $\delta = \sum_i u_i(1-(1-\gamma)^{x_i})$.
Since $x_i\geq c$ for all $i$,
\[
    \delta \geq \sum_i u_i(1-(1-\gamma)^{c})=1-(1-\gamma)^c,
\]
and rearranging gives $\gamma\leq 1-(1-\delta)^{\frac{1}{c}}$.

Next, the function $x\mapsto 1-(1-\gamma)^x$ is concave down on
$[0,\infty)$,
so by Jensen's inequality,
\[
    \delta=\sum_i u_i(1-(1-\gamma)^{x_i})\leq
    1-(1-\gamma)^{\sum_i u_i x_i}=1-(1-\gamma)^d.
\]
Rearranging gives $\gamma\geq 1-(1-\delta)^{\frac{1}{d}}$.
\end{proof}

We postpone the proof of
the following lemma
to \cref{subsec:properties-of-f-k-r-delta}.

\begin{lemma}\label{LM:delta-change}
Let $k,r>0$ be     such that $rk\geq 1$ and let
$f_{k,r}$ be as in \cref{EQ:f-k-r-delta-dfn}.
Then for every $x,x'\in [0,1]$, we have
\[
    |f_{k,r}(x)-f_{k,r}(x')|\leq k|x-x'|.
\]
\end{lemma}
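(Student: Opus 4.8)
The plan is to deduce the Lipschitz bound from a pointwise bound on the derivative of $f_{k,r}$ together with \cref{LM:diff_trick}. First I would invoke \cref{PR:derivative}, which is established earlier in \cref{subsec:properties-of-f-k-r-delta} and whose proof is independent of the present lemma: under the standing hypothesis $rk\ge 1$ it gives $f'_{k,r}(t)\in[\tfrac{1}{rk},k]$ for all $t\in[0,1]$, and in particular $f_{k,r}$ is differentiable on $[0,1]$ (with one-sided derivatives at the endpoints) and satisfies $0\le f'_{k,r}(t)\le k$ throughout.

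Granting this, the lemma is immediate. If $x=x'$ there is nothing to prove, so assume, say, $x<x'$ in $[0,1]$. Applying \cref{LM:diff_trick} to $f_{k,r}$ on the interval $[x,x']$ with the bounds $A=0$ and $B=k$ yields $0\le f_{k,r}(x')-f_{k,r}(x)\le k(x'-x)$, hence $|f_{k,r}(x)-f_{k,r}(x')|\le k|x-x'|$. Since the roles of $x$ and $x'$ are symmetric, this is the claim.

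The only point requiring care is the logical ordering: \cref{PR:derivative} must be placed before \cref{LM:delta-change} in \cref{subsec:properties-of-f-k-r-delta}. It is worth stressing that one cannot shortcut this via a quick self-contained estimate. Writing $s=\tfrac{1}{rk}\in(0,1]$ and $y=1-x$, a direct computation gives $f'_{k,r}(x)=sy^{s-1}\bigl(1-(1-y^{1-s})^k\bigr)+k(1-s)(1-y^{1-s})^{k-1}$, and bounding the first summand crudely via \cref{LM:power-dist}(i) (i.e.\ $1-(1-y^{1-s})^k\le ky^{1-s}$) only yields $f'_{k,r}(x)\le\tfrac{1}{r}+k$. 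The sharp constant $k$, which is attained in the limit $x\to 1^-$, is precisely the content of \cref{PR:derivative}; the genuine mathematical obstacle lives there, while the present lemma is a one-line consequence of it and \cref{LM:diff_trick}.
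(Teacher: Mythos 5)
Your proof is correct and is essentially the paper's own argument: the paper's proof of \cref{LM:delta-change} is exactly "This follows immediately from \cref{LM:diff_trick} and \cref{PR:derivative}," which is what you spell out. The extra discussion about why a crude bound via \cref{LM:power-dist}(i) only gives $f'_{k,r}\le \frac{1}{r}+k$ is fine commentary but not needed; the one-line deduction suffices.
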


We are   now in position to put the previous results together and
prove a version of
\cref{TH:lower-bound-strong}(ii) holding under the assumption
that the maximum vertex degree and the maximum uniformity
of the hypergraph $H$ are not too large.

\begin{theorem}\label{TH:pre-lower-bound-i}
Let $\alpha\in [0,\frac{1}{2})$, $\beta\in[0,1)$, $\delta\in
(0,1)$, $\eta\in
(0,\frac{1}{2} )$ be real numbers such that
\[
    \mu:=( 1-\alpha -\beta-\eta)(1-2\eta)-\alpha>0.
\]
Then there are   constants
$A_{\ref{TH:pre-lower-bound-i}}(k,r),
B_{\ref{TH:pre-lower-bound-i}}(k,r,\delta),
N_{\ref{TH:pre-lower-bound-i}}(k,r,\delta,\alpha,\beta,\eta)>0$
(depending only on the indicated parameters)
such that the following hold:
With notation as in the beginning of
\cref{subsec:lower-bound-with-assumpt}, suppose that the
hypergraph $H$ satisfies
\begin{equation}\label{eq:extra_assumptions-pre-lower-bound}
    \deg(v) \leq d n^\alpha~\textrm{for all}~v \in V\quad\tand\quad
    |e|\leq k n^\beta~\textrm{for all}~e \in E
\end{equation}
and also one of the following:
\begin{enumerate}[(1)]
    \item $\deg(v)\geq 3$ for all $v\in V$, or
    \item $H$ is regular.
\end{enumerate}
If $n\geq N_{\ref{TH:pre-lower-bound-i}}(k,r,\delta,\alpha,\beta,\eta)$
and $d=rk\geq 1$, then there is a set of vertices $A\subseteq V$
of density at most $\delta$
such that
\[
    \frac{|E(A)|}{|E|}
    \geq
    f_{k,r}(\delta)-
    A_{\ref{TH:pre-lower-bound-i}}(k,r) n^{-\eta}-
    B_{\ref{TH:pre-lower-bound-i}}(k,r,\delta) n^{-\mu}
\]
where
$f_{k,r}$ is as in \cref{EQ:f-k-r-delta-dfn}.
Moreover, we can take
\begin{align*}
    &A_{\ref{TH:pre-lower-bound-i}}(k,r ) =
    kA_{\ref{PR:vertex-upper-bound-flip}}(k,r ) = 2dk^2+dk=
    O(k^3r),
    \\
    & B_{\ref{TH:pre-lower-bound-i}}(k,r,\delta)  =
    2\max_{\frac{\delta}{2d}\leq \gamma\leq \delta}
    B_{\ref{PR:vertex-upper-bound-flip}}(k,r,\gamma) \leq
    \frac{512 C_{\ref{cor:thm0}}d^4}{e^2\delta^3}+2
    =O(k^4r^4\delta^{-3}),
    \\
    & N_{\ref{TH:pre-lower-bound-i}}(k,r,\delta,\alpha,\beta,\eta)
    =
    \max_{\frac{\delta}{2d}\leq \gamma\leq \delta}
    \left\{
        N_{\ref{PR:vertex-upper-bound-flip}}(k,r,\alpha,\gamma,\beta,\eta),
        \parens*{\frac{2A_{\ref{PR:vertex-upper-bound-flip}}(k,r)}{\delta}}^{\eta^{-1}},
        k^{2(1-\alpha-\beta)^{-1}}
    \right\}
    \\
    & \qquad \leq \max\{N_1,N_2,N_3,N_4,N_5 \},
\end{align*}
where:
\begin{align*}
    N_1 &= \parens*{\frac{16dk}{\delta(1-\delta)}}^{(1-2\alpha)^{-1}} &
    N_2 &= \parens*{\frac{4d}{\delta(1-\delta)}}^{ \eta^{-1}} &
    N_3 &= \max\left\{
        \frac{16}{(1- 2\eta)^2},\mu k
    \right\}^{2(1- 2\eta)^{-1}} \\
    N_4 &= \parens*{\frac{4dk+2d}{\delta}}^{\eta^{-1}} &
    N_5 &=k^{2(1-\alpha-\beta)^{-1}}
\end{align*}
\end{theorem}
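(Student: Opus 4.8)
The plan is to produce the set $A$ probabilistically, following the strategy outlined before the statement: sample a random set of edges $\rv B\subseteq E$, put $\rv A:=V(\rv B)$, and choose the inclusion probability $\gamma$ so that, with positive probability, $\rv A$ has density at most $\delta$ while $|E(\rv A)|/|E|$ is at least (essentially) $f_{k,r}(\delta)$. We may assume $d=rk\geq 2$: in case (1) the hypothesis $\deg(v)\geq 3$ forces $d\geq 3$, and in case (2) the remaining possibility $d=1$ is degenerate, since then $f_{k,r}(\delta)=\delta$ and taking $A=V(B)$ for a suitable union $B\subseteq E$ of edges of total size about $\delta n$ already gives $|E(A)|/|E|\geq |B|/|E|\approx \delta$. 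Now set $\delta':=\delta-A_{\ref{PR:vertex-upper-bound-flip}}(k,r)\,n^{-\eta}$; once $n\geq N_4:=(2A_{\ref{PR:vertex-upper-bound-flip}}(k,r)/\delta)^{1/\eta}$ we have $\delta'\in(\delta/2,\delta)\subseteq(0,1)$, and since $\gamma\mapsto\sum_{i=1}^m u_i(1-(1-\gamma)^{d_i})$ is a continuous increasing bijection of $(0,1)$ onto $(0,1)$, there is a unique $\gamma\in(0,1)$ with $\sum_{i=1}^m u_i(1-(1-\gamma)^{d_i})=\delta'$. Let $\rv B\subseteq E$ be obtained by including each $e\in E$ independently with probability $\gamma$, and $\rv A:=V(\rv B)$.

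\emph{Density control and pinning $\gamma$.} Apply \cref{PR:vertex-upper-bound-flip} with this $\gamma$ (its hypotheses $\deg\leq dn^\alpha$, $|e|\leq kn^\beta$, $d\geq 1$ and the $\mu$-condition are among the assumptions, and it requires $n\geq N_{\ref{PR:vertex-upper-bound-flip}}(k,r,\alpha,\beta,\gamma,\eta)$): the event
\[
  S:=\bigl\{\,|V(\rv B)|/n\leq \delta'+A_{\ref{PR:vertex-upper-bound-flip}}(k,r)n^{-\eta}\,\bigr\}=\bigl\{\,|\rv A|\leq \delta n\,\bigr\}
\]
has probability $\Pp(S)\geq 1-B_{\ref{PR:vertex-upper-bound-flip}}(k,r,\gamma)\,n^{-\mu}$. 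To tame the $\gamma$-dependence of the constants I invoke \cref{LM:gamma-bounds} (with $c=3$ in case (1) and $c=d$ in case (2)): together with \cref{LM:power-ineq} applied to the exponent $1/d\in(0,1)$ and the elementary inequality $1-(1-t)^{1/c}\leq t$, it yields $\frac{\delta}{2d}\leq \frac{\delta'}{d}\leq \gamma\leq \delta'\leq \delta$. Hence $B_{\ref{PR:vertex-upper-bound-flip}}(k,r,\gamma)$ and the needed instance of $N_{\ref{PR:vertex-upper-bound-flip}}$ can be replaced by their maxima over $\gamma\in[\delta/2d,\delta]$; this is precisely how $B_{\ref{TH:pre-lower-bound-i}}$ and, together with $N_4$ and $N_5$ below, $N_{\ref{TH:pre-lower-bound-i}}$ (unfolding to $N_1,N_2,N_3$) are defined.

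\emph{Edge-count lower bound.} The hypotheses \cref{eq:extra_assumptions-edges-avg} of \cref{PR:edges-fraction-avg} hold: $\deg(v)\geq 2$ by the assumption in case (1) (and $d\geq 2$ in case (2)), and $\sum_{v<e}(\deg(v)-1)\leq kn^{\beta}\cdot dn^{\alpha}=rk^{2}n^{\alpha+\beta}<rn=|E|$ once $n> N_5:=k^{2(1-\alpha-\beta)^{-1}}$. Thus $\Exp[|E(\rv A)|/|E|]\geq \gamma+(1-\gamma)\prod_{i=1}^m(1-(1-\gamma)^{d_i-1})^{u_id_i/r}$. In case (1) each $d_i\geq 3$, so \cref{TH:hard-optimization} bounds the right-hand side below by $f_{k,r}(\delta'')$ with $\delta''=\sum_i u_i(1-(1-\gamma)^{d_i})=\delta'$; in case (2) we have $m=1$, $d_1=d$, $u_1=1$, hence $\gamma=1-(1-\delta')^{1/d}$ and by \cref{RM:edges-frac-avg-regular} the right-hand side equals $\gamma+(1-\gamma)(1-(1-\gamma)^{d-1})^k=f_{k,r}(\delta')$. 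In either case $\Exp[|E(\rv A)|/|E|]\geq f_{k,r}(\delta')\geq f_{k,r}(\delta)-k(\delta-\delta')=f_{k,r}(\delta)-A_{\ref{TH:pre-lower-bound-i}}(k,r)\,n^{-\eta}$, the middle step being \cref{LM:delta-change} and the last using $k\,A_{\ref{PR:vertex-upper-bound-flip}}(k,r)=A_{\ref{TH:pre-lower-bound-i}}(k,r)$.

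\emph{Conclusion.} Apply \cref{LM:Markov-modified} to $\rv X:=|E(\rv A)|/|E|\in[0,1]$, with $a:=\Exp\rv X\geq f_{k,r}(\delta)-A_{\ref{TH:pre-lower-bound-i}}(k,r)n^{-\eta}$ and the event $S$ above, $b:=B_{\ref{PR:vertex-upper-bound-flip}}(k,r,\gamma)n^{-\mu}$: it gives $\Pp[(\rv X\geq a-2b)\wedge S]>0$, so some realization $B\subseteq E$ satisfies both $|V(B)|\leq \delta n$ and $|E(V(B))|/|E|\geq a-2b$. Taking $A:=V(B)$ (enlarging to density exactly $\delta$ if one likes, which only grows $E(A)$) yields a density-$\leq\delta$ set with $|E(A)|/|E|\geq f_{k,r}(\delta)-A_{\ref{TH:pre-lower-bound-i}}(k,r)n^{-\eta}-2B_{\ref{PR:vertex-upper-bound-flip}}(k,r,\gamma)n^{-\mu}\geq f_{k,r}(\delta)-A_{\ref{TH:pre-lower-bound-i}}(k,r)n^{-\eta}-B_{\ref{TH:pre-lower-bound-i}}(k,r,\delta)n^{-\mu}$ by $\gamma\in[\delta/2d,\delta]$. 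The explicit formulas for $A_{\ref{TH:pre-lower-bound-i}}$, $B_{\ref{TH:pre-lower-bound-i}}$, $N_{\ref{TH:pre-lower-bound-i}}$ then follow by collecting the bounds of \cref{PR:vertex-upper-bound-flip}, \cref{PR:edges-fraction-avg}, \cref{cor:thm0}, and the thresholds $N_4,N_5$. I expect the only real difficulty to be the bookkeeping: verifying that a single $\gamma$ meets the prerequisites of \cref{PR:vertex-upper-bound-flip}, \cref{PR:edges-fraction-avg}, \cref{TH:hard-optimization} and the cited lemmas simultaneously, and that confining $\gamma$ to $[\delta/2d,\delta]$ suffices to absorb it into the error constants — the substantive analytic inputs being already in hand.
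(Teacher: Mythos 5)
Your proof follows essentially the same route as the paper's: set $\tilde\delta=\delta-A_{\ref{PR:vertex-upper-bound-flip}}(k,r)n^{-\eta}$, solve $\sum_i u_i(1-(1-\gamma)^{d_i})=\tilde\delta$ for $\gamma$, pin $\gamma\in[\delta/2d,\delta]$ via \cref{LM:gamma-bounds} and \cref{LM:power-ineq}, control $|V(\rv B)|$ by \cref{PR:vertex-upper-bound-flip}, lower-bound $\Exp[|E(V(\rv B))|/|E|]$ by \cref{PR:edges-fraction-avg} and \cref{TH:hard-optimization} (or directly in the regular case), pass from $\tilde\delta$ to $\delta$ by \cref{LM:delta-change}, and finish with \cref{LM:Markov-modified}; the error terms are split exactly as in the definitions of $A_{\ref{TH:pre-lower-bound-i}},B_{\ref{TH:pre-lower-bound-i}},N_{\ref{TH:pre-lower-bound-i}}$.

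One difference worth noting: you explicitly observe that \cref{PR:edges-fraction-avg} requires $\deg(v)\geq 2$, which is automatic under hypothesis (1) but \emph{not} under (2) when $d=rk=1$, and you propose to dispatch the $1$-regular case directly by a greedy choice of hyperedges (noting that then $f_{k,r}(\delta)=\delta$, so one only needs $|E(A)|/|E|\gtrsim\delta$). The paper's proof silently assumes $\deg(v)\geq 2$ when invoking \cref{PR:edges-fraction-avg} and verifies only the second of that proposition's two hypotheses, so you have in fact caught a small gap that the paper's argument leaves unaddressed (the theorem's stated assumption $d\geq 1$ does not exclude $d=1$ in the regular case). Your patch is correct in outline, but the "$\approx\delta$" deserves a line of justification: in a $1$-regular hypergraph the edges partition $V$, so taking the $\lfloor\delta\,|E|\rfloor$ smallest edges covers at most $\delta n$ vertices (the smallest $t$ edges have average size at most the overall average $1/r$) while confining at least a $(\delta-1/|E|)$-fraction of the edges, which is comfortably within the error budget $A_{\ref{TH:pre-lower-bound-i}}n^{-\eta}+B_{\ref{TH:pre-lower-bound-i}}n^{-\mu}$.
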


\begin{proof}
Suppose $n\geq N_{\ref{TH:pre-lower-bound-i}}$.
Put $\ee = A_{\ref{PR:vertex-upper-bound-flip}}(k,r ) n^{-\eta}$ and let
\[\tilde{\delta}=\delta-\ee.\]
Since $n\geq
\parens*{\frac{2A_{\ref{PR:vertex-upper-bound-flip}}(k,r)}{\delta}}^{\eta^{-1}}$,
we have $\ee\leq \frac{\delta}{2}$, and hence
\begin{equation} \label{EQ:TH:pre-lower-bound-i:N-bound-ii}
    \tilde\delta\geq \frac{1}{2}\delta.
\end{equation}

As noted earlier, there is a unique $\gamma\in (0,1)$
such that
\[
    \tilde{\delta}=\sum_{i=1}^m u_i (1-(1-\gamma)^{d_i}).
\]
Assumptions (1) and (2) imply that $d_i\geq 1$ for all $i$, and
we have $\sum_{i=1}^m u_i d_i=d=rk$.
Thus, by
\cref{LM:gamma-bounds}
(applied with $\tilde{\delta}$ in place of $\delta$
and $(x_1,\dots,x_m)=(d_1,\dots,d_m)$),
we have
\[
    \delta\geq \tilde{\delta}=1-(1-\tilde{\delta})^{\frac{1}{1}}
    \geq
    \gamma\geq 1-(1-\tilde{\delta})^{\frac{1}{d}}.
\]
By \cref{EQ:TH:pre-lower-bound-i:N-bound-ii} and
\cref{LM:power-ineq}, this means that
\[
    1-(1-\tilde{\delta})^{\frac{1}{d}}\geq
    \frac{\tilde{\delta}}{d}
    \geq \frac{\delta}{2d}.
\]
Together, this implies
\begin{equation}\label{EQ:TH:pre-lower-bound-i:gamma-bounds}
    \frac{\delta}{2 d} \leq \gamma\leq \delta .
\end{equation}


%

Choose $\rv B\subseteq E$ at random by adding each $e\in E$ to $\rv B$
with probability $\gamma$ and put $\rv A=V(\rv B)$.
By \eqref{EQ:TH:pre-lower-bound-i:gamma-bounds} and the definition
of $N_{\ref{TH:pre-lower-bound-i}}$,
we have  $n\geq
N_{\ref{PR:vertex-upper-bound-flip}}(k,r,\alpha,\beta,\gamma,\eta)$.
Thus, \cref{PR:vertex-upper-bound-flip} tells us that
\begin{align}\label{EQ:TH:pre-lower-bound-i:density-prob}
    \Pp\sqbr*{\frac{|\rv A|}{|V|}\leq \delta}
    &=
    \Pp\sqbr*{\frac{|V(\rv B)|}{n} \leq \tilde{\delta} + \ee}
    \\
    &=
    \Pp\sqbr*{\frac{|V(\rv B)|}{n} \leq \sum_{i=1}^m u_i
        (1-(1-\gamma)^{d_i}) +
    A_{\ref{PR:vertex-upper-bound-flip}}(k,r) n^{-\eta}}
    \nonumber
    \\
    &\geq
    1-B_{\ref{PR:vertex-upper-bound-flip}}(k,r,\gamma) n^{-\mu}
    \nonumber
    \\
    &\geq
    1-\frac{1}{2} B_{\ref{TH:pre-lower-bound-i}}(k,r,\delta)
    n^{-\mu}.
    \nonumber
\end{align}

We proceed by applying \cref{PR:edges-fraction-avg} to $H$.
To that end, we must first check that $\sum_{v<e}(\deg(v)-1)<|E|$.
This clear if $|e|=0$, and otherwise,
by \eqref{eq:extra_assumptions-pre-lower-bound},
we have $\sum_{v<e}(\deg(v)-1)<kn^\beta \cdot  dn^{\alpha }=rk^2
n^{\alpha+\beta}$,
and the left hand side cannot exceed $|E|=rn$ because $n\geq
N_{\ref{TH:pre-lower-bound-i}}
\geq k^{2(1-\alpha-\beta)^{-1}}$.
Now that prerequisites of  \cref{PR:edges-fraction-avg} hold,
it follows that
\[
    \Exp\sqbr*{\frac{|E(\rv A)|}{|E|}} =
    \Exp\sqbr*{\frac{|E(V(\rv B))|}{|E|}}
    \geq
    \gamma + (1-\gamma)\prod_{i=1}^m(1-(1-\gamma)^{d_i-1})^{\frac{u_i
    d_i}{r}} =F(d_1,\dots,d_m) ,
\]
where $F(d_1,\dots,d_m)$ is as in
\cref{TH:hard-optimization}
with $\tilde{\delta}$ in place of $\delta$.
Now, if $H$ is regular, then
$F(d_1,\dots,d_m)=F(d,\dots,d)=f_{k,r}(\tilde{\delta})$ (in fact,
$m=1$ in this case),
and if $\deg(v)\geq 3$ for all $v\in V$ (equiv.\ $d_1,\dots,d_m\geq
3$), then  $F(d_1,\dots,d_m)\geq  f_{k,r}(\tilde{\delta})$ by
\cref{TH:hard-optimization}. As a result,
\begin{equation}\label{EQ:TH:pre-lower-bound-i:exp-bound}
    \Exp\sqbr*{\frac{|E(\rv A)|}{|E|}}\geq f_{k,r}(\tilde{\delta}).
\end{equation}

Finally, applying \cref{LM:Markov-modified} with $\rv X =
\frac{|E(\rv A)|}{|E|}$
and $S=\{\frac{|\rv A|}{|V|}\leq \delta\}$, and using
\cref{EQ:TH:pre-lower-bound-i:exp-bound} and
\cref{EQ:TH:pre-lower-bound-i:density-prob}, we get
\[
    \Pp\sqbr*{ \frac{|E(\rv A)|}{|E|}\geq f_{k,r}(\tilde{\delta})-
        B_{\ref{TH:pre-lower-bound-i}}(k,r,\delta) n^{-\mu}
    \tand \frac{|\rv A|}{|V|}\leq \delta}>0.
\]
As a result, there is $B\subseteq E$ such that $A=V(B)$ has
density at most $\delta$ in $V$
and
\[
    \frac{|E(A)|}{|E|}\geq f_{k,r}(\tilde{\delta})-
    B_{\ref{TH:pre-lower-bound-i}}(k,r,\delta)  n^{-\mu}.
\]
By Lemma~\ref{LM:delta-change},
$f_{k,r}(\tilde{\delta})=f_{k,r}(\delta-\ee)\geq
f_{k,r}(\delta)-k\ee$, so
\[
    \frac{|E(A)|}{|E|}\geq f_{k,r}(\delta)-
    kA_{\ref{PR:vertex-upper-bound-flip}}(r,k)
    n^{-\eta}-B_{\ref{TH:pre-lower-bound-i}}(k,r,\delta)  n^{-\mu}.
\]
This proves the theorem.
\end{proof}

\begin{remark}\label{RM:regular_is_best}
An easy modification of the proof of \cref{TH:pre-lower-bound-i}
also shows that
for every $\ee>0$, there is $N$ (depending on
$k,r,\delta,\alpha,\beta,\eta$ and $\ee$)
such that whenever $n\geq N$, there exists $A\subseteq V$ of
density at most $\delta$
satisfying
$\frac{|E(A)|}{|E|}\geq F(d_1,\dots,d_m)-\ee$;
here $F$ is as in \cref{TH:hard-optimization}. This improves the lower
bound $\frac{|E(A)|}{|E|}\geq f_{k,r}(\delta)-o(1)=F(d ,\dots,d )-o(1)$
asserted in the theorem when $H$ is not regular.
It further suggests that the best possible confiners should be regular.
\end{remark}

When $H$ is regular and uniform, we can take $\alpha=\beta=0$ in
\cref{TH:pre-lower-bound-i}. In this case, the theorem simplifies to
the following corollary.

\begin{corollary}\label{CR:lower_bound_for_reg_unif}
Let   $r>0$, $\delta\in (0,1)$ and $\eta\in (0,\frac{1}{2})$ be
real numbers,
let $k\geq 2$ be an integer,
and let $A_{\ref{TH:pre-lower-bound-i}}(k,r)$,
$B_{\ref{TH:pre-lower-bound-i}}(k,r,\delta)$ be as in
\cref{TH:pre-lower-bound-i}.
Then there is a constant
$N(r,k,\delta,\eta)>0$ (depending on the indicated paramters) such
that for any regular $k$-uniform hypergraph $H=(V,E)$ having $n$
vertices and $rn$
edges, there exists $A\subseteq V$ of density at most $\delta$
satisfying
\[
    \frac{|E(A)|}{|A|}\geq
    f_{k,r}(\delta)-A_{\ref{TH:pre-lower-bound-i}}(k,r)n^{-\eta}
    -B_{\ref{TH:pre-lower-bound-i}}(k,r,\delta)n^{-(1-\eta)(1-2\eta)}.
\]
Moreover, we can take
\[
    N(r,k,\delta,\eta)=\max\left\{
        \frac{16rk^2}{\delta(1-\delta)},
        \parens*{\frac{16}{(1- 2\eta)^2}}^{2(1- 2\eta)^{-1}},
        \parens*{\frac{r(4k^2+2k)}{\delta(1-\delta)}}^{\eta^{-1}},
        k^{2(1- 2\eta)^{-1}}
    \right\}.
\]
\end{corollary}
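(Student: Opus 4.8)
The plan is to deduce \cref{CR:lower_bound_for_reg_unif} as the special case $\alpha=\beta=0$ of \cref{TH:pre-lower-bound-i}; essentially no new argument is required, only a check that the hypotheses specialize and a simplification of the size threshold. First I would note that a regular $k$-uniform hypergraph $H=(V,E)$ with $n$ vertices and $rn$ edges is automatically $d$-regular with $d=rk$ by \cref{PR:degree-to-unif-ratio}, and that $d$, being the common vertex degree of a regular hypergraph, is a positive integer, so $d=rk\geq 1$. With the choice $\alpha=\beta=0$ the assumptions \cref{eq:extra_assumptions-pre-lower-bound} read $\deg(v)\leq d$ and $|e|\leq k$, both of which hold with equality for every $v\in V$ and $e\in E$, and alternative~(2) of \cref{TH:pre-lower-bound-i} (namely that $H$ is regular) also holds.

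Next, setting $\alpha=\beta=0$ turns the quantity $\mu=(1-\alpha-\beta-\eta)(1-2\eta)-\alpha$ into $(1-\eta)(1-2\eta)$, which is positive since $\eta\in(0,\frac{1}{2})$. Thus \cref{TH:pre-lower-bound-i} applies and produces, whenever $n\geq N_{\ref{TH:pre-lower-bound-i}}(k,r,\delta,0,0,\eta)$, a set $A\subseteq V$ of density at most $\delta$ with
\[
\frac{|E(A)|}{|E|}\geq f_{k,r}(\delta)-A_{\ref{TH:pre-lower-bound-i}}(k,r)\,n^{-\eta}-B_{\ref{TH:pre-lower-bound-i}}(k,r,\delta)\,n^{-(1-\eta)(1-2\eta)},
\]
which is precisely the inequality asserted in the corollary, with the same constants $A_{\ref{TH:pre-lower-bound-i}}$ and $B_{\ref{TH:pre-lower-bound-i}}$.

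It remains only to produce the explicit threshold $N(r,k,\delta,\eta)$, i.e.\ to bound $N_{\ref{TH:pre-lower-bound-i}}(k,r,\delta,0,0,\eta)$ from above. For this I would substitute $\alpha=\beta=0$ and $d=rk$ into the estimate $\max\{N_1,\dots,N_5\}$ already recorded in \cref{TH:pre-lower-bound-i}: the exponent $(1-2\alpha)^{-1}$ in $N_1$ collapses to $1$, giving $N_1=16rk^2/(\delta(1-\delta))$; the terms $N_2=(4rk/(\delta(1-\delta)))^{1/\eta}$ and $N_4=(r(4k^2+2k)/\delta)^{1/\eta}$ are each at most $(r(4k^2+2k)/(\delta(1-\delta)))^{1/\eta}$ since $4k^2+2k\geq 4k$ and $1-\delta\leq 1$; in $N_3$ one has $\mu k\leq k$, so the maximum defining $N_3$ is at most $\max\{(16/(1-2\eta)^2)^{2/(1-2\eta)},\ k^{2/(1-2\eta)}\}$; and $N_5=k^{2}\leq k^{2/(1-2\eta)}$. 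The maximum of the surviving quantities is exactly the four-term $N(r,k,\delta,\eta)$ in the statement, so $N(r,k,\delta,\eta)\geq N_{\ref{TH:pre-lower-bound-i}}(k,r,\delta,0,0,\eta)$ and the application above is legitimate for $n\geq N(r,k,\delta,\eta)$.

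The only step requiring attention is this last matching of exponents and constants after specializing, which is pure bookkeeping; I do not anticipate a genuine obstacle, since all of the probabilistic content — the random edge subset, the two-sided estimate for $|V(\rv B)|$ from \cref{PR:vertex-upper-bound-flip}, the lower bound on $\Exp[|E(V(\rv B))|]$ from \cref{PR:edges-fraction-avg}, and the optimization in \cref{TH:hard-optimization} (which is vacuous here, as $H$ is regular, so $m=1$ and $d_1=d$) — is already packaged inside \cref{TH:pre-lower-bound-i}.
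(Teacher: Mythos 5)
Your proposal is correct and takes exactly the approach the paper intends: the corollary is simply \cref{TH:pre-lower-bound-i} specialized to $\alpha=\beta=0$ (which is legitimate because regularity and $k$-uniformity make the conditions in \cref{eq:extra_assumptions-pre-lower-bound} trivially true), and the stated $N(r,k,\delta,\eta)$ is obtained by substituting and combining $N_1,\ldots,N_5$ exactly as you describe. The bookkeeping in your last paragraph checks out; no gap.
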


When $k,r,\delta$ are fixed and $n$ grows, choosing
$\eta=1-\frac{1}{\sqrt{2}}=0.2928\dots$ gives
the asymptotically fastest decay
for the term $A_{\ref{TH:pre-lower-bound-i}}(k,r)n^{-\eta}
+B_{\ref{TH:pre-lower-bound-i}}(r,k,\delta)n^{-(1-\eta)(1-2\eta)}$,
which becomes $O(n^{-\eta})$.

\subsection{Proof of \cref{TH:lower-bound-strong}(ii): General Case}
\label{subsec:removing-vrt-edges}

At this point, all that remains to prove
\cref{TH:lower-bound-strong}(ii) is to  eliminate the
assumption~\cref{eq:extra_assumptions-pre-lower-bound}
from \cref{TH:pre-lower-bound-i}.
%
This
can be done by first removing all   vertices of degree greater
than $d n^\alpha$
from the hypergraph and then all hyperdeges having more than $k n^\beta$
vertices. However, this transition changes the average uniformity and
average degree, and could also turn a regular (non-uniform)
hypergraph into an
irregular one. In order to avoid these issues, we modify $H$ in a
slightly different way that we now describe.

\medskip

Throughout, let $H=(V,E)$ denote a hypergraph with average uniformity $k$
and average vertex degree $d$.
The \emph{incidence set} of   $H $ is the set
\[
I(H):=\{(v,e)\in V\times E\,:\, v<e\}.
\]
Its elements are the \emph{incidences} of $H$.
Clearly, $H$ is determined by its incidence set.
Moreover, the average uniformity $k$ and the average degree $d$ of
$H$ can be recovered
from $I(H)$ by
\begin{equation}\label{EQ:incidences-to-k-and-d}
k=\frac{|I(H)|}{|E|}\qquad\tand\qquad d=\frac{|I(H)|}{|V|}.
\end{equation}

Let $D\geq \lceil d\rceil$.  Consider
the following algorithm that modifies $H$ by changing its incidence set:
\begin{itemize}
\item While there is $v\in V$ with $\deg(v)> D$:
    \begin{itemize}
        \item Choose some $e\in E$ with $e>v$.
        \item Choose some $u\in V$ with $\deg(u)< d$.
        \item In $I(H)$, replace the incidence $(v,e)$ with $(u,e)$.
    \end{itemize}
\item Output the modified hypergraph $H$.
\end{itemize}
Observe that the size of $I(H)$ does not change throughout the algorithm,
and therefore $d$ and $k$ do not change as well, by
\cref{EQ:incidences-to-k-and-d}. Note also that the choice of $u$
with $\dim u<d$ is always possible, because $\deg(v)>d$
and $d$ is the average vertex degree.
Finally, for every edge $e\in E$, its size $|e|$ is unaffected by
the algorithm.

We call an output $H'$ of the algorithm a \emph{vertex $D$-rewiring} of $H$,
because it rearranges the incidences of $H$
to make the maximum degree at most $D$.

\begin{lemma}\label{LM:vertex-flatification}
Let $H'$ be a vertex $D$-rewiring of
$H$ ($D\geq \lceil d\rceil$),
and let $V_0=\{v\in V\,:\, \deg(v)>D\}$.
Then for every $A\subseteq V$,
\[
    E_H(A\cup V_0)\supseteq E_{H'}(A).
\]
Furthermore, $|V_0|<  \frac{d}{D}|V|$, and for every $v\in V-V_0$,
we have $\deg_{H'}(v)\geq \deg_H(v)$.
\end{lemma}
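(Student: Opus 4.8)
\textbf{Proof proposal for \cref{LM:vertex-flatification}.}

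The plan is to track what the vertex $D$-rewiring algorithm does to incidences, one iteration at a time, and argue each claimed property is preserved (or monotone) across iterations. Throughout, write $H_0 = H, H_1, H_2, \dots, H_{\mathrm{end}} = H'$ for the sequence of intermediate hypergraphs produced by successive iterations of the while-loop, all sharing the same vertex set $V$ and edge set $E$ (only the incidence set changes). Recall that in one iteration we pick $v$ with $\deg(v) > D$, pick an edge $e > v$, pick $u$ with $\deg(u) < d$, and replace $(v,e)$ by $(u,e)$ in the incidence set.

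First I would handle the containment $E_H(A \cup V_0) \supseteq E_{H'}(A)$. The cleanest way is to show the stronger statement that for every $A \subseteq V$ and every step $t$, $E_{H_{t+1}}(A) \subseteq E_{H_t}(A \cup \{v_t\})$ where $v_t$ is the high-degree vertex chosen at step $t$; since each $v_t$ has $\deg_{H_t}(v_t) > D \ge \deg_{H'}(v_t)$ and degrees of high-degree vertices only ever decrease during the process (I will confirm this below), every $v_t$ whose incidence was ever rewired in fact has $\deg_H(v_t) > D$, i.e.\ $v_t \in V_0$; so iterating gives $E_{H'}(A) \subseteq E_H(A \cup \bigcup_t \{v_t\}) \subseteq E_H(A \cup V_0)$. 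To see the single-step claim: let $f \in E_{H_{t+1}}(A)$, i.e.\ all vertices of $f$ in $H_{t+1}$ lie in $A$. The only edge whose vertex set changed from $H_t$ to $H_{t+1}$ is $e_t$, and it changed by dropping $v_t$ and adding $u_t$. If $f \ne e_t$, then $\Vrt_{H_t}(f) = \Vrt_{H_{t+1}}(f) \subseteq A$, so $f \in E_{H_t}(A)$. If $f = e_t$, then $\Vrt_{H_t}(e_t) = (\Vrt_{H_{t+1}}(e_t) \setminus \{u_t\}) \cup \{v_t\} \subseteq A \cup \{v_t\}$, so $f \in E_{H_t}(A \cup \{v_t\})$. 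Either way $f \in E_{H_t}(A \cup \{v_t\})$, as claimed.

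Next, $|V_0| < \frac{d}{D}|V|$: this is immediate from the handshaking/incidence count, since by \cref{EQ:incidences-to-k-and-d} we have $|I(H)| = d|V|$, and every vertex in $V_0$ contributes more than $D$ incidences, so $D|V_0| < \sum_{v \in V_0}\deg_H(v) \le |I(H)| = d|V|$, giving $|V_0| < \frac{d}{D}|V|$. Finally, for $v \in V - V_0$ I want $\deg_{H'}(v) \ge \deg_H(v)$. Here is the key monotonicity observation to nail down: in any single iteration, the vertex $v_t$ chosen has $\deg(v_t) > D \ge d$, and the vertex $u_t$ chosen has $\deg(u_t) < d$; the degree of $v_t$ decreases by one and the degree of $u_t$ increases by one, and all other degrees are unchanged. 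So the only vertices that ever \emph{lose} degree are vertices that, at the moment they lose it, have degree $> D$. I would argue that any vertex that is ever chosen as a $v_t$ — i.e.\ ever has degree $> D$ at some point in the process — in fact had degree $> D$ already in $H$: indeed, a vertex's degree can only rise above $D$ by being incremented as some $u_s$, but that requires its degree at that moment to be $< d \le D$, hence after the increment it is $\le D$, so it never crosses strictly above $D$ from below. Therefore the set of vertices that ever decrease in degree is contained in $V_0$, and consequently every $v \in V - V_0$ never decreases in degree, i.e.\ $\deg_{H'}(v) \ge \deg_H(v)$.

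The main obstacle — really the only subtlety — is the monotonicity argument in the last paragraph: one must be careful that a vertex could oscillate (be a $u_s$ then later a $v_t$, etc.), and the clean invariant ``no vertex's degree ever exceeds $\max\{D, \deg_H(\cdot)\}$, and a vertex's degree can only strictly exceed $D$ if it did so in $H$'' is what resolves it. I would state this as an explicit loop invariant and verify it is preserved by one iteration (using $D \ge \lceil d \rceil$ so that $\deg(u_s) < d$ implies $\deg(u_s) + 1 \le d \le D$), which also doubles as the proof that the algorithm's choices are always well-defined (there is always a $u$ with $\deg(u) < d$ whenever some vertex has $\deg > D \ge d$, since $d$ is the average). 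Everything else is bookkeeping.
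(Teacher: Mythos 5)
Your proof is correct and follows essentially the same strategy as the paper's: the crux in both is the observation that an added incidence $(u,e)$ brings $\deg(u)$ up to at most $\lceil d\rceil\leq D$, so no vertex's degree can newly exceed $D$, and hence every vertex that ever loses an incidence already satisfied $\deg_H(v)>D$. The paper states this once and reads off all three claims directly (via $\Vrt_H(e)\subseteq \Vrt_{H'}(e)\cup V_0$), whereas you unfold it as an explicit step-by-step induction with a loop invariant; your version is more verbose but carries the same content.
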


\begin{proof}
If an incidence $(v,e)$ is removed from $I(H)$ in the process
of modifying it to $H'$, then we must have $\deg_H(v)>D$.
(Indeed, $(v,e)$ cannot be one of the incidences $(u,e)$
    that were added to $I(H)$ during the algorithm, because after adding
    $(u,e)$, we still have
$\deg(u)\leq \lceil d\rceil\leq D$.)
This means that for every $e\in E$, we have $\Vrt_H(e)\subseteq
\Vrt_{H'}(e)\cup V_0$,
and it follows that $E_H(A\cup V_0)\supseteq E_{H'}(A)$.
It also means that $\deg_{H'}(v)\geq \deg_H(v)$ if
$\deg_H(v)\leq D$, i.e.,
if $v\in V-V_0$.
Finally, if it were not the case that  $|V_0|< \frac{d}{D}|V|$,
then the average degree of $H$ would be greater than
$\frac{D|V_0|}{|V|}\geq D\cdot\frac{d}{D}=d$,
which contradicts our assumptions.
\end{proof}

Next,   let $K\geq \lceil k\rceil$, and consider
the following algorithm that modifies $H$ by changing its incidence set:
\begin{itemize}
\item While there is $e\in E$ with $|e|> K$:
    \begin{itemize}
        \item Choose some $v\in V$ with $v<e$.
        \item Choose some $e'\in E$ with $|e'|< k$.
        \item In $I(H)$, replace the incidence $(v,e)$ with $(v,e')$.
    \end{itemize}
\item Output the modified hypergraph $H$.
\end{itemize}
Again, the average uniformity $k$ and the average degree $d$ do not
change throughout
the algorithm. Moreover,   $\deg(v)$ remains unchanged for all $v
\in V$,  so if $H$ is
regular, the output would be regular as well.

We call an output $H'$ of the algorithm an \emph{edge $K$-rewiring} of $H$.
This is the same as saying that the dual hypergraph $(H')^*$
(q.v.\ \cref{subsec:hypergraphs}) is a
vertex $K$-rewiring of $H^*$.

\begin{lemma}\label{LM:edge-flatification}
Let $H'$ be an edge $K$-rewiring of $H$ ($K\geq \lceil k\rceil$),
and let $E_0=\{e\in V\,:\, |e|>K\}$.
Then for every $A\subseteq V$,
\[
    E_H(A )\supseteq E_{H'}(A)-E_0.
\]
Furthermore, $|E_0|< \frac{k}{K}|E|$.
\end{lemma}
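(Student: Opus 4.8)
The plan is to prove this by the same argument as \cref{LM:vertex-flatification}, with edge sizes taking the role of vertex degrees; one could equivalently note that an edge $K$-rewiring of $H$ is a vertex $K$-rewiring of the dual hypergraph $H^*$ (with the sizes $|e|_H$ playing the role of degrees, and the average uniformity $k$ of $H$ becoming the average degree of $H^*$), and redo the proof of \cref{LM:vertex-flatification} in that setting. I will argue directly on $H$.

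The key observation I would establish is that during a run of the edge $K$-rewiring algorithm, no incidence is ever removed from an edge whose size in $H$ is at most $K$. Indeed, an incidence $(v,e)$ is deleted from $I(H)$ only while $|e|>K$, and the only step that enlarges an edge is the insertion of an incidence $(v,e')$, which is performed only when $|e'|<k$, so that immediately afterwards $|e'|\le\lceil k\rceil\le K$; hence once an edge has size $\le K$ it keeps size $\le K$ for the rest of the run, and in particular an edge $e$ with $|e|_H\le K$ is never oversized and so never loses an incidence. Consequently $\Vrt_H(e)\subseteq\Vrt_{H'}(e)$ for every $e\in E-E_0$ (incidences may be added to such an $e$, but none removed). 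Now fix $A\subseteq V$ and take $e\in E_{H'}(A)-E_0$: since $e\notin E_0$ we get $\Vrt_H(e)\subseteq\Vrt_{H'}(e)\subseteq A$, so $e\in E_H(A)$; this is exactly the inclusion $E_H(A)\supseteq E_{H'}(A)-E_0$.

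For the cardinality bound I would use that $\sum_{e\in E}|e|=k|E|$ by the definition of average uniformity, together with $|e|>K$ for all $e\in E_0$, which yields $k|E|=\sum_{e\in E}|e|\ge\sum_{e\in E_0}|e|>K|E_0|$ and hence $|E_0|<\tfrac{k}{K}|E|$. I do not anticipate a real obstacle; the only subtlety, and the reason the observation in the second paragraph is needed, is that $E_0$ is defined through the sizes $|e|$ in the \emph{original} hypergraph $H$ rather than at intermediate stages or in $H'$, so one must confirm that any edge from which an incidence is ever deleted indeed belongs to $E_0$ — which is precisely what that observation provides.
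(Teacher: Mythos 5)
Your proof is correct. The paper derives this lemma from \cref{LM:vertex-flatification} via duality: it observes that $(H')^*$ is a vertex $K$-rewiring of $H^*$, reads off the cardinality bound $|E_0|<\frac{k}{K}|E|$ directly from that lemma, and reuses the fact established in its proof that only incidences $(v,e)$ with $|e|_H>K$ are ever removed. You instead argue directly on $H$, re-running essentially the same invariant argument; either route works, and you flag the duality alternative yourself. One place where your version is actually a bit more careful: the paper's proof asserts the equality $\Vrt_H(e)=\Vrt_{H'}(e)$ for $e\notin E_0$, but this need not hold in general — if $|e|_H<k$, the algorithm may well \emph{add} incidences to $e$, so $\Vrt_{H'}(e)$ can be a strict superset of $\Vrt_H(e)$. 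What is both true and sufficient for $E_H(A)\supseteq E_{H'}(A)-E_0$ is only the containment $\Vrt_H(e)\subseteq\Vrt_{H'}(e)$, which is exactly what you establish. Your direct counting argument for the bound on $|E_0|$ is the same one the paper uses (through the dual) in the vertex case, and is correct.
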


\begin{proof}
Since $H'^*$ is a vertex $K$-rewiring of $H^*$,
that $|E_0|<\frac{k}{K}|E|$ follows from \cref{LM:vertex-flatification}.
Next, we have seen in the proof of that lemma (applied to $H^*$),
that the only
incidences of $H$ that are removed during the algorithm are of the form
$(v,e)$ with $|e|>K$. Thus, if $e\in E$ is not in $E_0$,
i.e., $|e|\leq K$,
then $\Vrt_H(e)=\Vrt_{H'}(e)$. This means that $E_H(A )\supseteq
E_{H'}(A)-E_0$.
\end{proof}

We are now ready to remove assumption
\cref{eq:extra_assumptions-pre-lower-bound}
from \cref{TH:pre-lower-bound-i}.

\begin{theorem}\label{TH:lower-bound-strong-detailed}
Let $k,r >0$, $\alpha\in [0,\frac{1}{2})$,
$\beta\in[0,1)$, $\delta\in (0,1)$, $\eta\in
(0,\frac{1}{2} )$ be real numbers such that
\[
    \mu:=( 1-\alpha -\beta-\eta)(1-2\eta)-\alpha>0.
\]
Then there are   constants
$
A_{\ref{TH:lower-bound-strong-detailed}}(k,r),
B_{\ref{TH:lower-bound-strong-detailed}}(k,r,\delta),
N_{\ref{TH:lower-bound-strong-detailed}}(k,r,\delta,\alpha,\beta,\eta)>0$
(depending on the indicated parameters) such that the following hold:
Let $H=(V,E)$ be a hypergraph with $n$ vertices, $rn$ edges and
average uniformity
$k$. Suppose further that
\begin{enumerate}[(1)]
    \item $\deg(v)\geq 3$ for all $v\in V$, or
    \item $H$ is regular.
\end{enumerate}
If $n\geq
N_{\ref{TH:lower-bound-strong-detailed}}(k,r,\delta,\alpha,\beta,\eta)$,
then there is a set of vertices $A\subseteq V$
of density at most $\delta$ such that
\[
    \frac{|E(A)|}{|E|}
    \geq
    f_{k,r}(\delta)-
    kn^{-\alpha}-n^{-\beta}-
    A_{\ref{TH:lower-bound-strong-detailed}}(k,r) n^{-\eta}-
    B_{\ref{TH:lower-bound-strong-detailed}}(k,r,\delta) n^{-\mu},
\]
where
$f_{k,r}$ is as in \cref{EQ:f-k-r-delta-dfn}.
Moreover, we can take
$A_{\ref{TH:lower-bound-strong-detailed}}(k,r)=A_{\ref{TH:pre-lower-bound-i}}(k,r)=
(2k^3+k^2)r =O(k^3r)$
and
\begin{align*}
    &
    B_{\ref{TH:lower-bound-strong-detailed}}(k,r,\delta)
    = \max_{\tilde{\delta}\in [\frac{\delta}{2},\delta]}
    B_{\ref{TH:pre-lower-bound-i}}(k,r,\tilde{\delta})
    =\frac{2^{12}C_{\ref{cor:thm0}}}{e^2} k^4 r^4 \delta^{-3}
    \leq 63194 \cdot k^4 r^4\delta^{-3} ,
    \\
    &
    N_{\ref{TH:lower-bound-strong-detailed}}(k,r,\delta,\alpha,\beta,\eta)
    = \max_{\tilde{\delta}\in [\frac{\delta}{2},\delta]}
    \{N_{\ref{TH:pre-lower-bound-i}}(k,r,\tilde{\delta},\alpha,\beta,\eta),
    (2\delta^{-1})^{\alpha^{-1}},2^{\beta^{-1}}\}
    =\poly_{\alpha,\beta,\eta}(k,r,\textstyle{\frac{1}{\delta},\frac{1}{
    1-\delta}}).
\end{align*}
\end{theorem}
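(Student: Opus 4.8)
The plan is to deduce the statement from \cref{TH:pre-lower-bound-i} by replacing $H$ with a hypergraph $H''$ that has the same vertex set, the same edge set, the same average uniformity $k$ and the same average degree $d=rk$, but additionally obeys the boundedness hypothesis \cref{eq:extra_assumptions-pre-lower-bound}. Concretely, I would first let $H'$ be a vertex $D$-rewiring of $H$ with $D=\lceil d n^{\alpha}\rceil$, and then let $H''$ be an edge $K$-rewiring of $H'$ with $K=\lceil k n^{\beta}\rceil$; this is legitimate once $n$ is large enough (so that $D\ge\lceil d\rceil$, and $K\ge\lceil k\rceil$, the latter because a vertex rewiring leaves every $|e|$ unchanged and hence preserves the average uniformity $k$). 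Since a vertex rewiring leaves all edge sizes fixed and an edge rewiring leaves all vertex degrees fixed, $H''$ has $n$ vertices, $rn$ edges, average uniformity $k$, maximum degree $\le D\le 2dn^{\alpha}$ and maximum uniformity $\le K\le 2kn^{\beta}$; the spurious factor $2$ coming from the ceilings affects only the constants, not the exponents $\eta$ and $\mu$, in the bounds of \cref{PR:vertex-upper-bound-i} and hence of \cref{TH:pre-lower-bound-i}, so I will suppress it (equivalently, one may enlarge $\alpha,\beta$ slightly, keeping $\mu>0$).

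Next I would check that $H''$ still satisfies whichever of the two side conditions of \cref{TH:pre-lower-bound-i} is assumed of $H$. If $H$ is $d$-regular then $d$ is an integer, no vertex of $H$ has degree exceeding $D\ge d$, so the vertex-rewiring algorithm does nothing and $H'=H$; an edge rewiring preserves all degrees, so $H''$ is again $d$-regular. If instead $\deg_H(v)\ge 3$ for all $v$, then $d\ge 3$, hence $D\ge\lceil d\rceil\ge 3$; the only degree-decreasing step of the vertex-rewiring algorithm is applied to a vertex of current degree $>D$ and leaves it of degree $\ge D\ge 3$, while every other step only raises degrees, so every vertex of $H'$, and thus of $H''$, has degree $\ge 3$. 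In either case $d=rk\ge 1$, which I will also use for \cref{LM:delta-change}.

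With $H''$ in hand, set $\widetilde\delta:=\delta-n^{-\alpha}$ (so $\widetilde\delta\in[\delta/2,\delta]$ once $n\ge (2/\delta)^{1/\alpha}$) and apply \cref{TH:pre-lower-bound-i} to $H''$ with $\widetilde\delta$ in place of $\delta$, obtaining $A''\subseteq V$ of density at most $\widetilde\delta$ with
\[
\frac{|E_{H''}(A'')|}{|E|}\ \ge\ f_{k,r}(\widetilde\delta)\ -\ A_{\ref{TH:pre-lower-bound-i}}(k,r)\,n^{-\eta}\ -\ B_{\ref{TH:pre-lower-bound-i}}(k,r,\widetilde\delta)\,n^{-\mu}.
\]
Put $V_0=\{v\in V:\deg_H(v)>D\}$ and $E_0=\{e\in E:|e|_{H'}>K\}$, and let $A:=A''\cup V_0$. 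By \cref{LM:vertex-flatification}, $|V_0|<(d/D)n\le n^{1-\alpha}$, so $|A|\le|A''|+|V_0|<\widetilde\delta n+n^{1-\alpha}=\delta n$; and combining \cref{LM:vertex-flatification} with \cref{LM:edge-flatification},
\[
E_H(A)=E_H(A''\cup V_0)\ \supseteq\ E_{H'}(A'')\ \supseteq\ E_{H''}(A'')\setminus E_0 ,
\]
while \cref{LM:edge-flatification} also gives $|E_0|<(k/K)|E|\le n^{-\beta}|E|$. Hence $\tfrac{|E_H(A)|}{|E|}\ge \tfrac{|E_{H''}(A'')|}{|E|}-n^{-\beta}$, and \cref{LM:delta-change} (using $rk\ge 1$) in the form $f_{k,r}(\widetilde\delta)=f_{k,r}(\delta-n^{-\alpha})\ge f_{k,r}(\delta)-k n^{-\alpha}$ yields
\[
\frac{|E_H(A)|}{|E|}\ \ge\ f_{k,r}(\delta)-k n^{-\alpha}-n^{-\beta}-A_{\ref{TH:pre-lower-bound-i}}(k,r)\,n^{-\eta}-B_{\ref{TH:pre-lower-bound-i}}(k,r,\widetilde\delta)\,n^{-\mu},
\]
which is the claimed inequality. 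Taking $A_{\ref{TH:lower-bound-strong-detailed}}(k,r)=A_{\ref{TH:pre-lower-bound-i}}(k,r)$, $B_{\ref{TH:lower-bound-strong-detailed}}(k,r,\delta)=\max_{\widetilde\delta\in[\delta/2,\delta]}B_{\ref{TH:pre-lower-bound-i}}(k,r,\widetilde\delta)$, and $N_{\ref{TH:lower-bound-strong-detailed}}$ equal to the maximum of $N_{\ref{TH:pre-lower-bound-i}}(k,r,\widetilde\delta,\alpha,\beta,\eta)$ over $\widetilde\delta\in[\delta/2,\delta]$ together with the finitely many extra lower bounds on $n$ used above ($(2/\delta)^{1/\alpha}$, $2^{1/\beta}$, and what is needed for the ceilings), and substituting the explicit formulas from \cref{TH:pre-lower-bound-i}, gives the stated constants; this last part is routine bookkeeping.

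The one genuinely delicate point is the verification in the second paragraph that the vertex $D$-rewiring — which, unlike the edge rewiring, alters vertex degrees — nonetheless preserves the side condition in force: it is vacuous on regular hypergraphs (so regularity survives), and it never drives a vertex degree below $3$ when all degrees start at $\ge 3$. Everything else — the two-level inclusion $E_H(A''\cup V_0)\supseteq E_{H''}(A'')\setminus E_0$, the accounting of the $n^{-\alpha}$- and $n^{-\beta}$-sized losses, the Lipschitz estimate for $f_{k,r}$, and the harmless constant factors from rounding $dn^\alpha,kn^\beta$ up to integers — is straightforward.
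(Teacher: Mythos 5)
Your proof takes essentially the same route as the paper: rewire $H$ to bound the maximum degree by roughly $dn^{\alpha}$ and the maximum uniformity by roughly $kn^{\beta}$, apply \cref{TH:pre-lower-bound-i} to the rewired graph with the slightly smaller density $\widetilde\delta = \delta - n^{-\alpha}$, then translate the resulting set back to $H$ by absorbing the small exceptional sets $V_0$ and $E_0$.

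Two small remarks on where you deviate. First, you perform the vertex rewiring before the edge rewiring, whereas the paper does the edge rewiring first; both orders work, since vertex rewiring preserves all $|e|$ and edge rewiring preserves all $\deg(v)$, so either composition produces a graph with both parameters under control, and the two inclusion chains ($E_H(A''\cup V_0)\supseteq E_{H'}(A'')\supseteq E_{H''}(A'')\setminus E_0$ for you, versus $E_H(A)\supseteq E_{H'}(A)\setminus E_0\supseteq E_{H''}(\widetilde A)\setminus E_0$ for the paper) are mirror images of one another. Second, you round $D$ and $K$ up to integers. This is unnecessary — the rewiring algorithms only compare $\deg(v)>D$ and $|e|>K$, so non-integral thresholds are perfectly fine, and the paper simply sets $D = dn^{\alpha}$, $K = kn^{\beta}$. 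Your ceiling step forces $D\le 2dn^{\alpha}$ rather than $D\le dn^{\alpha}$, so $H''$ does not literally satisfy the hypothesis $\deg(v)\le dn^{\alpha}$ of \cref{TH:pre-lower-bound-i}; your remedy ("enlarge $\alpha,\beta$ slightly") would alter $\mu$ and thus would not recover the \emph{exact} constants asserted in the "Moreover" clause. Dropping the ceilings removes the wrinkle entirely and the rest of your sketch then closes the argument with the paper's constants. Your explicit verification that the vertex rewiring preserves minimum degree $\ge 3$ (by noting that a degree-decreasing step only fires while the current degree exceeds $D\ge\lceil d\rceil\ge 3$, leaving degree $\ge D$) is actually spelled out more carefully than the paper's single citation of \cref{LM:vertex-flatification}, which strictly speaking only covers the vertices outside $V_0$.
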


\begin{proof}
Suppose $n\geq N_{\ref{TH:lower-bound-strong-detailed}}$, and
put $D=dn^\alpha$ and $K=kn^\beta$.
We claim that $D\geq \lceil d\rceil$ and $K\geq \lceil k\rceil$.
Indeed, $D\geq \lceil d\rceil$ is equivalent to $n\geq
(\frac{\lceil d\rceil}{d})^{\alpha^{-1}}$,
and this holds because $n\geq
N_{\ref{TH:lower-bound-strong-detailed}}\geq 2^{\alpha^{-1}}$
and $2\geq \frac{\lceil d\rceil}{d}$ (because $d\geq 1$).
That  $K\geq \lceil k\rceil$ is shown similarly.

Let $H'$ be an edge $K$-rewiring of $H$, and let $H''$ be a
vertex $D$-rewiring
of $H'$.
Then $H$, $H'$, $H''$ all have average uniformity $k$, average
vertex degree $d$ and
sparsity $r=\frac{d}{k}$.
Moreover, we have $\deg_H(v)=\deg_{H'}(v)$ for all $v\in V$
and $|e|_{H'}=|e|_{H''}$ for all $e\in E$. The former means that
$H'$ and $H''$
are regular when $H$ is, while latter implies that $H''$ has
maximum uniformity at most
$K=k n^\beta$; it also has maximum vertex degree at most $D=d
n^\alpha$ by construction.
Moreover, by \cref{LM:vertex-flatification}, if $H$ has minimal
vetrex degree
at least $3$, then the same holds for $H''$.
Therefore, assumption \cref{eq:extra_assumptions-pre-lower-bound}
of \cref{TH:pre-lower-bound-i} and at least of the assumptions
(1), (2) of that
theorem hold for  $H''$.
Conditions (1) and (2) of \cref{TH:pre-lower-bound-i} also imply
that $d:=rk\geq 1$.

Let $\tilde{\delta}=\delta-n^{-\alpha}$. Since $n\geq
(\frac{2}{\delta})^{\alpha^{-1}}$,
we have $\tilde{\delta}\geq \frac{\delta}{2}$, and thus
$n\geq N_{\ref{TH:lower-bound-strong-detailed}}\geq
N_{\ref{TH:pre-lower-bound-i}}(k,r,\tilde{\delta},\alpha,\beta,\eta)$.
Now, by applying \cref{TH:pre-lower-bound-i} to $H''$, there is
$\tilde{A}\subseteq V$ of density at most $\tilde{\delta}$
such that
\begin{align*}
    \frac{|E_{H''}(\tilde{A})|}{|E|}
    & \geq
    f_{k,r}(\tilde\delta)-
    A_{\ref{TH:pre-lower-bound-i}}(k,r) n^{-\eta}-
    B_{\ref{TH:pre-lower-bound-i}}(k,r,\tilde{\delta}) n^{-\mu}.
\end{align*}
By \cref{LM:delta-change} and the definition of
$A_{\ref{TH:lower-bound-strong-detailed}}$ and
$B_{\ref{TH:lower-bound-strong-detailed}}$,
this implies that
\begin{align}\label{EQ:TH:lower-bound-strong-detailed:EA-tilde-bound}
    \frac{|E_{H''}(\tilde{A})|}{|E|} & \geq f_{k,r}(\delta)-
    kn^{-\alpha}-
    A_{\ref{TH:lower-bound-strong-detailed}}(k,r) n^{-\eta}-
    B_{\ref{TH:lower-bound-strong-detailed}}(k,r,\delta) n^{-\mu}.
\end{align}

Let $V_0=\{v\in V\,:\,\deg_{H'}(v)>D\}$ and
$E_0=\{e\in E\,:\, |e|_H>K\}$, and put $A=\tilde{A}\cup V_0$.
By \cref{LM:vertex-flatification} and \cref{LM:edge-flatification},
$V_0$ has density at most $\frac{d}{D}=n^{-\alpha}$
in $V$ and $E_0$ has density at most $\frac{k}{K}=n^{-\beta}$ in $E$.
As a result,
\[
    |A|\leq |\tilde{A}|+|V_0|\leq \tilde{\delta} n+
    n^{-\alpha} n=\delta n.
\]
On the other hand, by the said lemmas and
\cref{EQ:TH:lower-bound-strong-detailed:EA-tilde-bound},
\begin{align*}
    \frac{|E_H(A)|}{|E|}
    &\geq \frac{|E_{H'}(A)-E_0|}{|E|}
    \geq \frac{|E_{H'}(\tilde{A}\cup V_0)|}{|E|}-\frac{|E_0|}{|E|}
    \geq \frac{|E_{H''}(\tilde{A} )|}{|E|}-n^{-\beta}
    \\
    &\geq
    f_{k,r}(\delta)-
    kn^{-\alpha}-n^\beta-
    A_{\ref{TH:lower-bound-strong-detailed}}(k,r) n^{-\eta}-
    B_{\ref{TH:lower-bound-strong-detailed}}(k,r,\delta) n^{-\mu}.
\end{align*}
This completes the proof.
\end{proof}

\begin{remark}
One can similarly prove
analogues of \cref{TH:lower-bound-strong-detailed}
holding under the assumption that the maximum vertex degree of
$H$ is at most $dn^\alpha$,
or under the assumption that the maximum uniformity of $H$ is at
most $k n^\beta$.
We omit the details.
\end{remark}

\begin{remark}\label{RM:lower-bound-smallest-n-exp}
In \cref{TH:lower-bound-strong-detailed}, if one fixes $r,k,\delta$
and let $n$ grow, then the   asymptotically best lower bound
on $\frac{|E(A)|}{|A|}$  is obtained
for $\alpha=\beta=\eta=\frac{1}{6}$. In this case, $\mu=\frac{1}{6}$,
and (after some simplifications)
the theorem asserts that if $n\geq  \frac{12^6 k^{12}
r^6}{\delta^6(1-\delta)^6}$, then there exists $A\subseteq V$
of density at most $\delta$ such that
\[
    \frac{|E(A)|}{|E|}
    \geq
    f_{k,r}(\delta)-A_{\ref{RM:lower-bound-smallest-n-exp}}(k,r,\delta)n^{-\frac{1}{6}},
\]
where $A_{\ref{RM:lower-bound-smallest-n-exp}}(k,r,\delta)=
63194 \cdot r^4k^4\delta^{-3}+2rk^3+rk^2+k+1$.
However, this choice of parameters may not be optimal if one
allows $r$, $k$ or $\delta$ to change with $n$.
\end{remark}

We will derive \cref{TH:lower-bound-strong}(ii)
from
the following corollary of \cref{TH:lower-bound-strong-detailed},
which eliminates the dependency on $r$ in the lower bound
on $\frac{|E(A)|}{|E|}$ at the expense of increasing the smallest value that
$n=|V|$ could take.

\begin{corollary}\label{CR:lower_bound_indep_of_r}
Let $\alpha\in [0,\frac{1}{2})$, $\beta\in[0,1) ,\eta\in
(0,\frac{1}{2}),\sigma\in (0,1)$
be real numbers such that
\[
    \mu :=(1-\alpha-\beta-\eta)(1-2\eta)-\alpha>0,
    \qquad
    \sigma<\min\{\eta,\mu\}
    \qquad
    \tand
    \qquad \sigma\leq \min\{\alpha,\beta\}.
\]
Then there are constants $a,b,u,v,C>0$ (depending on
$\alpha,\beta,\eta,\sigma$)
such that the following hold:
Let $H=(V,E)$ be a hypergraph with $n$ vertices, $rn$ edges and
average uniformity
$k>0$. Suppose further that one of the conditions
(1), (2) of \cref{TH:lower-bound-strong-detailed} holds for $H$.
Let $\delta\in (0,1)$.
If $n\geq C k^a r^b \delta^{-u}(1-\delta)^{-v}$, then there
exists $A\subseteq V$
of density at most $\delta$ such that
\[
    \frac{|E(A)|}{|E|}
    \geq f_{k,r}(\delta) - 63200 k^4\delta^{-3} n^{-\sigma}.
\]
Moreover, we can take
\begin{align*}
    a &= \max\{
        2(1-2\alpha)^{-1},
        2(1-2\eta)^{-1},
        2\eta^{-1},
        2(1-\alpha-\beta)^{-1}
    \}
    \\
    b &= \max\{
        (1-2\alpha)^{-1},
        {(\eta-\sigma)^{-1}},
        4(\mu-\sigma)^{-1}
    \},
    \\
    u &= \max\{
        (1-2\alpha)^{-1},
        \eta^{-1},
        \alpha^{-1}
    \},
    \\
    v &= \max\{
        (1-2\alpha)^{-1},
        \eta^{-1}
    \},
    \\
    C &= \max\left\{
        32^{(1-2\alpha)^{-1}},
        \parens*{\frac{4}{1-2\eta}}^{4(1-2\eta)^{-1}},
        12^{\eta^{-1}},
        2^{\alpha^{-1}},
        2^{\beta^{-1}}
    \right\}.
\end{align*}
\end{corollary}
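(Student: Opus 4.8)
The plan is to deduce \cref{CR:lower_bound_indep_of_r} directly from \cref{TH:lower-bound-strong-detailed} by taking $n$ large enough to swallow all of the $r$‑dependence (and the $\alpha,\beta$‑terms) into a single $r$‑free error term. First I would apply \cref{TH:lower-bound-strong-detailed} with the given $\alpha,\beta,\eta$ — legitimate since $\mu>0$ is assumed and $H$ satisfies one of the conditions (1), (2) of that theorem by hypothesis, which also forces $d=rk\geq 1$. This yields, as soon as $n\geq N_{\ref{TH:lower-bound-strong-detailed}}(k,r,\delta,\alpha,\beta,\eta)$, a set $A\subseteq V$ of density at most $\delta$ with
\[
\frac{|E(A)|}{|E|}\geq f_{k,r}(\delta)-\Big(kn^{-\alpha}+n^{-\beta}+A_{\ref{TH:lower-bound-strong-detailed}}(k,r)\,n^{-\eta}+B_{\ref{TH:lower-bound-strong-detailed}}(k,r,\delta)\,n^{-\mu}\Big),
\]
where, from the explicit constants in that theorem, $A_{\ref{TH:lower-bound-strong-detailed}}(k,r)=(2k^3+k^2)r\le 3k^3r$ and $B_{\ref{TH:lower-bound-strong-detailed}}(k,r,\delta)=\tfrac{2^{12}C_{\ref{cor:thm0}}}{e^2}k^4r^4\delta^{-3}\le 63194\,k^4r^4\delta^{-3}$.

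Next I would bound each of the four error summands by a constant multiple of $k^4\delta^{-3}n^{-\sigma}$. We may assume $k\geq 1$ (this is the case of interest — condition (1) even forces $k\geq 3$ — and \cref{RM:k_less_than_1}(ii) handles the residual $0<k<1$ case); hence $k^4\delta^{-3}\geq 1$, $k^4\delta^{-3}\geq k$, and $k^4\delta^{-3}\geq k^3$. Since $\sigma\leq\min\{\alpha,\beta\}$ we get $kn^{-\alpha}\leq k^4\delta^{-3}n^{-\sigma}$ and $n^{-\beta}\leq k^4\delta^{-3}n^{-\sigma}$ with no condition on $n$. The point of $\sigma<\eta$ and $\sigma<\mu$ is that these allow absorbing the $r$‑laden terms: imposing $n\geq r^{(\eta-\sigma)^{-1}}$ gives $r\,n^{-\eta}\leq n^{-\sigma}$, whence $A_{\ref{TH:lower-bound-strong-detailed}}(k,r)n^{-\eta}\leq 3k^3 r\,n^{-\eta}\leq 3k^4\delta^{-3}n^{-\sigma}$; imposing $n\geq r^{4(\mu-\sigma)^{-1}}$ gives $r^4 n^{-\mu}\leq n^{-\sigma}$, whence $B_{\ref{TH:lower-bound-strong-detailed}}(k,r,\delta)n^{-\mu}\leq 63194\,k^4\delta^{-3}n^{-\sigma}$. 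Summing the four contributions, the total error is at most $(1+1+3+63194)k^4\delta^{-3}n^{-\sigma}\leq 63200\,k^4\delta^{-3}n^{-\sigma}$, as desired.

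It then remains to collect the lower bounds on $n$ that were invoked — namely $N_{\ref{TH:lower-bound-strong-detailed}}(k,r,\delta,\alpha,\beta,\eta)$, $r^{(\eta-\sigma)^{-1}}$ and $r^{4(\mu-\sigma)^{-1}}$ — and to show their maximum is at most $Ck^a r^b\delta^{-u}(1-\delta)^{-v}$ with the stated exponents. For this I would expand $N_{\ref{TH:lower-bound-strong-detailed}}$ through its explicit bound $\max\{N_1,\dots,N_5\}$ (with $\tilde\delta\in[\delta/2,\delta]$, so $\tilde\delta^{-1}\leq 2\delta^{-1}$ and $(1-\tilde\delta)^{-1}\leq(1-\delta)^{-1}$) coming from \cref{TH:pre-lower-bound-i}, substitute $d=rk$ everywhere, and read off, factor by factor, the exponent of each of $k$, $r$, $\delta^{-1}$, $1-\delta$. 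The exponent of $k$ is the maximum of $2(1-2\alpha)^{-1}$ (from $N_1$), $2(1-2\eta)^{-1}$ (from $N_3$, using $\mu<1$), $2\eta^{-1}$ (from $N_4$) and $2(1-\alpha-\beta)^{-1}$ (from $N_5$), giving $a$; the exponent of $r$ is the maximum of $(1-2\alpha)^{-1}$ (from $N_1$), $(\eta-\sigma)^{-1}$ (which dominates the $\eta^{-1}$ appearing in $N_2$, $N_4$) and $4(\mu-\sigma)^{-1}$, giving $b$; the exponent of $\delta^{-1}$ is the maximum of $(1-2\alpha)^{-1}$, $\eta^{-1}$ and $\alpha^{-1}$ (from the $(2\delta^{-1})^{\alpha^{-1}}$ term), giving $u$; the exponent of $(1-\delta)^{-1}$ is the maximum of $(1-2\alpha)^{-1}$ and $\eta^{-1}$, giving $v$; and the numerical constants combine — using e.g.\ $(16/(1-2\eta)^2)^{2(1-2\eta)^{-1}}=(4/(1-2\eta))^{4(1-2\eta)^{-1}}$ and $16\cdot 2=32$, $6\cdot 2=12$ for the $\tilde\delta\mapsto\delta/2$ passage — into the claimed $C$.

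The main obstacle here is not conceptual but purely the bookkeeping in the last step: one must track every numerical constant through the substitution $d=rk$, the passage from $\delta$ to $\tilde\delta\in[\delta/2,\delta]$, and the maxima of exponents, and must verify that the $r$‑absorption exponents $(\eta-\sigma)^{-1}$ and $4(\mu-\sigma)^{-1}$ genuinely dominate every exponent of $r$ already hidden inside $N_{\ref{TH:lower-bound-strong-detailed}}$ (they do, because $\eta^{-1}\leq(\eta-\sigma)^{-1}$ and $N_1$'s contribution $(1-2\alpha)^{-1}$ is retained explicitly in $b$). The only other point of care is the degenerate case $k<1$, which is excluded from the clean $k\geq 1$ normalization and is disposed of by the empty‑edge reduction of \cref{RM:k_less_than_1}.
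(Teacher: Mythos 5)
Your proposal follows the paper's proof of this corollary step for step: invoke \cref{TH:lower-bound-strong-detailed}, use $\sigma\leq\alpha,\beta$ and $\sigma<\eta,\mu$ to reduce the four error exponents to $-\sigma$, absorb the $r$-dependence via $n\geq r^{(\eta-\sigma)^{-1}}$ and $n\geq r^{4(\mu-\sigma)^{-1}}$, sum the coefficients $k+1+(2k^3+k^2)+63194k^4\delta^{-3}\leq 63199k^4\delta^{-3}$ (valid for $k\geq 1$, the normalization the paper also silently uses), and then verify $Ck^a r^b\delta^{-u}(1-\delta)^{-v}\geq\max\{N_{\ref{TH:lower-bound-strong-detailed}},r^{(\eta-\sigma)^{-1}},r^{4(\mu-\sigma)^{-1}}\}$ by expanding $N_1,\dots,N_5$ with $d=rk$ and $\tilde\delta\in[\delta/2,\delta]$, which is exactly the "routine check" the paper defers; your exponent bookkeeping reproduces the stated $a,b,u,v,C$. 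One small slip: condition (1) of \cref{TH:lower-bound-strong-detailed} forces the average \emph{degree} $d=rk\geq 3$, not $k\geq 3$; the normalization $k\geq 1$ you actually want is still legitimate to assume (and \cref{RM:k_less_than_1}(ii) is the right reduction for the residual case), but the parenthetical justification is off.
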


\begin{proof}
Suppose $n\geq N_{\ref{TH:lower-bound-strong-detailed}}$.
Then by \cref{TH:lower-bound-strong-detailed}, there is
$A\subseteq V$ of density at most $\delta$ such that
\[
    \frac{|E( {A})|}{|E|}
    \geq
    f_{k,r}( \delta)
    -kn^{-\alpha}
    -n^{-\eta}
    -(2k^3+k^2)r n^{-\eta}
    -63194 \cdot k^4r^4 \delta^{-3}   n^{-\mu}.
\]
If it is moreover the case that
$rn^{-\eta}\leq n^{-\sigma}$, equiv.\
$n\geq r^{(\eta-\sigma)^{-1}}$,
and $r^4 n^{-\mu}\geq n^{-\sigma}$,
equiv.\ $n\geq r^{4(\mu-\sigma)^{-1}}$, then this implies that
\[
    \frac{|E( {A})|}{|E|}
    \geq
    f_{k,r}( \delta)
    - kn^{-\sigma}
    -n^{-\sigma}
    -(2k^3+k^2) n^{-\sigma}
    -63194  \cdot k^4 \delta^{-3}   n^{-\sigma}
    \geq f_{k,r}( \delta)- 63199k^4\delta^{-3} n^{-\sigma}.
\]
It is routine to check that $C k^a r^b \delta^{-u}(1-\delta)^{-v}\geq
\max\{N_{\ref{TH:lower-bound-strong-detailed}},r^{(\eta-\sigma)^{-1}},
r^{4(\mu-\sigma)^{-1}}\}$, hence the corollary.
\end{proof}

We finally prove \cref{TH:lower-bound-strong}(ii).

\begin{proof}[Proof of \cref{TH:lower-bound-strong}(ii)]
Let $\sigma\in (0,\frac{1}{6})$.
We have seen in Remark~\ref{RM:lower-bound-smallest-n-exp} that
taking $\alpha=\beta=\eta=\frac{1}{6}$ gives
$\mu=\frac{1}{6}$. Applying \cref{CR:lower_bound_indep_of_r}
with these $\alpha,\beta,\eta$ and $\sigma$ then gives
$a=12$, $b=\frac{24}{1-6\sigma}$, $u=v=6$ and $C=2985984\leq
3\cdot 10^6$,
which proves
\cref{TH:lower-bound-strong}(ii) for a general $\sigma$.
When $\sigma=\frac{7-\sqrt{33}}{16}$, taking
$\alpha=\beta=\sigma$ and $\eta=2\sigma$ gives
the parameters $a,b,u,v,C$ specified in the theorem for that
particular $\sigma$. (Note that   this makes $b$ much
smaller than before, namely, $b=7+\sqrt{33}\leq 12.75$.)
\end{proof}

\begin{remark}
The value of the parameter $b$ in \cref{TH:lower-bound}
and \cref{TH:lower-bound-epsilon-view} (the exponent of $r$ and
$\ee^{-1}$, respectively).
turns out to be $\min\{b,\sigma^{-1}\}$ for the $b$ and
$\sigma$ chosen in
\cref{CR:lower_bound_indep_of_r}. Computer experiments
suggest that choosing $\sigma=\frac{7-\sqrt{33}}{16}$,
$\alpha=\beta=\sigma$ and $\eta=2\sigma$ minimizes
$\min\{b,\sigma^{-1}\}$, and this is why we considered this
choice of parameters
in the proof of
\cref{TH:lower-bound-strong}.
%
%
\end{remark}

\section{Proof of Analytic Results}
\label{sec:analytic}

In this section we complete the proofs of the results of
\cref{sec:lowerbounds} by proving all the analytic claims that were
used in those proofs.
Specifically, \cref{subsec:proof-of-hard-optimization} concerns with
proving \cref{TH:hard-optimization} and in
\cref{subsec:properties-of-f-k-r-delta} we prove the properties of
the function $f_{k,r}(\delta)$, see \cref{EQ:f-k-r-delta-dfn}, that
we used earlier.

\subsection{Proof of \cref{TH:hard-optimization}}
\label{subsec:proof-of-hard-optimization}

We recall the notation introduced in \cref{TH:hard-optimization}:
We are given  $u_1,\dots,u_m\in [0,1]$ are such that $\sum_{i=1}^m u_i=1$,
$k\in [1,\infty)$ and $\delta\in (0,1)$.
Then, for every $x_1,\dots,x_m\in (1,\infty)$, we let
$d=\sum_{i=1}^m x_iu_i$, $r=\frac{d}{k}$,
define   $\gamma$ to be   the unique element
of $(0,1)$ such that
\begin{equation*}
\delta=\sum_{i=1}^m u_i(1-(1-\gamma)^{x_i}),
\end{equation*}
and put
\[
F(x_1,\dots,x_m)=\gamma+(1-\gamma)\prod_{i=1}^m(1-(1-\gamma)^{x_i-1})^{\frac{u_i
x_i}{r}}.
\]
Our goal is to show that if $x_1,\dots,x_m\geq 3$, then
\[
F(x_1,\dots,x_m)\geq F(d,\dots,d)=
f_{k,r})\delta):=1-(1-\delta)^{\frac{1}{rk}}+
(1-\delta)^{\frac{1}{rk}}(1-(1-\delta)^{\frac{rk-1}{rk}})^k,
\]
and equality holds if and only if $x_1=\dots=x_m=d$.

Note first that
$F(d,\dots,d)$ does evaluate to
$1-(1-\delta)^{\frac{1}{rk}}+
(1-\delta)^{\frac{1}{rk}}(1-(1-\delta)^{\frac{rk-1}{rk}})^k$.
Indeed, if $x_i=d$ for all $i$, then $\delta=\sum_i
u_i(1-(1-\gamma)^d)=1-(1-\gamma)^d$
and so $\gamma=1-(1-\delta)^{\frac{1}{d}}$. It follows that
\begin{align*}
F(d,\dots,d)&=\gamma+(1-\gamma)\prod_{i}(1-(1-\gamma)^{d-1})^{\frac{u_id}{r}}
\\
&=1-(1-\delta)^{\frac{1}{d}}+(1-\delta)^{\frac{1}{d}}(1-(1-\delta)^{\frac{d-1}{d}})^{\sum_i
\frac{u_i d}{r}}
\\
&=1-(1-\delta)^{\frac{1}{rk}}+
(1-\delta)^{\frac{1}{rk}}(1-(1-\delta)^{\frac{rk-1}{rk}})^k.
\end{align*}
The hard part of   theorem
is proving the inequality $F(x_1,\dots,x_m)\geq F(d,\dots,d)$.

\begin{lemma}\label{LM:hard-optimization-core}
With notation as in \cref{TH:hard-optimization},
put
\[
f(x_1,\dots,x_m)=\sum_{i=1}^m u_i x_i \ln(1-(1-\gamma)^{x_i-1}).
\]
If $x_1,\dots,x_m\geq 3$, then $f(x_1,\dots,x_m)\geq
\frac{\ln(1-\delta)}{\ln(1-\gamma)}\ln\left(\frac{\delta-\gamma}{1-\gamma}\right)$,
and equality holds if and only if $x_1=\dots=x_m $.
\end{lemma}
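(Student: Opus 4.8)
\textbf{Proof plan for \cref{LM:hard-optimization-core}.}

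The plan is to reformulate the inequality as a statement about a single-variable convexity/tangent-line argument combined with the constraint \cref{EQ:gamma-dfn}. First I would introduce the substitution $y_i = (1-\gamma)^{x_i}\in(0,1)$, so that $x_i = \log_{1-\gamma}(y_i)$ and the constraint becomes $\delta = \sum_i u_i(1-y_i)$, i.e.\ $\sum_i u_i y_i = 1-\delta$. In these variables $u_i x_i\ln(1-(1-\gamma)^{x_i-1}) = u_i\log_{1-\gamma}(y_i)\cdot\ln\bigl(1-\tfrac{y_i}{1-\gamma}\bigr)$, so $f = \tfrac{1}{\ln(1-\gamma)}\sum_i u_i\ln(y_i)\ln\bigl(1-\tfrac{y_i}{1-\gamma}\bigr)$. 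Since $\ln(1-\gamma)<0$, proving $f\ge \tfrac{\ln(1-\delta)}{\ln(1-\gamma)}\ln\bigl(\tfrac{\delta-\gamma}{1-\gamma}\bigr)$ is equivalent to proving the \emph{reverse} inequality
\[
\sum_{i=1}^m u_i\,\varphi(y_i) \;\le\; \ln(1-\delta)\,\ln\!\Bigl(\tfrac{\delta-\gamma}{1-\gamma}\Bigr)
\]
for the function $\varphi(y) := \ln(y)\ln\bigl(1-\tfrac{y}{1-\gamma}\bigr)$ on the interval $y\in(0,1-\gamma)$, where the right-hand side equals $\varphi$ evaluated (in spirit) at the weighted mean. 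Note that when $x_1=\dots=x_m$, the common value must be $d$ (by the constraint and the strict monotonicity of $t\mapsto 1-(1-\gamma)^t$), whence $\gamma = 1-(1-\delta)^{1/d}$, $y_i = 1-\delta$ for all $i$, and $\varphi(1-\delta) = \ln(1-\delta)\ln\bigl(1-\tfrac{1-\delta}{1-\gamma}\bigr) = \ln(1-\delta)\ln\bigl(\tfrac{\delta-\gamma}{1-\gamma}\bigr)$; so the right-hand side is indeed the ``diagonal'' value and the claimed equality case is the natural one.

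The crux is therefore to show that $\varphi$ is \emph{concave} on the relevant sub-interval (the image of $[3,\infty)$ under $x\mapsto(1-\gamma)^x$, which is $(0,(1-\gamma)^3]$), because then Jensen's inequality gives $\sum_i u_i\varphi(y_i)\le \varphi\bigl(\sum_i u_i y_i\bigr) = \varphi(1-\delta)$, with equality iff all $y_i$ are equal (using strict concavity). One subtlety: $\sum_i u_i y_i = 1-\delta$ automatically lies in $(0,1-\gamma)$, but I must check it also lies in the region where $\varphi$ is concave — this is exactly where the hypothesis $x_i\ge 3$ is used, since concavity of $\varphi$ may fail near $y=1-\gamma$ (i.e.\ for $x_i$ close to $1$) but should hold for $y\le(1-\gamma)^3$, and $1-\delta = \sum u_i y_i \le \max_i y_i \le (1-\gamma)^3$ does \emph{not} follow directly — rather I would argue that since each $y_i\le(1-\gamma)^3$, their convex combination $1-\delta$ also satisfies $1-\delta\le(1-\gamma)^3$, so $1-\delta$ is in the concavity region and Jensen applies cleanly.

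The main obstacle I expect is the concavity computation for $\varphi(y)=\ln(y)\ln\bigl(1-\tfrac{y}{1-\gamma}\bigr)$: one must compute $\varphi''(y)$ and show it is $\le 0$ on $(0,(1-\gamma)^3]$ uniformly in $\gamma\in(0,1)$. Writing $z = \tfrac{y}{1-\gamma}\in(0,1)$, we have $\varphi(y) = (\ln z + \ln(1-\gamma))\ln(1-z)$, and $\varphi''$ as a function of $z$ (up to the positive factor $(1-\gamma)^{-2}$) is $\bigl(\ln z\cdot\ln(1-z)\bigr)'' + \ln(1-\gamma)\cdot\bigl(\ln(1-z)\bigr)''$; the second term is $\ln(1-\gamma)\cdot\bigl(-\tfrac{1}{(1-z)^2}\bigr)>0$ since $\ln(1-\gamma)<0$, so it works \emph{against} us, and I need the first term to be sufficiently negative. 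This forces a careful estimate: the bound $z\le (1-\gamma)^2$ (which follows from $y\le(1-\gamma)^3$) must be strong enough to dominate. I would reduce to showing an explicit inequality in $z$ and $\gamma$, split into the regime $\gamma$ small (where $\ln(1-\gamma)\approx -\gamma$ is tiny and the $\ln z\ln(1-z)$ term dominates) versus $\gamma$ bounded away from $0$ (where $z\le(1-\gamma)^2$ is small, so $\ln(1-z)\approx -z$ and the leading behavior is controlled), and verify each by elementary calculus. An alternative fallback, if the pointwise concavity estimate proves too delicate, is to prove the inequality $\sum_i u_i\varphi(y_i)\le\varphi(1-\delta)$ directly via a tangent-line (supporting hyperplane) argument at the point $y=1-\delta$, i.e.\ exhibit an affine function $\ell(y) = \varphi(1-\delta) + \varphi'(1-\delta)(y-(1-\delta))$ with $\varphi(y)\le\ell(y)$ for all $y\in(0,(1-\gamma)^3]$; summing against the weights $u_i$ and using $\sum u_i y_i = 1-\delta$ kills the linear part and yields the claim. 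This tangent-line version may be more robust since it only requires a global one-sided bound rather than a second-derivative sign condition.
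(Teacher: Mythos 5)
Your approach is, up to an affine change of variable, identical to the paper's. Setting $c_i = 1-(1-\gamma)^{x_i-1}$, the paper rewrites $f=\sum_i u_i g(c_i)$ with $g(c) = \bigl(1+\tfrac{\ln(1-c)}{\ln(1-\gamma)}\bigr)\ln c$, notes $\sum_i u_i c_i = \tfrac{\delta-\gamma}{1-\gamma}$, and applies Jensen after showing $g$ is strictly convex on $[1-(1-\gamma)^2,1]$ (\cref{LM:convexity-i}). Your substitution $y_i=(1-\gamma)^{x_i}$ satisfies $c_i = 1-\tfrac{y_i}{1-\gamma}$, and $g(c_i) = \tfrac{1}{\ln(1-\gamma)}\varphi(y_i)$; since $\ln(1-\gamma)<0$ and the map $y\mapsto 1-\tfrac{y}{1-\gamma}$ is affine, your claim that $\varphi$ is concave on $(0,(1-\gamma)^3]$ is exactly equivalent to the paper's convexity claim for $g$. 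So the skeleton of your argument is correct.

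However, the concavity of $\varphi$ (equivalently, convexity of $g$) is the entire content of the lemma, and you have left it as an unverified plan. Your heuristic for the ``$\gamma$ small'' regime is also not quite right as stated: near the endpoint $z=(1-\gamma)^2$ (i.e.\ $y=(1-\gamma)^3$), the ``bad'' term $-\tfrac{\ln(1-\gamma)}{(1-z)^2}$ behaves like $\tfrac{1}{4\gamma}$ as $\gamma\to0$, which is not tiny; the ``good'' term $(\ln z\,\ln(1-z))''$ behaves like $-\tfrac{1}{2\gamma}$, so dominance happens only because of the factor $2$ and must be established by an actual constant check, not by ``$\ln(1-\gamma)$ is small.'' The paper avoids this by proving a clean, uniform-in-$\gamma$ pointwise inequality: after multiplying $h=-\ln(1-\gamma)g$ by $x^2(1-x)^2$, the second-derivative sign reduces to $x^2\ln x + 2x(1-x) + (1-x)^2\ln(1-x) \geq -(1-x)^2\ln(1-\gamma)$ on $[1-(1-\gamma)^2,1)$, which is split into two one-sided bounds and handled with the elementary estimate $\ln(1-x)>\tfrac{-x+0.5x^2}{1-x}$ (\cref{LM:ln-approximation}). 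If you carry out your concavity step, you should aim for a similarly uniform estimate rather than a $\gamma$-casework, since the two competing terms are genuinely of the same order at the interval endpoint. Your tangent-line fallback is logically weaker than concavity and would also suffice, but you would still face the same quantitative comparison near the endpoint, so it does not obviously reduce the difficulty.
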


The rationale for considering $f(x_1,\dots,x_m)$
is that $F(x_1,\dots,x_m)=\gamma+(1-\gamma)\exp(\frac{1}{r}f(x_1,\dots,x_m))$.

\begin{proof}
For each $i$, substitute
\[c_i=1-(1-\gamma)^{x_i-1}.\]
This is equivalent to
\[
x_i=1+\frac{\ln(1-c_i)}{\ln(1-\gamma)}.
\]
We have $c_i\leq 1$, and
the assumption $x_i\geq 3$ implies   that
\[
1-(1-\gamma)^2\leq c_i.
\]

Observe that
\begin{align*}
1-(1-\gamma)^{x_i} &=
1-(1-\gamma)(1-\gamma)^{x_i-1}
\\
&=\gamma+(1-\gamma)-(1-\gamma)(1-\gamma)^{x_i-1}
\\
&=\gamma + (1-\gamma)(1-(1-\gamma)^{x_i-1})=\gamma+(1-\gamma)c_i.
\end{align*}
As a result, 
\[
\delta =\sum_iu_i(1-(1-\gamma)^{x_i})=
\sum_i u_i(\gamma+(1-\gamma)c_i)=\gamma+(1-\gamma)\sum_i u_i c_i,
\]
hence
\begin{equation}\label{EQ:ci-average}
\sum_i u_i c_i = \frac{\delta-\gamma}{1-\gamma}.
\end{equation}

Next, observe that
\begin{align*}
f(x_1,\dots,x_m)&=\sum_i u_i x_i\ln(1-(1-\gamma)^{x_i-1})
=\sum_i u_i
\left(1+\frac{\ln(1-c_i)}{\ln(1-\gamma)}\right)\ln c_i=:(\star).
\end{align*}
Since $c_1,\dots,c_m\in [1-(1-\gamma)^2,1]=:I$,
if it were the case that the function
$g(x)=\left(1+\frac{\ln(1-x)}{\ln(1-\gamma)}\right)\ln x$
is strictly convex in $I$, then Jensen's inequality and \cref{EQ:ci-average}
would imply that
\begin{align*}
(\star)&
\geq
\left(1+\frac{\ln(1-\frac{\delta-\gamma}{1-\gamma})}{\ln(1-\gamma)}\right)\ln
\left( \frac{\delta-\gamma}{1-\gamma}\right)
\\
&=\frac{\ln(1-\gamma)+\ln(1-\gamma-(\delta-\gamma))-\ln(1-\gamma)}{\ln(1-\gamma)}
\cdot \ln \left( \frac{\delta-\gamma}{1-\gamma}\right)
\\
&=
\frac{\ln(1-\delta)}{\ln(1-\gamma)}\ln \left(
\frac{\delta-\gamma}{1-\gamma}\right),
\end{align*}
with equality holding if and only if $c_1=\dots =c_m$, equiv.\ $x_1=\dots=x_m$,
and the lemma will be proved.

Thus, it is enough to prove that
$g(x)=\left(1+\frac{\ln(1-x)}{\ln(1-\gamma)}\right)\ln x$
is strictly convex in  $ [1-(1-\gamma)^2,1]$. This is the content of the
following \cref{LM:convexity-i}.
\end{proof}

\begin{lemma}\label{LM:convexity-i}
For every $\gamma\in (0,1)$, the function
$g(x)=\left(1+\frac{\ln(1-x)}{\ln(1-\gamma)}\right)\ln x$
is strictly convex in $I:=[1-(1-\gamma)^2,1]$.
\end{lemma}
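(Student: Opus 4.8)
The plan is to compute $g''(x)$ directly and show it is strictly positive on the interval $I=[1-(1-\gamma)^2,1]$. Write $g(x)=\bigl(1+\tfrac{\ln(1-x)}{\ln(1-\gamma)}\bigr)\ln x$; since $\ln(1-\gamma)<0$, the coefficient $\tfrac{1}{\ln(1-\gamma)}$ is a fixed negative constant, call it $-c$ with $c=\tfrac{1}{|\ln(1-\gamma)|}>0$, so $g(x)=(1-c\ln(1-x))\ln x$. First I would differentiate twice:
\[
g'(x)=\frac{c\ln x}{1-x}+\frac{1-c\ln(1-x)}{x},
\]
\[
g''(x)=\frac{c}{x(1-x)}+\frac{c\ln x}{(1-x)^2}-\frac{1-c\ln(1-x)}{x^2}+\frac{c}{x(1-x)}.
\]
So $g''(x)=\dfrac{2c}{x(1-x)}+\dfrac{c\ln x}{(1-x)^2}-\dfrac{1-c\ln(1-x)}{x^2}$. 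The task reduces to showing this expression is positive for all $x\in I$ and all $\gamma\in(0,1)$.

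The key observation is the constraint that $x$ lives in $I$: if $x\geq 1-(1-\gamma)^2$, then $1-x\leq (1-\gamma)^2$, hence $\ln(1-x)\leq 2\ln(1-\gamma)$, which gives $-c\ln(1-x)\geq 2$, so $1-c\ln(1-x)\geq 3$; more importantly, I would use it the other way: $\ln(1-x)\le 2\ln(1-\gamma)$ means $c\,|\ln(1-x)|\ge 2$, i.e. $1-c\ln(1-x) = 1+c|\ln(1-x)|$. The plan is to bound the single negative term $-\tfrac{1-c\ln(1-x)}{x^2}$ against the two positive terms. Since $x$ is close to $1$ on $I$ (indeed $x\ge 1-(1-\gamma)^2 \ge 0$, and for $\gamma$ not tiny $x$ is near $1$), $\ln x$ is close to $0$ but negative, so the term $\tfrac{c\ln x}{(1-x)^2}$ is actually \emph{negative} too — this is the subtlety. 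So in fact there are two competing negative contributions, $\tfrac{c\ln x}{(1-x)^2}$ and $-\tfrac{1}{x^2}$, against two positive ones, $\tfrac{2c}{x(1-x)}$ and $\tfrac{c\ln(1-x)\cdot(-1)}{x^2}$... wait, $-c\ln(1-x)/x^2>0$. Let me reorganize: $g''(x)=\underbrace{\tfrac{2c}{x(1-x)}}_{>0}+\underbrace{\tfrac{c\ln x}{(1-x)^2}}_{<0}-\tfrac{1}{x^2}+\underbrace{\tfrac{c\ln(1-x)}{x^2}}_{<0}$. Hmm, that's three negative terms. I would instead substitute $t=1-x\in(0,(1-\gamma)^2]$ and $s=\ln(1-\gamma)<0$, so $c=-1/s$, and rewrite everything as a function of $t$ with parameter $s$, then clear denominators by multiplying through by $x^2(1-x)^2=( 1-t)^2 t^2>0$, reducing to proving a polynomial-and-logarithm inequality $2c(1-t)t + c t^2\ln(1-t) - (1-t)^2 + c(1-t)^2\ln t > 0$ — here I have used $\ln x=\ln(1-t)$ and $\ln(1-x)=\ln t$, and the constraint becomes $\ln t\le 2s$, i.e. $-c\ln t\ge 2$.

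The hard part will be handling the interaction between the logarithmic terms and the algebraic terms near the two ends of $I$: near $x=1$ (i.e. $t\to 0^+$), $\ln t\to-\infty$ so the term $c(1-t)^2\ln t\to-\infty$, but it is multiplied by $(1-t)^2\to 1$ and must be dominated by... actually the leading behavior near $t=0$ of the cleared inequality is $c\ln t - 1$, which by the constraint $-c\ln t\ge 2$ is $\le -3 <0$ — so the cleared form is \emph{negative} near $t=0$, meaning I have a sign error or must not clear by $t^2$ carelessly; multiplying $g''$ by $x^2(1-x)^2$ with $1-x=t\to0$ kills the $1/x^2$ terms. I would be more careful: multiply $g''(x)$ only by $x^2>0$ (bounded away from $0$ on $I$ since $x\ge 1-(1-\gamma)^2>0$... but this fails as $\gamma\to0$). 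The genuinely delicate regime is $\gamma\to 0$, where $I=[1-(1-\gamma)^2,1]=[2\gamma-\gamma^2,1]$ shrinks toward the point $1$ but its left endpoint also tends to $0$; there I expect to need a two-variable Taylor expansion of $g''$ in $(\gamma, x)$ and a careful case split (say $x\le 1/2$ versus $x\ge 1/2$, or $1-x$ comparable to $(1-\gamma)^2$ versus much smaller). This case analysis, and verifying the resulting elementary inequalities in $\ln$, is where all the work lies; the rest is routine differentiation. I would present it as: (1) compute $g''$; (2) reduce via the substitution and the constraint $-c\ln(1-x)\ge 2$ to a clean inequality; (3) split into the regime where $1-x$ is close to its maximum $(1-\gamma)^2$ and the regime where it is much smaller, bounding the logarithms by their tangent/secant lines (using $\ln(1+u)\le u$ and the quadratic lower bound $\ln(1-u)\ge \tfrac{-u+u^2/2}{1-u}$ from \cref{LM:ln-approximation}) in each; (4) conclude $g''>0$ throughout, hence strict convexity.
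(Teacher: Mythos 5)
Your computation of $g''$ is correct, and you correctly identify the key structural features: working with $c=-1/\ln(1-\gamma)>0$, clearing denominators, and using the constraint $1-x\le(1-\gamma)^2$ (i.e.\ $-c\ln(1-x)\ge 2$). But the proposal has a concrete error that derails it, and the finishing idea is missing.

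The error: when you substitute $t=1-x$, you swap $x$ and $1-x$. The correct cleared form is
\[
x^2(1-x)^2\,g''(x)\;=\;2c\,x(1-x)+c\,x^2\ln x-(1-x)^2+c\,(1-x)^2\ln(1-x),
\]
which in the variable $t=1-x$ reads $2c(1-t)t + c(1-t)^2\ln(1-t) - t^2 + ct^2\ln t$, \emph{not} the expression you wrote with $(1-t)^2$ and $t^2$ interchanged. With the correct form, the limit as $x\to1^-$ (i.e.\ $t\to0^+$) is $0$ --- in particular it does not blow up to $-\infty$, so there is no need to avoid multiplying by $(1-x)^2$ and no "sign error" to worry about. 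The panic that follows in your write-up is entirely an artifact of this slip.

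The missing idea: you never say how to actually prove the cleared inequality, and the two-regime split you gesture at (``$1-x$ close to $(1-\gamma)^2$'' vs.\ ``much smaller'') is not executed and would be awkward. The paper's route is a clean \emph{buffer} trick. After clearing denominators (for the positive multiple $h=-\ln(1-\gamma)\,g$), the goal is
\[
x^2\ln x + 2x(1-x) + (1-x)^2\ln(1-x)\ \ge\ -(1-x)^2\ln(1-\gamma),
\]
and one inserts the middle quantity $\tfrac12 x(1-x)$ and proves two separate inequalities: (a) $x^2\ln x + \tfrac32 x(1-x) + (1-x)^2\ln(1-x)>0$, which is $\gamma$-free and follows from \cref{LM:ln-approximation} applied to $\ln x$ and $\ln(1-x)$ separately; and (b) $\tfrac12 x(1-x)\ge -(1-x)^2\ln(1-\gamma)$, which after dividing by $1-x$ becomes $x\ge -\ln(1-\gamma)/(\tfrac12-\ln(1-\gamma))$ and is then checked at the left endpoint $x=1-(1-\gamma)^2$ of $I$ using \cref{LM:ln-approximation} once more. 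This decouples the $\gamma$-dependence from the $x$-dependence, handles all $\gamma\in(0,1)$ uniformly (so the $\gamma\to0$ regime you flag as delicate is not actually a problem), and avoids any regime split or two-variable Taylor expansion. Without some idea of this kind, step (3) of your plan is not a proof.
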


Unfortunately, $g(x)$ is usually not convex in
$[1-(1-\gamma)^1,1]=[\gamma,1]$ (e.g.,
take $\gamma=0.1$); if it were the case, then \cref{TH:hard-optimization}
would actually hold under the milder assumption $x_1,\dots,x_m\in [2,\infty)$.

\begin{proof}[Proof of \cref{LM:convexity-i}.]
Since $\ln(1-\gamma)\leq 0$, we may replace $g(x)$ with
\[
h(x):=-\ln(1-\gamma) g(x)=-(\ln(1-x)+\ln(1-\gamma))\ln x.
\]
It is therefore enough to show that $h''(x)\geq 0$
for all  $x\in I':=[1-(1-\gamma)^2,1)$.

By direct computation,
\[
h''(x)=\frac{\ln
x}{(1-x)^2}+\frac{2}{x(1-x)}+\frac{\ln(1-x)+\ln(1-\gamma)}{x^2}.
\]
Multiplying by $x^2(1-x)^2$, we are reduced into showing that
\begin{equation*}
x^2\ln x+2x(1-x)+(1-x)^2\ln (1-x)+(1-x)^2\ln(1-\gamma)\geq 0
\qquad\forall x\in I'.
\end{equation*}
Rearranging, this is equivalent to
\[
x^2\ln x+2x(1-x)+(1-x)^2\ln (1-x)\geq -(1-x)^2\ln(1-\gamma)
\qquad\forall x\in I'.
\]
It is therefore enough to prove that for every $x\in I'$, we have
\begin{enumerate}
\item[(1)] $x^2\ln x+2x(1-x)+(1-x)^2\ln (1-x)> 0.5x(1-x)$ and
\item[(2)] $0.5x(1-x)\geq -(1-x)^2\ln(1-\gamma)$.
\end{enumerate}
Suppose henceforth that $x\in I'=[1-(1-\gamma)^2,1)$.

\smallskip

{\it Proof of (1).} This is equivalent to showing
\[
x^2\ln x+1.5x(1-x)+(1-x)^2\ln (1-x)> 0.
\]
By applying \cref{LM:ln-approximation} twice, we get
\begin{align*}
x^2\ln x&+1.5x(1-x)+(1-x)^2\ln (1-x)
\\
&>
x^2\cdot \frac{-(1-x)+0.5(1-x)^2}{x}+1.5x(1-x)+(1-x)^2\cdot
\frac{-x+0.5x^2}{1-x}
\\
&=
x(1-x)[-1+0.5(1-x)]+1.5x(1-x)+x(1-x)[-1+0.5x]=0,
\end{align*}
which is what we want.

\smallskip

{\it Proof of (2).}
Dividing by $1-x$, we are reduced into showing
\[
0.5 x \geq -(1-x)\ln(1-\gamma).
\]
Solving this inequality for $x$ (noting that $\ln(1-\gamma)<0$), this
is equivalent to
\[
x\geq -\frac{\ln(1-\gamma)}{0.5-\ln(1-\gamma)}.
\]
Since $x\geq 1-(1-\gamma)^2$, it is enough to prove
\begin{equation*}
1-(1-\gamma)^2\geq -\frac{\ln(1-\gamma)}{0.5-\ln(1-\gamma)},
\end{equation*}
and by rearranging, this becomes
\[
\frac{-\gamma+0.5\gamma^2}{(1-\gamma)^2}\leq \ln(1-\gamma).
\]
Now, by \cref{LM:ln-approximation} and the assumption $\gamma\in (0,1)$,
\[
\ln(1-\gamma)\geq \frac{-\gamma+0.5\gamma^2}{ 1-\gamma }\geq
\frac{-\gamma+0.5\gamma^2}{(1-\gamma)^2},
\]
so (2) holds and the proof is complete.
\end{proof}

We are now ready to prove \cref{TH:hard-optimization}

\begin{proof}[Proof of \cref{TH:hard-optimization}.]
By \cref{LM:hard-optimization-core},
\begin{align*}
F(x_1,\dots,x_m)&=\gamma+(1-\gamma)\exp\left(\frac{1}{r}f(x_1,\dots,x_m)\right)
\\
&\geq \gamma+(1-\gamma)\exp\left( \frac{
\ln(1-\delta)}{r\ln(1-\gamma)}\ln\left(\frac{\delta-\gamma}{1-\gamma}\right)\right)
\\
&=\gamma+(1-\gamma)(1-\delta)^{\frac{\ln(\delta-\gamma)-\ln(1-\gamma)}{r\ln(1-\gamma)}}
\\
&=\gamma+(1-\gamma)(1-\delta)^{\frac{1}{r}\left(\frac{\ln(\delta-\gamma)}{\ln(1-\gamma)}-1\right)}=:(\star),
\end{align*}
and equality holds   only if $x_1=\dots=x_m$.
Put
\[
f(x)=
x+(1-x)(1-\delta)^{\frac{1}{r}\left(\frac{\ln(\delta-x)}{\ln(1-x)}-1\right)}
\]
and suppose $f(x)$ is increasing in the interval
$I=[1-(1-\delta)^{\frac{1}{d}},
1-(1-\delta)^{\frac{1}{3}}]$. By \cref{LM:gamma-bounds}, $\gamma\in I$,
so
\begin{align*}
(\star)&\geq f(1-(1-\delta)^{\frac{1}{d}}) \\
&=1-(1-\delta)^{\frac{1}{d}}+(1-\delta)^{\frac{1}{d}}(1-\delta)^{
\frac{1}{r}\left(\frac{\ln(\delta-1+(1-\delta)^{1/d})}{\ln((1-\delta)^{1/d})}-1\right)}
\\
&=
1-(1-\delta)^{\frac{1}{d}}+(1-\delta)^{\frac{1}{d}}(1-\delta)^{
\frac{d}{r}\log_{1-\delta}((1-\delta)^{1/d}-(1-\delta))-\frac{1}{r}}
\\
&=
1-(1-\delta)^{\frac{1}{d}}+(1-\delta)^{\frac{1}{d}}
\left((1-\delta)^{1/d}-(1-\delta)\right)^{\frac{d}{r}}(1-\delta)^{-\frac{1}{r}}
\\
&=
1-(1-\delta)^{\frac{1}{d}}+(1-\delta)^{\frac{1}{d}}(1-\delta)^{\frac{1}{d}\cdot\frac{d}{r}}
(1-(1-\delta)^{\frac{d-1}{d}})^{\frac{d}{r}}(1-\delta)^{-\frac{1}{r}}
\\
&=
1-(1-\delta)^{\frac{1}{rk}}+(1-\delta)^{\frac{1}{rk}}
(1-(1-\delta)^{\frac{rk-1}{rk}})^{k},
\end{align*}
and the theorem is proved.

We are therefore reduced into showing that $f(x)$ is increasing
in the interval $I$, which follows from the next  lemma.
\end{proof}

\begin{lemma}\label{LM:increasing-i}
For every $r>0$ and $\delta\in (0,1)$, the function
$f(x)=
x+(1-x)(1-\delta)^{\frac{1}{r}\left(\frac{\ln(\delta-x)}{\ln(1-x)}-1\right)}$
is increasing in the interval
$(0,1-\sqrt{1-\delta}]$.
\end{lemma}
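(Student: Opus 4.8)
The plan is to substitute $t=1-x$ and $a:=1-\delta\in(0,1)$, so that the assertion becomes: the function
\[
g(t):=(1-t)+t\,a^{\phi(t)/r},\qquad \phi(t):=\frac{\ln(1-a/t)}{\ln t},
\]
is non-increasing on $[\sqrt a,1)$. For $t$ in this interval one has $t\ge\sqrt a>a$, hence $1-a/t\in(0,1)$ and $\ln t<0$, so $\phi(t)>0$, $a^{\phi(t)/r}\in(0,1)$, and $g$ is differentiable there with
\[
g'(t)=-1+a^{\phi(t)/r}\!\left(1+\tfrac{t\ln a}{r}\,\phi'(t)\right).
\]
Thus it suffices to prove $a^{\phi/r}\bigl(1+\tfrac{t\ln a}{r}\phi'\bigr)\le 1$ for every $r>0$.

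The first key step is to decouple the parameter $r$. If $1+\tfrac{t\ln a}{r}\phi'\le 0$ the inequality is trivial; otherwise, taking logarithms and using $\ln(1+y)\le y$, it is enough to show $\tfrac{\phi}{r}\ln a+\tfrac{t\ln a}{r}\phi'\le 0$, which, since $\ln a<0$, amounts to the $r$-free inequality $\phi(t)+t\,\phi'(t)\ge 0$ for all $t\in[\sqrt a,1)$. (Note that the naive guess $\phi'\ge 0$ is false in general, so this weaker combination is exactly what is needed.) Computing $\phi'$ explicitly and substituting $w:=a/t$ — which runs over $(0,t]$ precisely because $t\in[\sqrt a,1)$ — transforms $\phi+t\phi'\ge 0$, after clearing the positive factor $(\ln t)^2(1-w)$, into
\[
\bigl(-\ln(1-w)\bigr)(1-w)(1-\ln t)\ \ge\ w\,(-\ln t).
\]
Writing $p:=-\ln(1-w)>0$ and $q:=-\ln t>0$ (so the constraint $w\le t$ becomes $e^{-p}+e^{-q}\ge 1$), this is exactly
\[
F(p,q):=p\,e^{-p}(1+q)-(1-e^{-p})\,q\ \ge\ 0.
\]

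The second key step is the observation that $F(p,q)=p e^{-p}+q\bigl(e^{-p}(1+p)-1\bigr)$ is \emph{affine and strictly decreasing} in $q$, because $e^{-p}(1+p)-1<0$ for $p>0$. Hence, for fixed $p$, $F$ attains its minimum over the feasible set $\{q>0:e^{-p}+e^{-q}\ge 1\}$ at the largest admissible value $q=-\ln(1-e^{-p})$, i.e. on the boundary $e^{-p}+e^{-q}=1$ (which corresponds to the diagonal $t=w$, i.e. $t=\sqrt a$). Putting $\alpha:=e^{-p}\in(0,1)$ and simplifying $F\bigl(-\ln\alpha,\,-\ln(1-\alpha)\bigr)$, the whole problem collapses to the single-variable inequality
\[
\Phi(\alpha):=-\alpha\ln\alpha+(1-\alpha)\ln(1-\alpha)+\alpha\ln\alpha\,\ln(1-\alpha)\ \ge\ 0,\qquad \alpha\in(0,1).
\]

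To finish, I would rewrite $\Phi(\alpha)\ge 0$ as $-\alpha\ln\alpha\ge\bigl(-\ln(1-\alpha)\bigr)\bigl[(1-\alpha)+\alpha\ln\alpha\bigr]$, observing first that $(1-\alpha)+\alpha\ln\alpha>0$ on $(0,1)$ (it vanishes at $\alpha=1$ and has derivative $\ln\alpha<0$). Bounding the right-hand side via $-\ln(1-\alpha)\le\tfrac{\alpha}{1-\alpha}$ and clearing denominators reduces the claim to $-\ln\alpha\ge 1-\alpha$, which is the standard estimate $\ln\alpha\le\alpha-1$. Here \cref{LM:ln-approximation} and \cref{LM:power-dist} of the paper, together with $\ln x\le x-1$, supply all the elementary inequalities needed. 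I expect the main obstacle to be the two reduction steps rather than the final estimate: identifying $\phi+t\phi'\ge 0$ as the correct $r$-uniform sufficient condition, and recognizing the monotonicity of $F$ in $q$ that collapses the two-variable optimization to its boundary $t=\sqrt a$; once these are in place the concluding chain is routine.
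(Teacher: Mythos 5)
Your proof is correct, and while it ultimately reduces to the same elementary fact $\ln z\le z-1$ that the paper's argument terminates in, the route you take after the $r$-decoupling step is genuinely different. Both proofs begin by using a Bernoulli/log-linearization (your $\ln(1+y)\le y$; the paper's \cref{LM:power-ineq}) to strip out $r$ and arrive at equivalent $r$-free conditions — your $\phi+t\phi'\ge 0$ is, after clearing denominators, exactly the paper's inequality \eqref{EQ:LM:increasing-i:reduction-i} rewritten in the variables $t=1-x$, $a=1-\delta$. From there the paths diverge. The paper establishes monotonicity of the auxiliary one-variable function $g(x)=-\ln\bigl(\tfrac{1-x}{\delta-x}\bigr)+\tfrac{1-x}{\delta-x}$, replaces it by its supremum $h(\delta)=g(1-\sqrt{1-\delta})$, applies Bernoulli's inequality to $(1-x)^{h(\delta)}$, and then chases several rearrangements to a one-variable inequality in $\delta$. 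You instead change to the variables $p=-\ln(1-w)$, $q=-\ln t$, notice that $F(p,q)=pe^{-p}+q\bigl(e^{-p}(1+p)-1\bigr)$ is \emph{affine and decreasing in $q$} (because $e^{-p}(1+p)<1$ for $p>0$), and hence conclude immediately that the worst case occurs on the boundary $e^{-p}+e^{-q}=1$, i.e., $t=\sqrt a$; this collapses the two-variable problem to the single-variable $\Phi(\alpha)\ge 0$, which you then settle with the bounds $-\ln(1-\alpha)\le\tfrac{\alpha}{1-\alpha}$ (supplied by \cref{LM:ln-approximation}) and $-\ln\alpha\ge 1-\alpha$. Your exploitation of the affine-in-$q$ structure makes the ``worst case at $t=\sqrt a$'' phenomenon transparent at a glance, whereas the paper arrives at the analogous endpoint estimate through the monotonicity of $g$ plus a separate Bernoulli step; on the other hand the paper's route works entirely with the original variables and one real parameter at a time, which some readers may find easier to audit. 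Both approaches are sound, and the details you left implicit (the formula for $\phi'$, the verification that $(1-\alpha)+\alpha\ln\alpha>0$, the final clearing of denominators) all check out.
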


\begin{proof}
It enough to show that $f'(x)\geq 0$ for $x\in (0,1-\sqrt{1-\delta}]=:I$.
Suppose that $x\in I$ throughout.

A routine computation shows that
\[
f'(x)=1+(1-\delta)^{\frac{1}{r}\left(\frac{\ln(\delta-x)}{\ln(1-x)}-1\right)}
\left(-1 +
\frac{(1-x)\ln(1-\delta)}{r(\ln(1-x))^2}
\left(
    \frac{\ln(\delta-x)}{1-x} - \frac{\ln(1-x)}{\delta-x}
\right)
\right).
\]
Thus, $f'(x)\geq 0$ if and only if
\begin{align*}
& -1 \leq
(1-\delta)^{\frac{1}{r}\left(\frac{\ln(\delta-x)}{\ln(1-x)}-1\right)}
\left(-1 +
\frac{(1-x)\ln(1-\delta)}{r(\ln(1-x))^2}
\left(
    \frac{\ln(\delta-x)}{1-x} - \frac{\ln(1-x)}{\delta-x}
\right)
\right) & \iff
\\
&
-(1-\delta)^{-\frac{1}{r}\left(\frac{\ln(\delta-x)}{\ln(1-x)}-1\right)}
\leq
-1 +
\frac{(1-x)\ln(1-\delta)}{r(\ln(1-x))^2}
\left(
\frac{\ln(\delta-x)}{1-x} - \frac{\ln(1-x)}{\delta-x}
\right)
& \iff
\\
&
1-(1-\delta)^{-\frac{1}{r}\left(\frac{\ln(\delta-x)}{\ln(1-x)}-1\right)}
\leq
\frac{(1-x)\ln(1-\delta)}{r(\ln(1-x))^2}
\left(
\frac{\ln(\delta-x)}{1-x} - \frac{\ln(1-x)}{\delta-x}
\right).
\end{align*}
Now, by \cref{LM:power-ineq}, we have
\begin{align*}
1-(1-\delta)^{-\frac{1}{r}\left(\frac{\ln(\delta-x)}{\ln(1-x)}-1\right)}
&\leq \frac{1}{r}\left(\frac{\ln(\delta-x)}{\ln(1-x)}-1\right)\cdot
\ln(1-\delta)
=\frac{\ln(1-\delta)}{r}\left(\frac{\ln(\delta-x)}{\ln(1-x)}-1\right).
\end{align*}
%
Thus, it is enough to prove
\[
\frac{\ln(1-\delta)}{r}\left(\frac{\ln(\delta-x)}{\ln(1-x)}-1\right)
\leq
\frac{(1-x)\ln(1-\delta)}{r(\ln(1-x))^2}
\left(
\frac{\ln(\delta-x)}{1-x} - \frac{\ln(1-x)}{\delta-x}
\right).
\]
Multiplying both sides by $\frac{r\ln(1-x)}{\ln(1-\delta)}$,
this is becomes
\[
\ln(\delta-x)-\ln(1-x)\leq \frac{(1-x)}{\ln(1-x)}\left(
\frac{\ln(\delta-x)}{1-x} - \frac{\ln(1-x)}{\delta-x}
\right)
=\frac{\ln(\delta-x)}{\ln(1-x)}-\frac{1-x}{\delta-x},
\]
and after rearranging, we reduce into showing
\begin{equation}\label{EQ:LM:increasing-i:reduction-i}
-\ln\left(\frac{1-x}{\delta-x}\right)+\frac{1-x}{\delta-x}\leq
\frac{\ln(\delta-x)}{\ln(1-x)}.
\end{equation}

By elementary analysis, the function $t\mapsto \frac{1-t}{\delta-t}$
is increasing on $[0,\delta)\supseteq I$ and the function
$y\mapsto -\ln y+y$ is increasing on $[1,\infty)$.
This means in particular that $\frac{1-t}{\delta-t}\in
[f(0),\infty)=[\delta^{-1},\infty)\subseteq [1,\infty)$, and thus the
function $g(t)= -\ln\left(\frac{1-t}{\delta-t}\right)+\frac{1-t}{\delta-t}$
is increasing on $[0,\delta)$ and in particular on
$I=(0,1-\sqrt{1-\delta}]$.
It follows that
\[
1=-\ln 1+1\leq g(x)\leq g(1-\sqrt{1-\delta}).
\]
Noting that
$
\frac{1-(1-\sqrt{1-\delta})}{\delta-(1-\sqrt{1-\delta})}
=\frac{\sqrt{1-\delta}}{\sqrt{1-\delta}-(1-\delta)}=
\frac{1}{1-\sqrt{1-\delta}}
$, we conclude that
\begin{equation}\label{EQ:LM:increasing-i:h-dfn}
1\leq
g(x)=-\ln\left(\frac{1-x}{\delta-x}\right)+\frac{1-x}{\delta-x}\leq
\ln(1-\sqrt{1-\delta})+\frac{1}{1-\sqrt{1-\delta}}=:h(\delta).
\end{equation}
Thus, in order to prove \cref{EQ:LM:increasing-i:reduction-i}, it is enough
to prove that
\[
h(\delta)\leq \frac{\ln(\delta-x)}{\ln(1-x)},
\]
which can be rearranged into
\begin{equation}\label{EQ:LM:increasing-i:reduction-ii}
(1-x)^{h(\delta)}\geq \delta- x.
\end{equation}

Since $h(\delta)\geq 1$ (see \cref{EQ:LM:increasing-i:h-dfn}), we have
$(1-x)^{h(\delta)}\geq 1-h(\delta)x$, so
\cref{EQ:LM:increasing-i:reduction-ii}
would follow if we establish that
$1-h(\delta)x \geq \delta-x$.
This holds if and only if
\[
x\leq \frac{1-\delta}{h(\delta)-1}.
\]
Since $x\in (0,1-\sqrt{1-\delta}]$, we are therefore reduced into proving
\begin{equation*}
1-\sqrt{1-\delta}\leq \frac{1-\delta}{h(\delta)-1}
\qquad\forall \delta\in (0,1).
\end{equation*}
As $h(\delta)-1\geq0$  (see \cref{EQ:LM:increasing-i:h-dfn}),
this is equivalent to
\begin{align*}
1-\delta
&\geq (1-\sqrt{1-\delta})(h(\delta)-1)
\\
&=(1-\sqrt{1-\delta})\left(\ln(1-\sqrt{1-\delta})+\frac{1}{1-\sqrt{1-\delta}}-1\right)
\\
&=(1-\sqrt{1-\delta})\ln(1-\sqrt{1-\delta})+\sqrt{1-\delta}.
\end{align*}
which rearranges into
\begin{equation}\label{EQ:LM:increasing-i:reduction-iii}
(1-\delta)-\sqrt{1-\delta}\geq
(1-\sqrt{1-\delta})\ln(1-\sqrt{1-\delta}).
\end{equation}
Put $y=1-\sqrt{1-\delta}$ and note that $y>0$. Then
\cref{EQ:LM:increasing-i:reduction-iii}
is equivalent to
\[
y^2-y=(1-y)^2-(1-y)\geq y\ln y.
\]
Diving by $y$, we further reduce into showing
that $y-1\geq \ln y$, which holds by elementary analysis.
We have therefore proved \cref{EQ:LM:increasing-i:reduction-iii}  and
thus completed the proof of
the lemma.
\end{proof}

\subsection{Properties of the functions $f_{k,r}(x)$}
\label{subsec:properties-of-f-k-r-delta}

Recall from \cref{EQ:f-k-r-delta-dfn} that for every $k>0$, $r\geq
\frac{1}{k}$ we defined
\[
f_{k,r}(x)=
1-(1-x)^{\frac{1}{rk}}
+(1-x)^{\frac{1}{rk}}
(1-(1-x)^{\frac{rk-1}{rk}})^k.
\]
We now establish some properties of these functions in the domain $[0,1]$.

\begin{proposition}\label{PR:derivative}
For every $k>0$ and $r\geq \frac{1}{k}$, we
have   $\frac{1}{rk}< f'_{k,r}(x)\leq  k$ for all $x\in (0,1)$. In particular,
$f_{k,r}(x)$ is strictly increasing on $[0,1]$.
\end{proposition}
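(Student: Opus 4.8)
The plan is to convert both bounds on $f'_{k,r}$ into a single elementary one-variable inequality via a substitution that unravels the nested powers in \eqref{EQ:f-k-r-delta-dfn}. Write $s:=rk$. Everywhere $f_{k,r}$ is used in this paper one has $k\geq 1$, and I will use this: for $0<k<1$ the factor $(1-q)^{k-1}$ appearing below is unbounded near $q=1$, and indeed both bounds fail there. First I would dispose of the degenerate case $s=1$, in which $(1-x)^{\frac{rk-1}{rk}}\equiv 1$, so $f_{k,r}=\mathrm{id}_{[0,1]}$ and $f'\equiv 1\leq k$ with $f'\geq\tfrac1{rk}=1$ (weakly); so assume $s>1$. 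Set $p=p(x)=(1-x)^{1/s}$, which decreases continuously from $1$ to $0$ on $[0,1]$, and $q=q(x)=p^{\,s-1}=(1-x)^{\frac{s-1}{s}}$, so $q\in(0,1)$ for $x\in(0,1)$. Then $f_{k,r}(x)=1-p+p\bigl(1-p^{\,s-1}\bigr)^k=1-p+p(1-q)^k$ is a function of $p$ alone; differentiating in $p$ (with $\tfrac{dq}{dp}=(s-1)p^{s-2}$) gives $\tfrac{d}{dp}f_{k,r}=-1+(1-q)^k-k(s-1)q(1-q)^{k-1}=-\Phi(q)$, and since $\tfrac{dp}{dx}=-\tfrac1s p^{\,1-s}=-\tfrac1{sq}$, the chain rule yields
\[ f'_{k,r}(x)=\frac{\Phi(q)}{sq},\qquad \Phi(q):=1-(1-q)^k+k(s-1)\,q\,(1-q)^{k-1}. \]
As $sq>0$, the proposition reduces to proving $q<\Phi(q)\leq ksq$ for every $q\in(0,1)$.

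For the upper bound I would split $\Phi(q)=\bigl[1-(1-q)^k\bigr]+k(s-1)q(1-q)^{k-1}$. Applying \cref{LM:power-dist}(i) with exponent $k\geq 1$ to $y=1$, $x=1-q$ gives $1-(1-q)^k\leq k\cdot 1^{k-1}\cdot q=kq$; and $(1-q)^{k-1}\leq 1$ since $k\geq 1$ and $1-q\in(0,1)$, so $k(s-1)q(1-q)^{k-1}\leq k(s-1)q$. Summing, $\Phi(q)\leq kq+k(s-1)q=ksq$, i.e.\ $f'_{k,r}(x)\leq k$. For the lower bound, $k\geq 1$ and $1-q\in(0,1)$ give $(1-q)^k\leq 1-q$, hence $1-(1-q)^k\geq q$; and the remaining term $k(s-1)q(1-q)^{k-1}$ is strictly positive because $s>1$, $k>0$ and $q,1-q\in(0,1)$. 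Thus $\Phi(q)>q$, i.e.\ $f'_{k,r}(x)>\tfrac1s=\tfrac1{rk}$. Finally, $f'_{k,r}>0$ on $(0,1)$ together with continuity of $f_{k,r}$ on $[0,1]$ yields strict monotonicity on $[0,1]$.

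The computation is short, so there is no serious obstacle; the two points needing care are the substitution bookkeeping (checking $p^{1-s}=1/q$ and the sign of $dp/dx$, and that $f_{k,r}$ really is a function of $p$ alone) and the role of the hypothesis $k\geq 1$, which is precisely what keeps $(1-q)^{k-1}\leq 1$ in the upper-bound step and $(1-q)^k\leq 1-q$ in the lower-bound step. The degenerate case $s=rk=1$ (where $f_{k,r}$ collapses to the identity) is the only situation in which the lower inequality is an equality rather than strict.
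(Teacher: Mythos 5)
Your proof is correct and follows essentially the same strategy as the paper's: change variables to reduce $f'_{k,r}$ to a function of a single auxiliary parameter, then bound that function with elementary power inequalities. Your variable $q=(1-x)^{(rk-1)/rk}$ is the same quantity as the $u$ appearing in the paper's proof after its substitution $g(x)=-f_{k,r}(1-x)$, and your reduction to $q<\Phi(q)\leq ksq$ is the same separation into a ``$\tfrac1{rk}$-level'' term and a ``$k-\tfrac1r$'' contribution.

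Two small points where your write-up is in fact \emph{cleaner} than the paper's. First, the paper bounds $\tfrac{1-(1-u)^k}{u}$ by expanding it as a geometric series $1+(1-u)+\dots+(1-u)^{k-1}$, which is literally valid only for integer $k$; since $k$ is allowed to be real here (and is treated as real in \cref{TH:hard-optimization}), your use of \cref{LM:power-dist}(i) and of the monotonicity $(1-q)^k\leq 1-q$ for $k\geq 1$ is the right way to make this step rigorous for all real $k\geq 1$. Second, you explicitly flag that the hypothesis $k\geq 1$ is needed (the proposition nominally says $k>0$, but $f_{k,r}$ is only defined in \cref{EQ:f-k-r-delta-dfn} for $k\geq 1$, and indeed $f'_{k,r}$ blows up near $x=0$ for $k<1$), and you isolate the degenerate boundary case $rk=1$, where $f_{k,r}=\mathrm{id}$ and the strict lower bound $\tfrac1{rk}<f'_{k,r}$ degenerates to an equality --- a corner the paper's proof quietly skips over when it asserts that the term $(k-\tfrac1r)(1-u)^{k-1}$ lies in $(0,k-\tfrac1r]$. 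Both observations are correct and worth flagging: as stated, the proposition should really read $r>\tfrac1k$ (or $\leq$ in the lower bound) for the strict inequality.
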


\begin{proof}
Write
\[g(x)=-f_{k,r}(1-x)=-1+x^{\frac{1}{rk}}
-x^{\frac{1}{rk}}
(1-x^{1-\frac{1}{rk}})^k.\]
Then $g'(x)=f'_{k,r}(1-x)$, and so it is enough to show that
$0< g'(x)\leq  k$ for all $x\in (0,1)$.

Suppose $x\in (0,1)$.
We have
\begin{align*}
g'(x)&=\frac{1}{rk}x^{\frac{1}{rk}-1}-
\parens*{\frac{1}{rk}x^{\frac{1}{rk}-1}(1-x^{1-\frac{1}{rk}})^k
+x^{\frac{1}{rk}}k(1-x^{1-\frac{1}{rk}})^{k-1}\parens*{-\parens*{1-\frac{1}{rk}}x^{-\frac{1}{rk}}}}
\\
&=\frac{1}{rk}x^{\frac{1}{rk}-1}\parens*{1-  (1-x^{1-\frac{1}{rk}})^k}
+\parens*{k-\frac{1}{r}}(1-x^{1-\frac{1}{rk}})^{k-1}.
\end{align*}
It is clear that  $(k-\frac{1}{r})(1-x^{1-\frac{1}{rk}})^{k-1}\in
(0,k-\frac{1}{r}]$
(here we need $rk\geq 1$),
so it is enough to show that
\[
h(x):=\frac{1}{rk}x^{\frac{1}{rk}-1}\parens*{1-
(1-x^{1-\frac{1}{rk}})^k}\in \sqbr*{\frac{1}{rk},\frac{1}{r}}.
\]
Indeed, substitute $u=x^{1-\frac{1}{rk}}\in [0,1]$. Then
\begin{align*}
h(x) &=\frac{1}{rk}\cdot\frac{1-(1-u)^k}{u}
=
\frac{1}{rk}\cdot\frac{1-(1-u)^k}{1-(1-u)}=\frac{1}{rk}(1+(1-u)+\dots+(1-u)^{k-1})
\in \sqbr*{\frac{1}{rk},\frac{1}{rk}\cdot
k}=\sqbr*{\frac{1}{rk},\frac{1}{r}},
\end{align*}
and the proof is complete.
\end{proof}

We use \cref{PR:derivative} to prove \cref{LM:delta-change}.

\begin{proof}[Proof of \cref{LM:delta-change}]
This follows immediately from \cref{LM:diff_trick}
and \cref{PR:derivative}.
\end{proof}

Before proving more properties of $f_{k,r}(x)$, we need to establish
two technical lemmas.

\begin{lemma}\label{LM:g-prime-decreasing}
Let $x \in (0,1)$ and $k>1$.
For every $u\in [1-x,1]$, define
\[p(u)=-1+\left(1-\frac{1-x}{u}\right)^{k-1}\left(1+(k-1)\frac{1-x}{u}\right).\]
Then $p(u)\leq p(1)= -1+x^{k-1}(1+(k-1)(1-x))$.
\end{lemma}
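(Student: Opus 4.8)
The plan is to pass to a single-variable monotonicity statement via the substitution $t = \frac{1-x}{u}$. As $u$ ranges over $[1-x,1]$, the quantity $t$ ranges over $[1-x,1]$ as well (it equals $1$ at $u = 1-x$ and equals $1-x$ at $u = 1$), and under this substitution $p(u)$ becomes
\[
q(t) := -1 + (1-t)^{k-1}\bigl(1 + (k-1)t\bigr),
\]
with the target value $p(1)$ corresponding to $q(1-x)$. So it suffices to prove that $q$ attains its maximum over $[1-x,1]$ at the left endpoint $t = 1-x$.

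To that end, I would differentiate $q$ on the open interval $(1-x,1)$. The product rule gives, after collecting the common factor $(k-1)(1-t)^{k-2}$,
\[
q'(t) = (k-1)(1-t)^{k-2}\bigl[(1-t) - \bigl(1 + (k-1)t\bigr)\bigr] = -k(k-1)\,t\,(1-t)^{k-2}.
\]
Since $k > 1$ and $x \in (0,1)$, every factor on the right is strictly positive for $t \in (1-x,1)$, so $q'(t) < 0$ there. As $q$ is continuous on $[1-x,1]$ and has strictly negative derivative on its interior, $q$ is strictly decreasing on $[1-x,1]$; this remains valid when $1 < k < 2$, where the only defect is non-differentiability of $q$ at the single endpoint $t = 1$, which is irrelevant for the monotonicity conclusion. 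Hence $q(t) \le q(1-x)$ for all $t \in [1-x,1]$, which unwinds to $p(u) \le p(1)$. Finally, one checks directly that $q(1-x) = -1 + x^{k-1}\bigl(1 + (k-1)(1-x)\bigr)$, which is exactly $p(1)$.

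I do not expect a genuine obstacle here: once the substitution is made, the derivative factors completely and its sign is immediate. The only point needing a sentence of care is the behaviour at $t = 1$ in the regime $1 < k < 2$, handled by invoking continuity in place of differentiability at that one point.
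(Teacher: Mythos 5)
Your proof is correct and follows essentially the same approach as the paper: the paper substitutes $t = 1 - \frac{1-x}{u}$ and shows the resulting function is increasing on $[0,x]$, whereas you substitute $t = \frac{1-x}{u}$ and show the resulting function is decreasing on $[1-x,1]$; these are the same argument up to the change of variable $t \mapsto 1-t$, with identical derivative factorizations. Your explicit remark on the $1<k<2$ endpoint is a nice touch not spelled out in the paper but does not constitute a different route.
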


\begin{proof}
It is enough to show that $p(u)$ is increasing in the interval
$ [1-x,1]$.
Put $t=1-\frac{1-x}{u}$ and observe that $t\in [0,x]$. Then
\[
p(u)=-1+t^{k-1}(k-(k-1)t)=-1+kt^{k-1}-(k-1)t^k.
\]
Denoting the left hand side by $h(t)$, we have
\[
h'(t)=k(k-1)t^{k-2}-k(k-1)t^{k-1}\geq 0
\]
for $t\in [0,x]$, so $h(t)$ is increasing in $[0,x]$.
Since $u\mapsto 1-\frac{1-x}{u}$ is increasing on $[1-x,1]$,
it follows that $p(u)=h(1-\frac{1-x}{u})$ is also increasing on
that interval.
\end{proof}

\begin{lemma}\label{LM:ineq-for-distance-bound}
Let $x\in[0,1]$ and $k>1$. Then
\[
1-x^{k-1}(1+(k-1)(1-x))\geq
(1-x)^2\min\left\{1,\frac{k(k-1)}{2}\right\}.
\]
\end{lemma}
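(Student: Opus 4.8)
The plan is to first simplify the left-hand side algebraically: expanding,
\[
x^{k-1}\bigl(1+(k-1)(1-x)\bigr) = x^{k-1} + (k-1)x^{k-1} - (k-1)x^{k} = kx^{k-1} - (k-1)x^{k},
\]
so, setting $g(x) := 1 - kx^{k-1} + (k-1)x^{k}$, the claim becomes $g(x) \geq (1-x)^2\min\left\{1,\tfrac{k(k-1)}{2}\right\}$ for $x\in[0,1]$. Differentiating, $g'(x) = -k(k-1)x^{k-2} + k(k-1)x^{k-1} = -k(k-1)x^{k-2}(1-x) \leq 0$ on $(0,1)$, and $g(1)=0$, so $g(x) = \int_x^1 k(k-1)s^{k-2}(1-s)\,\dif s \geq 0$ (the integrand is nonnegative on $[0,1]$, and integrable near $0$ since $k-2>-1$). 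I would then split into the regimes $1<k\leq 2$, where $\min\{1,k(k-1)/2\}=k(k-1)/2$, and $k\geq 2$, where it equals $1$; these overlap at $k=2$ and together cover all $k>1$.

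In the regime $1<k\leq 2$ the bound is immediate from the integral representation: for $s\in(0,1]$ we have $s^{k-2}\geq 1$ since $s\leq 1$ and $k-2\leq 0$, so
\[
g(x)\ \geq\ \int_x^1 k(k-1)(1-s)\,\dif s\ =\ \tfrac{k(k-1)}{2}(1-x)^2,
\]
which is exactly what is required here.

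The regime $k\geq 2$ is the main obstacle, since the same integral estimate now runs the wrong way ($s^{k-2}\leq 1$ only bounds $g$ from above). Instead I would introduce $\psi(x) := g(x) - (1-x)^2$ and prove $\psi\geq 0$ on $[0,1]$ directly. One has $\psi(0) = g(0)-1 = 0$, $\psi(1)=0$, and
\[
\psi'(x)\ =\ -k(k-1)x^{k-2}(1-x) + 2(1-x)\ =\ (1-x)\bigl(2 - k(k-1)x^{k-2}\bigr).
\]
On $[0,1)$ the factor $1-x$ is positive, while $2 - k(k-1)x^{k-2}$ is nonincreasing in $x$ (because $x\mapsto x^{k-2}$ is nondecreasing for $k\geq 2$), is $\geq 0$ at $x=0$, and equals $2-k(k-1)\leq 0$ at $x=1$. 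Hence $\psi'$ is first nonnegative and then nonpositive on $(0,1)$ --- and is identically $0$ when $k=2$, in which case $g(x)=(1-x)^2$ outright --- so $\psi$ rises from $\psi(0)=0$ and falls back to $\psi(1)=0$, forcing $\psi\geq 0$ on $[0,1]$; that is, $g(x)\geq(1-x)^2 = (1-x)^2\min\{1,k(k-1)/2\}$ in this regime. Combining the two regimes yields \cref{LM:ineq-for-distance-bound}. The only mildly delicate point is the sign pattern of $2-k(k-1)x^{k-2}$ on $(0,1)$, which follows from elementary monotonicity of power functions.
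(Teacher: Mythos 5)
Your proof is correct but takes a genuinely different route from the paper's. The paper works with the ratio $f(x) = \frac{1 - kx^{k-1} + (k-1)x^k}{(1-x)^2}$ directly: it computes $f'$, reduces the sign of $f'$ to the sign of an auxiliary function $g$, differentiates \emph{again}, and then concludes $f$ is monotone on $[0,1)$ (increasing for $k>2$, decreasing for $k\le 2$), so the infimum is attained at an endpoint; evaluating at $x=1$ in the $k\le 2$ case requires two applications of L'H\^opital. You instead work with the numerator $\psi_0(x)=1-kx^{k-1}+(k-1)x^k$ and its integral representation $\psi_0(x)=\int_x^1 k(k-1)s^{k-2}(1-s)\,\dif s$. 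In the regime $1<k\le 2$ the pointwise bound $s^{k-2}\ge 1$ gives the result in one line; in the regime $k\ge 2$ you study $\psi=\psi_0-(1-x)^2$, observe $\psi(0)=\psi(1)=0$, and argue that $\psi'=(1-x)\bigl(2-k(k-1)x^{k-2}\bigr)$ changes sign exactly once from $+$ to $-$, forcing $\psi\ge 0$. Both proofs are elementary calculus, but yours only requires one differentiation and completely avoids L'H\^opital, so it is arguably cleaner; the paper's approach has the side benefit of showing that the ratio $f$ is monotone on $[0,1)$, which is slightly more information than the inequality itself.
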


\begin{proof}
Put
\[
f(x)=\frac{1-x^{k-1}(1+(k-1)(1-x))}{(1-x)^2}=\frac{1-kx^{k-1}+(k-1)x^k}{(1-x)^2}.
\]
It is enough to show that $f(x)\geq \min\{1,\frac{k(k-1)}{2}\}$ for
$x\in [0,1)$.
Suppose that $x\in [0,1)$ henceforth.

By straightforward computation,
\[
f'(x)=\frac{x^{k-2}(-k(k-1)+2k(k-2)x-(k-2)(k-1)x^2)+2}{(1-x)^3}.
\]
Put $g(x)=x^{k-2}(-k(k-1)+2k(k-2)x-(k-2)(k-1)x^2)+2$, so that
$f'(x)=\frac{g(x)}{(1-x)^3}$.
We claim that if $k\leq 2$, then $g(x)\leq 0$, and hence $f'(x)\leq
0$, and if $k>2$, then $g(x)\geq 0$, and hence $f'(x)\geq 0$. Indeed,
\[
g'(x)=x^{k-3}(-k(k-1)(k-2)+2k(k-1)(k-2)x - k(k-1)(k-2)x^2)=
-k(k-1)(k-2)x^{k-3}(1-x)^2.
\]
Since $x\in [0,1)$, this means that $g'(x)\geq 0$ if $k\leq 2$, and
$g'(x)\leq 0$ if $k>2$.
As a result, if $k\leq 2$, we have
\[
g(x)\leq g(1)=-k(k-1)+2k(k-2)-(k-2)(k-1)+2=0,
\]
and if $k>2$, we have
\[
g(x)\geq g(1)=0.
\]
This proves our claim.

Returning to discuss $f(x)$, we have shown that if $k>2$, then $f(x)$
is increasing in $[0,1)$,
and
so
\[
f(x)\geq f(0)=1.
\]
On the other hand, if $k\leq 2$, then $f(x)$ is decreasing in
$[0,1)$, so
\begin{align*}
f(x)
&\geq \lim_{x\to 1} f(x)=
\lim_{x\to 1}
\frac{1-kx^{k-1}+(k-1)x^k}{(1-x)^2}
=
\lim_{x\to 1}
\frac{-(k-1)kx^{k-2}+(k-1)kx^{k-1}}{-2(1-x)}
\\
&=
\lim_{x\to 1}
\frac{-(k-1)k((k-2)x^{k-3}-(k-1)x^{k-2})}{2}=\frac{k(k-1)}{2},
\end{align*}
where we have used L'Hopital's Rule twice in the evaluation
of the limit.
Since in both cases we have $f(x)\geq \min\{1,\frac{k(k-1)}{2}\}$,
we are done.
\end{proof}

We can now establish two more properties of $f_{k,r}(x)$.

\begin{proposition}\label{LM:f-k-r-decrease-in-r}
Let $k>0$ and let $r,r'$ be real numbers such that
$\frac{1}{k}\leq r< r'$.
Then  $f_{k,r}(x)< f_{k,r'}(x)$ for all $x\in (0,1)$.
\end{proposition}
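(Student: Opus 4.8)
Here the inequality actually depends on whether $k$ lies below or above $1$, so the plan is to prove the sharp fact that, with $k$ and $x\in(0,1)$ held fixed, $f_{k,r}(x)$ is a \emph{strictly monotone} function of $r$ on $[\tfrac1k,\infty)$, its direction dictated by the position of $k$ relative to $1$: strictly increasing when $0<k<1$ — this is exactly the inequality asserted in the proposition, $f_{k,r}(x)<f_{k,r'}(x)$ for $\tfrac1k\le r<r'$ — constant (equal to $x$) when $k=1$, and strictly decreasing when $k>1$, the latter being the form in which the $f_{k,r}$ enter \cref{CR:all_lower_bounds} and \cref{TH:lower-bound-strong} and the reason for the lemma's name.

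First I would collapse the defining formula \cref{EQ:f-k-r-delta-dfn} to a one-parameter family by two monotone substitutions. Put $q=1-x\in(0,1)$ and $s=rk\in[1,\infty)$; since $r\mapsto rk$ is strictly increasing, it suffices to establish the corresponding monotonicity of $\phi(s):=f_{k,r}(x)=1-q^{1/s}\bigl(1-(1-q^{1-1/s})^{k}\bigr)$ in $s$. Substituting $u=q^{1/s}$, a strictly increasing function of $s$ with $u\to q$ as $s\to1$ and $u\to1$ as $s\to\infty$, one has $q^{1-1/s}=q/u$, hence $\phi=1-u\bigl(1-(1-q/u)^{k}\bigr)$; substituting further $v=q/u\in(q,1]$, a strictly decreasing function of $u$, one gets $\phi=1-\tfrac{q}{v}\bigl(1-(1-v)^{k}\bigr)=1-q\,h(v)$ with $h(v):=\tfrac{1-(1-v)^{k}}{v}$. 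As $s$ runs from $1$ to $\infty$ the variable $v$ runs monotonically from $1$ down to $q$; since $\phi=1-q\,h(v)$ with $q>0$, the sign of $\phi'(s)$ equals the sign of $h'(v)$ — the minus from the coefficient $-q$ and the minus from $\tfrac{\dif v}{\dif s}<0$ cancel — so the whole statement reduces to the sign of $h'$ on $(0,1)$.

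For that, write $h'(v)=N(v)/v^{2}$ with $N(v)=k\,v(1-v)^{k-1}-\bigl(1-(1-v)^{k}\bigr)$. Then $N(0)=0$, and on differentiating, the two terms $k(1-v)^{k-1}$ cancel, leaving the single monomial $N'(v)=-k(k-1)\,v\,(1-v)^{k-2}$. On $(0,1)$ the factor $v(1-v)^{k-2}$ is positive, so $N'$ keeps a constant sign, opposite to that of $k-1$; integrating from $0$, where $N$ vanishes, gives $N(v)<0$ on $(0,1)$ for $k>1$, $N(v)>0$ on $(0,1)$ for $0<k<1$, and $N\equiv0$, $h\equiv1$ for $k=1$. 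Hence $h$ is strictly decreasing on $(0,1)$ when $k>1$ and strictly increasing when $0<k<1$; propagating this back through $\phi'(s)$ and $s=rk$ yields the trichotomy, and in particular the proposition's inequality in the regime $0<k<1$. The endpoint values $\phi(1)=1-q\,h(1)=x$ and $\lim_{s\to\infty}\phi(s)=1-q\,h(q)=(1-q)^{k}=x^{k}$ also serve as a consistency check against the limits recorded just after \cref{EQ:f-k-r-delta-dfn}.

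The two substitutions are strictly monotone changes of variable, so no strictness is lost along the way, and the only genuine computation is the cancellation producing $N'(v)=-k(k-1)v(1-v)^{k-2}$. I foresee no serious difficulty; the one point deserving a word of care is that for $1<k<2$ the derivative $N'$ blows up as $v\to1$, but it stays integrable there (since $\int(1-t)^{k-2}\,\dif t$ converges), so the implication ``$N(0)=0$ and $N'<0$ on $(0,1)$, hence $N<0$ on $(0,1)$'' remains valid.
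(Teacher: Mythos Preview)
Your analysis is correct, and you have correctly spotted that the inequality in the stated proposition is mis-oriented for the range of $k$ the paper actually uses: for $k>1$ one has $f_{k,r}(x)>f_{k,r'}(x)$ when $r<r'$, which is what the lemma's label (``decrease in $r$''), the property list after \cref{EQ:f-k-r-delta-dfn}, and the paper's own proof all establish; the literal inequality $f_{k,r}(x)<f_{k,r'}(x)$ holds precisely in the regime $0<k<1$, as you show. Your trichotomy is the right statement.

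Your method is essentially the paper's, carried out more cleanly. Both arguments make the substitution $u=(1-x)^{1/(rk)}$ and reduce to the sign of the derivative of $g(u)=1-u+u(1-\tfrac{1-x}{u})^{k}$. The paper identifies $g'(u)$ with the function $p(u)$ of \cref{LM:g-prime-decreasing}, uses that lemma to bound $p(u)\le p(1)$, and then invokes \cref{LM:ineq-for-distance-bound} to see $p(1)<0$; both lemmas assume $k>1$. You instead pass to $v=(1-x)/u$, write $g=1-(1-x)h(v)$ with $h(v)=(1-(1-v)^{k})/v$, and observe that $N(v):=v^{2}h'(v)$ satisfies $N(0)=0$ and $N'(v)=-k(k-1)v(1-v)^{k-2}$, whose sign is determined by $k-1$. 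This is the same computation — indeed $N(v)=p(u)$ once one unwinds the substitutions — but packaged so that a single differentiation settles the sign for all $k>0$ at once, without the two auxiliary lemmas. One small remark: your caveat about integrability near $v=1$ for $1<k<2$ is unnecessary, since you only need $N(0)=0$ and $N'<0$ on $(0,1)$, and $N'$ is continuous there.
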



\begin{proof}
For the proof, we fix $x$ and $k$ and think of $r$
as a variable.
Substituting $u=(1-x)^{\frac{1}{rk}}$ gives
\[
f_{k,r}(x)=1-(1-\delta)^{\frac{1}{rk}}+
(1-x)^{\frac{1}{rk}}(1-(1-x)^{\frac{rk-1}{rk}})^k
=1-u+u\left(1-\frac{1-x}{u}\right)^k .
\]
We denote the left hand side by $g(u)$ and think of it as a
function of $u$.
Note that $u\in   [1-x,1]$. Since the function $r\mapsto
(1-x)^{\frac{1}{rk}}$
is increasing, it is enough to prove that $g(u)$ is
decreasing on $[1-x,1]$.
By straightforward computation, one finds that $g'(u)$ is exactly $p(u)$
from \cref{LM:g-prime-decreasing}, so by that lemma, $g'(u)\leq
-1+x^{k-1}(1+(k-1)(1-x))$.
By \cref{LM:ineq-for-distance-bound}, $-1+x^{k-1}(1+(k-1)(1-x))\leq
-(1-x)^2\min\left\{1,\frac{k(k-1)}{2}\right\}<0$,
so we are done.
\end{proof}

\begin{remark}
Plotting the graph of $f_{r,k}(x)$ for various $r,k$ suggests
that $f_{r,k}(x)\geq f_{r,k'}(x)$ whenever $k<k'$. However, we do not
know how to show this.
\end{remark}

%

\begin{theorem}\label{TH:distance-to-delta-to-k}
Let $x\in [0,1]$ and let $k>1$. Then for every $r\geq \frac{1}{k}$,
we have
\[
f_{k,r}(x)
\geq x^k+C_{\ref{TH:distance-to-delta-to-k}}(k) \frac{x(1-x)^2}{r},
\]
where $
C_{\ref{TH:distance-to-delta-to-k}}(k)=\min\{\frac{1}{k},\frac{k-1}{2}\}$.
\end{theorem}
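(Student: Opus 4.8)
The plan is to reduce everything to the calculus already carried out for \cref{LM:f-k-r-decrease-in-r} and \cref{LM:ineq-for-distance-bound}. First I would dispose of the boundary cases: at $x=0$ one has $f_{k,r}(0)=0=x^k$ and the extra term vanishes, and at $x=1$ one has $f_{k,r}(1)=1=x^k$ with the extra term again $0$, so equality holds at both endpoints. So assume $x\in(0,1)$ and write $c(k)=\min\{1,\frac{k(k-1)}{2}\}$, noting that $C_{\ref{TH:distance-to-delta-to-k}}(k)=c(k)/k$.

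Next I would introduce the substitution $u=(1-x)^{1/(rk)}\in[1-x,1)$ used in the proof of \cref{LM:f-k-r-decrease-in-r}, under which $f_{k,r}(x)=g(u):=1-u+u(1-\tfrac{1-x}{u})^{k}$ and, crucially, $x^k=g(1)$, so that $f_{k,r}(x)-x^k=g(u)-g(1)$. From that same proof, $g$ is differentiable on $[1-x,1]$ with $g'(t)=p(t)$, where $p$ is the function of \cref{LM:g-prime-decreasing}; that lemma gives $g'(t)=p(t)\le p(1)=-\bigl(1-x^{k-1}(1+(k-1)(1-x))\bigr)$ for all $t\in[1-x,1]$, and \cref{LM:ineq-for-distance-bound} bounds the right-hand side by $-(1-x)^2c(k)$. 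Since $[u,1]\subseteq[1-x,1]$ (as $u\ge 1-x$), we get $g'\le -(1-x)^2 c(k)<0$ on $[u,1]$, and integrating this over $[u,1]$ — equivalently, applying \cref{LM:diff_trick} with upper bound $B=-(1-x)^2c(k)$ — gives $g(1)-g(u)\le -(1-x)^2c(k)(1-u)$, that is, $f_{k,r}(x)-x^k\ge (1-u)(1-x)^2 c(k)$.

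Finally I would bound $1-u=1-(1-x)^{1/(rk)}$ from below by $\tfrac{x}{rk}$, which is exactly the left inequality of \cref{LM:power-ineq} with base parameter $x$ and exponent $\tfrac{1}{rk}\le 1$ (the borderline case $rk=1$ being the trivial equality $1-(1-x)=x$). Multiplying the two displayed bounds yields $f_{k,r}(x)-x^k\ge \tfrac{x}{rk}(1-x)^2c(k)=\tfrac{c(k)}{k}\cdot\tfrac{x(1-x)^2}{r}=C_{\ref{TH:distance-to-delta-to-k}}(k)\,\tfrac{x(1-x)^2}{r}$, which is the assertion. I do not expect a real obstacle: the only genuinely analytic inputs — the sign estimate for $p(u)$ in \cref{LM:g-prime-decreasing} and the inequality of \cref{LM:ineq-for-distance-bound} — are already established, so the work is limited to chaining these bounds in the correct direction and carefully handling the $rk=1$ and $x\in\{0,1\}$ boundary cases.
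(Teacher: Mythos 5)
Your proof is correct and follows essentially the same route as the paper's: the same substitution $u=(1-x)^{1/(rk)}$, the identification of $g'$ with $p$ from \cref{LM:g-prime-decreasing}, the bound on $p(1)$ from \cref{LM:ineq-for-distance-bound}, integration over $[u,1]$, and finally \cref{LM:power-ineq} to lower-bound $1-u$. The only cosmetic difference is that you invoke \cref{LM:diff_trick} where the paper writes out the integral directly, and you make the $rk=1$ boundary case explicit, but the substance is identical.
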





\begin{proof}
The theorem is clear if $x=0,1$, so assume $x \in (0,1)$.
As in the proof of \cref{LM:f-k-r-decrease-in-r},
we substitute   $u=(1-x)^{\frac{1}{rk}}$, so that
$u\in [1-x, 1)$.
Then
\begin{align*}
1-(1-x)^{\frac{1}{rk}}+
(1-x)^{\frac{1}{rk}}(1-(1-x)^{\frac{rk-1}{rk}})^k-x^k
&=1-u+u\left(1-\frac{1-x}{u}\right)^k-x^k.
\end{align*}
We denote the right hand side by $g(u)$ and think of it as a
function of $u$.
In order to prove the theorem, we need to show that
\[
g(u)\geq
C_{\ref{TH:distance-to-delta-to-k}}(k)\frac{x(1-x)}{r}\qquad\forall
u\in [1-x,1).
\]
We assume that $u\in [1-x,1)$ henceforth.

By straightforward computation, $g'(u)$ is the function $p(u)$ from
\cref{LM:g-prime-decreasing}. Thus, by that lemma and
\cref{LM:ineq-for-distance-bound},
\[
g'(u)\leq g'(1)=-1+x^{k-1}(1+(k-1)(1-x))\geq
-\min\left\{1,\frac{k(k-1)}{2}\right\}(1-x)^2=-kC_{\ref{TH:distance-to-delta-to-k}}(k)(1-x)^2,
\]
or rather,
\[
-g'(u)\geq kC_{\ref{TH:distance-to-delta-to-k}}(k)(1-x)^2.
\]

Now, notice that $g(1)=0$. Thus,
\begin{align*}
g(u) &=
(-g(1))-(-g(u))=\int_u^1-g'(t)\,dt
\geq
\int_u^1  kC_{\ref{TH:distance-to-delta-to-k}}(k)(1-x)^2 \,dt
\\
&=
(1-u)kC_{\ref{TH:distance-to-delta-to-k}}(k)(1-x)^2
=
(1-(1-x)^{\frac{1}{rk}})kC_{\ref{TH:distance-to-delta-to-k}}(k)(1-x)^2=
(\star).
\end{align*}
Using \cref{LM:power-ineq} and the assumption $rk\geq 1$, we further get
\[
(\star)\geq
\frac{x}{rk} \cdot
kC_{\ref{TH:distance-to-delta-to-k}}(k)(1-x)^2=C_{\ref{TH:distance-to-delta-to-k}}(k)\frac{x(1-x)^2}{r},
\]
and theorem is proved.
\end{proof}

%

\appendix

\section{Relation  to Dispersers}
\label{apx:dispersers}

In this appendix we translate results about dispersers appearing in
\cite{RadhakrishnanTS00}
to the language of confiners.

A hypergraph
$H$ with $N$ vertices and $M$ edges  is called a
$(K,\epsilon)$-disperser ($0\leq K\leq N$, $0\leq \ee\leq 1$)
if every set of $K$ or more vertices touches more than $(1-\ee)M$ edges
\cite[Dfn.~1.2]{RadhakrishnanTS00}.\footnote{
Dispersers are usually defined as two-sided graphs. This is
equivalent to our
hypergraph point of view by considering the left side of the
two-sided graph
as the vertices and the right side as the hyperedges.
}
Dispersers
are important for the construction of random \emph{extractors}; see
\cite{Shaltiel04} and \cite[Prob.~6.20]{Vadhan12}, for instance.
It is straightforward to see that   $H$
is a $(K,\epsilon)$-disperser if and only  if
it
is an $(\epsilon,1-\frac{K}{N})$-confiner.
Otherwise stated, for a hypergraph with $N$
vertices, being an $(\epsilon,p)$-confiner
is the same as being an $(\lceil N(1-p)\rceil,\epsilon)$-disperser.
However, while the important parameters for studying confiners
are $\epsilon$, the (average) uniformity $k$ and sparsity $r=\frac{M}{N}$,
in the study of dispersers, one usually
wants the the average degree $D$ to be as small as possible
while keeping $M$ as large as possible. (The latter is equivalent to
making $\lg(\frac{DN}{M})$, known as the \emph{entropy loss}
of the disperser,
as small as possible.)
Dispersers are also required to be regular in most texts.



In \cite[Thm.~1.5(a), 1.10(a)]{RadhakrishnanTS00}, the authors
determine essentially
optimal bounds on the parameters $D,N,M$ for which
$(K,\epsilon)$-dispersers exist.
By interpreting them in the language of confiners, we obtain the
following results.


\begin{theorem}[Derived from {\cite[Thm.~1.5(a)]{RadhakrishnanTS00}}]
\label{TH:disp_to_conf_lower_bound}
Let $p\in (0,1)$, $r>0$, $k>1$ and $\epsilon\in (0,\frac{1}{2})$.
If $H$ is an $(\epsilon,p)$-confiner  with $n\geq\max\{ 4k,
\frac{2}{1-p}\}$ vertices and $rn$
edges
having average uniformity $k$, then
\[r\geq \Omega(k^{-1}\epsilon^{-1} \ln((1-p)^{-1})),\]
or equivalently, $\epsilon\geq \Omega(k^{-1}r^{-1}\ln((1-p)^{-1}))$.
\end{theorem}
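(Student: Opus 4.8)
The plan is to invoke the confiner--disperser dictionary recorded just before the statement together with the lower bound of \cite{RadhakrishnanTS00}. If $H=(V,E)$ is an $(\epsilon,p)$-confiner with $|V|=n$ and $|E|=rn$, then, as noted above, $H$ is a $(K,\epsilon)$-disperser on the ground set $V$ with $K=\lceil(1-p)n\rceil$: the $K$-element subsets of $V$ are exactly the complements of the density-$p$ subsets, and a density-$p$ subset confines at most an $\epsilon$-fraction of $E$ precisely when its complement touches more than a $(1-\epsilon)$-fraction of $E$. Writing $N=n$, $M=rn$ and letting $D$ denote the (average) degree of $H$, \cite[Thm.~1.5(a)]{RadhakrishnanTS00}, in its hypergraph formulation, gives $\lg D\ge \lg\lg(N/K)+\lg(\epsilon^{-1})-O(1)$, i.e.\ $D\ge \Omega(\epsilon^{-1}\lg(N/K))$, valid for $\epsilon<1/2$. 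Combining this with the handshaking identity $D=rk$ from \cref{PR:degree-to-unif-ratio} and the estimate $\lg(N/K)=\Omega(\ln((1-p)^{-1}))$ --- which holds once $(1-p)n\ge 2$, so that $\lceil(1-p)n\rceil\le \frac{3}{2}(1-p)n$, this being the sole purpose of the hypothesis $n\ge 2/(1-p)$ --- yields $rk\ge \Omega(\epsilon^{-1}\ln((1-p)^{-1}))$, which rearranges to both stated forms of the conclusion.

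That is the whole skeleton; what needs care is that the lower bound of \cite{RadhakrishnanTS00} is stated for \emph{regular} dispersers of a fixed degree $D=2^d$, whereas our $H$ is only assumed to have prescribed \emph{average} uniformity $k$ (hence, by handshaking, average degree $rk$) with no control on its degree or uniformity sequence. The step I expect to be the main obstacle is therefore the reduction to the uniform, regular case. I would carry it out by a rewiring argument in the spirit of \cref{subsec:removing-vrt-edges}: rewire the incidences of $H$ so that every vertex degree lies within a constant factor of $rk$ and every edge size within a constant factor of $k$, then delete the vanishingly small set of surviving high-degree vertices --- absorbing them into the confined set, exactly as in the proof of \cref{TH:lower-bound-strong} --- and likewise discard the few oversized edges. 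This changes the density of the confined set, the edge count, and $\epsilon$ only by lower-order terms; it is here that the hypothesis $n\ge 4k$ is used, to guarantee these corrections are negligible and to keep $\lfloor pn\rfloor$ in a usable range. After the reduction the cited bound applies directly, and tracking the constant-factor losses back through the translation is routine because the target is only an $\Omega(\cdot)$ statement.

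I would also check two sanity points the plan must respect: first, that $K\le N$ always and $K\ge 2$ under $n\ge 2/(1-p)$, so the disperser parameters are in range; and second, that the passage between $r\ge \Omega(k^{-1}\epsilon^{-1}\ln((1-p)^{-1}))$ and $\epsilon\ge \Omega(k^{-1}r^{-1}\ln((1-p)^{-1}))$ is exactly the trivial rearrangement, smuggling in no extra hypotheses. As a fallback, should the bookkeeping in the regularization become unpleasant, one can instead observe that the probabilistic argument underlying \cite[Thm.~1.5(a)]{RadhakrishnanTS00} already exhibits a bad density-$p$ set whenever the \emph{average} degree is too small, so that it may be quoted for non-regular $H$ essentially verbatim, bypassing the reduction entirely.
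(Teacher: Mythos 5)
Your ``fallback'' is the paper's actual proof; your main route is a genuinely different and much heavier reduction that the paper avoids entirely. The paper's argument is a direct application of \cite[Thm.~1.5(a)]{RadhakrishnanTS00}: setting $m=rn$ and $K=\lceil(1-p)n\rceil$, the hypergraph $H$ is a $(K,\epsilon)$-disperser with \emph{average} degree $d=rk$ (by the handshaking identity), and the hypotheses $n\ge 4k$ and $\epsilon<\tfrac12$ serve exactly to verify the degree-to-edge-count hypothesis $d=rk\le rn/4=m/4\le \tfrac{(1-\epsilon)m}{2}$ under which the cited theorem applies; the theorem then yields $rk=d\ge\Omega(\epsilon^{-1}\ln(n/K))$, and $n\ge 2/(1-p)$ converts $\ln(n/K)$ into $\Omega(\ln((1-p)^{-1}))$. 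So you misplace the role of $n\ge 4k$: it is not there to keep corrections from a rewiring negligible (there is no rewiring in the paper's proof), but to verify the edge-count condition needed to quote the theorem. Regarding your primary route, note also that the vertex- and edge-rewirings in the style of the later sections only cap the \emph{maximum} degree and uniformity --- they do not produce a regular, uniform hypergraph --- so even after a successful rewiring you would still need to argue that the cited bound tolerates this slack, which is precisely the observation you relegate to a fallback. The paper simply makes that observation the whole proof, giving a six-line argument where you propose roughly a page.
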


Recall, however, that  we must have $\epsilon\geq p^k-\frac{2k}{n}$
by   \cref{TH:worst_error_confine}, so  $\epsilon$ cannot approach $0$.

\begin{proof}
Put $m=rn$ and $K=\lceil (1-p)n \rceil$. As noted earlier, $H$ is a
$(K,\epsilon)$-disperser
of average degree $d=rk$ (\cref{PR:degree-to-unif-ratio}).
Our assumptions that $n\geq 4k$ and $\epsilon\leq
\frac{1}{2}$ imply that
$d=rk\leq \frac{rn}{4}=\frac{(1-\epsilon) m}{2}$.
Thus, we may apply \cite[Thm.~1.5(a)]{RadhakrishnanTS00} to get
that
$rk=d\geq \Omega(\epsilon^{-1}
\ln(\frac{n}{K}))=\Omega(\epsilon^{-1}\ln((1-p)^{-1}))$
(we used $n\geq \frac{2}{1-p}$ in the last equality).
Dividing both sides by $k$ gives the theorem.
\end{proof}


\begin{theorem}[Derived from
{\cite[Thm.~1.10(a)]{RadhakrishnanTS00}}]\label{TH:disp_to_conf_random_const}
Fix some $0<c<1$, let $n$ be a natural number and let $p\in (0,1)$
and $k>1$ be real. Then:
\begin{enumerate}[(i)]
\item For every   $\epsilon\geq e^{-(1-p)k+1}$,
there exists
an
\hyperref[eq:eps_confiner]{$(\epsilon,p)$-confiner}
having $n$
vertices, average uniformity $\leq k$
and sparsity
\[r\leq
\frac{\epsilon^{-1}(\ln((1-p)^{-1})+1)+1}{k-(1-p)^{-1}(\ln(\epsilon^{-1})+1)}+\frac{1}{n}.\]
\item For every   $\frac{1}{(1-c)k}+\frac{1}{n}< r\leq
\frac{1}{(1-c)k}[e^{c(1-p)k-1}\ln((1-p)^{-1})+1)+1]$,
there is an \hyperref[eq:eps_confiner]{$(\epsilon,p)$-confiner}
on $n$ vertices with average uniformity $\leq k$
and sparsity $\leq r$, where
\[\epsilon\leq
\frac{\ln((1-p)^{-1})+1}{(1-c)k(r-\frac{1}{n})-1}.\]
\end{enumerate}
\end{theorem}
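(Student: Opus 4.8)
The plan is to derive \cref{TH:disp_to_conf_random_const} from \cite[Thm.~1.10(a)]{RadhakrishnanTS00} through the confiner--disperser dictionary recorded at the start of this appendix. Recall that a hypergraph on $n$ vertices is an $(\epsilon,p)$-confiner precisely when it is a $(\lceil(1-p)n\rceil,\epsilon)$-disperser, and that under this correspondence a disperser presented as a bipartite graph with $M$ right vertices, each left vertex having degree $D$, becomes a hypergraph with $n$ vertices, $M$ hyperedges, sparsity $r=M/n$, and --- by \cref{PR:degree-to-unif-ratio}, since every vertex then has degree $D$ --- average uniformity $D/r$. The existence result \cite[Thm.~1.10(a)]{RadhakrishnanTS00} is established by a probabilistic argument: one takes a random bipartite graph with the prescribed left-degree $D$, unions over the $\binom{n}{K}$ vertex sets $S$ of size $K$, and for each bounds the probability that at least $\epsilon M$ hyperedges avoid $S$ by $\binom{M}{\lceil\epsilon M\rceil}(1-\epsilon)^{DK}\le(e/\epsilon)^{\epsilon M}e^{-\epsilon DK}$. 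Carrying this out with $K=\lceil(1-p)n\rceil$ (so that $n/K\le(1-p)^{-1}$), the union bound is $<1$ as soon as
\[
\epsilon DK \;>\; K\parens*{\ln\tfrac{n}{K}+1}\;+\;\epsilon M\parens*{\ln\tfrac1\epsilon+1},
\]
and both parts of the theorem are obtained by solving this inequality for the quantity of interest, with the ceilings on $M$ and $D$ accounting for the stray ``$+1$'' and ``$+\tfrac1n$'' in the statement.

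For part (i), $\epsilon$ is given and we minimize sparsity. Writing $D=\bar kr$ with $\bar k=D/r\le k$ the average uniformity, substituting $M=rn$ and $K=\lceil(1-p)n\rceil$, and using $\ln(n/K)+1\le\ln((1-p)^{-1})+1$, the displayed inequality is implied by $\epsilon r\bigl[k-(1-p)^{-1}(\ln(\epsilon^{-1})+1)\bigr]>\ln((1-p)^{-1})+1$; the hypothesis $\epsilon\ge e^{-(1-p)k+1}$ is exactly the assertion that the bracket is nonnegative, and solving for $r$ gives the claimed bound once the ceilings are absorbed. For part (ii), $r$ is given and we minimize $\epsilon$; the obstacle is the $\ln(\epsilon^{-1})$ on the left-hand side of the rearranged inequality, which is defused by the auxiliary parameter $c$: restricting to $\epsilon\ge e^{-c(1-p)k+1}$ forces $k-(1-p)^{-1}(\ln(\epsilon^{-1})+1)\ge(1-c)k$, so it suffices that $\epsilon r(1-c)k>\ln((1-p)^{-1})+1$; taking $\epsilon$ to equal $\frac{\ln((1-p)^{-1})+1}{(1-c)k(r-1/n)-1}$ makes this hold, the side condition $\epsilon\ge e^{-c(1-p)k+1}$ becomes the stated upper bound on $r$, and positivity of the denominator is exactly the stated lower bound $r>\frac1{(1-c)k}+\tfrac1n$.

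The main difficulty I expect is purely a matter of bookkeeping and faithfulness to the source: \cite[Thm.~1.10(a)]{RadhakrishnanTS00} is stated in terms of bit-lengths and entropy loss, so extracting from it the clean ``general integer'' inequality above --- and verifying its side conditions (such as $\epsilon\le\frac12$, or $D$ not being too large compared with $M$) throughout the parameter ranges in the statement, possibly at the cost of hiding a little more slack in the ``$+1$'' and ``$+\tfrac1n$'' terms --- requires care. If the cited statement turns out to be too coarse for our parametrization, the fallback is to simply run the short probabilistic argument recalled above from scratch; it is entirely routine and reproduces exactly the displayed inequality, after which the two derivations in the previous paragraph go through unchanged.
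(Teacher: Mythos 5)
Your proposal is correct and follows essentially the same route as the paper's proof: translate confiners into dispersers through the dictionary, apply \cite[Thm.~1.10(a)]{RadhakrishnanTS00} to get the degree formula, impose the average-uniformity constraint $D/r\le k$, and solve the resulting inequality for $r$ (part~(i)) or for $\epsilon$ via the $c$-trick and an appeal to~(i) (part~(ii)). Your extra sketch of the union-bound argument underlying the cited existence theorem is accurate but not used in the paper, which quotes the result as a black box.
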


Note that in part (i), if we assume that $\varepsilon\geq
\exp(-c[p(k+1)+1])$, then we
get the more convenient bound
\[r\leq \frac{\varepsilon^{-1}(\ln((1-p)^{-1})+1)+1}{(1-c)k}+\frac{1}{n}.\]
When applying  (ii), if possible, it is best to choose the smallest
possible $c>0$
for which 
the constraints on $r$ hold.
We further note that (ii) is derived from (i) and therefore cannot
be used it to decrease
$\epsilon$ below $e^{-(1-p)k+1}$.

\begin{proof}
Let $m\in\NN$ and put $K= \lceil (1-p)n\rceil$ and $r=\frac{m}{n}$.
We recall that according to \cite[Thm.~1.10(a)]{RadhakrishnanTS00},
for every $\epsilon>0$, there exists a regular
$(K,\epsilon)$-disperser having $n$, vertices $m$ hyperedges
and degree
\begin{equation}\label{EQ:d_m_ineq}
d=\left\lceil\epsilon^{-1}(\ln(n/K)+1)
+
\frac{m}{K}(\ln(\epsilon^{-1})+1)\right\rceil
\leq
\epsilon^{-1}(\ln((1-p)^{-1})+1)
+
\frac{r}{1-p}(\ln(\epsilon^{-1})+1)+1
\end{equation}
This disperser is also an
\hyperref[eq:eps_confiner]{$(\epsilon,p)$-confiner} of
average uniformity
$ \frac{d}{r}$.
%

Now, in order to prove (i), all we need to do is to fix $\epsilon$
and apply the above conclusion with
the smallest possible $m=r n$ which would guarantee that the average
uniformity $\frac{d}{r }$
does not exceed $k$. To that end, it is enough to make sure that
\begin{equation}\label{EQ:rk_m_ineq}
r k\leq   \epsilon^{-1}(\ln((1-p)^{-1})+1)
+
\frac{r}{1-p}(\ln(\epsilon^{-1})+1)+1.
\end{equation}
Solving for $r$,
this is equivalent to
\[
r \geq
\frac{\epsilon^{-1}(\ln((1-p)^{-1})+1)+1}{k-(1-p)^{-1}(\ln(\epsilon^{-1})+1)}=:r_0.
\]
Since $m$ must be an integer, we cannot take $m=r_0n$
and have to satisfy with taking $m=\lceil r_0 n\rceil$.
This increases the resulting sparsity $r=\frac{\lceil r_0
n\rceil}{n}$ by at most $\frac{1}{n}$,
and (i) follows.
(We remark that if we do not assume that $\epsilon\geq e^{-(1-p)k+1}$,
then there is no $r>0$ for which \cref{EQ:rk_m_ineq} is satisfied.)

Next, we derive (ii) from (i).
Let $\epsilon=\frac{\ln((1-p)^{-1})+1}{(1-c)k(r-\frac{1}{n})-1}$
(note that this makes
sense by our assumption
$r> \frac{1}{(1-c)k}+\frac{1}{n}$).
The upper bound which we assume on $r$
guarantees that $\epsilon\leq  e^{-c(1-p)k+1}$,
so by (i),
it follows that there exists an
\hyperref[eq:eps_confiner]{$(\epsilon,p)$-confiner} on $n$ vertices
having sparsity at most
\[
\frac{\epsilon^{-1}(\ln((1-p)^{-1})+1)+1}{k-(1-p)^{-1}(\ln(\epsilon^{-1})+1)}+\frac{1}{n}
\leq \frac{(1-c)k(r-\frac{1}{n})}{k-c k}\leq r,
\]
where in the first inequality, we  used $\epsilon\leq  e^{-c(1-p)k+1}$.
This is exactly what we want.
(We   remark that the $\epsilon$ we got is not far from the
optimum, because
\cref{EQ:rk_m_ineq} is clearly false for $\epsilon\leq\frac{1}{rk-1}$.)
%
\end{proof}

\begin{remark}\label{RM:limit_of_rand_disp}
Unfortunately, we cannot use  \cite[Thm.~1.10(a)]{RadhakrishnanTS00}
to deduce the existence of sparse
\hyperref[eq:eps_confiner]{$(\epsilon,p)$-confiners} with
average uniformity
$k$ and   $\epsilon\leq e^{-(1-p)k}$.
Indeed, the proof  of \cref{TH:disp_to_conf_random_const}
breaks in this case because the inequality
\[
rk=d\leq \epsilon^{-1}(\ln((1-p)^{-1})+1)
+
\frac{r}{1-p}(\ln(\epsilon^{-1})+1),
\]
which   implies \cref{EQ:d_m_ineq} if it holds,
has no   positive solution for $r$
if  $\epsilon < e^{-(1-p)k+1}$.
\end{remark}
\bibliography{vedat}
\bibliographystyle{alpha}

\end{document}